\newcommand{\mca}{\mathcal{A}}
\newcommand{\mfa}{\mathfrak{a}}
\newcommand{\mcb}{\mathcal{B}}
\newcommand{\mcc}{\mathcal{C}}
\newcommand{\mbc}{\mathbb{C}}
\newcommand{\mbd}{\mathbb{D}}
\newcommand{\mcd}{\mathcal{D}}
\newcommand{\mfD}{\mathfrak{D}}
\newcommand{\mcf}{\mathcal{F}}
\newcommand{\mbf}{\mathbb{F}}
\newcommand{\mbh}{\mathbb{H}}
\newcommand{\bfh}{{\bf H}}
\newcommand{\mch}{\mathcal{H}}
\newcommand{\mcl}{\mathcal{L}}
\newcommand{\mcm}{\mathcal{M}}
\newcommand{\mbn}{\mathbb{N}}
\newcommand{\mcp}{\mathcal{P}}
\newcommand{\mbq}{\mathbb{Q}}
\newcommand{\mbr}{\mathbb{R}}
\newcommand{\mfs}{\mathfrak{s}}
\newcommand{\mbz}{\mathbb{Z}}
\newcommand{\mcz}{\mathcal{Z}}
\newcommand{\ao}{A, \Omega}
\newcommand{\zao}{\zeta_{A, \Omega}}
\newcommand{\tzao}{\widetilde{\zeta}_{A, \Omega}}
\newcommand{\tzad}{\widetilde{\zeta}_{A, D}}
\newcommand{\ptoo}{\partial \Omega, \Omega}
\newcommand{\res}{\operatorname{res}}
\newcommand{\vc}[3]{\overset{#2}{\underset{#3}{#1}}}
\newcommand*{\hyperlinkcite}[1]{\hyper@link{cite}{cite.#1}}%\hyperlinkcite takes 2 arguments: #1<- cite-key, #2<- link-text
\newtheorem{theorem}{Theorem}[section]
\newtheorem{corollary}[theorem]{Corollary}
\newtheorem{conjecture}[theorem]{Conjecture}
\theoremstyle{definition}
\newtheorem{definition}[theorem]{Definition}
\newtheorem{example}[theorem]{Example}
\newtheorem{exercise}[theorem]{Exercise}
\newtheorem{op}[theorem]{Open Problem}
\theoremstyle{remark}
\newtheorem{remark}[theorem]{Remark}
\numberwithin{equation}{section}
\begin{document}

\title[An Overview of Complex Fractal Dimensions]{An Overview of Complex Fractal Dimensions:\\ From Fractal Strings to Fractal Drums, and Back}

%\title[Complex Fractal Dimensions, Quantized Number Theory and Fractal Cohomology]{Complex Fractal Dimensions, Quantized Number Theory and Fractal Cohomology:\\ A Tale of Oscillations, Unreality and Fractality} for book

%    Only \author and \address are required; other information is
%    optional.  Remove any unused author tags.

%    author one information
% \author[short version for running head]{name for top of paper}
\author{Michel L. Lapidus}
\address{Department of Mathematics \\ University of California, Riverside \\ CA 92521\\ USA}
\curraddr{}
\email{lapidus@math.ucr.edu}
\thanks{Work partially supported by the US National Science Foundation under the research grants DMS-0707524 and DMS-1107750, as well as by the Burton Jones Endowed Chair Fund while he has been the holder of the F. Burton Jones Endowed Chair in Pure Mathematics at the University of California, Riverside, since July 2017.}

%\subjclass[2000]{Primary }
%    The 2010 edition of the Mathematics Subject Classification is
%    now available.  If you are citing a classification from the
%    new scheme, use the following input coding instead.
%\subjclass[2010]{Primary }

\date{}

\begin{abstract}
Our main goal in this long survey article is to provide an overview of the theory of complex fractal dimensions and of the associated geometric or fractal zeta functions, first in the case of fractal strings (one-dimensional drums with fractal boundary), in \S\ref{Sec:2}, and then in the higher-dimensional case of relative fractal drums and, in particular, of arbitrary bounded subsets of Euclidean space of $\mbr^N$, for any integer $N \geq 1$, in \S\ref{Sec:3}.

Special attention is paid to discussing a variety of examples illustrating the general theory rather than to providing complete statements of the results and their proofs, for which we refer to the author's previous (joint) books mentioned in the paper.

Finally, in an epilogue (\S\ref{Sec:4}), entitled ``From quantized number theory to fractal cohomology'', we briefly survey aspects of related work (motivated in part by the theory of complex fractal dimensions) of the author with H. Herichi (in the real case) \cite{HerLap1}, along with \cite{Lap8}, and with T. Cobler (in the complex case) \cite{CobLap1}, respectively, as well as in the latter part of a book in preparation by the author, \cite{Lap10}.
\end{abstract}

\maketitle

{\em To my daughter, Julie Anne Myriam Lapidus, who has recently embarked on a journey fraught with perilous and frightening dangers, the ultimate struggle for the continuation of her own beautiful and young life.

You are my scintillating treasure and will always remain my untarnished jewel. 

With all my love.\\}

{\em 2010 AMS Subject Classification}:\newline
{\parindent=0pt\small Primary -- 11M06, 11M41, 28A12, 28A80, 30D10, 30B50, 35P20, 42B20, 45Q05, 81Q20.\newline
Secondary -- 30D30, 37C30, 37C45, 40A10, 42B35, 42B37, 44A10, 45Q05, 81R40.
}
\newline

\smallskip
{\em Key words and phrases}:\newline
{\parindent=0pt\small Fractal strings and drums, relative fractal drums (RFDs), complex dimensions, fractal zeta functions, distance and tube zeta functions, relative fractal zeta functions, fractal tube formulas, Minkowski dimension and content, Minkowski measurability criteria, fractality, unreality, Riemann zeta function, Riemann hypothesis, direct and inverse problems for fractal strings, quantized number theory, infinitesimal shift, spectral operator, regularized determinants, fractal cohomology.

}

\tableofcontents

\section{Introduction}\label{Sec:1}

Our main goal in this research expository article is to provide an overview of the theory of complex dimensions and of the associated geometric or fractal zeta functions, first in the case of fractal strings (one-dimensional drums with fractal boundary) or essentially equivalently, for compact subsets of the real line (see \S\ref{Sec:2}), and then (see \S\ref{Sec:3}), in the higher-dimensional case of relative fractal drums and, in particular, of bounded subsets in Euclidean space $\mbr^N$, for any integer $N \geq 1$.

Special attention is paid to providing a variety of examples (especially, in higher dimensions) illustrating the general theory rather than to stating precise theorems in their greatest generality or providing their proofs (or even a sketch thereof). For a rigorous and quite detailed account of the theory, we refer instead the interested reader to the books by the author and Machiel van Frankenhuijsen, \cite{Lap-vF4} for the case of fractal strings (i.e., $N=1$) and the recent book by the author, Goran Radunovi\'c and Darko \v{Z}ubrini\'c, \cite{LapRaZu1}, when $N \geq 1$ is arbitrary.

In the epilogue (see \S\ref{Sec:4}), we survey related work, motivated in part by the theory of complex dimensions of fractal strings, on quantized number theory (in the real setting) and the Riemann hypothesis (joint with Hafedh Herichi; see the papers [\hyperlinkcite{HerLap2}{HerLap2--5}] and the forthcoming book \cite{HerLap1}, along with \cite{Lap8}), as well as on quantized number theory (in the complex setting), regularized determinants and ``fractal cohomology'' (see [\hyperlinkcite{CobLap1}{CobLap1--2}], joint with Tim Cobler, along with the author's book in preparation, \cite{Lap10}, of which the present paper is both the seed and a significantly condensed version).  

Except at the very end of this introduction, we now focus on \S\ref{Sec:2} and \S\ref{Sec:3} in the rest of \S\ref{Sec:1}. For a brief overview of the contents of \S\ref{Sec:4}, we refer the interested reader to the overall introduction of \S\ref{Sec:4} and to the beginning of each of the subsections of \S\ref{Sec:4} (namely, \S\S \ref{Sec:4.1}--\ref{Sec:4.4}), along with part ($iii$) towards the end of this section.

Complex dimensions provide a natural way to extract the information about the oscillatory nature of fractal objects. This is done via generalized explicit formulas (extending to this setting Riemann's original explicit formula for the prime number counting function, see \S \ref{Sec:2.3}).

In this paper, we have chosen to emphasize a type of explicit formulas called {\em fractal tube formulas} and which enable us to express the volume, $V(\varepsilon)$, of the \linebreak $\varepsilon$-neighborhoods of the given fractal set $A$ or, more generally, of the given relative fractal drum $(\ao)$ (RFD, in short)  in $\mbr^N$,\footnote{An RFD in $\mbr^N$ is a pair $(\ao)$, with $A \subseteq \mbr^N$, $\Omega$ open in $\mbr^N$ and $\Omega \subseteq A_{\delta_1}$, for some $\delta_1 > 0$, where for any $\varepsilon > 0$,
\begin{equation}\label{1.1/4}
A_\varepsilon := \{ x \in \mbr^N :d(x, A) < \varepsilon \}
\end{equation}
is the $\varepsilon$-neighborhood of $A$ and $d(x, A)$ denotes the Euclidean distance from $x \in \mbr^N$ to $A$. Also, we let $V(\varepsilon) = V_A (\varepsilon) := |A_\varepsilon|$ (respectively, $V(\varepsilon) = V_{\ao} (\varepsilon) := |A_\varepsilon \cap \Omega|$) in the case of a bounded set $A$ (respectively, RFD $(\ao)$) in $\mbr^N$. Here and thereafter, $|\cdot| = |\cdot|_N$ denotes the $N$-dimensional volume (or Lebesgue measure in $\mbr^N$).}  
as an extended power series in $\varepsilon$ with exponents the underlying complex codimensions and (normalized) coefficients (in the case of simple poles) the residues of the associated geometric or fractal zeta function.

For example, under suitable assumptions (and still in the case of simple poles), we have the following (pointwise or distributional) exact fractal tube formula for a given bounded set $A$ (or, more generally, RFD $(\ao)$) in $\mbr^N$ (up to a possible error term, which can be estimated explicitly):
\begin{equation}\label{1.1}
V(\varepsilon) = \sum_{\omega \in \mcd} c_\omega \frac{\varepsilon^{N-\omega}}{N- \omega},
\end{equation}
where $c_\omega := \res (\zeta_A, \omega)$ (or, more generally, $c_\omega := \res (\zeta_{\ao}, \omega)$) for each $\omega \in \mcd$, and where $\mcd = \mcd_A$ (respectively, $\mcd = \mcd_{\ao}$) denotes the set of complex dimensions of $A$ (respectively, of the RFD  $(\ao)$), viewed as a multiset (i.e., a set with finite integer multiplicities). Here, $\zeta_A$ (respectively, $\zeta_{\ao}$) is the {\em distance zeta function} of $A$ (respectively, of $(\ao)$) defined (in the case of the bounded set $A$, for example) for all $s \in \mbc$ with $Re(s)$ sufficiently large by the Lebesgue (and hence, absolutely convergent) integral
\begin{equation}\label{1.2}
\zeta_A (s) = \int_{A_\delta} d(x, A)^{s-N} dx,
\end{equation}
for some fixed $\delta > 0$ whose specific value is unimportant from the point of view of the theory of complex dimensions.

More specifically, \eqref{1.2} holds for all $s \in \mbc$ with $Re(s) > \overline{D}$, where $\overline{D}$ is the (upper) Minkowski dimension of $A$, and this lower bound is optimal. In other words, the abscissa of convergence of $\zeta_A$ coincides with $\overline{D}$; this is one of the first basic results of the theory (see part ($a$) of \S \ref{Sec:3.3.1}).

The (visible) {\em complex dimensions} of $A$ are defined as the poles of the meromorphic continuation (if it exists) of $\zeta_A$ to some given connected open neighborhood of the vertical line $\{ Re(s) = \overline{D} \}$ (or, equivalently, of the closed right half-plane  $\{ Re(s)  \geq \overline{D} \}$, since $\zeta_A$ is holomorphic on the open right half-plane $\{ Re(s) > \overline{D} \}$; see part ($b$) of \S \ref{Sec:3.3.1}).\footnote{Throughout this paper, we use the following short-hand notation: given $\alpha \in \mbr$, we let
\begin{equation}\label{1.3}
\{ Re(s) \geq \alpha \} := \{ s \in \mbc: Re(s) \geq \alpha \}
\end{equation}
denote the closed right half-plane with abscissa $\alpha$; and analogously for the vertical line $\{Re(s) = \alpha \}$ or the open right half-plane $\{Re(s) > \alpha \}$ with abscissa $\alpha$, say. (If $\alpha = \pm \infty$, we adopt the obvious conventions $\{Re(s) \geq +\infty \} = \emptyset$ and $\{Re(s) \geq -\infty \} = \mbc$, for example.)}

In particular, if $\overline{D}$ itself is a pole of $\zeta_A$ (under mild conditions, it is always a nonremovable singularity of $\zeta_A$), then it is a complex dimension having the largest possible real part.

Provided $\overline{D} < N$ (i.e., if $\overline{D} \neq N$ since we always have $\overline{D} \leq N$), all of the above results and definitions  extend to another useful fractal zeta function, called the {\em tube zeta function} of $A$ and denoted by $\widetilde{\zeta}_A$.\footnote{For $Re(s)$ sufficiently large (in fact, precisely for $Re(s) > \overline{D}$, provided $\overline{D} < N), \widetilde{\zeta}_A$ is given by the Lebesgue (and hence, absolutely convergent) integral
\begin{equation}\label{1.3.1/2}
\widetilde{\zeta}_A (s) := \int_0^\delta V(\varepsilon) \varepsilon^{s-N} \frac{d \varepsilon}{\varepsilon},
\end{equation}
for some arbitrary but fixed $\delta > 0$, the value of which is unimportant from the point of view of the definition (and the values) of the complex dimensions.}
The fractal zeta functions $\zeta_A$ and $\widetilde{\zeta}_A$ are connected via a functional equation (see \eqref{3.11}), which implies that the (visible) complex dimensions of $A$ can be defined indifferently via either $\zeta_A$ or $\widetilde{\zeta}_A$. Furthermore, the fractal tube formula \eqref{1.1} has a simple counterpart expressed in terms of the residues of $\widetilde{\zeta}_A$ (instead of those of $\zeta_A$) evaluated at the complex dimensions of $A$. Namely, up to a possible error term which can be estimated explicitly,
\begin{equation}\label{1.4}
V (\varepsilon) = \sum_{\omega \in \mcd} d_\omega \varepsilon^{N - \omega},
\end{equation}
where $d_\omega:= \res (\widetilde{\zeta}_A, \omega)$ for each $\omega \in \mcd$.

All of the above results (including the fractal tube formulas \eqref{1.1} and \eqref{1.4}) extend to relative fractal drums (RFDs) in $\mbr^N$ (with $\zeta_A, \widetilde{\zeta}_A$ and $\mcd = \mcd_A$ replaced by $\zao, \tzao$ and $\mcd = \mcd_{\ao}$, respectively), which are very useful tools in their own right and enable us, in particular, to compute (by using appropriate decompositions and symmetry considerations) the fractal zeta functions and complex dimensions of many fractal compact subsets of $\mbr^N$.

At this stage, it is helpful to point out that many key results of the theory of complex dimensions of fractal strings \cite{Lap-vF4} (briefly discussed in \S\ref{Sec:2}), including the fractal tube formulas (of which \eqref{2.10} is a typical example), can be recovered by specializing the higher-dimensional theory of complex dimensions of RFDs in $\mbr^N$ to the $N=1$ case and by viewing fractal strings as RFDs in $\mbr$. In the process, a simple functional equation connecting the so-called geometric zeta function of a fractal string (described in the beginning of \S\ref{Sec:2}) and the distance zeta function of the associated RFD plays a key role; see \S \ref{Sec:3.2.2} and \S \ref{Sec:3.5.1}.

Intuitively, a fractal, viewed as a geometric object, is like a musical instrument tuned to play certain notes with frequencies (respectively, amplitudes) essentially equal to the real parts (respectively, the imaginary parts) of the underlying complex dimensions. Alternatively, one can think of a ``geometric wave'' propagating through the fractal and with the aforementioned frequencies and amplitudes. This ``physical'' intuition is corroborated, for example, by the fractal tube formulas expressed via the distance (respectively, tube) zeta function, as in \eqref{1.1} (respectively, \eqref{1.4}).

As was mentioned just above, the theory of complex dimensions of fractal strings can be viewed essentially as the one-dimensional special case of the general theory of complex dimensions (of RFDs in $\mbr^N$) developed in \cite{LapRaZu1}. Conversely, fractal string theory has provided the author and his collaborators with a broad and rich collection of examples with which to test various conjectures and formulate various definitions as well as elaborate tools that could eventually be used in more complicated higher-dimensional situations. Also, several of the key steps towards the proof of the higher-dimensional fractal tube formulas (such as in \eqref{1.1} and \eqref{1.4}) rely, in part, on techniques developed for dealing with the case of (generalized) fractal strings \cite{Lap-vF2, Lap-vF3, Lap-vF4}.

In addition, ``fractality'' is characterized (or rather, {\em defined}) in our general theory by the presence of {\em nonreal} complex dimensions.\footnote{It is also very useful to extend the notion of ``complex dimensions'' by allowing more general (nonremovable) singularities than poles of the associated fractal zeta functions; see \cite{LapRaZu1} and [\hyperlinkcite{LapRaZu6}{LapRaZu6--7, 10}], along with \S \ref{Sec:2.5}, \S \ref{Sec:3.5.2}, \S \ref{Sec:3.6} and \S \ref{Sec:4.4}.}
This extends to any dimension $N \geq 1$ the definition of fractality given earlier in [\hyperlinkcite{Lap-vF2}{Lap-vF2--4}], thanks to the fact that we now have to our disposal a general definition of fractal zeta functions valid for arbitrary bounded (or, equivalently, compact) subsets of $\mbr^N$ (as well as, more generally, for all RFDs in $\mbr^N$). 

We will also discuss (in \S \ref{Sec:3.5.2} when $N \geq 1$ is arbitrary, and in Theorem \ref{Thm:2.2} when $N=1$) a general Minkowski measurability criterion expressed in terms of complex dimensions. Namely, under certain mild conditions (which imply that the Minkowski dimension $D$ exists and is a complex dimension), a bounded set $A$ (or, more generally, an RFD $(\ao)$) in $\mbr^N$ is Minkowski measurable\footnote{Intuitively, Minkowski measurability is some kind of ``fractal regularity'' of the underlying geometry; for a precise definition, see \S \ref{Sec:3.2} when $N \geq 1$ is arbitrary (or \S \ref{Sec:2.1} when $N=1$).}
if and only if its only complex dimension with real part $D$ (i.e., its only {\em principal complex dimension}) is $D$ itself and $D$ is simple. In other words, the existence of nonreal complex dimensions (i.e., the ``{\em critical fractality}'' of $A$ or of $(\ao)$; see \S \ref{Sec:3.6}), along with the simplicity of $D$ (as a pole of $\zeta_A$ or equivalently, of $\zeta_{\ao}$), characterizes the Minkowski {\em non}measurability of $A$ (or of $(\ao)$). As a simple illustration, the Cantor set, the Cantor string, the Sierpinski gasket and the Sierpinski carpet, along with lattice self-similar strings (and more generally, in higher dimensions, lattice self-similar sprays with ``sufficiently nice'' generators), are all Minkowski nonmeasurable but are Minkowski nondegenerate; see \S \ref{Sec:3.5.3}. On the other hand, a ``generic'' (i.e., nonlattice) self-similar Cantor-type set (or string) or a ``generic'' self-similar carpet is Minkowski measurable (because it does not have any nonreal complex dimensions other than $D$, which is simple).\\

Beside this introduction (i.e., \S\ref{Sec:1}), this paper is divided into three parts:\\

{\bf (\emph{i})} \S\ref{Sec:2}, a brief account of the theory of complex dimensions for fractal strings ($N=1$) \cite{Lap-vF4} and its prehistory, including a discussion (in \S \ref{Sec:2.6}) of natural direct and inverse spectral problems for fractal strings along with their intimate connections with the Riemann zeta function \cite{LapPo2} and the Riemann hypothesis \cite{LapMa2}.\\

{\bf (\emph{ii})} \S\ref{Sec:3}, an introduction to the higher-dimensional theory of complex dimensions and the associated fractal zeta functions (namely, the distance and tube zeta functions), based on \cite{LapRaZu1} (and aspects of [\hyperlinkcite{LapRaZu2}{LapRaZu2--9}]), with emphasis on several key examples of bounded sets and relative fractal drums in $\mbr^N$ (with $N=2, N =3$ or $N \geq 1$ arbitrary) illustrating the key concepts of fractal zeta functions and their poles or, more generally, nonremovable singularities (i.e., the complex dimensions), as well as the associated fractal tube formulas. As was alluded to earlier, the latter explicit formulas provide a concrete justification of the use of the phrase ``complex fractal dimensions'' and help explain why both intuitively and in actuality, the theory of complex dimensions is a theory of oscillations that are intrinsic to fractal geometries. 

It is noteworthy that even though we will mostly stress the aforementioned geometric oscillations in \S\ref{Sec:3} (and in much of \S\ref{Sec:2}), the broad definition of ``fractality'' proposed in \S \ref{Sec:3.6} and expressed in terms of the presence of nonreal complex dimensions encompasses oscillations that are intrinsic to number theories (via Riemann-type explicit formulas expressed in terms of the poles and the zeros of attached $L$-functions, or equivalently, in terms of the poles of the logarithmic derivatives of those $L$-functions; see \S \ref{Sec:2.3} for the original example), or to dynamical systems (e.g., via explicit formulas for the counting functions of primitive periodic orbits; see \cite[Ch. 7]{Lap-vF4} for a class of examples), as well as to the spectra of fractal drums [both ``drums with fractal boundary'' (as, e.g., in [\hyperlinkcite{Lap1}{Lap1--3}] and parts of \cite{Lap-vF4}) and ``drums with fractal membrane'' (as, e.g., in \cite{Lap3}, \cite{KiLap1} and \cite{Lap7}) and other classical or quantum physical systems (via detailed spectral asymptotics or, essentially equivalently, via explicit formulas for the associated frequency or eigenvalue counting functions). 

Much remains to be done in all of these directions for a variety of specific classes of dynamical systems and of fractal drums, for example. We point out, however, that the deep analogy between many aspects of fractal geometry and number theory (see, e.g., [\hyperlinkcite{Lap-vF1}{Lap-vF1--5}], \cite{Lap7}, \cite{HerLap1} and \cite{LapRaZu1}) was a key motivation for the author to want to develop (since the mid-1990s) a theory of ``fractal cohomology'', itself an important motivation for many aspects of the work described in \S\ref{Sec:4}. \\

{\bf (\emph{iii})} \S\ref{Sec:4}, the epilogue, a very brief account (compared to the size of the corresponding material to be described) of ``quantized number theory'', both in the ``real case'' (\S \ref{Sec:4.2}, based on \cite{HerLap1},  [\hyperlinkcite{HerLap2}{HerLap2--5}] and \cite{Lap8}) and in the ``complex case'' (\S \ref{Sec:4.3}, based principally on [\hyperlinkcite{CobLap1}{CobLap1--2}] and  on aspects of \cite{Lap10}) and the associated fractal cohomology (\S \ref{Sec:4.1} and, especially, \S \ref{Sec:4.4}, as expanded upon in \cite{Lap10}), with applications to several reformulations of the Riemann hypothesis (\S \ref{Sec:4.2}) expressed in terms of the ``quasi-invertibility'' \cite{HerLap1} or the invertibility \cite{Lap8} of so-called ``spectral operators'', in particular, as well as to the representations (\S \ref{Sec:4.3}) of various arithmetic (or number-theoretic) $L$-functions and other meromorphic functions (including the completed Riemann zeta function and the Weil zeta functions attached to varieties over finite fields [\hyperlinkcite{Wei1}{Wei1--6}, \hyperlinkcite{Gro1}{Gro1--4}, \hyperlinkcite{Den1}{Den1--6}], and, e.g., \cite{Mani}, \cite{Kah}, [\hyperlinkcite{Tha1}{Tha1--2}]) via (graded  or supersymmetric) regularized (typically infinite dimensional) determinants of suitable unbounded linear operators (the so-called ``generalized Polya--Hilbert operators'') restricted to their eigenspaces (which are the proposed ``fractal cohomology spaces'').\\

These developments open-up a vast and very rich new domain of research, extending in a variety of directions and located at the intersection of many fields of mathematics, including fractal geometry, number theory and arithmetic geometry, mathematical physics, dynamical systems, harmonic analysis and spectral theory, complex analysis and geometry, geometric measure theory, as well as algebraic geometry and topology, to name a few. 

We hope that the reader will be stimulated by the reading of this expository article (and eventually, of its much expanded sequel, the author's book in preparation \cite{Lap10}) to explore the various ramifications and consequences of the theory,  many of which are yet to be discovered. In other words, instead of offering here a complete and closed theory, we prefer to (and, in fact, must) offer here (especially, in \S \ref{Sec:4}) only glimpses of a possible future unifying and ``universal'' theory, resting on the contributions and conjectures or dreams of many past and contemporary mathematicians and physicists. 

\section{Fractal Strings and Their Complex Dimensions}\label{Sec:2}
A ({\em bounded}) {\em fractal string} can be viewed either as a bounded open set $\Omega \subseteq \mbr$ or else as a nonincreasing sequence of lengths (or positive numbers) $\mcl = (\ell_j)_{j=1}^\infty$ such that $\ell_j \downarrow 0$. (The latter condition is not needed if the sequence $(\ell_j)$ is finite.)

Let us briefly explain the connection between these two points of view. If $\Omega$ is a bounded open subset of $\mbr$, we can write $\Omega = \cup_{j \geq 1} \, I_j$ as an at most countable disjoint union of bounded open intervals $I_j$, of length $\ell_j > 0$. These intervals are nothing but the connected components of $\Omega$. Since $|\Omega|_1 = \sum_{j \geq 1} \, \ell_j < \infty$ (i.e., $\Omega$ has finite total length), without loss of generality, we may assume (possibly after having reshuffled the intervals $I_j$, that $\ell_1 \geq \ell_2 \geq \cdots$ (counting multiplicities) and (provided the sequence $(\ell_j)_{j \geq 1}$ is infinite) $\ell_j \downarrow 0$.

Slightly more generally, in the definition of a {\em bounded fractal string}, one can assume that instead of being bounded, the open set $\Omega \subseteq \mbr$ has finite volume (i.e., length): $|\Omega|_1 < \infty.$

Unbounded fractal strings (i.e., strings  $\mcl = (\ell_j)_{j \geq 1}$ such that $\sum_{j \geq 1} \ell_j = +\infty$) also play an important role in the theory (see, e.g., \cite[Ch. 3 and parts of Chs. 9--11 along with \S 13.1]{Lap-vF4}) but from now on, unless explicitly mentioned otherwise, we will assume that all of the (geometric) fractal strings under consideration are bounded. As a result, we will often drop the adjective ``bounded'' when referring to fractal strings.

From a physical point of view, the `lengths' $\ell_j$ can also be thought of as being the underlying {\em scales} of the system. This is especially useful in the case of unbounded fractal strings but should also be kept in mind in the geometric situation of fractal strings.

A {\em geometric realization} of a (bounded) fractal string $\mcl = (\ell_j)_{j \geq 1}$ is any bounded open set $\Omega$ in $\mbr$ (or, more generally, any open  set $\Omega$ in $\mbr$ of finite length) with associated length sequence $\mcl$.

The {\em geometric zeta function} $\zeta_\mcl$ of a fractal string $\mcl = (\ell_j)_{j \geq 1}$ is defined by 
\begin{equation}\label{2.1}
\zeta_\mcl (s) = \sum_{j \geq 1} \ell_j^s,
\end{equation}
for all $s \in \mbc$ with $Re(s)$ sufficiently large. (Here and henceforth, we let $\ell_j^s := (\ell_j)^s$, for each $j \geq 1$.)

A simple example of a fractal string is the {\em Cantor string}, denoted by $\Omega_{CS}$ (or $\mcl_{CS}$) and defined by $\Omega_{CS} = [0,1] \backslash C$, the complement of the (classic ternary) Cantor set $C$ in the unit interval. (Observe that the boundary of the Cantor string is the Cantor set itself: $\partial \mcl_{CS}:= \partial \Omega_{CS} = C$.) 
Then, $\Omega_{CS}$ consists of the disjoint union of the deleted (open) intervals, in the usual construction of the Cantor set:
\begin{equation}\label{2.2}
\Omega_{CS} = (0, 1/3) \cup (1/9, 2/9) \cup (7/9, 8/9) \cup \cdots.
\end{equation}
Hence, the associated sequence of lengths $\mcl_{CS}$ is given by
\begin{equation}\label{2.3}
1/3, 1/9, 1/9, 1/27, 1/27, 1/27, 1/27, \cdots;
\end{equation}
equivalently, $\mcl_{CS}$ consists of the lengths $1/3^n$ counted with multiplicity $2^{n-1}$, for $n=1,2,3, \cdots$. It follows that $\zeta_{CS}$ can be computed by simply evaluating the following geometric series:
\begin{align}\notag
\zeta_{CS} (s) &= \sum_{n=1}^\infty 2^{n-1} (3^{-n})^s
= 3^{-s} \sum_{n=0}^\infty (2 \cdot 3^{-s})^n \\ &= \frac{3^{-s}}{1-2 \cdot 3^{-s}} = \frac{1}{3^s - 2}.\notag
\end{align}
This calculation is valid for $Re(s) > \log_3 2$ (i.e., $|2 \cdot 3^{-s}| < 1$) but upon analytic continuation, we see that $\zeta_{CS}$ admits a (necessarily unique) meromorphic continuation to all of $\mbc$ (still denoted by $\zeta_{CS}$, as usual) and that 
\begin{equation}\label{2.4}
\zeta_{CS} (s) = \frac{1}{3^s -2}, \text{ for all } s \in \mbc.
\end{equation}

The {\em complex dimensions} of the Cantor string are the poles of $\zeta_{CS}$; that is, here, the complex solutions of the equation $3^s -2 = 0$. Thus, the set $\mcd_{CS}$ of complex dimensions of $\mcl_{CS}$ is given by a single (discrete) vertical line,
\begin{equation}\label{2.5}
\mcd_{CS} = \{ D + in{\bf p}: n \in \mbz \},
\end{equation}
where $D:= D_{CS} = \log_3 2$ is the Minkowski dimension of the Cantor string (or of the Cantor set) and ${\bf p} := 2\pi/\log 3$ is its oscillatory period. (The definition of $D$ is recalled in \eqref{2.7}.)\footnote{In the case of the Cantor string (or set), the Minkowski dimension exists and hence, there is not need to talk about (upper) Minkowski dimension; see \S \ref{Sec:3.2} for the precise definitions.}

For an arbitrary fractal string $\mcl$, the {\em complex dimensions} of $\mcl$ (relative to a given domain $U \subseteq \mbc$ to which $\zeta_\mcl$ admits a meromorphic continuation), also called the {\em visible complex dimensions} of $\mcl$, are simply the poles of $\zeta_\mcl$ which lie in $U$. Thus, for the Cantor string, $\mcd_{CS} = \mcd_{CS} (\mbc)$ is given by \eqref{2.5}.

Recall that the {\em abscissa of convergence}\label{gls:ac10} $\alpha = \alpha_\mcl$ of the Dirichlet series defining $\zeta_\mcl$ in \eqref{2.1} is given by
\begin{equation}\label{2.6}
\alpha := \inf \Big\{ \beta \in \mbr: \sum\nolimits_{j \geq 1} \ell_j^\beta < \infty \Big\};
\end{equation}
so that $\alpha$ is the unique real number such that  $\sum_{j \geq 1} \ell_j^s$ converges absolutely for $Re(s) > \alpha$ but diverges for $Re(s) < \alpha$.

\begin{theorem}[\protect{Abscissa of convergence and Minkowski dimension; \cite{Lap2, Lap3}, \cite[Thm. 1.10]{Lap-vF4}}]\label{Thm:2.1}
Let $\mcl$ be an arbitrary bounded fractal string $\mcl$ having infinitely many lengths. $($When $\mcl$ has finitely many lengths, it is immediate to check that $\zeta_\mcl$ is entire and hence, $\alpha = -\infty$ while $D = 0.)$ Then $\alpha = D$, the $($upper$)$ Minkowski dimension of $\mcl$ $($i.e., of $\partial \Omega$, where the bounded open set $\Omega$ is any geometric realization of $\mcl);$ see, respectively, \eqref{2.6} and \eqref{2.7} for the definition of $\alpha$ and $D$. In other words, the abscissa of convergence of $\zeta_\mcl$ and the Minkowski dimension of $\mcl$ coincide.\footnote{In \cite{Lap2, Lap3}, the proof of this equality relied on a result obtained in \cite{BesTa}. Then, several direct proofs were given in [\hyperlinkcite{Lap-vF2}{Lap-vF2--4}]. See, especially, \cite[Thm. 1.10 and Thm. 13.111]{Lap-vF4}; see also \cite{LapLu-vF2}) and most recently, in \cite[\S 2.1.4, esp. Prop. 2.1.59 and Cor. 2.1.61]{LapRaZu1}, via the higher-dimensional theory of complex dimensions (to be discussed in \S\ref{Sec:3}).}
\end{theorem}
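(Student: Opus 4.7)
The plan is to reduce the identity $\alpha = D$ to the simultaneous control of two explicit representations, both expressed through the auxiliary counting function
$$N(x) := \#\{ j \geq 1 : \ell_j \geq x \},$$
which is right-continuous, non-increasing and supported in $(0, \ell_1]$. Realizing $\mcl$ by a bounded open set $\Omega = \bigsqcup_{j \geq 1} I_j$ with $|I_j|_1 = \ell_j$, a case-by-case analysis of the contribution of each component $I_j$ to the inner $\varepsilon$-neighborhood of $\partial \Omega$ in $\Omega$ (an end-strip of total length $2\varepsilon$ when $\ell_j \geq 2\varepsilon$; the entire interval when $\ell_j < 2\varepsilon$) yields
\begin{equation*}
V(\varepsilon) = 2\varepsilon \, N(2\varepsilon) + \sum_{j : \ell_j < 2\varepsilon} \ell_j .
\end{equation*}
On the analytic side, writing $\ell_j^s = s \int_0^{\ell_j} x^{s-1} \, dx$ for real $s > 0$ and applying Tonelli's theorem to the nonnegative integrand gives the Mellin-type identity
\begin{equation*}
\zeta_\mcl(s) = s \int_0^{\ell_1} x^{s-1} N(x) \, dx,
\end{equation*}
with both sides simultaneously finite or infinite. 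Hence the real abscissa $\alpha$ of $\sum_j \ell_j^s$ coincides with the abscissa of convergence of this Mellin integral.

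With these two formulas in hand, I would first prove the inequality $\alpha \leq D$. Fix $\beta > D$; the definition of the upper Minkowski dimension supplies $V(\varepsilon) = O(\varepsilon^{1-\beta})$ as $\varepsilon \downarrow 0$, and dropping the nonnegative second summand in the formula for $V(\varepsilon)$ forces $N(x) = O(x^{-\beta})$. The Mellin integral is then convergent for all real $s > \beta$, so $\alpha \leq \beta$; letting $\beta \downarrow D$ closes this direction. For the reverse inequality $D \leq \alpha$, pick any $\gamma \in (\alpha, 1)$ (the boundary case $\alpha = 1$ being trivial, as $D \leq 1$ always holds in the ambient dimension $N = 1$) and set $C := \sum_j \ell_j^\gamma < \infty$. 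The first term of $V(\varepsilon)$ is controlled by the elementary pointwise bound $x^\gamma N(x) \leq \sum_{\ell_j \geq x} \ell_j^\gamma \leq C$, giving $2\varepsilon\, N(2\varepsilon) \leq 2^{1-\gamma} C\, \varepsilon^{1-\gamma}$; the second term is controlled by the inequality $\ell_j \leq \ell_j^\gamma (2\varepsilon)^{1-\gamma}$, valid on $\{\ell_j < 2\varepsilon\}$ precisely because $\gamma < 1$, so that the tail sum is also $O(\varepsilon^{1-\gamma})$. Combining, $V(\varepsilon) = O(\varepsilon^{1-\gamma})$, whence $D \leq \gamma$, and $\gamma \downarrow \alpha$ concludes.

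The main technical obstacle is the reverse inequality $D \leq \alpha$, and specifically the handling of the intervals entirely swallowed by the $\varepsilon$-neighborhood: the bound $\ell_j^{1-\gamma} \leq (2\varepsilon)^{1-\gamma}$ relies crucially on $\gamma < 1$, so the degenerate ``top-dimensional'' case $\alpha = 1$ must be segregated and settled trivially by the universal bound $D \leq N = 1$. A secondary subtlety concerns the Mellin identity: one must verify that the hypotheses $\ell_j \downarrow 0$ and $\sum_j \ell_j < \infty$ force $N$ to be a compactly supported step function with $N(x) = 0$ for $x > \ell_1$, so that all boundary terms in the implicit integration by parts vanish and Tonelli applies without issue. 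Once these two points are settled, the equality $\alpha = D$ emerges as a clean consequence of a single pointwise bound on $N$ together with the threshold partition of $V(\varepsilon)$ at scale $2\varepsilon$.
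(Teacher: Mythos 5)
Your argument is correct and complete. The paper does not present a proof of Theorem \ref{Thm:2.1} in the text — it only points to \cite{Lap2}, \cite{Lap3} and \cite[Thm.~1.10]{Lap-vF4} — so the comparison is with those references: the route you take, namely the exact inner tube formula $V(\varepsilon) = 2\varepsilon\, N(2\varepsilon) + \sum_{\ell_j < 2\varepsilon}\ell_j$ combined with the Chebyshev-type bound $x^\gamma N(x) \le \sum_j \ell_j^\gamma$ for $D \le \alpha$, and the $O$-estimate $N(x) = O(x^{-\beta})$ fed into the Mellin representation $\zeta_\mcl(s) = s\int_0^{\ell_1} x^{s-1}N(x)\,dx$ for $\alpha \le D$, is precisely the standard direct argument of \cite[Thm.~1.10]{Lap-vF4} (ultimately going back to the Besicovitch--Taylor decomposition). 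Two small remarks: $N(x):=\#\{j:\ell_j\ge x\}$ is left-continuous rather than right-continuous, though this is immaterial since $N$ appears only inside Lebesgue integrals and in one-sided pointwise bounds; and your segregation of the boundary case $\alpha=1$ is non-circular, since $D\le 1$ is obtained directly from $V(\varepsilon)\le|\Omega|_1<\infty$, independently of the theorem being proved.
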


More precisely, here, the ({\em upper}) {\em Minkowski dimension} $D = D_\mcl$ of $\mcl$ is the nonnegative real number given by\footnote{The Minkowski dimension is also called the Minkowski--Bouligand dimension \cite{Bou}, the box dimension or the capacity dimension in the literature on fractal geometry; see, e.g., \cite{Man}, \cite{Fa1}, \cite{MartVuo}, \cite{Mat}, [\hyperlinkcite{Tri1}{Tri1--3}], [\hyperlinkcite{Lap1}{Lap1--3}], \cite{Lap-vF4}, \cite{LapRaZu1} and \cite{LapRaRo}.}
\begin{equation}\label{2.7}
D:= \inf \{ \beta \geq 0: V(\varepsilon) = O (\varepsilon^{1-\beta}) \quad \text{as } \varepsilon \rightarrow 0^+ \},
\end{equation}
where
\begin{equation}\label{2.8}
V (\varepsilon) = V_\mcl (x) := | \{x \in \Omega : d (x, \partial \Omega) < \varepsilon \} |_1
\end{equation}
is the volume (or length) of the $\varepsilon$-neighborhood of the boundary $\partial \Omega$ (relative to $\Omega$) and $d (x, \partial \Omega)$ denotes the distance (in $\mbr$) from $x$ to $\partial \Omega$.\footnote{In the present section (i.e., \S\ref{Sec:2}), for the simplicity of exposition, we will mostly ignore the distinction between upper Minkowski dimension and Minkowski dimension of $\mcl$. By contrast, in \S\ref{Sec:3}, we will denote, respectively, by $\overline{D}$ and $D$ these two dimensions (when the latter exists); see \S \ref{Sec:3.2} for the precise definitions. Note that in the terminology of \S\ref{Sec:3}, the notion introduced in \eqref{2.7} is that of upper Minkowski dimension of the bounded fractal string $\mcl = (\ell_j)_{j \geq 1}$, viewed as the relative fractal drum (or RFD) $(\partial \Omega, \Omega)$ in $\mbr$, where $\Omega$ is any geometric realization of $\mcl$.}

It follows at once from Theorem \ref{Thm:2.1}, along with the definition of $\alpha$ and $D$ respectively given in \eqref{2.6} and \eqref{2.7}, that for a fractal string, we have $0 \leq D \leq 1$.

For example, for the Cantor string, the computation leading to \eqref{2.4} shows that $\alpha = \log_3 2$ and it is well known that $D = \log_3 2$, in agreement with Theorem \ref{Thm:2.1}.

It is clear that the set $\mcd_\mcl$ of complex dimensions of a fractal string forms a discrete (and hence, at most countable) subset of $\mbc$ and (in light of Theorem \ref{Thm:2.1}, since $\zeta_\mcl$ is holomorphic for $Re(s) > D$)
\begin{equation}\notag
\mcd_\mcl \subseteq \{ Re(s) \leq D \},
\end{equation} 
where we use the short-hand notation 
\begin{equation}\notag
\{ Re(s) \leq D \} := \{ s \in \mbc : Re(s) \leq D \},
\end{equation}
here and henceforth. (Similarly, for example, the notation $\{ Re(s) = D \}$ stands for the vertical line $\{s \in \mbc: Re(s) = D \}.$)

The set of {\em principal complex dimensions} of $\mcl$, denoted by $\dim_{PC} \mcl$, is the set of complex dimensions with maximal real part:
\begin{equation}\label{2.9}
\dim_{PC} \mcl = \{ \omega \in \mcd_\mcl : Re(\omega) = D \}.
\end{equation}
This set (or rather, multiset) plays an important role in the general theory of complex fractal dimensions. The same is true for its counterpart in the higher-dimensional theory, to be discussed in \S\ref{Sec:3}.

For the Cantor string, in light of \eqref{2.5}, we clearly have $\mcd_\mcl = \dim_{PC} \mcl$ but in general, $\dim_{PC} \mcl$ is often a strict subset of $\mcd_\mcl = \mcd_\mcl (U)$. (We implicitly assume here and in \eqref{2.9} that the connected open set $U$ is a neighborhood of the vertical line $\{ Re(s) = D \}$, or equivalently, of the closed half-plane $\{ Re(s) \geq D \}$; observe, however, that the set $\dim_{PC} \mcl$ itself is independent of such a choice of $U$.) 

We note for later use that since $\zeta_\mcl$ is a Dirichlet series with positive coefficients, $\zeta_\mcl (s) \rightarrow + \infty$ as $s \rightarrow D^+, s \in \mbr$ (or, more generally, as $s \in \mbc$ tends to $D$ from the right within a sector of half-angle $< \pi/2$ and symmetric with respect to the real axis); see, e.g., \cite{Ser} or \cite[\S 1.2]{Lap-vF4}. It follows that for a fractal string (with infinitely many lengths), the half-plane $\{Re(s) > D \}$ of absolute convergence of $\zeta_\mcl$ always coincides with the half-plane of holomorphic continuation of $\zeta_\mcl,$ i.e., the maximal open right half-plane to which $\zeta_\mcl$ can be holomorphically continued. (See \cite{Lap-vF4}.) Hence, in the terminology and with the notation of \cite{LapRaZu1} (to be introduced in \S \ref{Sec:3.3}), we have that $D = D_{hol} (\zeta_\mcl)$, the abscissa of holomorphic continuation of $\zeta_\mcl$. 

Observe that since $D$  is always a singularity of $\zeta_\mcl$, then, provided $\zeta_\mcl$ can be meromorphically continued to a neighborhood of $D$, $D$ must necessarily be a pole of $\zeta_\mcl$ (i.e., a complex dimension of $\mcl)$.

\subsection{Fractal tube formulas}\label{Sec:2.1}
Given a fractal string $\mcl$, under suitable hypotheses,\footnote{Namely, we assume that $\mcl$ is {\em languid} in a suitable connected open neighborhood  $U$ of $\{ Re(s) \geq D \}$; i.e., roughly speaking, $\zeta_\mcl$ can be meromorphically continued to $U$ and satisfies a suitable polynomial growth condition for a screen $S$ bounding $U$ (in the sense of \cite[\S 5.3]{Lap-vF4}).\label{Fn:8}} 
we can express its {\em tube function} $V (\varepsilon) = V_\mcl(\varepsilon)$ (or rather $\varepsilon \mapsto V(\varepsilon)$), as given by \eqref{2.8}, in terms of its complex dimensions and the associated residues, as follows:
\begin{equation}\label{2.10}
V(\varepsilon) = \sum_{\omega \in \mcd_\mcl} c_\omega \, \frac{(2 \varepsilon)^{1-\omega}}{\omega (1-\omega)} + R(\varepsilon),
\end{equation}
where $c_\omega := \res (\zeta_\mcl, \omega)$ is the residue of $\zeta_\mcl$ at $\omega \in \mcd_\mcl$ and $R(\varepsilon)$ is an error term which can be explicitly estimated.\footnote{In this discussion, for clarity, we assume implicitly that all of the complex dimensions are simple (i.e., are simple poles of $\zeta_\mcl$). In the general case, \eqref{2.10} should be replaced by
\begin{equation}\label{2.10.1/2}
V(\varepsilon) = \sum_{\omega \in \mcd_\mcl} \res \bigg( \frac{(2 \varepsilon)^{1-s}}{s (1-s)} \, \zeta_\mcl (s), \omega \bigg) + R(\varepsilon).
\end{equation}
}
If $R(\varepsilon) \equiv 0$ (which occurs, for example, for any self-similar string if we choose $U := \mbc$), the corresponding {\em fractal tube formula} \eqref{2.10} is said to be {\em exact}.\footnote{More generally, we obtain an exact tube formula whenever $\mcl$ (i.e., $\zeta_\mcl$) is {\em strongly languid} (which implies that $U := \mbc$), in the sense of \cite[\S 5.3]{Lap-vF4}.}

In \cite[Ch. 8]{Lap-vF4}, the interested reader can find the precise statement and hypotheses of the fractal tube formula. In fact, depending, in particular, on the growth assumptions made on the geometric zeta function $\zeta_\mcl$, there are a variety of fractal tube formulas, with or without error term (the latter ones being called {\em exact}), as well as interpreted {\em pointwise} or {\em distributionally}; see \cite[\S 8.1]{Lap-vF4}.

Furthermore, in the important special case of self-similar strings (of which the Cantor string is an example), even more precise (pointwise) fractal tube formulas (exact or else with an {\em error term}, depending on the goal being pursued) are obtained in \cite[\S 8.4]{Lap-vF4}.

For the example of the Cantor string (which is a self-similar string because its boundary, the ternary Cantor set, is itself a self-similar set in $\mbr$), we have the following {\em exact fractal tube formula}, valid {\em pointwise} for all $\varepsilon \in (0, 1/2)$:
\begin{equation}\label{2.11}
V_{CS} (\varepsilon) = \frac{1}{2 \log 3} \, \sum_{n \in \mbz} \, \frac{(2 \varepsilon)^{1 - D - in{\bf p}}}{(D + in{\bf p})(1-D - in{\bf p})} - 2 \varepsilon,
\end{equation} 
with $D:= \log_3 2$ and ${\bf p} := 2\pi/ \log 3$. 

Observe that we can rewrite \eqref{2.11} in the following form:
\begin{equation}\label{2.11.1/2}
V_{CS} (\varepsilon) = \varepsilon^{1-D} \, G (\log_3 \varepsilon^{-1}) -2 \varepsilon, 
\end{equation}
where $G$ is a  nonconstant, positive $1$-periodic function on $\mbr$ which is bounded away from zero and infinity. In fact,
\begin{equation}\notag
0 < \mcm_* = \min_{u \in \mbr} \, G (u) \text{ and } \mcm^* = \max_{u \in \mbr} \, G(u) < \infty,
\end{equation}
where $\mcm_*$ and $\mcm^*$ denote, respectively, the {\em lower} and {\em upper Minkowski content} of $\mcl$, defined by\footnote{An entirely analogous definition of $\mcm_*$ and $\mcm^*$ can be given for any fractal string $\mcl$; simply replace $V_{CS} (\varepsilon)$ by $V(\varepsilon) = V_\mcl (\varepsilon)$ in \eqref{2.12} and \eqref{2.13}, respectively.} 
\begin{equation}\label{2.12}
\mcm_* := \liminf_{\varepsilon \rightarrow 0^+} \varepsilon^{-(1-D)} V_{CS} (\varepsilon)
\end{equation}
and
\begin{equation}\label{2.13}
\mcm^* := \limsup_{\varepsilon \rightarrow 0^+} \varepsilon^{-(1-D)} V_{CS} (\varepsilon).
\end{equation}
(Clearly, we have that $0 \leq \mcm_* \leq \mcm^* \leq \infty$.) For the Cantor string,
\begin{equation}\notag
\mcm_* = 2^{1-D} \, D^{-D} \approx 2.4950 \text{ and } \mcm^* = 2^{2-D} \approx 2.5830.
\end{equation}
Hence, $\mcm_* < \mcm^*$ and thus, the limit of $V(\varepsilon)/\varepsilon^{1-D}$ as $\varepsilon \rightarrow0^+$ does not exist; i.e., the Cantor string (and hence, also the Cantor set) is {\em not} Minkowski measurable.\footnote{This fact was first established in [\hyperlinkcite{LapPo1}{LapPo1--2}], by using a direct computation and Theorem \ref{Thm:2.3}, and then extended in [\hyperlinkcite{Lap-vF2}{Lap-vF2--4}] to a whole class of examples (including lattice self-similar strings and generalized Cantor strings; see \cite[\S 8.4.2 and \S 10.1]{Lap-vF4}). Another, more conceptual, proof was given in \cite[Ch. 8]{Lap-vF4} by using the existence of nonreal principal complex dimensions of the Cantor string; see Theorem \ref{Thm:2.2} and the comments following it.}

Recall that a fractal string $\mcl$ (or its boundary $\partial \Omega$) is said to be {\em Minkowski measurable} if the above limit exists in $(0, +\infty)$ and then, 
\begin{equation}\label{2.14}
\mcm := \lim_{\varepsilon \rightarrow 0^+} \varepsilon^{-(1-D)} \, V(\varepsilon)
\end{equation}
is called the {\em Minkowski content} of $\mcl$ (or of $\partial \Omega$). In other words, $\mcl$ is Minkowski measurable if $\mcm_* = \mcm^*$, and this common value, denoted by $\mcm$, lies in $(0, +\infty).$

There is another way to show that $\mcl$ (in the present case, $\mcl =\mcl_{CS}$, the Cantor string) is not Minkowski measurable. This can be seen by using the principal complex dimensions of $\mcl$, as defined by \eqref{2.9}; in other words, the complex dimensions with maximal real part $D$. Indeed, the following useful Minkowski measurability criterion was obtained in [\hyperlinkcite{Lap-vF1}{\bf{Lap-vF1--4}}].

\begin{theorem}[\protect{Minkowski measurability and complex dimensions; \cite[Thm. 8.15]{Lap-vF4}}]\label{Thm:2.2} 
Under suitable hypotheses,\footnote{In essence, we assume that $\mcl$ is languid (in the sense of footnote \ref{Fn:8}) for a screen $S$ passing between the vertical line $\{ Re(s) = D \}$ and all of the complex dimensions of $\mcl$ with real part $ < D$.} the following statements are equivalent{\em :}
\begin{itemize}
\item [(i)] $\mcl$ is Minkowski measurable $($with Minkowski dimension $D \in (0,1)).$
\item [(ii)] The only principal complex dimension of $\mcl$ is $D$ itself, and it is simple. 
\end{itemize}
\end{theorem}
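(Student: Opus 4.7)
The plan is to apply the fractal tube formula \eqref{2.10} (or its general form \eqref{2.10.1/2} when higher-order poles are allowed) and to isolate the contribution of the complex dimensions lying on the critical vertical line $\{Re(s) = D\}$. The key observation is that each simple complex dimension $\omega \in \mcd_\mcl$ contributes a term of exact order $\varepsilon^{1-Re(\omega)}$, and a pole of order $m$ contributes such a term multiplied by a polynomial of degree $m-1$ in $\log\varepsilon^{-1}$. Consequently, only the principal complex dimensions can match the dominant order $\varepsilon^{1-D}$, while all non-principal contributions, together with the error term $R(\varepsilon)$ controlled by the languidity hypothesis of footnote~\ref{Fn:8}, are $o(\varepsilon^{1-D})$ as $\varepsilon\to 0^+$.

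For the direction $(ii)\Rightarrow(i)$, I would assume $D$ is the only principal complex dimension and that it is simple. Then \eqref{2.10} reduces, after dividing by $\varepsilon^{1-D}$, to
\begin{equation*}
\varepsilon^{-(1-D)}\,V(\varepsilon) \;=\; \frac{2^{1-D}\,c_D}{D\,(1-D)} \;+\; o(1) \qquad \text{as } \varepsilon \to 0^+,
\end{equation*}
where $c_D = \res(\zeta_\mcl, D)$. Since $\zeta_\mcl$ is a Dirichlet series with positive coefficients and $\zeta_\mcl(s) \to +\infty$ as $s \to D^+$ along the real axis (a fact noted just before \S\ref{Sec:2.1}), one has $c_D > 0$, so $\mcl$ is Minkowski measurable with $\mcm = 2^{1-D}\,c_D/(D(1-D)) \in (0,+\infty)$.

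For the converse $(i)\Rightarrow(ii)$, I would argue the contrapositive in two cases. If $D$ is a pole of $\zeta_\mcl$ of order $m \geq 2$, then the residue in \eqref{2.10.1/2} at $\omega = D$ produces a term of the form $\varepsilon^{1-D}\,P(\log\varepsilon^{-1})$ with $P$ a nonzero polynomial of degree $m-1$, so that $\varepsilon^{-(1-D)}V(\varepsilon)$ is unbounded and $\mcm^* = +\infty$. If, instead, $D$ is simple but there exists a nonreal principal complex dimension $\omega_0 = D + i\tau$ with $\tau\neq 0$, then since $\zeta_\mcl$ has real coefficients, $\overline{\omega_0} = D - i\tau$ is also a principal complex dimension with $c_{\overline{\omega_0}} = \overline{c_{\omega_0}}$. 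Their joint contribution in \eqref{2.10}, after dividing by $\varepsilon^{1-D}$, is
\begin{equation*}
2\,\mathrm{Re}\!\left(\frac{c_{\omega_0}\,2^{1-D-i\tau}}{\omega_0\,(1-\omega_0)}\,e^{-i\tau\log\varepsilon}\right),
\end{equation*}
a nonconstant periodic function of $\log\varepsilon^{-1}$. Combined with the constant contribution from $D$ and the $o(1)$ remainder, this forces $\mcm_* < \mcm^*$, and $\mcl$ fails to be Minkowski measurable.

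The main obstacle is making the oscillation argument in the converse direction rigorous: one must ensure that contributions from non-principal complex dimensions and from the error term $R(\varepsilon)$ cannot conspire with the nontrivial oscillatory term to restore the limit. This is precisely where the languidity hypothesis of footnote~\ref{Fn:8}, requiring a screen $S$ that separates $\{Re(s) = D\}$ from all complex dimensions with real part $< D$, becomes crucial: it forces $R(\varepsilon) = o(\varepsilon^{1-D})$ pointwise, so the constant plus the oscillatory term genuinely controls the leading asymptotics. A secondary subtlety is that one must invoke the \emph{pointwise} (not merely distributional) version of the fractal tube formula, so that the comparison with $V(\varepsilon)$ is valid in the classical sense required by the definitions \eqref{2.12}--\eqref{2.13} of $\mcm_*$ and $\mcm^*$.
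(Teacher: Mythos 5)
Your plan captures the right heuristic — the tube formula splits $V(\varepsilon)$ into a $\varepsilon^{1-D}$ term coming from $D$, an oscillatory contribution from the other principal complex dimensions, and lower-order terms — and the sketch of which alternative in $(ii)$ blocks which alternative in $(i)$ is essentially correct. However, it differs in a crucial way from how \cite[Thm.\ 8.15]{Lap-vF4} (and its higher-dimensional analogue described in \S\ref{Sec:3.5.2}) actually closes the argument, and the difference hides two genuine gaps.

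First, under the stated hypothesis (mere languidity along a screen), the fractal tube formula \eqref{2.10}/\eqref{2.10.1/2} is in general only valid \emph{distributionally}; the pointwise version requires additional growth assumptions (strong or $L^1$-languidity). You explicitly lean on the pointwise version, writing that the languidity hypothesis ``forces $R(\varepsilon)=o(\varepsilon^{1-D})$ pointwise,'' but that does not follow from languidity alone, and without a pointwise error bound the claim that $\varepsilon^{-(1-D)}V(\varepsilon)$ converges in the $(ii)\Rightarrow(i)$ direction does not follow from the poles of $\zeta_\mcl$. The paper's route instead combines the distributional tube formula with the monotonicity of $\varepsilon\mapsto V(\varepsilon)$ and a Tauberian argument — specifically the Wiener--Pitt Tauberian theorem — to upgrade distributional control to the pointwise limit required by \eqref{2.14}. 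That Tauberian step is the heart of the implication and is missing from your argument.

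Second, in the contrapositive direction you treat only a single conjugate pair $\omega_0,\overline{\omega_0}$ of nonreal principal complex dimensions, observe that its contribution to $\varepsilon^{-(1-D)}V(\varepsilon)$ is a nonconstant periodic function of $\log\varepsilon^{-1}$, and conclude $\mcm_*<\mcm^*$. But a fractal string can have infinitely many nonreal principal complex dimensions (the Cantor string already has countably many), and their joint contribution is a Bohr almost periodic function (or distribution) of $\log\varepsilon^{-1}$ whose Fourier--Bohr exponents are the imaginary parts $\tau_k$. Nothing in your argument excludes the possibility that these infinitely many oscillatory terms conspire to give a constant total. Ruling that out is exactly where the third key ingredient of the paper's proof — a uniqueness theorem for almost periodic functions/distributions, asserting that such a sum is nonconstant as soon as one nontrivial Fourier--Bohr coefficient (i.e.\ residue $c_{\omega_k}$) is nonzero — is indispensable. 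Your ``case of a double pole at $D$'' is fine, and the reality-pairing observation $c_{\overline{\omega_0}}=\overline{c_{\omega_0}}$ is correctly used, but the leap from one pair to the full sum needs this uniqueness result to be made rigorous.
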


Observe that for the Cantor string $\mcl_{CS}$, there are infinitely many complex conjugate nonreal complex dimensions with real part $D$. Furthermore, $D = \log_3 2$ (like each of the complex dimensions of $\mcl_{CS}$ in \eqref{2.5}) is simple; i.e., it is a simple pole of $\zeta_{CS}$. Therefore, this yields another proof of the fact that $\mcl_{CS}$ (or, equivalently, the Cantor set $C$) is not Minkowski measurable.

There is another, very useful, {\em characterization of Minkowski measurability}, obtained in \cite{LapPo2} and announced in \cite{LapPo1}.

\begin{theorem}[Minkowski measurability and fractal strings; \cite{LapPo2}]\label{Thm:2.3}
Let $\mcl= (\ell_j)_{j=1}^\infty$ be an arbitrary fractal string $($of Minkowski dimension $D \in (0,1))$. Then, the following statements are equivalent{\em :}
\begin{itemize}
\item [(i)] $\mcl$ is Minkowski measurable.
\item [(ii)] $\ell_j \sim Lj^{-1/D}$ \textnormal{as} $j \rightarrow \infty$, for some constant $L \in (0, +\infty)$.\footnote{Here, $\ell_j \sim m_j$ as $j \rightarrow \infty$ means that $\ell_j/m_j \rightarrow 1$ as $j \rightarrow \infty.$}
\end{itemize}
In this case, the Minkowski content $\mcm$ of $\mcl$ is given by 
\begin{equation}\label{2.16}
\mcm = \frac{2^{1-D}}{1-D} L^D.
\end{equation}
\end{theorem}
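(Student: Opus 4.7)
The plan is to reformulate both conditions in terms of the counting function $N_\mcl(x) := \#\{j \geq 1 : \ell_j^{-1} \leq x\}$ for $x > 0$. Since $(\ell_j)$ is nonincreasing, (ii) is equivalent to the asymptotic $N_\mcl(x) \sim L^D x^D$ as $x \to +\infty$ (because inverting $\ell_k \sim L k^{-1/D}$ yields $k \sim L^D \ell_k^{-D}$, and the monotonicity of $(\ell_j)$ controls the step structure of $N_\mcl$). The geometric observation that each connected component of $\Omega$ of length $\ell_j$ contributes exactly $\min(\ell_j, 2\varepsilon)$ to $V(\varepsilon)$ then yields the identity
\[
V(\varepsilon) \;=\; 2\varepsilon\, N_\mcl\!\bigl(1/(2\varepsilon)\bigr) \;+\; \sum_{j > N_\mcl(1/(2\varepsilon))} \ell_j,
\]
and, specializing to $\varepsilon = \ell_k/2$, the sharp exact formula
\[
V(\ell_k/2) \;=\; k\,\ell_k \;+\; \sum_{j>k} \ell_j,
\]
which is the pivot of the whole argument.

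For the direction (ii) $\Rightarrow$ (i), assuming $\ell_j \sim L j^{-1/D}$, I would use Abel summation (or a direct Riemann-sum comparison) on the tail to obtain $\sum_{j > k} \ell_j \sim \frac{LD}{1-D}\, k^{1-1/D}$, combine this with $k \ell_k \sim L\, k^{1-1/D}$ in the exact formula above, and deduce that $V(\ell_k/2) \sim \frac{L}{1-D}\, k^{1-1/D}$. Since $(\ell_k/2)^{1-D} \sim 2^{D-1} L^{1-D} k^{1-1/D}$, this immediately gives $V(\ell_k/2)/(\ell_k/2)^{1-D} \to 2^{1-D} L^D/(1-D)$, which is the candidate value of $\mcm$. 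To extend this convergence to all $\varepsilon \to 0^+$, I would use the fact that $V(\varepsilon)$ is continuous, piecewise linear with slope $2k$ on each interval $[\ell_{k+1}/2,\ell_k/2]$, and hence concave; this provides two-sided control interpolating between the breakpoints $\ell_k/2$.

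For the converse (i) $\Rightarrow$ (ii), I would employ a Tauberian argument. Swapping sum and integral, the basic identity can be rewritten as $V(\varepsilon) = \int_{1/(2\varepsilon)}^{\infty} N_\mcl(u)\, u^{-2}\, du$, so the hypothesis $V(\varepsilon) \sim \mcm\, \varepsilon^{1-D}$ is a regular-variation statement about an integral of the monotone nondecreasing function $N_\mcl$. Karamata's Tauberian theorem (in its monotone-density form) then produces the pointwise asymptotic $N_\mcl(x) \sim L^D x^D$, with $L^D := (1-D)\mcm/2^{1-D}$. The monotonicity of $(\ell_j)$ finally lets me invert this to $\ell_k \sim L k^{-1/D}$, and the value of $L^D$ just obtained gives back the formula \eqref{2.16} for $\mcm$.

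The main obstacle is the Tauberian step in (i) $\Rightarrow$ (ii): passing from an integral-type regular-variation asymptotic for $V$ to a pointwise asymptotic for the step function $N_\mcl$ requires the monotonicity of $N_\mcl$ in an essential way, and the subsequent inversion from $N_\mcl(x) \sim L^D x^D$ to $\ell_k \sim L k^{-1/D}$ is delicate if $(\ell_j)$ has long plateaus of equal values. Both subtleties are resolved by exploiting the sharp monotonicity built into the geometric set-up, namely the fact that $(\ell_j)$ is itself the reordering of the component lengths of $\Omega$.
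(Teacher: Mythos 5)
Your proof is essentially correct. It's worth noting that the paper does not actually prove Theorem~\ref{Thm:2.3}: it cites \cite{LapPo2} and, in Remark~\ref{Rem:2.4}(a), characterizes the original proof there as ``analytical,'' contrasting it with later proofs by Falconer (dynamical systems) and Rataj--Winter (geometric measure theory). Your approach recasts the Lapidus--Pomerance argument in the language of regular variation and Karamata theory, which is a clean and modern reformulation.

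Your key identities all check out. Since each component of length $\ell_j$ contributes exactly $\min(\ell_j, 2\varepsilon)$ to $V(\varepsilon)$, the decomposition $V(\varepsilon) = 2\varepsilon N_\mcl(1/(2\varepsilon)) + \sum_{j > N_\mcl(1/(2\varepsilon))} \ell_j$ is exact, the specialization $V(\ell_k/2) = k\ell_k + \sum_{j>k}\ell_j$ follows, and the integral representation $V(\varepsilon)=\int_{1/(2\varepsilon)}^\infty N_\mcl(u)\,u^{-2}\,du$ is a Fubini argument. For (ii)$\Rightarrow$(i), note that you can also extend the asymptotic from $\varepsilon_k=\ell_k/2$ to all $\varepsilon$ using only monotonicity of $V$ together with $\ell_{k+1}/\ell_k\to 1$ (which follows from (ii)), without invoking concavity, via the squeeze $V(\varepsilon_{k+1})/\varepsilon_k^{1-D}\le V(\varepsilon)/\varepsilon^{1-D}\le V(\varepsilon_k)/\varepsilon_{k+1}^{1-D}$; but the concavity route works just as well. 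For the Tauberian step in (i)$\Rightarrow$(ii), you should be a little careful about which function's monotonicity you are invoking: the integrand $N_\mcl(u)u^{-2}$ is not obviously monotone, but the substitution $v = 1/u$ converts the statement into $\int_0^t N_\mcl(1/v)\,dv \sim \mcm\,2^{D-1} t^{1-D}$ as $t\to0^+$ with $v\mapsto N_\mcl(1/v)$ monotone, and then the Monotone Density Theorem applies in its standard form (or one can run the direct squeeze using $N_\mcl(x)\frac{\lambda-1}{\lambda x}\le \int_x^{\lambda x}N_\mcl(u)u^{-2}du\le N_\mcl(\lambda x)\frac{\lambda-1}{\lambda x}$ and let $\lambda\to1^+$). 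Finally, the inversion from $N_\mcl(x)\sim L^D x^D$ to $\ell_k\sim L k^{-1/D}$ does handle plateaus correctly, because $N_\mcl$ is regularly varying of positive index, which forces $N_\mcl(x^-)/N_\mcl(x)\to1$ along $x=\ell_k^{-1}$, ruling out macroscopic jumps; this is the precise content behind your closing remark that ``the sharp monotonicity built into the geometric set-up'' resolves the plateau issue, and it would be worth spelling out.
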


\begin{remark}\label{Rem:2.4}
($a$) The proof of Theorem \ref{Thm:2.3} given in \cite{LapPo2} is analytical. Later, a different approach to a part of that proof was taken by Kenneth Falconer in \cite{Fa2}, based on a suitable dynamical system, and more recently, by Jan Rataj and Steffen Winter in \cite{RatWi}, based on aspects of geometric measure theory.

($b$) If $\zeta_\mcl$ has a meromorphic continuation to a neighborhood of $D$ and either condition (i) or (ii) of Theorem \ref{Thm:2.3} is satisfied (or certainly, if the hypotheses and either condition ($i$) or ($ii$) of Theorem \ref{Thm:2.2} hold), then 
\begin{equation}\label{2.17}
\mcm = \frac{2^{1-D}}{D(1-D)} \, \res (\zeta_\mcl, D).
\end{equation}

($c$) Even though the Cantor string $\mcl_{CS}$ is not Minkowski measurable, it is the case that its {\em average Minkowski content}, $\mcm_{av}$, defined as a suitable Cesaro average of $V_{CS} (\varepsilon) \varepsilon^{-(1-D)}$ (see the $N=1$ case of footnote \ref{Fn:64}), exists and can be explicitly computed in terms of $\res (\zeta_\mcl, D)$; see \cite[\S 8.4.3]{Lap-vF4}. The same is true for any lattice self-similar string; see \cite[Thm. 8.23]{Lap-vF4}.\footnote{The precise definition of (bounded) self-similar strings is given in \cite[Ch. 2]{Lap-vF4}. Here, we simply recall that a self-similar string is said to be {\em lattice} if its distinct scaling ratios generate a (multiplicative) group of rank 1. It is said to be {\em nonlattice}, otherwise. The detailed structure of the complex dimensions of self-similar strings is discussed in \cite[Chs. 2 and 3]{Lap-vF4}.} More specifically, a lattice self-similar string is not Minkowski measurable but its average Minkowski content, $\mcm_{av}$, exists in $(0, +\infty)$ and is also given by the right-hand side of \eqref{2.17}; see \cite[Thm. 8.30]{Lap-vF4}.

(d) More generally, a self-similar string is Minkowski measurable if and only if it is nonlattice. In this case, its Minkowski content, $\mcm$, is given by either \eqref{2.16} or \eqref{2.17}; see \cite[Thms. 8.23 and 8.36]{Lap-vF4}. Further, we have $\mcm_{av} = \mcm$, since there is no need to take any averaging anymore. 
\end{remark}
 
\subsection{Other examples of fractal explicit formulas}\label{Sec:2.2}
Let $\mcl = (\ell_j)_{j \geq 1}$ be a fractal string. Then, it is a vibrating object and its (normalized frequency) spectrum consists of the numbers $f_{j,n} = n \cdot \ell_j^{-1}$, where $n, j \in \mbn = \{ 1, 2, \cdots \}.$ One can think of $\mcl$ as being composed of infinitely many ordinary Sturm--Liouville strings, with lengths $\ell_j$, vibrating independently of one another and with their endpoints fixed (i.e., corresponding to homogeneous Dirichlet boundary conditions for the one-dimensional Laplacian $-d^2/dx^2$ on the open set $\Omega \subseteq \mbr$). 

One of the major themes of fractal string theory is the study of the interplay between the geometry and the spectra of fractal strings.

Let $N_\mcl$ be the {\em geometric counting function} of $\mcl$, given by (here, $\#A$ denotes the cardinality of a finite set $A$)
\begin{equation}\label{2.18}
N_\mcl (x) = \# \{ j \geq 1: \ell_j^{-1} \leq x \}, \text{ for } x > 0.
\end{equation}  
Similarly, let $N_\nu$ denote the ({\em frequency} or) {\em spectral counting function} of $\mcl$:
\begin{equation}\label{2.19}
N_\nu (x) = \# \{ f: f \text{ is a frequency of } \mcl , \text{ with } f \leq x\}, \text{ for } x > 0.
\end{equation}
Then, $N_\mcl$ and $N_\nu$ are connected via the following identity, for all $x > 0$:\footnote{Note that for each fixed $x >0$, the sum in \eqref{2.20} contains only finitely many nonzero terms. However, as $x \rightarrow +\infty,$ the number of these terms tends to $+\infty.$}
\begin{equation}\label{2.20}
N_\nu (x) = \sum_{j=1}^\infty N_\mcl \bigg( \frac{x}{j} \bigg).
\end{equation}
Essentially equivalently, the geometric and spectral zeta functions $\zeta_\mcl$ and $\zeta_\nu$ of $\mcl$ are connected by the following key identity (first observed in \cite{Lap2}, \cite{Lap3}):\footnote{Here, $\zeta_\nu (s)$ is given for $Re(s) > 1$ by $\zeta_\nu (s) := \sum_f f^{-s}$, where $f$ ranges through all the (normalized) frequencies of $\mcl$, and is then meromorphically continued wherever possible.}
\begin{equation}\label{2.21}
\zeta_\nu (s) = \zeta (s) \cdot \zeta_\mcl (s),
\end{equation}
where $\zeta$ denotes the classic Riemann zeta function, initially defined by $\zeta (s):= \sum_{n=1}^\infty n^{-s}$ for $Re(s) > 1$ and then meromorphically continued to all of $\mbc$ (see, e.g., \cite{Edw, Pat, Tit}).  

Note that in order to apprehend the principal complex dimensions of $\mcl$ and their effect on the spectrum of $\mcl$, one must work in the {\em closed critical strip} $\{ 0 \leq Re(s) \leq 1 \}$ of $\zeta$ or, if one excludes the extreme cases when $D=0$ or $D =1$, in its open counterpart, $\{0 < Re(s) < 1 \}$, henceforth referred to as the {\em critical strip}.\\

Now, let us give a few examples of {\em fractal explicit formulas}, analogous to the fractal tube formulas discussed in \S \ref{Sec:2.1} above. In the spirit of this overview, we will not strive here for either mathematical precision or for the most general statements but instead refer to \cite[Chs. 5 and 6]{Lap-vF4} for all of the details and a much broader perspective.

Assume, for clarity, that all of the complex dimensions of $\mcl = (\ell_j)_{j=1}^\infty$ are simple. Then, under appropriate hypotheses, we obtain the following pointwise or distributional explicit formulas with error terms:\footnote{Under somewhat stronger assumptions, we obtain exact formulas; namely, either $R_\mcl (x) \equiv 0$ or (more rarely) $R_\nu (x) \equiv 0.$}
\begin{equation}\label{2.22}
N_\mcl (x) = \sum_{\omega \in \mcd_\mcl} c_w \frac{x^\omega}{\omega} + R_\mcl (x)
\end{equation}
and
\begin{equation}\label{2.23}
N_\nu (x) = \zeta_\mcl (1) x + \sum_{\omega \in \mcd_\mcl} c_\omega \, \zeta (\omega) \frac{x^\omega}{\omega} + R_\nu (x),  
\end{equation}
where, as before, $c_\omega := \res (\zeta_\mcl, \omega)$ for every $ \omega \in \mcd_\mcl$ and $R_\mcl$ and $R_\nu$ are error terms which can be suitably estimated (either pointwise or distributionally); see \cite[\S 6.2]{Lap-vF4}. (Note that $\zeta_\mcl (1) = |\Omega|_1 = \sum_{j=1}^\infty \ell_j$, the {\em total length} of the fractal string $\mcl$.)

Analogous formulas, now necessarily interpreted distributionally rather than pointwise, can be obtained for the positive measures $\eta$ and $\nu$, respectively defined as $d N_\mcl/ dx$ and $d N_\nu/dx$ (the distributional derivatives of $N_\mcl$ and $N_\nu$) and referred to as the {\em geometric} and {\em spectral densities of states}; see \cite[\S 6.3.1]{Lap-vF4}. Alternatively,
\[ \eta ([0,x]) := \frac{ N_\mcl ([0,x]) + N_\mcl ([0, x))}{2}, \text{ for all } x > 0, \]
and similarly for $\nu$ and $N_\nu$.

\begin{remark}$(${\em Fractal string theory and its ramifications.}$)$\label{Rem:2.5}
Fractal string theory and the associated theory of complex dimensions has been developed in many directions and applied to a variety of fields, including harmonic analysis, fractal geometry, number theory and arithmetic geometry, complex analysis, spectral geometry, geometric measure theory, probability theory, nonarchimedean analysis, operator algebras and noncommutative geometry, as well as dynamical systems and mathematical physics.

In particular, in \cite[Ch. 13]{Lap-vF4}, are discussed a variety of extensions or applications of fractal string theory in diverse settings (prior to the development of the higher-dimensional theory of complex dimensions and of fractal zeta functions in [\hyperlinkcite{LapRaZu1}{LapRaZu1--10}], to be partly surveyed in \S\ref{Sec:3}), including fractal tube formulas for fractal sprays (especially, self-similar sprays and tilings), in \cite[\S 12.1]{Lap-vF4} (based on [\hyperlinkcite{LapPe2}{LapPe2--3}, \hyperlinkcite{LapPeWi1}{LapPeWi1--2}], \cite{Pe, PeWi}), complex dimensions and fractal tube formulas for $p$-adic fractal (and self-similar) strings, in \cite[\S 12.2]{Lap-vF4} (based on [\hyperlinkcite{LapLu1}{LapLu1--3}, \hyperlinkcite{LapLu-vF1}{LapLu-vF1--2}]), multifractal zeta functions and strings, in \cite[\S 12.3]{Lap-vF4} (based on \cite{LapRo, LapLevyRo, ElLapMcRo}), random fractal strings and zeta functions, in \cite[\S 12.4]{Lap-vF4} (based on \cite{HamLap}), as well as fractal membranes (or `quantized fractal strings') and their associated moduli space, in \cite[\S 12.5]{Lap-vF4} (based on \cite{Lap7, LapNes}). See also \cite[\S 12.2.1]{Lap-vF4} for a brief description of a first direct attempt at a higher-dimensional theory, in \cite{LapPe1}, where a fractal tube formula was obtained for the Koch snowflake curve via a direct computation.

In addition to the aforementioned articles, we refer the interested reader to the research books [\hyperlinkcite{Lap-vF2}{Lap-vF2--4}], \cite{Lap7}, \cite{LapRaZu1}, \cite{HerLap1}, \cite{Lap10}, as well as \cite{Lap-vF6}, \cite{CaLapPe-vF} and \cite{LapRaRo}, along with the research papers and survey articles \cite{Cae}, [\hyperlinkcite{CobLap1}{CobLap1--2}], \cite{CranMH}, \cite{DemDenKoU}, \cite{DemKoOU}, \cite{DenKoOU}, \cite{DenKoORU}, \cite{deSLapRRo}, \cite{DubSep}, [\hyperlinkcite{Es1}{Es1--2}], [\hyperlinkcite{EsLi1}{EsLi1--2}], \cite{Fa2}, \cite{Fr}, \cite{FreiKom}, \cite{Gat}, \cite{Ger}, [\hyperlinkcite{GerScm1}{GerScm1--2}], \cite{HeLap}, [\hyperlinkcite{HerLap2}{HerLap2--5}], \cite{KeKom}, \cite{Kom}, \cite{KomPeWi}, \cite{KoRati}, [\hyperlinkcite{LalLap1}{LalLap1--2}], [\hyperlinkcite{Lap1}{Lap1--6}], [\hyperlinkcite{Lap8}{Lap8--9}], [\hyperlinkcite{LapMa1}{LapMa1--2}], [\hyperlinkcite{LapPo1}{LapPo1--3}], [\hyperlinkcite{LapRaZu2}{LapRaZu2--10}], \cite{LapRoZu}, \cite{Lap-vF1}, \cite{Lap-vF5}, \cite{Lap-vF7}, \linebreak \cite{LapWat}, \cite{LevyMen}, \cite{LiRadz}, \cite{MorSep}, [\hyperlinkcite{MorSepVi1}{MorSepVi1--2}], [\hyperlinkcite{Ol1}{Ol1--2}], [\hyperlinkcite{Ra1}{Ra1--2}], \cite{RatWi}, [\hyperlinkcite{Tep1}{Tep1--2}], [\hyperlinkcite{vF1}{vF1--2}], \cite{Wat} and [\hyperlinkcite{Zu1}{Zu1--2}], for various aspects of fractal string theory and its applications.
\end{remark}

\subsection{Analogy with Riemann's explicit formula}\label{Sec:2.3}
One of the most beautiful formulas in mathematics, in the author's opinion, is Riemann's explicit formula. The latter connects the {\em prime number counting function}
\begin{equation}\label{2.24}
\Pi_\mcp (x) := \# \{ p \in \mcp : p \leq x \}, \text{ for } x>0
\end{equation}
(where $\mcp$ denotes the set of prime numbers) and the zeros of the Riemann zeta function. Because it is simpler to state (as well as to justify), although a precise proof was provided only about forty years after the publication of Riemann's celebrated 1858 paper \cite{Rie}, we will state a modern form of this formula. Namely, consider the {\em weighted prime powers counting function}
\begin{equation}\label{2.25}
\varphi (x) := \sum_{p^n \leq x} \frac{1}{n}, \text{ for } x >0,
\end{equation} 
where the sum ranges over all prime powers $p^n$ (counted with a weight $1/n$, for every $n \in \mbn$). Then
\begin{equation}\label{2.26}
\varphi (x) = Li(x) - \sum_\rho Li(x^\rho) + \int_x^{+\infty} \frac{1}{t^2 -1} \frac{dt}{t \log t} - \log 2,
\end{equation}
where $Li(x) := \int_0^x \frac{dt}{\log t}$ is the {\em integral logarithm} and the infinite sum in \eqref{2.26} is taken over all of the critical zeros $\rho$ of $\zeta$ (in $\{ 0 < Re(s) < 1\}$), while the negative of the integral in \eqref{2.26} corresponds to the same sum but now taken over the trivial zeros $-2, -4, -6, \cdots$ of $\zeta$. Furthermore, the leading term, $Li (x)$, corresponds to the (simple) pole of $\zeta = \zeta (s)$ at $s=1$.

It follows from this formula (combined with an appropriate analysis, for instance based on a Tauberian theorem) that\footnote{The expression $f(x) \sim g(x)$ as $x \rightarrow +\infty$ means that $f(x)/g(x) \rightarrow 1$ as $x \rightarrow +\infty.$}
\begin{equation}\label{2.27}
\Pi_\mcp (x) \sim Li (x) \quad \text{as } x \rightarrow + \infty,
\end{equation}
or equivalently, that
\begin{equation}\label{2.28}
\Pi_\mcp (x) \sim \frac{x}{\log x} \quad \text{as } x \rightarrow + \infty,
\end{equation}
which is the statement of the famous Prime Number Theorem (PNT), as conjectured independently by Gauss (1792) and Legendre (1797) and proved independently by Hadamard (\cite{Had2}) and de la Vall\' ee Poussin (\cite{dV1}) in the same year (1896), but a century later. Note, however, that it also took about forty years after the publication of \cite{Rie} in 1858 in order to prove PNT, in the form of \eqref{2.27}, about the same amount of time it took to rigorously justify Riemann's original explicit formula; see van Mangoldt's work [\hyperlinkcite{vM1}{vM1--2}], along with Ingham's book \cite{Ing}.\footnote{There are more direct (but less insightful) ways to prove PNT. They also typically require to know that $\zeta (s)$ does not have any zero on the vertical line $\{ Re(s) = 1 \}$ (Hadamard, \cite{Had1}, 1893). However, in order to obtain a version of \eqref{2.27} with error term (PNT with error term), the Riemann--von Mangoldt explicit formula \eqref{2.26} (or one of its counterparts) is the most reliable tool (combined, for example, with an appropriate Tauberian theorem); see \cite{dV2} along, e.g., with \cite{Edw} (for a detailed history and analysis of Riemann's paper, \cite{Rie}) and, especially, \cite{Ing, Ivi, KarVo, Pat, Tit}.\label{Fn:19}}
The latter formula is obtained by first proving \eqref{2.26} and then by using  M\" obius inversion \cite{Edw, Ove} as follows (with $\varphi$ now given by the explicit formula \eqref{2.26}):
\begin{equation}\label{2.29}
\Pi_\mcp (x) = \sum_{n=1}^\infty \frac{\mu (n)}{n} \, \varphi \, (x^{1/n}),
\end{equation}
where $\mu$ denotes the {\em M\" obius function} defined on $\mbn$ by $\mu (n) = (-1)^k$ if $n \geq 2$ is a product of $k$ distinct primes, $\mu (1) = 1$, and $\mu (n) = 0$ otherwise. Riemann's original explicit formula is then deduced by substituting into \eqref{2.29} the expression of $\varphi$ given by \eqref{2.26}.

The analogy between Riemann's explicit formula (in any of its various disguises) and the fractal explicit formulas discussed in \S \ref{Sec:2.1} and \S \ref{Sec:2.2} is now apparent. The (critical) zeros of $\zeta$ correspond to the (nonreal) complex dimensions of $\mcl$, while the prime counting function $\Pi_\mcp$ in \eqref{2.24} (or the weighted prime powers counting function $\varphi$ in \eqref{2.25}) corresponds to the geometric or spectral counting function $N_\mcl$ or $N_\nu$ (e.g., in \eqref{2.22} or \eqref{2.23}, respectively), or else (in a more sophisticated but also more geometric form) to the tube function (or distribution) $V (\varepsilon)$ in \eqref{2.10}.

In particular, the {\em oscillations} (in the counting function of the primes) associated with the (critical) zeros of $\zeta$ in the infinite sum appearing in \eqref{2.26} or in \eqref{2.29} correspond to the {\em geometric oscillations} (in the geometric counting function $N_\mcl$ in \eqref{2.22} or in the tube function $V(\varepsilon)$ in \eqref{2.10}) or to the {\em spectral oscillations} (in the frequency counting function $N_\nu$ in \eqref{2.23}). We will further discuss these oscillations in \S \ref{Sec:2.4}.

At this stage, it is natural for the reader to be troubled by the presence of zeros in \eqref{2.26} or \eqref{2.29}, as opposed to just poles (or ``complex dimensions'') in \eqref{2.22}, \eqref{2.23} and \eqref{2.10}. However, this apparent discrepancy is quickly resolved by noting that the (necessarily simple) poles of (minus) the logarithmic derivative $-\zeta' (s)/\zeta (s)$ of $\zeta (s)$ correspond precisely to the zeros of $\zeta (s)$ and to its only pole (at $s=1$, which accounts for the leading term $Li (x)$ in \eqref{2.26}). In addition, the residue of $-\zeta' (s)/\zeta (s)$ at a pole $s = \omega$ is a nonzero integer whose sign tells us whether it corresponds to a zero or a pole (here, $s=1$) of $\zeta (s)$, and whose absolute value is the multiplicity of the zero or pole. As a simple exercise, the reader may wish to verify this statement and determine which sign of the residue corresponds to a zero or a pole.

In closing this subsection, we point out that the (pointwise or distributional) explicit formulas obtained in \cite[Ch. 5]{Lap-vF4}, with or without error term, and used throughout \cite[esp., in Chs. 6--11]{Lap-vF4}, extend Riemann's explicit formula (and its known number-theoretic counterparts) to a fractal, geometric, spectral, or dynamical setting in which the corresponding zeta functions do not necessarily have an Euler product or satisfy a functional equation. Furthermore, the general framework within which these explicit formulas are developed is sufficiently broad and flexible in order for the resulting formulas to be applied to a variety of situations (including arithmetic ones) and to help unify, in the process, aspects of fractal geometry and number theory, both technically and conceptually.\footnote{The interested reader can find in \cite[\S 5.1.1 and \S 5.6]{Lap-vF4} many references about number-theoretic and analytic explicit formulas in a variety of contexts, including [\hyperlinkcite{Wei4}{Wei4--5}, \hyperlinkcite{Den1}{Den1--3}, \hyperlinkcite{DenSchr}{DenSchr}, \hyperlinkcite{Har1}{Har1--3}].} 

\subsection{The meaning of complex dimensions}\label{Sec:2.4}
In light of the fractal tube formulas and the other fractal explicit formulas discussed in \S \ref{Sec:2.2} and \S \ref{Sec:2.3}, the following intuition of the notion of complex dimensions can easily be justified, mathematically. The {\em real parts} of the complex dimensions correspond to the {\em amplitudes} of `{\em geometric waves}' (propagating through the `space of scales'), while the {\em imaginary parts} of the complex dimensions correspond to the {\em frequencies} of those waves.

An analogous interpretation can be given in the spectral setting and in the dynamical setting. A common thread to these interpretations is provided by the (generalized) explicit formulas of \cite{Lap-vF4} mentioned at the end of \S \ref{Sec:2.3}. Associated key words are {\em oscillations}, vibrations, and wave-like phenomena, which could also be applied to the number-theoretic setting corresponding to Riemann's explicit formula for the prime number counting function and discussed in \S \ref{Sec:2.3}.

The author has conjectured since the early 1990s that (possibly generalized or even virtual) fractal geometries and arithmetic geometries pertained to the same mathematical realm. Consequently, there should exist a fractal-like geometry whose complex dimensions are the Riemann zeros (the critical zeros of $\zeta = \zeta (s)$); see, especially, the author's book \cite{Lap7}. There, in particular, an extension (and `quantization') of the notion of fractal string is introduced and coined `fractal membrane'. It turns out to be a noncommutative space, in the sense of \cite{Con1}. The associated moduli space of fractal membranes (which can be thought of physically as a quantization of the moduli space of fractal strings) plays a fundamental role in \cite{Lap7} in order to provide a conjectural explanation of why the Riemann hypothesis should be true, both for the classic Riemann zeta function and for all number-theoretic zeta functions (or $L$-functions, [\hyperlinkcite{ParsSh1}{ParsSh1--2}], \cite{Sarn}, \cite[App. A]{Lap-vF4}, \cite[Apps. B, C \& E]{Lap7}) occurring in arithmetic geometry. It is expressed in terms of a (still conjectural) noncommutative dynamical system on the moduli space of fractal membranes, as well as of its counterparts on the associated moduli spaces of zeta functions (or `partition functions') and of divisors (i.e., zeros and poles) on the Riemann sphere (which is the natural realm of the Riemann zeros and more generally, of the complex fractal dimensions). See, especially, \cite[Ch. 5]{Lap7}.

\subsection{Fractality, complex dimensions and irreality}\label{Sec:2.5}
Since, as we have seen, the imaginary parts of the complex dimensions give rise to oscillations in the intrinsic geometry (or in the spectra) of fractal strings, it is natural to wonder whether one could not define the elusive notion of fractality in terms of complex dimensions.

In [\hyperlinkcite{Lap-vF1}{Lap-vF1--4}] (as well as later, in higher dimensions, in \cite{LapRaZu1} to be discussed in \S\ref{Sec:3} below), an object is said to be `{\em fractal}' if it has at least one {\em nonreal} complex dimension\footnote{Since nonreal complex dimensions come in complex conjugate pairs, a fractal-like object must have at least two complex conjugate nonreal complex dimensions. In fact, in the geometric setting, it typically has infinitely many nonreal complex conjugate pairs of them.} 
and hence, in other words, according to the explicit formulas discussed in \S \S \ref{Sec:2.1}--\ref{Sec:2.4} (when $N=1$) and in \S \ref{Sec:3.5} (when $N \geq 1$ is arbitrary, where $N$ is the dimension of the embedding space), if it has {\em intrinsic geometric, spectral, dynamical or arithmetic oscillations}.\footnote{For various examples for which the source of the oscillations is of a dynamical (respectively, spectral) nature, see \cite[Ch. 7 and \S 12.5.3]{Lap-vF4} (respectively, \cite[Chs. 6 and 9--11]{Lap-vF4}), while for the case when it is of a geometric (respectively, arithmetic) nature, see \cite[Chs. 6 and 9--13]{Lap-vF4} (respectively, \cite[Chs. 9 and 11]{Lap-vF4}).}

In the case of a fractal string, the complex dimensions are the poles of the associated geometric zeta functions, whereas (anticipating on the discussion of \cite{LapRaZu1} given in \S\ref{Sec:3}), in higher dimensions (i.e., for bounded subsets of $\mbr^N$ or, more generally, for relative fractal drums in $\mbr^N$, for any $N \geq 1$), the complex dimensions are the poles of the associated fractal zeta functions. Furthermore, as was alluded to near the end of \S \ref{Sec:2.3}, in the arithmetic setting, the role played by the complex dimensions in fractal geometry is now essentially played by the Riemann zeros, or by their more general number-theoretic analogs (e.g., the critical zeros of automorphic $L$-functions or of zeta functions of varieties over finite fields).

\begin{remark}$(${\em Reality principle.}$)$\label{Rem:2.6}
Geometrically, the fact that the nonreal complex dimensions come in complex conjugate pairs is significant. This is what enables us, for instance, to obtain a real-valued (and even positive) expression for the tube function $V(\varepsilon)$ in the fractal tube formulas of \S \ref{Sec:2.1}.\footnote{An analogous comment can be made about the geometric and spectral counting functions of \S \ref{Sec:2.2} or even the prime numbers counting functions of \S \ref{Sec:2.3} (provided the ``complex dimensions'' are interpreted there as the zeros and the pole(s) of $\zeta = \zeta (s)$ or of its counterpart; that is, as the poles of (minus) the logarithmic derivative of the (arithmetic) zeta function under consideration.} For example, the fractal tube formula for the Cantor string in \eqref{2.11} can be written as follows (with $D := \log_3 2$ and ${\bf p} := 2\pi/\log 3$):
\begin{equation}\label{2.40}
V_{CS} (\varepsilon) = \frac{2^{-D} \varepsilon^{1-D}}{D(1-D) \log 3} + (2 \varepsilon)^{1-D} \Bigg(\sum_{n=1}^\infty Re \Bigg(\frac{(2\varepsilon)^{-in{\bf p}}}{(D + in{\bf p})(D- in{\bf p})} \Bigg) \Bigg) - 2 \varepsilon,
\end{equation}
\\
\noindent which in turn can be further expressed in terms of real-valued functions involving sine and cosine, by using Euler's identity
\begin{equation}\label{2.41}
(2 \varepsilon)^{-in{\bf p}} = \cos (n{\bf p} \log 2 \varepsilon) - i \sin (n{\bf p} \log 2 \varepsilon), \text{ for all } n \in \mbz.
\end{equation}
\end{remark}

It is obvious that the Cantor string $\mcl_{CS}$ (and hence, also the Cantor set) is fractal according to the above definition. Indeed, in light of \eqref{2.5}, it has infinitely many complex conjugate pairs of nonreal (principal) complex dimensions, $D \pm in{\bf p}$, with $n \in \mbn$ (as well as with $D$ and ${\bf p}$ as in Remark \ref{Rem:2.6}). This is also apparent in the fractal tube formula \eqref{2.11} (or its ``real'' form \eqref{2.40} in Remark \ref{Rem:2.6}), as well as in the following pointwise explicit formulas for $N_{CS}$ and $N_{\nu, CS}$, the geometric and spectral counting functions of $\mcl_{CS}$, respectively:\footnote{See \cite[Eqns. (1.31) and (6.57)]{Lap-vF4}, where in the latter equations, we have slightly adapted the formula because our Cantor string (unlike in \cite[Ch. 2]{Lap-vF4}) has total length $1$ (rather than $3$). Note that \eqref{2.42} is an exact explicit formula whereas \eqref{2.43} has an error term.}
\begin{equation}\label{2.42}
N_{CS} (x) = \frac{1}{2 \log 3} \sum_{n \in \mbz} \frac{x^{D + in{\bf p}}}{D + in{\bf p}} - 1
\end{equation}
and
\begin{equation}\label{2.43}
N_{\nu, CS} (x) = x + \frac{1}{2 \log 3} \sum_{n \in \mbz} \zeta (D + in{\bf p}) \frac{x^{D + in{\bf p}}}{D + in{\bf p}} + O(1).
\end{equation}

It is shown in \cite[Ch. 11]{Lap-vF4} that the Riemann zeta function $\zeta = \zeta(s)$ and many other arithmetic zeta functions (as well as other Dirichlet series) cannot have infinite vertical arithmetic progressions of critical zeros.\footnote{Unknown to the authors of [\hyperlinkcite{Lap-vF1}{Lap-vF1--2}] at the time, C.~R. Putnam [\hyperlinkcite{Put1}{Put1--2}] had established a similar result in the case of $\zeta = \zeta(s)$, by completely different methods which only extended to a few arithmetic zeta functions.} 
It follows that the spectral oscillations in \eqref{2.43}, just like the geometric oscillations in \eqref{2.42}, subsist. In fact, this result is established by reasoning by contradiction and using the counterparts of the explicit formulas \eqref{2.42} and \eqref{2.43}, as well as by proving that (virtual) generalized Cantor strings always have both geometric and spectral oscillations of a suitable kind (namely, of leading order $x^D$ as $x \rightarrow +\infty$); see \cite[Ch. 10]{Lap-vF4}. The sought for contradiction is then reached by making the nonreal complex dimensions $D + in {\bf p}$ (with $n \in \mbz \backslash  \{0 \}$) of the (virtual, generalized) Cantor string coincide with the presumed zeros of $\zeta$ in infinite arithmetic progression along the vertical line $\{ Re(s) = D \}$. (Here, $D \in (0,1)$ and the period ${\bf p} > 0$ can be chosen to be arbitrary.)\footnote{This is why the corresponding Cantor strings are not geometric, in general, but instead generalized (and virtual) fractal strings, in the sense of \cite[Chs. 4 and 10--11]{Lap-vF4}.} Then, in light of the counterpart of \eqref{2.43} in this context, we deduce that $N_\nu$ does not have any oscillations of leading order $x^D$, in contradiction to what is claimed just above. Observe the analogy with the method of proof (outlined in the latter part of \S \ref{Sec:2.6.2} below) of a key result of \cite{LapMa2} connecting the Riemann hypothesis and inverse spectral problems for fractal strings.

\begin{remark}$(${\em Multiplicative oscillations.}$)$\label{Rem:2.7}
In order to better understand the nature of the (multiplicative) oscillations intrinsic to `fractality', as expressed by fractal explicit formulas and the presence of nonreal complex dimensions (necessarily in complex conjugate pairs), it is helpful to first consider the following simple situation. Think, for instance, that each term of the form $x^z$ (with $x>0$ and $z \in \mbc$ such that $z = d + i \tau$, where $d \in \mbr$ and $\tau > 0$, say) arising in a given fractal (or arithmetic) explicit formula can be written as follows:
\begin{equation}\label{2.44}
x^z = x^d x^{i \tau} = x^d (\cos (\tau \log x) + i \sin (\tau \log x)).
\end{equation}
Now, clearly, the real part $d$ of $z$ governs the amplitude of the oscillations while the imaginary $\tau$ of $z$ governs the frequency of the oscillations. Observe that physically, the term $x^z$ can be viewed as a multiplicative analog of a plane wave (or a standing wave). The different terms of the form $x^z$ (or $\varepsilon^{-z}$) occurring in the infinite sum ranging over all of the visible complex dimensions and appearing in a given fractal explicit formula (or fractal tube formula), such as \eqref{2.22}, \eqref{2.23}, \eqref{2.42}, \eqref{2.43} (or \eqref{2.10}, \eqref{2.11}, \eqref{2.40}) provide a whole spectrum of amplitudes and frequencies associated with the corresponding superposition of `standing waves'. The fact that these waves arise in {\em scale space} (rather than in ordinary frequency or momentum space, in physicists' terminology), explains why the corresponding oscillations are viewed multiplicatively (rather than additively) here.

This is very analogous to what happens for Fourier series. Observe, however, that unlike for Fourier series, the frequencies are no longer, in general, multiple integers of a given fundamental frequency.\footnote{In that sense, the Cantor string and more generally, all lattice self-similar strings, are rather exceptional. In contrast, the complex dimensions of (bounded or unbounded) nonlattice self-similar strings have a much richer quasiperiodic structure; see \cite[Ch. 3]{Lap-vF4}.}
In addition, the varying amplitudes of the oscillations do not have a counterpart for ordinary Fourier series. If one replaces the classic theory of Fourier series by Harald Bohr's less familiar but more general theory of almost periodic functions \cite{Boh} (usually associated with purely imaginary rather than with arbitrary complex numbers $z$), one is getting closer to improving one's understanding of the situation. Nevertheless, the fact that typically, the complex dimensions form a countably infinite and discrete subset of $\mbc$ (rather than of $\mbr$) adds a lot of complexity and richness to the corresponding generalized `almost periodic' functions (or distributions, \cite{Katzn}).
\end{remark}

\begin{example}$(${\em The $a$-string.}$)$\label{Ex:2.8}
For a simple example of a string that is not fractal, in the above sense, consider the $a$-string, where $a >0$ is arbitrary.\footnote{The $a$-string, then viewed in \cite{Lap1} as a one-dimensional fractal drum, was the first example of fractal string (before that notion was formalized in \cite{LapPo2}) and was used in \cite[Exple. 5.1 and Exple. 5.1']{Lap1} to show that the remainder estimates obtained in \cite{Lap1} for the spectral asymptotics of fractal drums are sharp, in general (and in every possible dimension).}
Thus,
\begin{equation}\label{2.45}
\Omega = \Omega_a := [0,1] \backslash \{ j^{-a} : j \in \mbn \} = \bigcup_{j=1}^\infty ((j+1)^{-a}, j^{-a})
\end{equation}
and hence,
\begin{equation}\label{2.46}
\partial \Omega = \partial \Omega_a = \{ j^{-a} : j \in \mbn \} \cup \{ 0\},
\end{equation}
while
\begin{equation}\label{2.47}
\mcl = \mcl_a = (j^{-a} - (j+1)^{-a} )_{j=1}^\infty.
\end{equation}
It is shown in \cite[Thm. 6.21]{Lap-vF4} that the geometric zeta function $\zeta_{\mcl_a}$ of $\mcl_a$ admits a meromorphic continuation to all of $\mbc$ and that the complex dimensions of $\mcl_a$ are all simple with
\begin{equation}\label{2.48}
\mcd_{\mcl_a} = \{ D, -D, -2D, -3D, \cdots \},
\end{equation}
where $D= 1/(a+1)$ is the Minkowski dimension of $\mcl_a$ (or equivalently, of $\partial \Omega_a$).\footnote{It is possible that some of the numbers $-nD$, with $n \in \mbn$, do not truly appear in \eqref{2.48}, depending on the value of $a$. However, $D$ is always a complex dimension and ``typically'', we have an equality (rather than an inclusion) in \eqref{2.48}.}, 
Note that $D \in (0,1)$ whereas $H=0$ for any $a >0$, where $H$ is the Hausdorff dimension of the compact set $\partial \Omega_a \subseteq \mbr$. This illustrates the fact that the Minkowski dimension, and not the Hausdorff dimension, is the proper notion of real dimension pertaining to the theory of complex fractal dimensions.

Since the $a$-string $\mcl_a$ (or $\partial \Omega_a$) does not have any nonreal complex dimension, it is {\em not} fractal, in the above sense. This conclusion is entirely compatible with our intuition according to which $\mcl_a$ (or the associated compact set $\partial \Omega_a \subseteq \mbr$ in \eqref{2.46}) does not have much complexity.
\end{example}

\begin{example}$(${\em Self-similar strings.}$)$\label{Ex:2.9}
Next, we discuss a more geometrically interesting family of fractal strings, namely, {\em self-similar fractal strings} (as introduced and studied in detail in [\hyperlinkcite{Lap-vF1}{Lap-vF1--4}], see \cite[Chs. 2 and 3]{Lap-vF4}). It is shown in the just mentioned references that self-similar strings have infinitely many nonreal complex dimensions. In fact, their only real dimension is the Minkowski dimension $D$ (and it is simple); as a result, $D$ is also the maximal real part of the complex dimensions of such strings. Thus, as expected (since their boundaries are nontrival self-similar sets in $\mbr$), self-similar strings are always fractal.

Now, as we may recall from our earlier discussion, there are two kinds of self-similar strings, {\em lattice strings} and {\em nonlattice} strings. For both types of (i.e., for all) self-similar strings, the geometric zeta function admits a meromorphic continuation to all of $\mbc.$ In the {\em lattice case} (i.e., when $G=r^\mbz$, for some $r \in (0,1),$ where $G$ is the multiplicative group generated by the distinct scaling ratios and the `gaps' of the self-similar string), the complex dimensions are periodically distributed (with the same vertical period ${\bf p} := 2 \pi/ \log (r^{-1}) > 0$, called the {\em oscillatory period} of the given lattice string) along finitely many vertical lines.\footnote{The number of vertical lines (counted according to multiplicity) is equal to the degree of the polynomial obtained after making the change of variable $w := r^s$ in the denominator of the expression for $\zeta_\mcl (s) = (\sum_{k=1}^K g_k^s)/(1 - \sum_{j=1}^J r_j^s)$, where $(r_j)_{j=1}^J$ are the (not necessarily distinct) scaling ratios and $(g_k)_{k=1}^K$ are the gaps of $\mcl$, with $J \geq 2$.}
Furthermore, on each of these vertical lines, the multiplicity of the complex dimensions is the same, while the right  most vertical line is $\{Re(s) = D \}$, on which lie the principal complex dimensions (which are necessarily all simple).

By contrast, in the {\em nonlattice case} (i.e., when the rank of the group $G$ is strictly greater than 1), the complex dimensions are no longer periodically distributed. In fact, typically, on a given vertical line $\{ Re(s) = \alpha \}$, with $\alpha \in \mbr$, there is either zero, or one complex dimension, necessarily $D$ itself (only if $\alpha =D$), or else two complex conjugate nonreal complex dimensions $\omega, \overline{\omega}$ (with $Re(\omega) = \alpha < D$). (Note that there can be at most countably vertical lines containing at least one complex dimension.) Furthermore, on the right most vertical line, the only complex dimension is $D$ itself, and it is simple. (Hence, according to a version of Theorem \ref{Thm:2.2} obtained in \cite[\S 8.4]{Lap-vF4}, a nonlattice string is always Minkowski measurable, in contrast to a lattice string, which is never so.)

Moreover, it is shown in \cite[\S 3.4]{Lap-vF4} by using Diophantine approximation that any nonlattice string $\mcl$ can be approximated by a sequence of lattice strings $\{\mcl_n \}_{n=1}^\infty$ with oscillatory periods ${\bf p}_n$ increasing exponentially fast to $+\infty$, as $n \rightarrow \infty$. Hence, the complex dimensions of $\mcl$ are themselves approximated by the complex dimensions of $\mcl_n$. As a result, the complex dimensions of a nonlattice string exhibit a {\em quasiperiodic structure}. (See {\em ibid} for more precision and for many examples of quasiperiodic patterns of complex dimensions of nonlattice strings.)

Finally, we refine (as in \cite{LapRaZu1}) the notion of fractality by saying that a geometric object is `{\em fractal in dimension} $d$', where $d \in \mbr$, if it has a nonreal complex dimension with real part $d$. Hence, `fractality' in our sense is equivalent to fractality in dimension $d$, for some $d \in \mbr$. (Clearly, $d \leq D$.) Also, fractality in dimension $D$ amounts to the existence of nonreal principal complex dimensions. According to \cite[Thms. 2.16 and 3.6]{Lap-vF4}, this is always the case for lattice strings (such as the Cantor string) but is never the case for nonlattice strings.

In light of the above discussion, a lattice string is fractal in dimension $d$ for at least one value but at most finitely many values of $d$, whereas by contrast, nonlattice strings are fractal in dimension $d$ for infinitely (but countably) many values of $d$. In fact, in the generic nonlattice case, a significantly stronger statement is true;\footnote{A nonlattice string is said to be {\em generic} if the group $G$ generated by its $M$ distinct scaling ratios is of rank $M$ and $M \geq 2$. It is {\em nongeneric}, otherwise.} 
namely, the countable set of such $d$'s is dense in a single compact interval of the form $[d_{\min}, D]$, with $-\infty < d_{\min} < D$. This latter density result was obtained by the authors of \cite{MorSepVi1} who thereby proved a conjecture made in the generic nonlattice case in \cite{Lap-vF5} (see also [\hyperlinkcite{Lap-vF3}{Lap-vF3--4}]).\footnote{In general, in the nongeneric nonlattice case, the set of such $d$'s should be dense in at most finitely (but at least one, ending at $D$) compact and pairwise disjoint intervals; this is noted, by means of examples, independently in \cite{Lap-vF4} and in \cite{DubSep}.} 
\end{example}

In \S \ref{Sec:3.6}, we will further discuss and broaden the notion of fractality (and the related notion of hyperfractality), but now by focusing on higher-dimensional examples, such as the devil's staircase (i.e., the Cantor graph), rather than on fractal strings.

\begin{remark}\label{Rem:2.10.1/2}
The lattice/nonlattice dichotomy arose in probabilistic renewal theory \cite{Fel}, where it is usually referred to as the arithmetic/nonarithmetic dichotomy. In fractal geometry, it was used in [\hyperlinkcite{Lall1}{Lall1--3}] in connection with self-similar sets and generalizations thereof, and then, e.g., in [\hyperlinkcite{Lap2}{Lap2--7}], \cite{KiLap1}, \cite{Gat}, \cite{LeviVa},  [\hyperlinkcite{Lap-vF1}{Lap-vF1--7}], \cite{Fr}, [\hyperlinkcite{Sab1}{Sab1--3}], \cite{HamLap},  [\hyperlinkcite{LapPe2}{LapPe2--3}], [\hyperlinkcite{LapPeWi1}{LapPeWi1--2}], \newline [\hyperlinkcite{LapLa-vF2}{LapLa-vF2--3}], [\hyperlinkcite{LapLu-vF1}{LapLu-vF1--2}], \cite{KeKom}, \cite{Kom}, \cite{MorSep}, [\hyperlinkcite{MorSepVi1}{MorSepVi1--2}], \linebreak \cite{DubSep}, [\hyperlinkcite{LapRaZu1}{LapRaZu1--9}] and \cite{Lap10}. The notion of {\em generic} nonlattice self-similar string (or, more generally, spray or even set) was introduced in [\hyperlinkcite{Lap-vF3}{Lap-vF3--5}].
\end{remark}

\subsection{Inverse spectral problems and the Riemann hypothesis}\label{Sec:2.6}

In this subsection, we briefly discuss the intimate connection between direct (respectively, inverse) spectral problems for fractal strings and the Riemann zeta function (respectively, the Riemann hypothesis). For a recent survey of this subject, we refer the interested reader to \cite{Lap9}.\\

\subsubsection{Direct spectral problems for fractal strings}\label{Sec:2.6.1}

Let $\mcl = (\ell_j)_{j=1}^\infty$ (or any of its geometric realizations $\Omega \subseteq \mbr$) be a fractal string of dimension $D \in (0,1)$,\footnote{Recall that for a fractal string, we always have that $D \in [0,1]$.}
and let $N_\nu$ denote the associated spectral counting function of $\mcl$, as defined in \S \ref{Sec:2.2}. It turns out that the leading spectral asymptotics of $\mcl$ are given by the so-called {\em Weyl term}, $W$ (named after the well-known $N$-dimensional result in [\hyperlinkcite{Wey1}{Wey1--2}]).\footnote{Here, $\mcl$ is viewed as the RFD $(\partial \Omega, \Omega)$, in the terminology of \S \ref{Sec:3.2}, with (upper) Minkowski dimension $D \in (0,1)$.}
Namely,
\begin{equation}\label{2.49}
N_\nu (x) \sim W(x) \quad \text{as } x \rightarrow +\infty,
\end{equation}
where $W = W(x)$ is the {\em Weyl term} given by
\begin{equation}\label{2.50}
W(x) := |\Omega|_1 x, \text{ for }x >0,
\end{equation}
with $|\Omega|_1 = \sum_{j=1}^\infty \ell_j$ being the total length of the string $\mcl$ (or the `volume' of $\Omega$).

This is a very special (one-dimensional) case of the spectral asymptotics with error term obtained in \cite{Lap1} for fractal drums in $\mbr^N$, with $N \geq 1$ arbitrary. (In fact, $N_\nu (x) = W(x) + R(x)$, with $R(x) = O(x^D)$ if $\mcm^* < \infty$ and $R(x) =O(x^{D + \varepsilon})$ for any $\varepsilon >0$, otherwise, where $D \in (N-1, N)$ is the (upper) Minkowski dimension of $\Omega$, relative to $\partial \Omega$ or in the terminology of \S \ref{Sec:3.2}, of the RFD $(\partial \Omega, \Omega)$. Here, the Weyl term $W=W(x)$ is proportional to $|\Omega|_N x^N$, where $|\Omega|_N$ is the $N$-dimensional volume of $\Omega$.) It also follows, for example, from a result of \cite{LapPo2}.  

The following theorem (joint with C. Pomerance), obtained in \cite{LapPo2} (and first announced in \cite{LapPo1}), resolved in the affirmative the one-dimensional case of the modified Weyl--Berry (MWB) conjecture stated in \cite{Lap1} for fractal drums.\footnote{In higher dimensions, the situation is not as clear cut and the MWB conjecture itself needs to be further modified; see, e.g., \cite{FlVa, LapPo3, MolVai}.\label{Fn:24}}

We refer to \cite{Lap1, Lap3, LapPo2, Lap9} and \cite[\S 12.5]{Lap-vF4} for physical and mathematical motivations, as well as for many further references (including \cite{BirSol}, \cite{BroCar}, \cite{CouHil}, \cite{FlLap}, \cite{Ger}, [\hyperlinkcite{GerScm1}{GerScm1--2}], \cite{Gi}, [\hyperlinkcite{Ho1}{Ho1--3}], [\hyperlinkcite{Ivr1}{Ivr1--3}], \cite{LeviVa}, [\hyperlinkcite{Mel1}{Mel1--2}], [\hyperlinkcite{Met1}{Met1--2}], \cite{Ph}, [\hyperlinkcite{ReSi1}{ReSi1--3}],  [\hyperlinkcite{See1}{See1--3}], [\hyperlinkcite{Lap2}{Lap2--3}] and \cite[App. B]{Lap-vF4}, along with those cited in footnote \ref{Fn:24}), concerning the original conjectures of Weyl [\hyperlinkcite{Wey1}{Wey1--2}] for `smooth drums', Berry [\hyperlinkcite{Berr1}{Berr1--2}] for `fractal drums' and their later modifications and extensions in [\hyperlinkcite{Lap1}{Lap1--3}], in particular.

\begin{theorem}[\cite{LapPo2}]\label{Thm:2.6}
Let $\mcl$ be a fractal string which is Minkowski measurable and of Minkowski dimension $D \in (0,1)$. Then, $N_\nu$, the spectral counting function of $\mcl$, admits a {\em monotonic} asymptotic second term, proportional to $x^D$. More specifically,
\begin{equation}\label{2.51}
N_\nu (x) = W (x) - c_D \mcm x^D + o(x^D) \quad \textnormal{as} \rightarrow +\infty,
\end{equation}
where $W(x) = |\Omega|_1 x$ is the Weyl term given by \eqref{2.50} and $\mcm$ is the Minkowski content of $\mcl$. Furthermore, the constant $c_D$ is positive, depends only on $D$ and is given explicitly by 
\begin{equation}\label{2.52}
c_D := (1-D) 2^{-(1-D)} (-\zeta (D)),
\end{equation}
where $\zeta = \zeta (s)$ denotes the Riemann zeta function.
\end{theorem}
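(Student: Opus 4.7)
The plan is to reduce Theorem~\ref{Thm:2.6} to the asymptotic evaluation of a single fractional-part sum, evaluate it cleanly for the ``model'' string $\tilde\ell_j := Lj^{-1/D}$, and then transfer the estimate to the actual sequence $(\ell_j)$.

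Since the normalized frequencies are $f_{j,n} = n\ell_j^{-1}$, one has the exact identity
\begin{equation*}
N_\nu(x) \;=\; \sum_{j=1}^\infty \lfloor x\ell_j \rfloor \;=\; x\sum_{j=1}^\infty \ell_j \;-\; \sum_{j=1}^\infty \{x\ell_j\} \;=\; W(x) - S(x),
\end{equation*}
where $S(x) := \sum_{j\geq 1}\{x\ell_j\}$; the separation is valid because $|\Omega|_1 = \sum_j \ell_j < \infty$. By Theorem~\ref{Thm:2.3}, Minkowski measurability is equivalent to $\ell_j \sim L j^{-1/D}$ with $L$ determined by $\mcm = \frac{2^{1-D}}{1-D} L^D$, and a direct substitution shows that the target coefficient in the theorem satisfies $c_D\mcm = -\zeta(D) L^D$ (note $\zeta(D)<0$ for $D\in(0,1)$). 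Thus the theorem is equivalent to
\begin{equation*}
S(x) \;=\; -\zeta(D)\, L^D\, x^D \,+\, o(x^D) \qquad (x \to +\infty).
\end{equation*}

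For the model string, set $T(x) := \sum_{j\geq 1}\{xLj^{-1/D}\}$. I would exploit the classical Mellin identity
\begin{equation*}
\int_0^\infty \{u\}\, u^{-s-1}\, du \;=\; -\frac{\zeta(s)}{s}, \qquad 0 < Re(s) < 1,
\end{equation*}
together with $\sum_{j\geq 1} j^{-s/D} = \zeta(s/D)$ for $Re(s) > D$. After the substitution $u = xLj^{-1/D}$ in each summand and summation over $j$, one obtains the Mellin transform
\begin{equation*}
\int_0^\infty T(x)\, x^{-s-1}\, dx \;=\; -\frac{L^s\, \zeta(s)\, \zeta(s/D)}{s}, \qquad D < Re(s) < 1.
\end{equation*}
Mellin inversion on a line $Re(s) = c \in (D,1)$ and a contour shift to some line $Re(s) = c' \in (0,D)$ picks up only the simple pole of $\zeta(s/D)$ at $s=D$ (with residue $D$), whose contribution is exactly $-\zeta(D)L^D x^D$; the shifted integral is $O(x^{c'}) = o(x^D)$ by the standard polynomial growth of $\zeta$ in vertical strips of the critical region. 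Hence $T(x) = -\zeta(D) L^D x^D + o(x^D)$, giving the result for the model string.

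The last step is to show that $S(x) - T(x) = o(x^D)$, and this is where I expect the main technical difficulty to lie. The obstacle is that the fractional-part function is discontinuous, so even a small multiplicative perturbation $\ell_j = \tilde\ell_j(1 + \delta_j)$ with $\delta_j \to 0$ can, term by term, flip $\{x\ell_j\}$ by almost $1$. My plan is to compare at the level of counting functions rather than summand by summand: using $N_\nu(x) = \sum_{k\geq 1} N_\mcl(x/k)$ (and the analog for $\tilde N_\nu$), one reduces to controlling $\sum_{k\geq 1}(N_\mcl(x/k) - \tilde N_\mcl(x/k))$. Since $\ell_j \sim \tilde\ell_j$, the reciprocal-counting functions satisfy $N_\mcl(y) - \tilde N_\mcl(y) = o(y^D)$ as $y \to \infty$, and a careful dissection of the $k$-sum --- splitting off small $k$ (where $x/k$ is large and the $o$-estimate applies), an intermediate range uniformly handled by the monotonicity of $N_\mcl$ and $\tilde N_\mcl$, and the tail $k > x\ell_1$ (where both counting functions vanish) --- should yield the required $o(x^D)$ control. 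Combining this transfer with the previous paragraph's evaluation of $T(x)$ gives $N_\nu(x) = W(x) - c_D\mcm\, x^D + o(x^D)$, as asserted.
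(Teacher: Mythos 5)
The Mellin evaluation of the model sum $T(x)$ and the algebra $c_D\,\mcm = -\zeta(D)\,L^D$ are correct, but the transfer step contains a structural gap rather than a technical difficulty. First, the reduction to $\sum_{k\ge 1}\bigl(N_\mcl(x/k)-\tilde N_\mcl(x/k)\bigr)$ drops a linear term: from $S(x)=W(x)-N_\nu(x)$, $T(x)=\tilde W(x)-\tilde N_\nu(x)$ and the identity \eqref{2.20}, one has
$$
S(x)-T(x)=\bigl(W(x)-\tilde W(x)\bigr)-\sum_{k\ge1}\bigl(N_\mcl(x/k)-\tilde N_\mcl(x/k)\bigr),
$$
and $W(x)-\tilde W(x)=\bigl(\sum_{j}(\ell_j-\tilde\ell_j)\bigr)x$ is a nonvanishing linear function of $x$ in general (the total lengths of $\mcl$ and of the model string differ). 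So the $k$-sum is not $o(x^D)$; it must itself carry a linear term of size $\asymp x\gg x^D$. Second, and more seriously, even after the linear piece is subtracted, the termwise bound $|N_\mcl(y)-\tilde N_\mcl(y)|\le\varepsilon\, y^D$ for $y\ge Y_\varepsilon$ cannot close the argument: over $k\le x/Y_\varepsilon$ it yields $\varepsilon\, x^D\sum_{k\le x/Y_\varepsilon}k^{-D}\asymp\varepsilon\, x\big/\bigl((1-D)\,Y_\varepsilon^{1-D}\bigr)$, while the intermediate range $x/Y_\varepsilon<k\le x\max(\ell_1,\tilde\ell_1)$ contains $\asymp x$ terms, each uniformly bounded, again contributing $O(x)$. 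Your dissection therefore recovers only $O(x)$, the order of the Weyl term itself, and is weaker than the target by a full factor $x^{1-D}$. The cancellation needed to get down to $o(x^D)$ is precisely the content of the theorem and is invisible to a pointwise $o(y^D)$ estimate on $N_\mcl-\tilde N_\mcl$.

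This explains why the proof in \cite{LapPo2}, as summarized just after Theorem~\ref{Thm:2.6}, does not compare to a model string at all: it works directly with $N_\nu(x)=\sum_{k\ge1}N_\mcl(x/k)$, uses only $N_\mcl(y)\sim L^D y^D$ (the easy direction of Theorem~\ref{Thm:2.3}), and brings in $\zeta(D)$ through the Euler--Maclaurin/analytic-continuation identity $\sum_{n\le N}n^{-D}=\tfrac{N^{1-D}}{1-D}+\zeta(D)+O(N^{-D})$ for $0<D<1$, so that the ``divergent'' piece $N^{1-D}/(1-D)$ cancels exactly against the boundary contribution of $N_\mcl$ vanishing below $\ell_1^{-1}$, leaving $\zeta(D)$ as the genuine secondary coefficient. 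To salvage your Mellin computation you would need either a quantitative rate on $\ell_j-Lj^{-1/D}$ (which Minkowski measurability alone does not supply) or an argument that isolates the linear term exactly and then exhibits the cancellation in what remains; the transfer as sketched does neither. (A smaller point: the contour shift for the Mellin inversion needs a truncated Perron formula or a smoothed kernel, since $\zeta(s)\zeta(s/D)/s$ is not absolutely integrable on vertical lines in the range you use; this is standard but should be said.)
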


Theorem \ref{Thm:2.6} relies, in particular, on (the easier direction of) the characterization of Minkowski measurability, given in \cite{LapPo2} and recalled in Theorem \ref{Thm:2.3}. (Namely, $\mcl$ is Minkowski measurable iff $N_\mcl (x) \sim Q \, x^D$, for some constant $Q \in (0, +\infty)$, where $N_\mcl$ is the geometric counting function of $\mcl$.) It also makes use of the identity \eqref{2.20} connecting $N_\nu (x)$ and $N_\mcl (x)$ for any $x >0$, as well as of a direct computation involving the {\em analytic continuation} of $\zeta (s)$ to the open right half-plane $\{ Re(s) >0 \}$ and hence, to the (open) {\em critical strip} $\{0 < Re(s) < 1 \}$ since $D \in (0,1)$).

\begin{remark}\label{Rem:2.10}
($a$) ({\em Drums with fractal boundary.}) The aforementioned spectral error estimates obtained in \cite{Lap1} are valid for Laplacians (or, suitably adapted, for more general elliptic operators with variable, nonsmooth coefficients and of order $2m$, with $m \in \mbn$) on bounded open sets $\Omega \subseteq \mbr^N$ (where $N \geq 1$ is arbitrary) with (possibly) fractal boundary (``drums with fractal boundary'', in the sense of \cite{Lap3}) and with Dirichlet or Neumann boundary conditions. For the Dirichlet problem, $\Omega$ is allowed to have finite volume (i.e., $|\Omega|_N < \infty$) rather than to be bounded. Furthermore, for the Neumann problem, one must assume, for example, that $\Omega$ satisfies the {\em extension property} for the Sobolev space $H^m = W^{m,2}$ \cite{Br, Maz}, where $m=1$ in the present case of the Laplacian; see, e.g., \cite{Maz}, \cite{Jon}, [\hyperlinkcite{HajKosTu1}{HajKosTu1--2}], \cite{Lap1}, \cite{Vel-San}. For instance, for a simply connected planar domain, $\Omega$ is an extension domain if and only if it is a quasidisk (i.e., $\partial \Omega$ is a quasicircle, \cite{Pomm}); that is, $\Omega$ is the homeomorphic quasiconformal image of the open unit disk in $\mbc$.\footnote{Another characterization of a quasidisk is that it is a planar domain which is both a John domain (see \cite{John} and, e.g., \cite{CarlJonYoc}, \cite{McMul}, \cite{DieRuzSchu}, \cite{Dur-LopGar}, \cite{AcoDur-LopGar} or [\hyperlinkcite{LopGar1}{LopGar1--2}]) and linearly connected; see, e.g., \cite{ChuOsgPomm}.\label{Fn:38}} Quasidisk and quasicircles (as well as John domains) are of frequent use in harmonic analysis, partial differential equations, complex dynamics and conformal dynamics; see, e.g., \cite{Be}, \cite{Maz}, \cite{Pomm} and \cite{BedKS}.

($b$) ({\em Drums with fractal membrane.}) An analog of the leading term (Weyl's asymptotic formula) and the associated error term in \cite{Lap1} discussed in part ($a$) was obtained in \cite{KiLap1} for Laplacians {\em on} fractals, rather than on bounded open sets of $\mbr^N$ with fractal boundary; that is, for ``drums with fractal membrane'' (in the sense of \cite{Lap3}) rather than for ``drums with fractal boundary'' (as, e.g., in \cite{BirSol}, \cite{BroCar}, [\hyperlinkcite{Lap1}{Lap1--4}], [\hyperlinkcite{LapPo1}{LapPo1--3}], [\hyperlinkcite{LapMa1}{LapMa1--2}], \cite{Ger}, [\hyperlinkcite{GerScm1}{GerScm1--2}], \cite{FlLap}, \cite{FlVa}, \cite{EdmEv}, \cite{Dav}, \cite{LeviVa}, \cite{HeLap}, \cite{MolVai}, \cite{vB-Gi}, \linebreak
 \cite{HamLap}, \cite{Lap9}, \cite{LapPa} and \cite{LapNRG}). Examples of physics and mathematics references on Laplacians on fractals include the books \cite{Ki} and \cite{Str} along with the papers \cite{Ram}, \cite{RamTo}, \cite{Shi}, \cite{FukShi}, [\hyperlinkcite{KiLap1}{KiLap1--2}], \cite{Barl}, [\hyperlinkcite{Ham1}{Ham1--2}], [\hyperlinkcite{Sab1}{Sab1--3}], [\hyperlinkcite{Tep1}{Tep1--2}], \cite{DerGrVo}, [\hyperlinkcite{Lap5}{Lap5--6}], \cite{ChrIvLap}, \cite{CipGIS}, \cite{LapSar} and [\hyperlinkcite{LalLap1}{LalLap1--2}], as well as the many relevant references therein.
\end{remark}

\subsubsection{Inverse spectral problems for fractal strings}\label{Sec:2.6.2}

Now that we have explicitly and fully solved the above direct spectral problem for fractal strings, it is natural to consider its converse, which is called an inverse spectral problem, (ISP). In fact, we have a one-parameter family of such problems, (ISP)$_D$, parametrized by the (Minkowski) dimension $D$. Recall that $D \in (0,1)$ is arbitrary, so that the parameter $D$ sweeps out the entire `{\em critical interval}' $(0,1)$ for the Riemann zeta function $\zeta = \zeta(s)$.\\

(ISP)$_D$ {\em Let $\mcl$ be a fractal string of Minkowski dimension $D \in (0,1)$ such that its associated spectral counting function $N_\nu$ admits a} monotonic {\em asymptotic second term, proportional to $x^D$. Namely, with the Weyl term $W$ given by \eqref{2.50}, assume that}
\begin{equation}\label{2.53}
N_\nu (x) = W (x) - \mcc x^D + o(x^D) \quad \text{as } x \rightarrow +\infty,
\end{equation}
{\em for some nonzero constant $\mcc$} ({\em depending only on $\mcl$}). {\em Does it then follow that $\mcl$ is Minkowski measurable?}\\

The above question \` a la Marc Kac, \cite{Kac}, could be coined {\em Can one hear the shape of a fractal string} ({\em of dimension $D \in (0,1)$})? Note, however, that this question (or equivalently, the corresponding inverse spectral problem $(ISP)_D$), is of a very different nature from the original one, raised in \cite{Kac}.

\begin{remark}\label{Rem:2.11} 
Equation \eqref{2.53} alone with $\mcc \neq 0$ and $D \in (0,1)$ implies that $\mcl$ has Minkowski dimension $D$. Furthermore, if (ISP)$_D$ has an affirmative answer, then it follows from Theorem \ref{Thm:2.6} that $\mcc >0$ and $\mcc = c_D \mcm$, where $c_D > 0$ is the constant (depending only on $D$) given by \eqref{2.52} and $\mcm$ is the Minkowski content of $\mcl$.
\end{remark}

We can finally give the precise statement of the main result (Theorem \ref{Thm:2.12} and Corollary \ref{Cor:2.13}) connecting the family of inverse spectral problems (ISP)$_D$, with $D \in (0,1)$, and the Riemann hypothesis (RH); see [\hyperlinkcite{LapMa1}{LapMa1--2}], joint with H. Maier.

\begin{theorem}[Critical zeros of $\zeta$ and inverse spectral problems; \cite{LapMa2}]\label{Thm:2.12}
Fix $D \in (0,1)$. Then, the inverse spectral problem $($ISP$)_D$ has an affirmative answer if and only if $\zeta = \zeta(s)$ does not have any zeros on the vertical line $\{Re(s) = D \};$ i.e., if and only if the `{\em partial RH}' $($abbreviated $(RH)_D)$ holds for this value of $D$. In short $($and with the obvious notation$)$, we have that
\begin{equation}\label{2.54}
(RH)_D \Leftrightarrow (ISP)_D, \textnormal{ for any } D \in (0,1).
\end{equation} 
\end{theorem}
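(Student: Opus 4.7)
The plan is to establish the equivalence by exploiting the factorization $\zeta_\nu(s) = \zeta(s)\cdot\zeta_\mcl(s)$ of \eqref{2.21}, together with its counting-function counterpart \eqref{2.20}. After the change of variables $x = e^u$ and rescaling by $x^{-D}$, the identity \eqref{2.20} becomes an additive convolution whose kernel is the measure $\mu = \sum_{j\ge 1} j^{-D}\delta_{\log j}$, and the Fourier transform $\widehat\mu(\tau)$ is precisely $\zeta(D+i\tau)$. Thus the non-vanishing of $\zeta$ on the vertical line $\{Re(s)=D\}$ is \emph{exactly} the Wiener condition needed to run a Tauberian theorem, which makes the equivalence $(RH)_D \Leftrightarrow (ISP)_D$ very natural and suggests attacking each direction via a different byproduct of this factorization.

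For the forward implication $(RH)_D \Rightarrow (ISP)_D$, I would assume $\zeta$ does not vanish on $\{Re(s)=D\}$ and let $\mcl$ be a fractal string of Minkowski dimension $D$ whose spectrum satisfies $N_\nu(x) = W(x) - \mcc\,x^D + o(x^D)$ with $\mcc \neq 0$. Since $\overline D = D$, the normalized geometric counting function $u \mapsto e^{-Du} N_\mcl(e^u)$ is bounded. Writing \eqref{2.20} on a logarithmic scale expresses $u \mapsto e^{-Du}(N_\nu(e^u) - W(e^u))$ as the convolution of $u \mapsto e^{-Du} N_\mcl(e^u)$ with (a modification of) $\mu$. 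The hypothesis translates into the statement that this convolution tends to $-\mcc$ at infinity. By $(RH)_D$, $\widehat\mu$ has no real zero, so the Wiener Tauberian theorem applies and forces $N_\mcl(x) \sim L\,x^D$ as $x \to +\infty$ for an explicit $L>0$ determined by $\mcc$. Theorem \ref{Thm:2.3} then yields $\ell_j \sim L^{1/D} j^{-1/D}$, so $\mcl$ is Minkowski measurable, establishing $(ISP)_D$.

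For the reverse implication $\neg(RH)_D \Rightarrow \neg(ISP)_D$, I would assume a nonreal zero $\omega_0 = D + i\tau_0$ of $\zeta$ on the critical line (together with its conjugate $\overline{\omega_0}$) and construct an explicit counterexample: a fractal string $\mcl$ of Minkowski dimension $D$ satisfying the hypothesis of $(ISP)_D$ but failing its conclusion. The idea is to arrange $\zeta_\mcl$ to be meromorphic with simple poles exactly at $D$, $\omega_0$ and $\overline{\omega_0}$. Via the fractal explicit formula \eqref{2.22}, the associated $N_\mcl$ then carries an oscillatory term of the form $c\,x^D\cos(\tau_0 \log x + \varphi)$ of leading order $x^D$, so by Theorem \ref{Thm:2.3} the string $\mcl$ is \emph{not} Minkowski measurable. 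However, because $\zeta(\omega_0) = \zeta(\overline{\omega_0}) = 0$, the product $\zeta_\nu = \zeta\cdot\zeta_\mcl$ has the offending poles cancelled, so the corresponding explicit formula \eqref{2.23} for $N_\nu$ contains no oscillations of order $x^D$, leaving precisely $N_\nu(x) = W(x) - \mcc\,x^D + o(x^D)$ with $\mcc = -\zeta(D)\,\res(\zeta_\mcl,D)/D \neq 0$. A natural candidate is a (possibly generalized) Cantor-type string with oscillatory period tuned to $\tau_0$.

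The main obstacle will be the reverse direction: producing an honest bounded geometric fractal string (with a legitimate nonincreasing sequence of lengths $\ell_j \downarrow 0$) whose geometric zeta function has exactly the prescribed complex dimensions with enough meromorphic and growth control to rigorously apply the explicit formulas of \cite[Ch. 5]{Lap-vF4}. The cleanest route is to first work within the framework of generalized (virtual) fractal strings of \cite[Ch. 4]{Lap-vF4}, where such prescriptions can be arranged by direct construction, and then either upgrade the generalized string to a genuine geometric one or observe that the contrapositive $\neg(RH)_D \Rightarrow \neg(ISP)_D$ already survives at the level of generalized fractal strings. A secondary technical point is verifying that the Tauberian step in the forward direction goes through only assuming the spectral asymptotic to second order (in particular, that the remainder control $o(x^D)$ is strong enough to apply Wiener's theorem, which typically requires slow oscillation of the remainder, itself guaranteed by the monotonicity of $N_\nu$).
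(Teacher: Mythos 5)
Your forward direction is essentially the paper's: rewrite the identity $N_\nu(x)=\sum_j N_\mcl(x/j)$ logarithmically so that the non-vanishing of $\zeta$ on $\{Re(s)=D\}$ becomes the Tauberian non-vanishing hypothesis for a kernel whose Fourier transform at $\tau$ is $\zeta(D+i\tau)$, then invoke a Tauberian theorem and the measurability characterization (Theorem~\ref{Thm:2.3}). One caveat: your kernel $\mu=\sum_{j\ge 1}j^{-D}\delta_{\log j}$ is not in $L^1$ for $D\in(0,1)$, so the plain Wiener Tauberian theorem does not apply off the shelf; what is actually used is the Wiener--Ikehara theorem or one of its refinements (Pitt--Wiener, Korevaar), which is exactly designed for such situations. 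This is a small imprecision, not a gap.

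For the converse, your overall strategy --- exploit $\zeta(\omega_0)=\zeta(\overline{\omega_0})=0$ to build a string whose geometry oscillates at order $x^D$ while the corresponding spectral oscillations cancel in $\zeta_\nu=\zeta\cdot\zeta_\mcl$ --- is exactly right, but your route is more roundabout than it needs to be and introduces obstacles the paper simply sidesteps. You propose to first prescribe the poles of $\zeta_\mcl$ and then derive $N_\mcl$ through the fractal explicit formulas, worrying (correctly) that verifying the meromorphic and growth control on an honest geometric string is delicate, and hedging toward generalized (virtual) strings. The paper instead \emph{constructs $N_\mcl$ directly}: set $V(x):=x^D\bigl(1+2\beta\cos(\tau_0\log x)\bigr)$ with $\beta>0$ chosen small enough that $V$ is strictly increasing, put $N_\mcl(x):=[V(x)]$, and define $\ell_j$ by $V(\ell_j^{-1})=j$. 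This yields a genuine bounded fractal string outright (no detour through generalized strings and no need to invoke the explicit-formula machinery, which in any case postdates \cite{LapMa2}). Non-Minkowski-measurability then follows immediately from $N_\mcl(x)\sim V(x)$ and Theorem~\ref{Thm:2.3}; the spectral side $N_\nu(x)=W(x)+E_D\zeta(D)x^D+E_\omega\zeta(\omega_0)x^{\omega_0}+\overline{E_\omega}\zeta(\overline{\omega_0})x^{\overline{\omega_0}}+o(x^D)$ is obtained by a direct computation using $\zeta_\nu=\zeta\cdot\zeta_\mcl$ and the analytic continuation of $\zeta$ into the critical strip, not by the distributional explicit formulas of \cite[Ch.~5]{Lap-vF4}. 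Finally, a concrete point to fix: a ``(generalized) Cantor-type string with oscillatory period tuned to $\tau_0$'' is \emph{not} a viable candidate. Such a string has nonreal complex dimensions along the entire arithmetic progression $D+in\tau_0$ ($n\in\mbz$), and $\zeta$ does not vanish at all of them --- indeed, the nonexistence of infinite vertical arithmetic progressions of zeros of $\zeta$ (\cite[Ch.~11]{Lap-vF4}) is proved precisely by exploiting this. You want a string with \emph{exactly} $D$, $\omega_0$, $\overline{\omega_0}$ as complex dimensions, which is what the direct construction of $N_\mcl$ achieves.
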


\begin{corollary}[Spectral reformulation of RH; \cite{LapMa2}]\label{Cor:2.13}
The Riemann hypothesis is true if and only the inverse spectral problem $($ISP$)_D$ has an affirmative answer for all $D \in (0,1)$, except in the `midfractal case' when $D = 1/2$.
\end{corollary}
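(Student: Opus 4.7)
The plan is to deduce Corollary \ref{Cor:2.13} as a direct consequence of Theorem \ref{Thm:2.12}, combined with two classical facts about the Riemann zeta function: first, all nontrivial zeros of $\zeta$ lie in the open critical strip $\{0 < Re(s) < 1\}$; and second, by Hardy's theorem, $\zeta$ has infinitely many zeros on the critical line $\{Re(s) = 1/2\}$. No genuinely new analytic content is needed; the work is purely logical, translating the pointwise equivalence of Theorem \ref{Thm:2.12} into a uniform statement over $D$.

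First I would reformulate the Riemann hypothesis. RH asserts that every nontrivial zero of $\zeta$ has real part exactly equal to $1/2$. In the notation of Theorem \ref{Thm:2.12}, where $(RH)_D$ denotes the statement ``$\zeta$ has no zero on the vertical line $\{Re(s) = D\}$'', this is equivalent to requiring that $(RH)_D$ hold for every $D \in (0,1)$ with $D \neq 1/2$ (the extreme cases $D=0$ and $D=1$ being automatic, since by the classical zero-free region we have $\zeta(s) \neq 0$ whenever $Re(s) \geq 1$, and a fortiori for $Re(s) \leq 0$ apart from the trivial zeros, which lie off the critical strip). Applying Theorem \ref{Thm:2.12} separately for each $D \in (0,1) \setminus \{1/2\}$, the equivalence $(RH)_D \Leftrightarrow (ISP)_D$ then lets us rephrase RH as: $(ISP)_D$ has an affirmative answer for every $D \in (0,1) \setminus \{1/2\}$.

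It remains to justify the excision of the midfractal value $D = 1/2$. Here Hardy's theorem intervenes: since $\zeta$ has (infinitely many) zeros on the critical line, the statement $(RH)_{1/2}$ is \emph{unconditionally false}, so by Theorem \ref{Thm:2.12} the inverse spectral problem $(ISP)_{1/2}$ never has an affirmative answer, independently of whether RH holds. Including $D = 1/2$ in the quantifier would therefore render the reformulated statement vacuously false, which is why the singleton $\{1/2\}$ must be removed from the index set. I do not foresee any real obstacle in this argument, since all of the genuine work --- the spectral-geometric dictionary, the use of the identity $\zeta_\nu = \zeta \cdot \zeta_\mcl$, and the construction of fractal strings whose geometric oscillations encode presumed zeros of $\zeta$ on $\{Re(s)=D\}$ --- has already been absorbed into the statement of Theorem \ref{Thm:2.12}; the corollary is thus simply the global ``quantifier over $D$'' version of that local equivalence.
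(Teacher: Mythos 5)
Your proposal is correct and takes essentially the same approach as the paper: invoking Hardy's theorem to rule out the midfractal case $D=1/2$ unconditionally, and then applying the pointwise equivalence of Theorem~\ref{Thm:2.12} for each remaining $D \in (0,1)$. The paper's proof is a two-line version of exactly this reasoning; your write-up merely spells out the logical bookkeeping more explicitly.
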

 
\begin{proof}({\em Proof of Corollary \ref{Cor:2.13}.})
Since it is known that $\zeta$ has a nonreal zero (and even infinitely many zeros, by a theorem of G.~H. Hardy; see, e.g., \cite{Edw, Tit}) on the critical line $\{Re(s) = 1/2 \},$ this is a consequence of Theorem \ref{Thm:2.12}.
\end{proof} 
 
\begin{remark}\label{Rem:2.13}
Observe that we could reformulate Corollary \ref{Cor:2.13} by stating that RH is equivalent to (ISP)$_D$ having an affirmative answer for all $D \in (0, 1/2)$ (or equivalently, for all $D \in (1/2, 1)$). This follows from the well-known functional equation for $\zeta$ connecting $\zeta (s)$ and $\zeta (1-s)$, for all $s \in \mbc$; see, e.g., \cite{Edw} or \cite{Tit}.
\end{remark}

\begin{proof}({\em Sketch of the proof of Theorem \ref{Thm:2.12}.})
The proof of one implication in Theorem \ref{Thm:2.12} relies on the  Wiener--Ikehara Tauberian theorem (see, e.g., \cite{Pos}) or one of its later improvements (see, e.g., \cite{PitWie}, \cite{Pit} and \cite{Kor}). 

The converse (i.e., the reverse implication $(ISP)_D \Rightarrow (RH)_D$ in \eqref{2.54}) is proved by contraposition. That is, we assume that $(RH)_D$ fails and we therefore want to show that the inverse spectral problem $(ISP)_D$ does not have an affirmative answer. Hence, suppose that there exists $\omega \in \mbc$, with $\omega = D + i \tau \, (D \in (0,1), \tau > 0)$, such that $\zeta (\omega) = 0$. Then, clearly, $\overline{\omega} = D - i \tau$ also satisfies $\zeta (\overline{\omega}) = 0.$

Next, we use the intuition of the notion of complex dimension\footnote{At the time, in the early 1990s, the rigorous definition of complex dimension did not yet exist, even for fractal strings. This only came later, in the mid-1990s; see \cite{Lap-vF1, Lap-vF2}.}
in order to construct a fractal string $\mcl = (\ell_j)_{j=1}^\infty$ which has oscillations of leading order $x^D$ in its {\em geometry} but which (because $0 =\zeta (\omega) = \overline{\zeta (\omega)} = \zeta (\overline{\omega})$) does {\em not} have oscillations of order $D$ in its {\em spectrum} (so that $N_\nu (x)$ has a {\em monotonic} asymptotic second term proportional to $x^D$).

More specifically, we have that (for some positive constant $\beta$ sufficiently small and with $[y]$ denoting the integer part of $y \in \mbr$)
\begin{equation}\label{2.55}
N_\mcl (x) = [V(x)], \text{ for any } x > 0, 
\end{equation}
where
\begin{align}\label{2.56}
V(x) &:= x^D + \beta (x^\omega + x^{\overline{\omega}})\\
&= x^D (1 + 2 \beta \cos (\tau \log x)),\notag
\end{align}
for any $x >0$. Note that it suffices to choose a positive number $\beta < 1/2$ so that $V(x) > 0$ for all $x > 0$ and $\beta < D/2 (D + \tau)$ so that $V' (x) >0$ for all $x > 0$.\footnote{Indeed, an elementary computation shows that 
\[V' (x) = x^{D-1} (D+ 2 \beta D \cos ( \omega \log x ) - 2 \beta \tau \sin (\omega \log x)), \text{ for all } x > 0. \]}
Hence, we can simply choose $\beta \in (0, D/2(D + \tau)$). Since $V$ is now strictly increasing from $(0, +\infty)$ to itself, we can uniquely define $\ell_j > 0$  so that $V(\ell_j^{-1}) = j$; i.e., in light of \eqref{2.55}, $N_\mcl (\ell_j^{-1}) = j$, for each integer $j \geq 1$. This defines the sought for (bounded) fractal string $\mcl = (\ell_j)_{j=1}^\infty$;\footnote{Note that $\mcl$ is bounded because, in light of \eqref{2.57}, $N_\mcl (x)$ is of the order of $x^D$ as $x \rightarrow +\infty$ and hence, $\ell_j$ is of the order of $j^{-1/D}$ as $j \rightarrow \infty$. Since $D \in (0,1),$ it follows that $\sum_{j \geq 1} \ell_j < \infty.$} 
one can then choose any geometric realization $\Omega$ of $\mcl$ as a subset of $\mbr$ with finite length.

Since by construction,
\begin{equation}\label{2.57}
N_\mcl (x) \sim x^D (1 + 2\beta \cos (\tau \log x)) \quad \text{as } x \rightarrow +\infty, 
\end{equation}
we deduce from (the difficult part of) Theorem \ref{Thm:2.3} (the characterization of Minkowski measurability) that the fractal string $\mcl$ is {\em not} Minkowski measurable.

Indeed, because $x^{-D} N_\mcl (x)$ oscillates asymptotically between the positive constants $1-2 \beta$ and $1+2 \beta$ (in light of \eqref{2.57}), $x^{-D} N_\mcl (x)$ cannot have a limit as $x \rightarrow +\infty$; equivalently, $j^{1/D} \ell_j$ cannot have a limit as $j \rightarrow \infty$, which violates condition (ii) of Theorem \ref{Thm:2.3} and therefore shows that $\mcl$ is {\em not} Minkowski measurable, as desired.\footnote{It follows from another theorem in \cite{LapPo2} that $\mcl$ is Minkowski nondegenerate (i.e., $0 < \mcm_* (\leq) \mcm^* < \infty$) and has Minkowski dimension $D$; so that $0 < \mcm_* < \mcm^* < \infty$, by combining these two results.}

Next, a direct (but relatively involved) computation in \cite{LapMa2} (based, in particular, on the identity \eqref{2.21} and several key properties of $\zeta = \zeta (s)$ and its meromorphic continuation to the critical strip $\{ 0 < Re(s) < 1 \}$) shows that there exist nonzero constants $E_D$ and $E_\omega$ such that as $x \rightarrow +\infty$,
\begin{equation}\label{2.58}
N_\nu (x) = W (x) + E_D \zeta (D) x^D + E_\omega \zeta(\omega) x^\omega  + \overline{E}_\omega \zeta (\overline{\omega}) + o(x^D),
\end{equation} 
where $W(x) = |\Omega|_1 x$ is the Weyl term (as given by \eqref{2.50}).

Now, since $\zeta (\omega) = \zeta (\overline{\omega}) = 0$, we see that $N_\nu (x)$ has a {\em monotonic} asymptotic second term, proportional to $x^D$, as claimed:
\begin{equation}\label{2.59}
N_\nu (x) = W(x) +E_D \zeta (D) x^D + o(x^D) \quad \text{as } x \rightarrow +\infty.
\end{equation}
This shows that the fractal string $\mcl$ which we have constructed is not Minkowski measurable but that its spectral counting function has a monotonic asymptotic second term, of the order of $x^D$. Therefore, the inverse spectral problem (ISP)$_D$ does not have an affirmative answer for this value of $D \in (0,1).$ Thus the implication $(ISP)_D \Rightarrow (RH)_D$ is now established. This concludes the sketch of the proof of Theorem \ref{Thm:2.12} (and  hence also of Corollary \ref{Cor:2.13}).
\end{proof}

\begin{remark}\label{Rem:2.15} ($a$) The proof of Theorem \ref{Thm:2.12} given in \cite{LapMa2} actually shows that one can replace $o(x^D)$ by $O(1)$ in \eqref{2.58}, and hence also in \eqref{2.59}.

($b$) In the language of the mathematical theory of complex dimensions, the fractal string $\mcl$ constructed just above has three (simple) complex dimensions; namely, the Minkowski dimension $D$ and the complex conjugate pair of nonreal (principal) complex dimensions $(\omega, \overline{\omega}) = (D +i \tau, D - i \tau)$. Hence,
\begin{equation}\label{2.60}
\mcd_\mcl = \{ D, \omega, \overline{\omega} \}.
\end{equation}
Accordingly, $\mcl$ (or, equivalently, any of its geometric realizations $\Omega$ as an open subset of $\mbr$ with finite length) is fractal (in the sense of \S \ref{Sec:2.5}) and even `{\em critically fractal}' (since it is fractal in the maximal possible dimension, $D$).

($c$) Theorem \ref{Thm:2.12} (suitably interpreted), along with Corollary \ref{Cor:2.13}, has been extended to a large class of arithmetic zeta functions (and other Dirichlet series) in \cite{Lap-vF2, Lap-vF3, Lap-vF4}; see \cite[Ch. 9]{Lap-vF4}. This broad generalization relies on the mathematical theory of complex dimensions developed in those references, and especially, on the fractal explicit formulas obtained therein (see \cite[Chs. 5--6]{Lap-vF4}), of which a few examples were provided in \S \ref{Sec:2.2}.

($d$) As was alluded to earlier, the second part of the proof of Theorem \ref{Thm:2.12} (namely, the proof of the implication $(ISP)_D \Rightarrow (RH)_D$) was motivated by the intuition of the notion of complex dimensions. At the same time, it served as a powerful incentive for developing the mathematical theory of complex dimensions. Along with the statements of Theorem \ref{Thm:2.12} and Corollary \ref{Cor:2.13}, it also provided a {\em natural geometric interpretation of the} (closed) {\em critical strip} $\{0 \leq Re(s) \leq 1 \}$. In particular, the {\em midfractal case} when $D = 1/2$ corresponds to the critical line $\{Re(s) = 1/2 \}$, while the {\em least} (respectively, {\em most}) {\em fractal case} when $D=0$ (respectively, $D=1$) corresponds to the left (respectively, right) most vertical line in the closed critical strip, $\{ Re(s) = 0 \}$ (respectively, $\{ Re(s) = 1 \}$). This observation has played a key role in later work, including \cite{Lap-vF4}, [\hyperlinkcite{Lap7}{Lap7--9}] and \cite{HerLap1}. 

It is also in agreement with the author's conjecture according to which there should exist a natural fractal-like geometry whose complex dimensions coincide with the union of $\{ 1/2 \}$ and the critical zeros of $\zeta.$ Consequently, it would have Minkowski dimension $1/2$ and apart from $1/2$, the critical zeros $\rho$ of $\zeta$ would be its principal complex dimensions. It is possible that instead, this ``geometry'' would only be {\em critically subfractal} (in dimension $\frac{1}{2}$), with the critical zeros $\rho$ of $\zeta$ being precisely its complex dimensions with real part $1/2$ (assuming RH, for clarity). Namely, in that case, it would have Minkowski dimension $1$ (corresponding to the only pole of $\zeta$, which is simple), midfractal complex dimensions the critical zeros of $\zeta$, and possibly one other (simple) complex dimension at $0$ (provided it has the perfect symmetry of the completed Riemann zeta function $\xi$, as described in the next paragraph). Accordingly (and still assuming RH, for clarity), in either the former case or the latter case, this geometry would be fractal in dimension $1/2$, but not fractal in any other dimension $d \in \mbr$, with $d \neq 1/2$ (since both $0$ and $1$ are real). 

Hence (in the latter case), in algebraic geometric terms, the set $\mcd_\zeta$ of complex dimensions of this elusive fractal geometry would coincide with the divisor of the completed (or global) Riemann zeta function $\xi (s) := \pi^{-s/2} \Gamma (s/2) \zeta (s)$ (or, equivalently, with the poles of minus the logarithmic derivative, $-\xi' (s)/ \xi (s)$), leaving apart the multiplicities, which can be explicitly recovered by considering the corresponding residues. Namely, since the zeros of $\xi$ coincide with the critical zeros of $\zeta$ and $\xi$ satisfies the celebrated functional equation $\xi(s) = \xi (1-s)$ for all $s \in \mbc$, from which it follows that $\xi$ has a (simple) pole at $s=0$ and $s=1$, we would have 
\begin{equation}\label{2.61}
\mcd_\zeta = \{0,1 \} \cup \{ \text{critical zeros of } \zeta \},
\end{equation}
where each complex dimension is counted with multiplicity one. Accordingly, the sought for fractal-like geometry would be ``{\em self-dual}'' (in the sense of \cite{Lap7}), reflecting the perfect symmetry of the functional equation with respect to the critical line $\{Re(s) = 1/2 \}$. (See also \S \ref{Sec:4.4} and \cite{Lap10}.)
\end{remark}
 
\section{A Taste of the Higher-Dimensional Theory: Complex Dimensions and Relative Fractal Drums (RFDs)}\label{Sec:3}

In this section, which by necessity of concision, will be significantly shorter than would be warranted, we limit ourselves to giving a brief overview of some of the key definitions and results of the higher-dimensional theory of complex dimensions, with emphasis on several key examples illustrating them. [We refer to \cite[Ch. 3]{Lap10} for a more extensive overview, and to the book \cite{LapRaZu1} along with the accompanying papers [\hyperlinkcite{LapRaZu2}{LapRaZu2--10}] (including the two survey articles [\hyperlinkcite{LapRaZu8}{LapRaZu8--9}]) for a much more detailed account of the theory, with complete proofs.] First, however, we begin by providing a short history of one aspect of the subject.

\subsection{Brief history}\label{Sec:3.1}

For a long time, the theory of complex dimensions was restricted to the one-dimensional case (corresponding to fractal strings and arbitrary compact subsets of $\mbr$) or in the higher-dimensional case, to very special (although useful) geometries; namely, to fractal sprays \cite{Lap3}, \cite{LapPo3}, obtained as countable disjoint unions of scaled copies of one or finitely many generators, and particularly, to self-similar sprays.

An approximate tube formula was first obtained in [\hyperlinkcite{Lap-vF2}{Lap-vF2--4}] for the devil's staircase (i.e., the graph of the well-known Cantor function), from which the complex dimensions of the Cantor graph could be deduced, by analogy with the fractal tube formula for fractal strings (discussed in \S \ref{Sec:2.1}). Then, for the important example of the snowflake curve (or, equivalently, of the Koch curve), an exact fractal tube formula was obtained by the author and Erin Pearse in \cite{LapPe1} via a direct computation, based in part on symmetry considerations.\footnote{In that context, an interesting open problem remains to explicitly determine the Fourier coefficients of a nonlinear (and periodic) analog of the Cantor function arising naturally in the computation leading to the corresponding fractal tube formula.}
(See \cite[\S 12.2.1]{Lap-vF4} for an exposition.) Again, the complex dimensions of the Koch curve could be deduced by analogy with the case of fractal strings. However, no analog of the geometric zeta functions of fractal strings was used in the process, and therefore, the complex dimensions of the Koch curve (like those of the Cantor graph at that stage) could not yet be precisely defined.

Another important step was carried out by the author and Erin Pearse in [\hyperlinkcite{LapPe2}{LapPe2--3}] and significantly more generally, in joint work of those two authors and Steffen Winter in \cite{LapPeWi1}, where fractal tube formulas were obtained for a large class of fractal sprays \cite{LapPo3}, and especially, of self-similar sprays or self-similar tilings (as in [\hyperlinkcite{Lap2}{Lap2--3}], \cite{Pe}, [\hyperlinkcite{LapPe1}{LapPe1--2}], \cite{PeWi} and [\hyperlinkcite{LapPeWi1}{LapPeWi1--2}]).\footnote{See \cite[\S 13.1]{Lap-vF4} for an exposition of part of these results.}
This time, the resulting fractal tube formulas made use of certain ``tubular zeta functions'', but those zeta functions were of a rather ad hoc nature and could not be extended to more general types of fractal geometries. Also, of course, fractal sprays are rather special cases of fractals.\footnote{In one dimension, however, fractal sprays reduce to fractal strings, while the latter enables us to deal with the general case of arbitrary bounded (or, equivalently, compact) subsets of the real line $\mbr$.}

Therefore, there still remained to find appropriate fractal-type zeta functions which enabled one, in particular, to both define the complex dimensions of arbitrary bounded (or, equivalently, compact) subsets of $\mbr^N$ (for any $N \geq 1$) and to obtain fractal tube formulas valid in this general higher-dimensional setting.

Finally, in 2009, this significant hurdle was overcome when the author introduced a fractal zeta function, now called the distance zeta function, which could be used to achieve the aforementioned goals. This was only the beginning of what turned out to be a very fruitful and creative period, extending from 2009 through 2017, during which large parts of the higher-dimensional theory of complex dimensions and associated fractal tube formulas were developed by the author, Goran Radunovi\' c and Darko \v Zubrini\' c in the series of papers  [\hyperlinkcite{LapRaZu2}{LapRaZu2--9}] as well as in the nearly 700-page research book \cite{LapRaZu1}.

Our goal here is not to give a detailed account of the theory. Instead, we simply wish to highlight in the rest of this section a few definitions, results and useful examples. We refer to [\hyperlinkcite{LapRaZu2}{LapRaZu2--10}], \cite[Ch. 3]{Lap10} and, especially, to the research monograph \cite{LapRaZu1}, for a more detailed account, as well as for precise statements (and complete proofs) of the main results.

\subsection{Fractal zeta functions and relative fractal drums (FZFs and RFDs)}\label{Sec:3.2}

The theory of complex dimensions of [\hyperlinkcite{LapRaZu1}{LapRaZu1--10}] is developed for arbitrary bounded subsets $A$ of $\mbr^N$ and, more generally, for arbitrary relative fractal drums (RFDs) $(A, \Omega)$ in $\mbr^N$, for any integer $N \geq 1$, as we now briefly explain.

Given $A$ a (nonempty) subset of $\mbr^N$ and $\Omega$ a possibly unbounded open subset of $\mbr^N$ with finite volume and such that $\Omega \subseteq A_{\delta_1}$, for some $\delta_1 > 0$, the pair $(A, \Omega)$ is called a {\em relative fractal drum} (or {\em RFD}, in short) in $\mbr^N$.

Two important special cases of RFDs are ($i$) when $N=1, \Omega$ is a bounded open subset of $\mbr^N$, and $A:= \partial \Omega$, and ($ii$) when $N \geq 1$ is arbitrary, $A$ is any bounded subset of $\mbr^N$, and $\Omega := A_{\delta_1},$ for some fixed (but arbitrary) $\delta_1 > 0$. Case ($i$) just above corresponds to the (ordinary or bounded) fractal strings discussed in \S\ref{Sec:2} (but now viewed as the RFDs $(\partial \Omega, \Omega)$), while case ($ii$) corresponds to arbitrary bounded subsets $A$ of $\mbr^N$ (for any integer $N \geq 1$); see also \S \ref{Sec:3.2.2} or \S \ref{Sec:3.2.1}, respectively.

It is a simple matter to extend the definition of the Minkowski dimension and of the Minkowski content to RFDs, as we now explain. (In the sequel, given a measurable subset $B$ of $\mbr^N$, we let $|B|_N = |B|$ denote the $N$-dimensional volume or Lebesgue measure of $B$.)

Given an RFD $(A, \Omega)$ in $\mbr^N$, we define its {\em tube function} 
\begin{equation}\label{3.1}
\varepsilon \mapsto V(\varepsilon) = V_{A, \Omega} (\varepsilon) := |A_\varepsilon \cap \Omega|, \text{ for } \varepsilon >0, 
\end{equation}
and then,\footnote{If $A$ is a bounded subset of $\mbr^N$ (viewed as an RFD, see \S \ref{Sec:3.2.1}), we then simply write $V(\varepsilon) = V_A (\varepsilon)$. Note that we then have $V_A (\varepsilon) = |A_\varepsilon|$.}
its {\em upper Minkowski dimension}
\begin{align}\label{3.2}
\overline{D} &= \overline{\dim}_B (A, \Omega)\\ \notag
&= \inf \Big\{ \beta \in \mbr : V (\varepsilon) = O(\varepsilon^{N - \beta}) \quad \textnormal{as } \varepsilon \rightarrow 0^+ \Big\}\\ \notag
&= \inf \Big\{ \beta \in \mbr : \limsup_{\varepsilon \rightarrow 0^+} \frac{V(\varepsilon)}{\varepsilon^{N - \beta}} < \infty \Big\}. \notag
\end{align} 
We define similarly $\underline{D} = \underline{\dim}_B (A, \Omega)$, the {\em lower Minkowski dimension} of $(A, \Omega)$, by simply replacing the upper limit by a lower limit on the right-hand side of the last equality of \eqref{3.2}. In general, we have $\underline{D} \leq \overline{D}$ but if $\underline{D} = \overline{D}$ (which is the case for most classic fractals), we denote by $D$ this common value and call it the {\em Minkowski dimension} of $(\ao)$; the latter is then said to exist.

The {\em upper} (respectively, {\em lower}) {\em Minkowski content} of $(A, \Omega)$ is then defined (still with $V(\varepsilon) = V_{A, \Omega} (\varepsilon)$) by
\begin{equation}\label{3.3}
\mcm^* = \limsup_{\varepsilon \rightarrow 0^+} \frac{V_{A, \Omega} (\varepsilon)}{\varepsilon^{N-D}}
\end{equation}
and
\begin{equation}\label{3.4}
\mcm_* = \liminf_{\varepsilon \rightarrow 0^+} \frac{V_{A, \Omega} (\varepsilon)}{\varepsilon^{N-D}}.
\end{equation}
We clearly have $0 \leq \mcm_* \leq \mcm^* \leq \infty$. If $\mcm_* > 0$ and $\mcm^* < \infty$, we say that $(A, \Omega)$ is {\em Minkowski nondegenerate.} If, in addition, $\mcm^* = \mcm_*$ (i.e., if the limit in \eqref{3.3} and \eqref{3.4} exists and is in $(0, +\infty)$), then we denote by $\mcm$, and call the  {\em Minkowski content} of $(A, \Omega)$, this common value and say that the RFD $(A, \Omega)$ is {\em Minkowski measurable.} It is easy to check that if $(\ao)$ is Minkowski nondegenerate (and, in particular, if it is Minkowski measurable), then its Minkowski dimension $D$ exists. 

We note that the upper Minkowski dimension of an RFD can be negative or even take the value $-\infty$. For instance, let $A := \{(0,0) \}$ and for $\alpha > 1$,  let
\begin{equation}\label{3.4.1/4}
\Omega := \{ (x,y) \in \mbr^2: 0 < y < x^\alpha, x \in (0,1) \}.
\end{equation}
Then the RFD $(A, \Omega)$ in $\mbr^2$ has (relative) Minkowski dimension $D = 1 - \alpha < 0$, which takes arbitrary large negative values as $\alpha \rightarrow + \infty$. Furthermore, $(A, \Omega)$ is Minkowski measurable with (relative) Minkowski content $\mcm = 1/(1 + \alpha).$ (See \cite[Prop. 4.1.35]{LapRaZu1}.)

Further, another RFD $(A, \Omega)$ in $\mbr^2$, defined by $A:= \{(0,0) \}$ and
\begin{equation}\label{3.4.1/2}
\Omega := \{ (x,y) \in \mbr^2 : 0 < y < e^{-1/x}, 0 < x < 1 \},
\end{equation} 
is such that its Minkowski dimension $D$ exists (as in the previous example) but is no longer finite; namely, $D = -\infty$. (See \cite[Cor. 4.1.38]{LapRaZu1}.)

For a fractal string or for an arbitrary bounded subset $A$ of $\mbr^N$, however, we always have that $\underline{D} \geq 0$ and hence also, that $\overline{D} \geq 0$ (so that when the Minkowski dimension $D$ exists, then $D \geq 0$).\footnote{To see why this is true in the latter case, simply note that $A \cap A_{\delta} = A$ for every $\delta > 0.$}

RFDs extend the notion of a bounded set in $\mbr^N$ (see \S \ref{Sec:3.2.1} below), of a fractal string (when $N=1$, see \S \ref{Sec:3.2.2}) and more generally, of a fractal drum $(\partial \Omega, \Omega)$, with $\Omega$ of finite volume in $\mbr^N$ (used, e.g., in [\hyperlinkcite{Lap1}{Lap1--3}] in the case of Dirichlet boundary conditions). They are also very useful tools in the computation of the fractal zeta functions and complex dimensions of fractals (especially when some kind of self-similarity is present), by means of scaling and symmetry considerations; see \cite{LapRaZu1} and \cite{LapRaZu4}.

\begin{remark}$(${\em Relative box dimension.}$)$\label{Rem:3.1}
It is natural to wonder if there exists a notion of box dimension which is also valid for RFDs and thus, for which the corresponding values can be negative. It turns out to be the case, as is explained in a work in preparation by the authors of \cite{LapRaZu1}. This (upper) ``{\em relative box dimension}'' not only exists but also coincides with the (upper) relative Minkowski dimension of the given RFD $(\ao)$ in $\mbr^N$, as in the usual case of bounded subsets $A$ of $\mbr^N$ (see, e.g., \cite{Fa1}).\footnote{It then follows that in the case of a bounded open set with an external cusp $p$ (such as in \eqref{3.4.1/4} and \eqref{3.4.1/2}, respectively, and the text surrounding them), the relative box dimension of the RFD $(\{p\}, \Omega)$ is negative or even equal to $-\infty$.} However, its definition now requires a {\em fractional counting} of the boxes involved (corresponding essentially to the relative volume of those boxes).

Furthermore, for RFDs of the form $(\partial \Omega, \Omega)$, with $\Omega$ a John domain in $\mbr^N$ $(N \geq 1)$,\footnote{See footnote \ref{Fn:38} and the references therein, including \cite{John}, \cite{CarlJonYoc}, \cite{McMul} and \cite{ChuOsgPomm}, for example.} one can show that as in the usual case of bounded subsets of $\mbr^N$, the relative box dimension of the RFD exists and not only coincides with the relative Minkowski dimension, $\dim_B (\partial \Omega, \Omega)$, but is also nonnegative. 

We note that this notion of (possibly negative) relative box dimension (and that of relative Minkowski dimension, with which it coincides) could perhaps be used to make sense of the heuristic and elusive notion of ``negative dimension'' and ``degree of emptiness'' used (or sought for) in \cite{ManFra}; see also \cite{Tri4}, in the context of the Hausdorff (rather than of the Minkowski) dimension.
\end{remark}

We now define as follows the {\em distance zeta function} of an RFD $(A, \Omega)$ in $\mbr^N$:
\begin{equation}\label{3.5}
\zeta_{A, \Omega} (s) = \int_\Omega d(x, A)^{s-N} dx,
\end{equation}
for all $s \in \mbc$ with $Re(s)$ sufficiently large.\footnote{More specifically, according to property ($b$) of \S \ref{Sec:3.3}, this means that \eqref{3.5} holds for all $s \in \mbc$ with $Re(s) > \overline{D}$, where $\overline{D} := \overline{\dim}_B (\ao)$, and that $\overline{D}$ is best possible.\label{Fn:48}}

Another very useful fractal zeta function is the closely related {\em tube zeta function} of the RFD $(A, \Omega)$,
\begin{equation}\label{3.6}
\widetilde{\zeta}_{A, \Omega} (s) = \int_0^\delta V_{A, \Omega} (\varepsilon) \varepsilon^{s-N} \frac{d \varepsilon}{\varepsilon},
\end{equation}
for all $s \in \mbc$ with $Re(s)$ sufficiently large\footnote{This can be interpreted exactly as in footnote \ref{Fn:48} just above, provided $\overline{D} < N$.} and for $\delta > 0$ fixed but arbitrary.\footnote{It turns out that the poles of $\tzao$ (i.e., the complex dimensions of $(A, \Omega)$) are independent of the choice of $\delta > 0$ since changing the value of $\delta$ in \eqref{3.6} amounts to adding an entire function to $\widetilde{\zeta}_{A, \Omega}.$}

One can show that we have the following {\em functional equation}:
\begin{equation}\label{3.7}
\zeta_{A, A_\delta \cap \Omega} (s) = \delta^{s-N} |A_\delta \cap \Omega| + (N-s) \, \widetilde{\zeta}_{A, \Omega} {(s)},
\end{equation}
from which one deduces that $\zeta_{A, \Omega}$ and $\widetilde{\zeta}_{A,\Omega}$ contain essentially the same information, provided $\overline{D} < N$ and $\delta >0$.\footnote{Note that it follows at once from \eqref{3.2} that we always have that $\overline{D} \leq N$, since $|\Omega| < \infty$ according to the definition of an RFD.} 
Indeed, under this very mild condition, $\zeta_{A, \Omega}$ has a meromorphic extension in a given domain $U \subseteq \mbc$ if and only if $\widetilde{\zeta}_{A, \Omega}$ does, and in this case, $\zao$ and $\tzao$ have the same poles (i.e., the same visible complex dimensions) in $U$, with the same multiplicities. Furthermore, still in this case, if $\omega \in U$ is a simple pole of $\widetilde{\zeta}_{A, \Omega}$, then it is also a simple pole of $\zeta_{A, A_\delta \cap \Omega}$ and 
\begin{equation}\label{3.8}
\res (\widetilde{\zeta}_{A, \Omega}, \omega) = \frac{1}{N- \omega} \, \res (\zeta_{A, A_\delta \cap \Omega}, \omega).
\end{equation}
We stress that since according to the definition of an RFD, $\Omega \subseteq A_{\delta_1}$ for some $\delta_1 > 0$, we have $A_\delta \cap \Omega = \Omega$ for any $\delta \geq \delta_1$. Hence, under that condition, we can replace $\zeta_{A, A_\delta \cap \Omega}$ by $\zeta_{A, \Omega}$ in both \eqref{3.7} and \eqref{3.8}, as well as in the above statements. The latter equations \eqref{3.7} and \eqref{3.8} then become (for every $\delta \geq \delta_1$), respectively,
\begin{equation}\label{3.8.1/4}
\zao (s) = \delta^{s-N} |\Omega| +(N-s) \tzao (s)
\end{equation}
and 
\begin{equation}\label{3.8.1/2}
\res (\tzao, \omega) = \frac{1}{N-\omega} \res (\zao, \omega).
\end{equation}
In the sequel, we will assume, most of the time implicitly, that $\delta \geq \delta_1$ (and $\overline{D} < N$); so that \eqref{3.7} can be written in the simpler form of the functional equation \eqref{3.8.1/4}.

Given a domain $U \subseteq \mbc$ containing the ``critical line'' $\{ Re(s) = \overline{D} \}$ and assuming that $\zeta_{A, \Omega}$ has a (necessarily unique) meromorphic continuation to $U$, the poles of $\zeta_{A, \Omega}$ are called the {\em visible complex dimensions} of the RFD $(A, \Omega)$. If $U = \mbc$ (or if there is no ambiguity as to the choice of $U$), we simply call them the {\em complex dimensions} of $(A, \Omega)$.

The set (really, multiset) of (visible) {\em complex dimensions} of $(A, \Omega)$ is denoted by $\mcd (\zeta_{A, \Omega})$ (or $\mcd (\zeta_{A, \Omega}; U)$ if we want to specify $U$) and when $U = \mbc$, we also use the notation $\dim_\mbc (A, \Omega) := \mcd (\zeta_{A, \Omega}; \mbc$).

We adopt a similar notation when $\zao$ is replaced by $\tzao$. In fact, as we alluded to earlier, we will always assume that $\overline{D} < N$; so that $\mcd (\tzao; U) = \mcd (\zao; U),$ in which case we also denote by $\mcd_{\ao} (= \mcd_{\ao} (U))$ the set (or really, multiset) of (visible) complex dimensions of $(\ao)$. \\
 
\subsubsection{The special case of bounded sets in $\mbr^N$}\label{Sec:3.2.1}

If $A$ is a bounded subset of $\mbr^N$, then for any fixed $\delta >0$ and for all $s \in \mbc$ with $Re(s)$ large enough (really, for all $s \in \mbc$ with $Re(s) > \overline{D}$),
\begin{equation}\label{3.9}
\zeta_A (s) := \zeta_{A,A_\delta} (s) = \int_{A_\delta} d(x, A)^{s-N} dx
\end{equation}
and 
\begin{equation}\label{3.10}
\widetilde{\zeta}_A (s) := \zeta_{A,A_\delta} (s) = \int_0^\delta V_A (\varepsilon) \varepsilon^{s-N} \, \frac{d \varepsilon}{\varepsilon},
\end{equation}
where $V_A (\varepsilon) := |A_\varepsilon|$ (as mentioned earlier). Then, \eqref{3.8.1/4} reduces to the following simpler functional equation, valid for any $\delta > 0$:
\begin{equation}\label{3.11}
\zeta_A (s) = \delta^{s-N} |A_\delta| + (N-s) \, \widetilde{\zeta}_A (s).
\end{equation}
Furthermore, \eqref{3.8.1/2} becomes (provided $\overline{D} < N$ and assuming that $\omega \in U$ is a simple pole of $\widetilde{\zeta}_A$ and hence also, of $\zeta_A$):
\begin{equation}\label{3.12}
\res (\widetilde{\zeta}_A, \omega) = \frac{1}{N-\omega} \, \res (\zeta_A, \omega).
\end{equation}

Moreover, still provided $\overline{D} < N$, $\zeta_A$ has a meromorphic continuation to a given domain $U \subseteq \mbc$ if and only if $\widetilde{\zeta}_A$ does. In this case, $\zeta_A$ and $\widetilde{\zeta}_A$ have the same poles in $U$, called the {\em visible complex dimensions} of $A$ and denoted by $\mcd_A (U) = \mcd (\zeta_A; U) = \mcd (\widetilde{\zeta}_A; U)$ or, when $U = \mbc$ or when no ambiguity may arise, by $\mcd_A = \dim_\mbc A = \mcd (\zeta_A) = \mcd (\widetilde{\zeta}_A)$.  Indeed, the complex dimensions of the RFD $(A, A_\delta)$ (i.e., the complex dimensions of $A$) can be defined indifferently as the poles of $\zeta_A$ or of $\widetilde{\zeta}_A$. In addition, they are independent of the choice of $\delta > 0$ because changing the value of $\delta$ in \eqref{3.9} or \eqref{3.10} amounts to adding an entire function to the original distance zeta function $\zeta_A$ or tube zeta function $\widetilde{\zeta}_A$, respectively.

Finally, we note that since $d (\cdot, A) = d(\cdot, \overline{A}),$ we have that $A_\delta = (\overline{A})_\delta$ and hence, $V_A = V_{\overline{A}}$; so that $\zeta_A = \zeta_{\overline{A}}$ and $\widetilde{\zeta}_A = \widetilde{\zeta}_{\overline{A}}$. As a result, by simply replacing $A$ by its closure $\overline{A}$ in $\mbr^N$, we may as well assume without loss of generality that $A \subseteq \mbr^N$ is compact instead of just being bounded.\\

\subsubsection{The special case of fractal strings}\label{Sec:3.2.2}

Let $N=1$ and $\Omega$ be a bounded open subset of $\mbr$ (or more generally, an open subset of finite length in $\mbr$). Then, we view the RFD $(\partial \Omega, \Omega)$ in $\mbr$ as a geometric realization of the fractal string $\mcl = (\ell_j)_{j=1}^\infty$, the sequence of lengths of the connected components (i.e., open intervals) of $\Omega$, as in \S\ref{Sec:2}.

Remarkably, the distance zeta function $\zeta_{\partial \Omega, \Omega}$ of the RFD $(\partial \Omega, \Omega)$ and the geometric zeta function $\zeta_\mcl$ of the fractal string $\mcl = (\ell_j)_{j \geq 1}$ (see \S\ref{Sec:2}, Equation \eqref{2.1}) are related via the following very simple functional equation:
\begin{equation}\label{3.13}
\zeta_{\partial \Omega, \Omega} (s) = 2^{1-s} \frac{\zeta_\mcl (s)}{s},
\end{equation}
valid initially within the open half-plane $\{Re(s) > \overline{D} \},$ where
\begin{equation}\label{3.13.1/2}
\overline{D} = \overline{\dim}_B (\partial \Omega, \Omega) = \overline{\dim}_B \mcl = D(\zeta_{\partial \Omega, \Omega}) = D(\zeta_\mcl),
\end{equation}
and then (upon analytic continuation), within any domain $U \subseteq \mbc$ containing the closed half-plane $\{Re(s) \geq \overline{D} \}$ to which $\zeta_\mcl$ (and hence also, $\zeta_{\partial \Omega, \Omega}$) can be meromorphically continued.

One therefore deduces from \eqref{3.13} that the visible poles of $\zeta_{\partial \Omega, \Omega}$ in $U$ (i.e., the visible complex dimensions of the RFD $(\partial \Omega, \Omega)$) are the same as those of $\zeta_\mcl$, except for the fact that (provided $0 \in U$) $0$ is always a visible pole of $\zeta_{\partial \Omega, \Omega}$ (or more precisely, has multiplicity $m+1$, where $m$ is possibly equal to zero and is defined as the multiplicity of $0$ as a pole of $\zeta_\mcl$). Furthermore, $\zeta_\mcl$ has a meromorphic continuation to $U$ if and only if $\zeta_{\partial \Omega, \Omega}$ does, and in this case, the corresponding residues (at a simple pole $\omega \in U$, $\omega \neq 0$, of either $\zeta_\mcl$ or $\zeta_{\partial \Omega, \Omega}$) are related by the following identity:\footnote{For the simplicity of exposition, we assume throughout this discussion in \S \ref{Sec:3.2.1} that $0$ is not a zero of $\zeta_\mcl: \zeta_\mcl (0) \neq 0$. Otherwise, we would have to make use of the notion of a divisor (in essence, the multiset of zeros and poles) of a meromorphic function, to be discussed and used later in \S\ref{Sec:3} and \S\ref{Sec:4}.}
\begin{equation}\label{3.14}
\res(\zeta_{\partial \Omega, \Omega}, \omega) = \frac{2^{1-\omega}}{\omega} \, \res (\zeta_\mcl, \omega).
\end{equation}
If one starts instead with a given (bounded) fractal string $\mcl = (\ell_j)_{j \geq 1}$ such that $\sum_{j \geq 1} \ell_j < \infty$, then all the above facts and identities are independent of the choice of the geometric realization $\Omega$ of $\mcl$ as a bounded open subset of $\mbr$ and hence, of the choice of the RFD $(\partial \Omega, \Omega)$ associated with $\mcl$.

The above simple observations enable one, in particular, to view the theory of complex dimensions and the associated fractal tube formulas developed in \cite{LapRaZu1} and [\hyperlinkcite{LapRaZu4}{LapRaZu4,6}] as containing (as a very special case) the corresponding theory of fractal strings (as developed in \cite{Lap-vF4}). It also enables us to significantly simplify the statement of the fractal tube formulas for fractal strings\footnote{These formulas were briefly discussed in \S \ref{Sec:2.1} and will be revisited in the relevant part of \S \ref{Sec:3.5.3}.} 
in the case when $0$ happens to be a visible pole of $\zeta_\mcl$. Suffices to say here that, as a result, $\zeta_{\partial \Omega, \Omega}$ and therefore $\mcd_{\partial \Omega, \Omega} = \mcd (\zeta_{\partial \Omega, \Omega})$ should be considered as the fractal zeta function and the complex dimensions of a fractal string $\mcl$ or $\Omega$ (rather than $\zeta_\mcl$ and $\mcd_\mcl = \mcd (\zeta_\mcl)$, respectively).

It follows from the functional equation \eqref{3.13} and the above discussion that $\mcd_\mcl = \mcd (\zeta_\mcl)$ and $\mcd_{\partial \Omega, \Omega} = \mcd (\zeta_{\partial \Omega, \Omega}$), respectively the set of complex dimensions of the fractal string $\mcl$ (in the sense of \cite{Lap-vF4} and \S\ref{Sec:2} above) and the set of complex dimensions of the associated RFD $(\partial \Omega, \Omega)$ (in the sense of \cite{LapRaZu1} and of the present \S \ref{Sec:3.2}), for any geometric realization $\Omega \subseteq \mbr$ of $\mcl$, are connected via the following relation (between multisets):
\begin{equation}\label{3.14.1/2}
\mcd_{\partial \Omega, \Omega} = \mcd_\mcl \cup \{ 0\},
\end{equation}
in the sense that if $0$ is a pole of $\zeta_\mcl$ (i.e., a complex dimension of $\mcl$, in the sense of \S\ref{Sec:2}) of multiplicity $m \geq 0$, then it is also a pole of $\zeta_{\partial \Omega, \Omega}$ (i.e., a complex dimension of the RFD $(\partial \Omega, \Omega)$, in the sense of \S \ref{Sec:3.2}) of multiplicity $m + 1$. All the other (i.e., nonzero) complex dimensions of the fractal string $\mcl$ and of the RFD $(\partial \Omega, \Omega)$ coincide, and have the same multiplicities.
 
\subsection{A few key properties of fractal zeta functions}\label{Sec:3.3}

In this section, we discuss several important properties of the fractal zeta functions (that is, of the distance and tube zeta functions) of relative fractal drums (RFDs) and, in particular, of bounded sets in $\mbr^N$. Throughout, $N \geq 1$ is a fixed (but arbitrary) positive integer. We will stress the case of distance zeta functions since (except for a few small differences) fractal tube zeta functions have entirely analogous properties.\\

\subsubsection{Abscissa of convergence and holomorphicity}\label{Sec:3.3.1}

Let $(A, \Omega)$ be an RFD in $\mbr^N$ and consider its distance and tube zeta functions, $\zeta_{A, \Omega}$ and $\widetilde{\zeta}_{A, \Omega}$. In order to simplify the exposition, we still assume that $\overline{D} < N$.\footnote{We also implicitly assume throughout \S \ref{Sec:3.3} (and beyond) that $\delta \geq \delta_1$, where $\Omega \subseteq A_{\delta_1}$.}\\

{\bf (\emph{a})} ({\em Abscissa of convergence}).\label{gls:ac36} It is shown in \cite{LapRaZu1} (see also [\hyperlinkcite{LapRaZu2}{LapRaZu2,4}]) that the {\em abscissa of convergence} $\sigma = D(\zao) = D(\tzao)$ of $\zeta_{A, \Omega}$ and $\widetilde{\zeta}_{A, \Omega}$ is well defined and coincides with the upper Minkowski dimension $\overline{D} = \overline{\dim}_B (A, \Omega)$ of $(\ao)$. Namely, the largest right half-plane $\{Re(s) > \alpha \}$ (with $\alpha \in (-\infty, N])$ in which the Lebesgue integral defining $\zao (s)$ (respectively, $\tzao (s)$) in \eqref{3.5} (respectively, in \eqref{3.6}) converges is $\{Re(s) > \overline{D} \}$, called the {\em half-plane of} ({\em absolute}) {\em convergence} of $\zao$ (respectively, of $\tzao$); for example, 
\begin{align}\label{3.15}
\sigma :&= \inf \Bigg\{ \alpha \in \mbr: \int_\Omega d(x, A)^{\alpha - N} dx < \infty \Bigg\}\\
&= \overline{D} := \overline{\dim}_B (A, \Omega).\notag
\end{align}

A part of the proof of this key fact relies on a suitable extension of a result obtained in \cite{HarvPol} for a completely different purpose (the study of the singularities of the solutions of certain linear partial differential equations).\\

{\bf (b)} ({\em Holomorphicity}). It is also shown in the aforementioned references that $\zao$ (and hence, in light of the functional equation \eqref{3.8.1/4}, also $\tzao$) is holomorphic in the open right half-plane $\{Re(s) > \overline{D} \}$, where $\overline{D} := \overline{\dim}_B (A, \Omega)$, as before.

In addition, under mild assumptions (namely, if $D$, the Minkowski dimension of $(A, \Omega)$, exists and $\mcm_* > 0$), then this right half-plane $\{Re(s) > \overline{D} \}$ is also optimal from the point of view of holomorphicity because one can show that $\zao (s) \rightarrow + \infty,$ as $s \rightarrow \overline{D}$, with $s \in \mbr$ and $s > \overline{D}$. Therefore, the half-plane of (absolute) convergence and the {\em half-plane of holomorphic continuation} coincide and are equal to $\{ Re(s) >D \}$, in this case. One then says that $D$, the (relative) Minkowski dimension of $(A, \Omega)$, also coincides with $D_{hol} (\zao) = D_{hol} (\tzao)$, the common {\em abscissa of holomorphic continuation} of $\zao$ and of $\tzao$.\footnote{In general, however, we only have that
\begin{equation}\label{3.16}
D_{hol} (\zao) = D_{hol} (\tzao) \leq \overline{D}.
\end{equation}\label{Fn:58}}

Finally, one can always  (complex) differentiate $\zao$ or $\tzao$ under the integral sign and as many times as one wants. For example, provided $Re(s) > \overline{D}$,
\begin{equation}\label{3.17}
\zeta'_{A, \Omega} (s) = \int_\Omega d(x,A)^{s-N} \log (d(x, A)) dx.
\end{equation}

Naturally, all of the above properties (in part ($a$) or ($b$)) hold without change for $\zeta_A$ and $\widetilde{\zeta}_A$, the distance and tube zeta functions of bounded sets $A$ in $\mbr^N$ (initially defined for $Re(s) > \overline{D}$ by \eqref{3.9} and \eqref{3.10}, respectively) by considering  the associated RFDs $(A, A_\delta)$, for a fixed but arbitrary $\delta >0$.

It is not known whether we can always have an equality in \eqref{3.16} or, equivalently, if $D_{hol} (\zao) = \overline{D}$. By contrast, for fractal strings, the analog of this property always holds;\footnote{More specifically, provided the (bounded) fractal string $\mcl = (\ell_j)_{j \geq 1}$ is infinite (i.e., has an infinite number of lengths), then $D_{hol} (\zeta_\mcl) = D (\zeta_\mcl) = \overline{D}$, the (upper) Minkowski dimension of $\mcl$.}
see \cite{Lap-vF4} or \cite[\S 2.1.4]{LapRaZu1}. Note that the property stated in part ($a$) above is the exact counterpart of Theorem \ref{Thm:2.1}.

Since $\zao$ and $\tzao$ admit a holomorphic continuation (necessarily unique) to $\{Re(s) > \overline{D} \}$), they cannot have any pole there. Consequently, the set of complex dimensions of $(A,\Omega)$ must be contained in $\{ Re(s) \leq \overline{D}\}$. More specifically, if the domain $U \subseteq \mbc$ contains the vertical line $\{ Re(s) = \overline{D} \}$ (or, equivalently, contains the closed half-plane $\{Re(s) \geq \overline{D} \}$),\footnote{{\em Caution}: There exist RFDs for which such domains $U$ do not exist; in fact, in \cite[\S 4.6]{LapRaZu1}, one constructs RFDs in $\mbr^N$ (and also compact sets in $\mbr^N$, along with fractal strings) such that {\em every} point of the vertical line $\{Re(s) = \overline{D} \}$ is a (nonremovable) singularity of $\zao$ and of $\tzao$.} 
then $\zao$ (and hence also, $\tzao$) can be meromorphically continued to a domain $U \subseteq \mbc$, and 
\begin{equation}\label{3.18}
\mcd (\zao ; U) = \mcd (\tzao; U) \subseteq \{ Re(s) \leq \overline{D} \}.
\end{equation}
In addition, if $\overline{D} \in U$, which is certainly the case if $U$ also contains the vertical line $\{Re(s) = \overline{D} \}$, then 
\begin{equation}\label{3.19}
\overline{D} := \overline{\dim}_B (A, \Omega) = \max \{ Re(\omega): \omega \in \mcd (\zao ; U \}.
\end{equation}
In other words, {\em the} ({\em upper}) {\em Minkowski dimension of $(A, \Omega)$ is equal to the maximal real part of the} ({\em visible}) {\em complex dimensions of $(A, \Omega)$.}

In the literature on this topic (see [\hyperlinkcite{LapRaZu1}{LapRaZu1--10}]), the vertical line $\{ Re(s) = \overline{D} \}$ is often called the ``{\em critical line}'' (by analogy with the terminology adopted for the Riemann zeta function, but clearly, with a different meaning). Assume that $\zao$ (or, equivalently, $\tzao$) can be meromorphically extended (necessarily uniquely) to a domain $U$ containing the critical line $\{ Re(s) = \overline{D} \}$, then
\begin{align}\label{3.20}
\dim_{PC} (\ao) = \mcd_{PC} (\ao) &= \mcd_{PC} (\zao)(= \mcd_{PC} (\tzao))\notag \\ 
&:= \{ \omega \in \mcd (\zao; U) : Re (\omega) = \overline{D} \}
\end{align}
is called the set (really, multiset) of {\em principal complex dimensions} of the RFD $(A, \Omega)$. Clearly, $\mcd_{PC} (\ao) = \mcd_{PC} (\zao) = \mcd_{PC} (\tzao)$ is independent of the choice of the domain $U$ satisfying the above assumption. (An entirely analogous notation and terminology is used for a bounded set $A$ in $\mbr^N$; namely, $\dim_{PC} (A) = \mcd_{PC} (A) = \mcd_{PC} (\zeta_A) = \mcd_{PC} (\widetilde{\zeta}_A)$ denotes the set of {\em principal complex dimensions} of $A$.)\\

\subsubsection{Meromorphic continuation and Minkowski content}\label{Sec:3.3.2}

In this subsection, we state a few results concerning the existence of a meromorphic continuation in a suitable region of the distance or tube zeta function of an RFD $(A, \Omega)$, along with related results concerning the (upper, lower) Minkowski content of $(A, \Omega)$. We will consider both the Minkowski measurable case and the (log-periodic) Minkowski nonmeasurable case, for which the residue evaluated at $D$ (the Minkowski dimension of $(\ao)$) of the given fractal zeta function is directly connected to the Minkowski content or, respectively, the average Minkowski content of $(\ao)$. For complete proofs of those results and of related results about the existence of a meromorphic extension for various classes of RFDs (or of bounded sets) in $\mbr^N$, we refer to \S \ref{Sec:2.2}, \S \ref{Sec:2.3}, \S \ref{Sec:3.5}, \S \ref{Sec:3.6} and \S 4.5 of \cite{LapRaZu1}, as well as to [\hyperlinkcite{LapRaZu3}{LapRaZu3--4}].

We shall state the results for RFDs in $\mbr^N$; they, of course, specialize to the case of bounded sets in $\mbr^N$, as explained in \S \ref{Sec:3.2.1}. (The proofs are the same in the special case of bounded sets as in the general situation of RFDs.) Throughout, as in \S \ref{Sec:3.2}, $(\ao)$ is an RFD such that $\overline{D} < N$, even though this is only needed to easily state simultaneously the results both for the distance and the tube zeta functions.\footnote{We also suppose, as before, that $\delta >0$ is such that $\delta \geq \delta_1$, where $\Omega \subseteq A_{\delta_1}$, which can always be assumed without loss of generality.}\\

{\bf (\emph{a})} ({\em Minkowski content and residue}). We begin by stating a simple result according to which, if $\zeta_{\ao}$ (and hence also, $\tzao$, in light of \eqref{3.8.1/4}) has a meromorphic extension to a connected open neighborhood of $D$ (where the Minkowski dimension $D = D_{(\ao)}$ of $(\ao)$ is assumed to exist), and if $\mcm^* < \infty$, then\footnote{Clearly, if $\mcm_* = 0$ or $\mcm^* = +\infty$, then the corresponding inequality in \eqref{3.20.1/2} or in \eqref{3.21} is trivial. Hence, we may as well assume that $(\ao)$ is Minkowski nondegenerate in order to obtain the full strength of the result.}
\begin{equation}\label{3.20.1/2}
(N-D) \mcm_* \leq \res(\zao, D) \leq (N-D) \mcm^*,
\end{equation}
$s=D$ is a simple pole of $\zao$ (and thus also of $\tzao$) and\footnote{The value of the residue $\res(\tzao, D)$ is independent of the choice of $\delta >0$ in the definition \eqref{3.6} of $\tzao$, as well as of the choice of $\delta_1$ implicit in \eqref{3.5} and \eqref{3.6}. And likewise for the values of the residues $\res (\zeta_A, D)$ and $\res (\tilde{\zeta}_A, D)$ in the definition of $\zeta_A$ and $\tilde{\zeta}_A$ in \eqref{3.9} and \eqref{3.10}, in the counterpart of this result for a bounded subset $A$ of $\mbr^N$; recall from \eqref{3.9} that $\zeta_A := \zeta_{A, A_\delta}$.}
\begin{equation}\label{3.21}
\mcm_* \leq \res(\tzao, D) \leq \mcm^*.
\end{equation}

In particular, if $(\ao)$ is assumed to be Minkowski measurable, then
\begin{equation}\label{3.22}
\res (\zao, D) = (N-D) \mcm
\end{equation}
and
\begin{equation}\label{3.23}
\res (\tzao, D) = \mcm,
\end{equation}
the Minkowski content of $(A, \Omega)$.

\begin{example}({\em Cantor sets}.)\label{Ex:3.1}
As an illustration of the above result (as well as of the result stated in part ($c$) below), we consider the generalized Cantor set $A = C^{(a)} \subseteq [0,1]$, defined much like the usual ternary Cantor $C= C^{(1/3)}$, where the parameter $a$ lies in $(0, 1/2)$. Then, $D$ exists and $A$ is Minkowski nondegenerate but is not Minkowski measurable. More precisely, $D = \dim_B A = \log_{a^{-1}} 2$. Furthermore,
\begin{equation}\label{3.24}
\mcm_* = \frac{1}{D} \Big( \frac{2D}{1-D} \Big)^{1-D}, \quad \mcm^* = 2(1-a) \Big(\frac{1}{2} - a \Big)^{D-1}
\end{equation}
and
\begin{equation}\label{3.25}
\res (\widetilde{\zeta}_A, D) = \frac{2}{\log 2} \Big(\frac{1}{2} - a \Big)^D.
\end{equation}
Moreover, we have strict inequalities in \eqref{3.21} and \eqref{3.20.1/2} for this example:
\begin{equation}\label{3.26}
0 < \mcm_* < \res (\tzad) < \mcm^* < \infty,
\end{equation}
and analogously for $\res(\zeta_A, D)$. Also, since $D < 1$, we have that
\begin{equation}\label{3.27}
\res (\zeta_A, D) = (1-D) \res (\widetilde{\zeta}_A, D),
\end{equation}
with $\res (\widetilde{\zeta}_A, D)$ given by \eqref{3.25}.
\end{example}

We could discuss analogously the examples of the Sierpinski gasket and the Sierpinski carpet. In particular, we would find that the inequalities in \eqref{3.20.1/2} and \eqref{3.21} are also strict in those two cases, because the  classic Sierpinski gasket and  carpet are both Minkowski nondegenerate and not Minkowski measurable, as can be checked via a direct computation.

\begin{example}({\em $a$-string}.)\label{Ex:3.2}
Another simple illustration of the above result stated in part ($a$) (as well as of the result stated in part ($b$) below) is the $a$-string discussed in \S\ref{Sec:2} (see Example \ref{Ex:2.8} in \S \ref{Sec:2.5}). Recall that for any $a >0,$
\begin{equation}\label{3.27.1/4}
\Omega = \Omega_a = \bigcup_{j=1}^\infty ((j+1)^{-a}, j^{-a})
\end{equation}
and so (with $\partial \Omega_a := \partial (\Omega_a)$)
\begin{equation}\label{3.27.1/2}
\partial \Omega_a = \{ j^{-a}: j \geq 1 \} \cup \{ 0 \}
\end{equation}
and $\mcl_a = (\ell_j)_{j \geq 1}$, with $\ell_j := j^{-a} - (j+1)^{-a}$
for each $j \geq 1$. Furthermore, the RFD $(\partial \Omega_a, \Omega_a)$ (or, equivalently, the fractal string $\mcl_a$) is Minkowski measurable with Minkowski content
\begin{equation}\label{3.27.3/4}
\mcm =\frac{2^{1-D} a^D}{1-D},
\end{equation}
where 
\begin{equation}\label{3.27.4/5}
D = \dim_B (\partial \Omega_a, \Omega_a) = D_{\mcl_a} = \frac{1}{a + 1}. 
\end{equation}
Moreover, it follows from a direct computation, left as an exercise for the interested reader, that
\begin{equation}\label{3.27.5/6}
\res(\zeta_{\mcl_a}, D) = D a^D,
\end{equation} 
which since (in light of \eqref{3.14})
\begin{equation}\label{3.27.6/7}
\res(\zeta_{\partial \Omega_a, \Omega_a}, D) = \frac{2^{1-D}}{D} \res(\zeta_{\mcl_a},D),
\end{equation}
is in agreement with the exact counterpart for RFDs of \eqref{3.27}; namely, 
\begin{equation}\label{3.34.1/2}
\res (\zeta_{\partial \Omega_a, \Omega_a}) = (1-D) \res (\widetilde{\zeta}_{\partial \Omega_a, \Omega_a}, D). 
\end{equation}
Indeed, by combining \eqref{3.27.3/4}--\eqref{3.27.6/7}, we obtain that 
\begin{align}\notag
(1-D) \mcm &= 2^{1-D} a^D = \frac{2^{1-D}}{D} (D a^D) \\ \notag
&= \frac{2^{1-D}}{D} \res (\zeta_{\mcl_a}, D) = \res (\zeta_{\partial \Omega_a, \Omega_a}, D),
\end{align}
as desired.
\end{example}
\smallskip
{\bf (\emph{b})} ({\em Existence of a meromorphic extension}: {\em Minkowski measurable case}). Let $(A, \Omega)$ be an RFD in $\mbr^N$ such that there exists $\alpha > 0$, $\mcm \in (0, +\infty)$ and $D \geq 0$ such that\\
\begin{equation}\label{3.28}
V_{A, \Omega} (\varepsilon) := |A_\varepsilon \cap \Omega| = \varepsilon^{N-D} (\mcm + O (\varepsilon^\alpha)) \quad \text{as } \varepsilon \rightarrow 0^+.
\end{equation}
\\
Then, $\dim_B (\ao)$ exists and $\dim_B (\ao) = D$. Furthermore, $(\ao)$ is Minkowski measurable with Minkowski content equal to $\mcm$. Moreover, the distance zeta function $\zao$ has for abscissa of convergence $D$ and possesses a (necessarily unique) meromorphic continuation (still denoted by $\zeta_{\ao}$, as usual) to (at least) the open right half-plane $\{ Re(s) > D - \alpha \}$; that is, $D_{mer} (\zeta_{\ao}) \leq D - \alpha$, where $D_{mer} (\zeta_{\ao})$ is the {\em abscissa of meromorphic continuation} of $\zeta_{\ao}$ (defined much as the abscissa of holomorphic continuation, except for the adjective ``holomorphic'' replaced by ``meromorphic''). The only pole of $\zeta_{A, \Omega}$ (i.e., the only visible complex dimension of $(\ao)$) in this half-plane is $s=D$; it is simple and $\res(\zeta_{\ao}, D) = (N-D) \mcm.$

Clearly, the same result holds for $\widetilde{\zeta}_{\ao}$, the tube (instead of the distance) zeta function of $(\ao)$, except for the fact that in that case, $\res (\tzao, D) = \mcm$.

\begin{exercise}\label{Exc:3.3}
($i$) Show that the hypotheses of part ($b$) are satisfied for the $a$-string of Example \ref{Ex:3.2}, viewed as the RFD $(\partial \Omega_a, \Omega_a)$.

($ii$) Prove via a direct computation that $\zeta_{\mcl_a}$ has a meromorphic continuation to all of $\mbc$ with (simple) poles at $s = D = 1/(a + 1)$ and (at possibly a subset of) $\{ -D, -2D, \cdots, -nD, \cdots: n \geq 1 \}$. (For a complete answer, see \cite[Thm. 6.21 and its proof]{Lap-vF4}.) Deduce that 
\begin{equation}\notag
\mcd_{\mcl_{a}} = \mcd (\zeta_{\partial \Omega_a, \Omega_a}) \subseteq \{ D, 0, -D, -2D, \cdots, -nD, \cdots: n \geq 1 \}, 
\end{equation}
where all the complex dimensions are simple and $D$ and $0$ are always complex dimensions of the fractal string $\mcl_a$ or equivalently, of the RFD $(\partial \Omega_a, \Omega_a)$.
\end{exercise}

{\bf (\emph{c})} ({\em Existence of a meromorphic extension}: {\em Minkowski nonmeasurable case}). Let $(A, \Omega)$ be an RFD in $\mbr^N$ such that there exists $\alpha > 0, D \in \mbr$ and a nonconstant periodic function $G$ with minimal period $T>0$, satisfying \\
\begin{equation}\label{3.29}
V_{A, \Omega} (\varepsilon) := |A_\varepsilon \cap \Omega| = \varepsilon^{N-D} (G(\log \varepsilon^{-1}) + O (\varepsilon^\alpha)) \quad \text{as } \varepsilon \rightarrow 0^+.  
\end{equation}
\\
Then, $\dim_B (A, \Omega)$ exists and $\dim_B (A, \Omega) = D$. Furthermore, $G$ is continuous and $A$ is Minkowski nondegenerate with lower and upper Minkowski contents respectively given by 
\begin{equation}\label{3.30}
\mcm_* = \min G \, \text{ and } \,  \mcm^* = \max G.
\end{equation}
Moreover, the tube zeta function $\tzao$ also has for abscissa of convergence $D(\tzao) = D$ and possesses a (necessarily unique) meromorphic extension (still denoted by $\tzao$) to (at least) the open right half-plane $\{ Re(s) > D - \alpha \}$; that is, $D_{mer} (\tzao) \leq D - \alpha$.\footnote{We first state the results for $\tzao$ because they are more elegantly written in this situation; we will then mention the few changes needed for the corresponding statements about $\zao$.}

In addition, the set of all the poles of $\tzao$ (i.e., the set of visible complex dimensions of the RFD $(A, \Omega)$) in this half-plane is given by
\begin{equation}\label{3.31} 
\mcd_{PC} (\ao) = \dim_{PC} (\tzao) = \mcd (\tzao) = \bigg\{ s_k := D + i \frac{2 \pi}{T} k: \widehat{G} \Big( \frac{k}{T} \Big) \neq 0, k \in \mbz \bigg\},
\end{equation}
where
\begin{equation}\label{3.32}
\widehat{G}_0 (t) := \int_0^T e^{-2 \pi it \tau} G(\tau) d \tau, \text{ for all } t \in \mbr. 
\end{equation}
We have, for every $k \in \mbz,$
\begin{equation}\label{3.33}
\res (\tzao, s_k) = \frac{1}{T} \widehat{G}_0 \Big( \frac{k}{T} \Big),
\end{equation}
and hence,
\begin{equation}\label{3.34}
|\res (\tzao, s_k)| \leq \frac{1}{T} \int_0^{T} G(\tau) d \tau.
\end{equation}
Also,
\begin{equation}\label{3.35}
\lim_{k \rightarrow \pm \infty} \res (\tzao, s_k) = 0.
\end{equation}

Finally, the residue of $\tzao$ at $s =D$ coincides both with the mean value of $G$ and with $\widetilde{\mcm}$, the {\em average Minkowski content} of $(A, \Omega)$:\footnote{Here, $\widetilde{\mcm}$ is defined as the limit of a suitable Cesaro logarithmic average of $V_{A, \Omega}(t)/t^{N-D}$, much as in \cite[Thm. 8.30]{Lap-vF4} where $N=1$. More specifically, 
\begin{equation}\label{3.42.1/2}
\widetilde{\mcm} := \lim_{\tau \rightarrow +\infty} \frac{1}{\log \tau} \int_{1/\tau}^{1} \frac{V_{\ao} (t)}{t^{N-D}} \, \frac{dt}{t},
\end{equation} 
where the indicated limit is assumed to exist in $(0, +\infty).$\label{Fn:64}}
\begin{equation}\label{3.34.1/2}
\res (\tzao, D) = \frac{1}{T} \int_0^T G (\tau) d \tau = \widetilde{\mcm}.
\end{equation}
In particular, the RFD $(A, \Omega)$ is Minkowski nondegenerate but is not Minkowski measurable and, in fact, we have that
\begin{equation}\label{3.35.1/2}
\mcm_* < \res (\tzao, D) < \mcm^* < \infty.
\end{equation}

For the distance (instead of the tube) zeta function $\zao$, entirely analogous results hold, except for the fact that in light of \eqref{3.12}, 
\begin{equation}\label{3.36}
\res (\zao, s_k) = (N - s_k) \frac{1}{T} \widehat{G}_0 \Big( \frac{k}{T} \Big), \text{ for all } k \in \mbz,
\end{equation}
and hence,
\begin{equation}\label{3.37}
\res( \zao, s_k) = o(|k|) \quad \text{as } |k| \rightarrow \infty.
\end{equation}

\begin{exercise}\label{Exc:3.4}
For the generalized Cantor sets $A = C^{(a)}$ of Example \ref{Ex:3.1} above, verify that both the hypotheses and the conclusions (of the counterpart for bounded sets) of part ($c$) just above are satisfied for an arbitrary $\alpha >0$. Deduce that for any $a \in (0, 1/2)$, both $\zeta_A$ and $\widetilde{\zeta}_A$ have a meromorphic extension to all of $\mbc$ and 
\begin{equation}\notag
\mcd_A  = \mcd (\zeta_A) = \mcd(\widetilde{\zeta}_A) = \{ D + ik{\bf p}: k \in \mbz \},
\end{equation}
where $D:= \log 2/\log (1/a)$ and ${\bf p} := 2 \pi/\log (1/a)$ are respectively the Minkowski dimension and the oscillatory period of $A = C^{(a)}$. Also, calculate the average Minkowski content $\widetilde{\mcm}$ of $A$ both via a direct computation and by using one of the main results of part ($c$).
\end{exercise}

The next ``exercise'' is significantly more difficult than the previous one.\footnote{In fact, it essentially corresponds to Problem 6.2.35 in \cite{LapRaZu1} and would also help solve one part of the much broader Problem 6.2.36 in \cite{LapRaZu1}.}

\begin{exercise}\label{Exc:3.5}
Let $A$ be a self-similar set satisfying the open set condition. Find geometric conditions on $A$ so that in the lattice case (respectively, in the nonlattice case), the hypotheses and hence also the conclusions of the main result (of the counterpart for bounded sets) of part ($c$) (respectively, part ($b$)) of this subsection (i.e., \S \ref{Sec:3.3.2}) are satisfied.\footnote{Recall that a self-similar set is said to be lattice if the multiplicative group generated by its distinct scaling ratios is of rank $1$, and is called nonlattice otherwise.}
\end{exercise}

A variety of significantly more complicated behaviors for the asymptotics (as $\varepsilon \rightarrow 0^+$) of the tube function $\varepsilon \mapsto |A_\varepsilon \cap \Omega| = V_{\ao} (\varepsilon)$ of an RFD are considered in \cite{LapRaZu1}. These include, most notably, transcendentally $n$-quasiperiodic behavior, for any $n \in \mbn \cup \{ \infty \}$; see \cite[\S 3.1 and \S 4.6]{LapRaZu1}. Instead of giving the precise (but somewhat involved) definitions and results here, we limit ourselves for now to the following simple example. (Further information will be provided later in the paper, especially in \S \ref{Sec:3.6}.)

\begin{example}\label{Ex:3.6}
Suppose that the tube function of the bounded set $A$ satisfies \eqref{3.29}, where the function $G$ is no longer assumed to be periodic (of period $T$) but is given instead by $G = G_1 + G_2$, where the nonconstant functions $G_1$ and $G_2$ are periodic of (minimal) periods $T_1$ and $T_2$, respectively, with $T_1/T_2$ irrational. Then, $\zeta_A$ has a meromorphic continuation to (at least) $\{ Re(s) > D - \alpha \}$ and the set of principal complex dimensions of $A$ consists of simple (nonremovable) singularities of $\zeta_A$ (and of $\widetilde{\zeta}_A$) and is given by
\begin{equation}\label{3.37.1/3}
\mcd_{PC} (A) = \dim_{PC} A = \bigcup_{j=1}^2 \Big(D + i \frac{2 \pi}{T_j} \mbz\Big) = D + i \Big( \bigcup_{j=1}^2 \frac{2 \pi}{T_j} \mbz \Big),
\end{equation}
where $D = D (\zeta_A) = D (\widetilde{\zeta}_A) = \dim_B A$. We note that since $T_1$ and $T_2$ are incommensurable, the imaginary parts of the principal complex dimensions of $A$ have a rather different structure than in part ($c$) above. In particular, they are no longer in arithmetic progression.

Finally, we point out that the assumptions of this example are realized by the compact subset of $\mbr$ obtained by taking the disjoint union of two distinct and suitably chosen (two-parameter) generalized Cantor sets; see \cite[Thm. 3.1.12]{LapRaZu1} for the details and \cite[Thm. 3.1.15]{LapRaZu1} for the generalization to $n$ such Cantor sets, corresponding to the case when $A$ is (transcendentally) 2-quasiperiodic or more generally, $n$-quasiperiodic, respectively. The important (and highly nontrivial) extension to the case when $n = \infty$ is dealt with in \cite[\S 4.6]{LapRaZu1}.
\end{example}

\subsubsection{Scaling property and invariance under isometries}\label{Sec:3.3.3}

We first state the scaling invariance property of the distance zeta function $\zao$ of an RFD $(A, \Omega)$.\footnote{The properties stated for RFDs in \S \ref{Sec:3.3.3} and in fact, in all of \S \ref{Sec:3.3.1}--\S \ref{Sec:3.3.4}, have natural counterparts for bounded sets $A$ in $\mbr^N$. In particular, recall from \eqref{3.15} and the discussion surrounding it that $D(\zeta_{A, A_\delta})$ denotes the abscissa of convergence of the RFD $(A, A_\delta).$} (We leave it as an exercise for the interested reader to state its counterpart for the tube zeta function  $\tzao$; alternatively, see \cite[\S 4.1.3]{LapRaZu1}.)

For any $\lambda > 0$, we have $D (\zeta_{\lambda A, \lambda \Omega}) = D (\zao) = \overline{D} := \overline{\dim}_B (A, \Omega)$ and 
\begin{equation}\label{3.38}
\zeta_{\lambda A, \lambda \Omega} (s) = \lambda^s \zao (s),
\end{equation}
for all $s \in \mbc$ with $Re(s) > \overline{\dim}_B A$ or, more generally, for all $s \in U$, where $U$ is any domain containing the closed right half-plane $\{Re(s) \geq \overline{D}\}$ to which one, and hence both, of these fractal zeta functions has a meromorphic continuation. 

Furthermore, if $\omega \in \mbc$ is a simple pole of the meromorphic extension of $\zao$ to some open connected neighborhood of the critical line $\{ Re(s) =\overline{D}\}$ (or, equivalently, of $\{Re(s) \geq \overline{D}\}$), then
\begin{equation}\label{3.39}
\res (\zeta_{\lambda A, \lambda \Omega}, \omega) = \lambda^\omega \res (\zao, \omega). 
\end{equation}
It is noteworthy that the scaling property of the residues of $\zao$, as stated in \eqref{3.39}, is very analogous to the scaling property of Hausdorff measure; the latter, however, is restricted to a single exponent, namely, the Hausdorff dimension, whereas \eqref{3.39} is valid for any (visible) complex dimension of $(A, \Omega)$. We do not wish to elaborate on this point here but simply mention that under appropriate hypotheses, a suitable version of these residues should give rise to a family of complex measures, defined by the maps $\Omega \mapsto \res (\zao, \omega)$ and indexed by the (visible) complex dimensions $\omega$ of the bounded set $A$ and where $\Omega$ is allowed to run through the Borel (and not necessarily open) subsets of $\mbr^N.$  For more information, see \cite[App. B]{LapRaZu1}.

Next, we simply mention that the distance and tube zeta functions of an RFD $(A, \Omega)$ are clearly invariant under the group of displacements of $\mbr^N$ (that is, under the group of isometries of the affine space $\mbr^N$, generated by the rotations and translations). More specifically, if $f$ is such a displacement of $\mbr^N$, then
\begin{equation}\label{3.40}
\zeta_{f (A), f(\Omega)} = \zao
\end{equation}
(and analogously for $\tzao$, as well as for the corresponding upper, lower Minkowski dimensions and contents, and the visible complex dimensions, in particular).

The scaling and invariance properties stated in the present subsection (i.e., \S \ref{Sec:3.3.3}), along with other ``covariance'' properties of the fractal zeta functions, are very useful in the concrete computation of the distance and tube zeta functions of a variety of examples (including many of those discussed in \S \ref{Sec:3.4}), as well as of the corresponding complex dimensions. They also play an important role in the direct computation of fractal tube formulas for many concrete examples (including several of those discussed in \S \ref{Sec:3.5.3}). (See \cite[esp., Chs. 3--5]{LapRaZu1}.)\\
 
\subsubsection{Invariance of the complex dimensions under embedding into higher-\\ dimensional spaces}\label{Sec:3.3.4}

Let $(A, \Omega)$ be an RFD in $\mbr^N$ and let $M \geq 1$ be an arbitrary integer. Denote by $(A, \Omega)_M$ the natural embedding of $(A, \Omega)$ into $\mbr^{N + M}$, where $(A,\Omega)_M := (A_M, \Omega_M)$, with
\begin{equation}\label{3.37.1/32}
A_M:= A \times \{0 \} \times \cdots \times \{0 \} \subseteq \mbr^{N + M}
\end{equation}
and\footnote{Here, $\{ 0\} \times \cdots \times \{ 0\}$ is the $M$-fold Cartesian product of $\{ 0\}$ by itself, viewed as a subset of $\mbr^M$.}
\begin{equation}\label{3.37.1/16}
\Omega_M := \Omega \times (-1, 1)^M \subseteq \mbr^{N + M}.
\end{equation}
Then, it is shown in \cite[\S 4.7.2]{LapRaZu1} that given any connected open neighborhood $U$ of the critical line $\{Re(s) = \overline{D} \}$, where (as usual)
\begin{equation}\label{3.37.1/8}
\overline{D} := \overline{\dim}_B A = D (\widetilde{\zeta}_A) = D (\zeta_A),
\end{equation}
with $\overline{D} < N$, the tube zeta function $\tzao$ has a (necessarily unique) meromorphic extension to $U$ if and only if $\widetilde{\zeta}_{(A, \Omega)_M}$ does, and in that case, the visible complex dimensions of the RFD $(A, \Omega)$ in $\mbr^N$ and of the RFD $(A, \Omega)_M$ in $\mbr^{N +M}$ coincide (and similarly for the distance zeta functions $\zao$ and $\zeta_{(A, \Omega)_M}$):
\begin{align}\label{3.37.1/4}
\mcd_{\ao} (U) &= \mcd (\tzao; U) = \mcd (\widetilde{\zeta}_{(A, \Omega)_M}; U)\notag \\
&=\mcd (\zao; U) = \mcd(\zeta_{(A, \Omega)_M}; U) = \mcd_{(A, \Omega)_M} (U),
\end{align}
as equalities between multisets. Moreover, 
\begin{align}\label{3.37.1/2}
\mcd_{PC} (\ao) &= \dim_{PC} (A, \Omega) = \mcd_{PC} (\tzao) = \mcd_{PC} (\widetilde{\zeta}_{(A, \Omega)_M})   \notag \\
&= \mcd_{PC} (\zao) = \mcd_{PC} (\zeta_{(A, \Omega)_M}) = \dim_{PC} (\zeta_{(A, \Omega)_M}) = \mcd_{PC} (A, \Omega)_M
\end{align} 
and
\begin{align}\label{3.37.3/4}
\overline{D} :&= \overline{\dim}_B (A, \Omega) = D (\tzao) = D(\widetilde{\zeta}_{(A, \Omega)_M})\notag\\ 
&= D(\zeta_{A, \Omega}) = D(\zeta_{(A, \Omega)_M}) = \overline{\dim}_B (A, \Omega)_M.
\end{align}
{\em Consequently, neither the values nor the multiplicities of the} ({\em visible}) {\em complex dimensions} ({\em and, in particular, of the principal complex dimensions}) {\em of the RFD $(A, \Omega)$ depend on the dimension of the ambient space.}\footnote{It is significantly simpler to check that the values of $\underline{\dim}_B (A, \Omega),$ and also the statements according to which $(\ao)$ is Minkowski nondegenerate or is Minkowski measurable, are independent of the dimension of the ambient space; see \cite[\S 4.7.2]{LapRaZu1}. In the Minkowski measurable case, the corresponding Minkowski content can be suitably normalized (much as in \cite{Fed2}) so as to also be independent of the embedding dimension; see \cite{Res}. (And similarly for the normalized values of $\mcm_*$ and $\mcm^*.$)}

Moreover, we point out that since $\overline{\dim}_B (\ao) < N$ (and hence also, $\overline{\dim}_B (\ao)_M$ $< N + M$), the exact same results hold for the tube zeta functions $\tzao$ and $\widetilde{\zeta}_{(A, \Omega)_M}$ replaced, respectively, by the distance zeta functions $\zao$ and $\zeta_{(A, \Omega)_M}$.

Finally, we note that as usual, the exact analog of the results stated in this subsection (i.e., \S \ref{Sec:3.3.4}) hold for the special case of bounded subsets $A$ (instead of RFDs $(\ao)$) in $\mbr^N.$ It suffices to replace the RFDs $(A, \Omega)$ and $(\ao)_M$ by the bounded sets $A$ and $A_M$ (as given by \eqref{3.37.1/32}) in $\mbr^N$ and $\mbr^{N + M}$, respectively, in all of the corresponding statements; see \cite[\S 4.7.1]{LapRaZu1} for the details.

\subsection{Examples of fractal zeta functions and complex dimensions}\label{Sec:3.4}

In this subsection, we give a variety of examples of bounded sets and of RFDs for which the associated distance zeta function (or, equivalently, in light of the functional equation \eqref{3.8.1/4} or \eqref{3.11}, the tube zeta function) can be calculated explicitly and the corresponding poles (i.e., the complex dimensions) can be determined. We will limit ourselves to the simplest examples and omit the computation involved, often based in part on symmetry and scaling considerations, but refer instead to \cite{LapRaZu1} or to [\hyperlinkcite{LapRaZu2}{LapRaZu2--9}] for the details and many further examples.\\

\subsubsection{The Sierpinski gasket}\label{Sec:3.4.1}

Let $A \subseteq \mbr^2$ be the classic Sierpinski gasket. It is a self-similar set in $\mbr^2$ with three equal scaling ratios $r_1 = r_2 = r_3 = 1/2$ and hence, with Minkowski dimension $D = \dim_B A = \log 3/\log 2 = \log_2 3$ (coinciding with the similarity dimension of $A$). As is well-known, $A$ is the unique (nonempty) compact subset of $\mbr^2$ satisfying the fixed point equation
\begin{equation}\label{3.38.2}
A = \bigcup_{j=1}^3 S_j (A),
\end{equation} 
where $S_1, S_2, S_3$ are contractive similarity transformations of $\mbr^2$ (with scaling ratios $r_1, r_2, r_3$, as above) defined in  a simple way and with respective fixed points $v_1, v_2, v_3$, the vertices of the unit equilateral triangle, which is the generator of $A.$

Then, one can show (see \cite[\S 5.5.3]{LapRaZu1}) that for $\delta > 1/4 \sqrt{3}$ (so that $A_\delta$ be connected),\footnote{Recall from \S \ref{Sec:3.2} that the poles of $\zao$ (and of $\tzao$) are independent of the choice of $\delta >0$;  i.e., the set of complex dimensions of an RFD $(\ao)$, $\mcd_{\ao} = \mcd(\zao) = \mcd (\tzao)$, is independent of the choice of $\delta > 0.$
It is also true, in particular, for a bounded set $A$  instead of an RFD $(\ao)$ (by considering the RFD $(A, A_{\delta_2}),$ for any fixed $\delta_2 > 0$). This comment, being valid for any RFD $(\ao)$ (and, in particular, bounded set) in $\mbr^N$, will no longer be repeated in this  section (i.e., \S \ref{Sec:3.4}).}
$\zeta_A$ has a meromorphic extension to all of $\mbc$ given by
\begin{equation}\label{3.39.2}
\zeta_A (s) = \frac{6 (\sqrt{3})^{1-s} 2^{-s}}{s (s-1) (2^s - 3)} + 2 \pi \frac{\delta^s}{s} + 3 \frac{\delta^{s-1}}{s-1},
\end{equation}
for every $s \in \mbc.$ Consequently, the set of principal complex dimensions of the Sierpinski gasket $A$ is given by 
\begin{align}\label{3.40.2}
\dim_{PC} A = \mcd_{PC} (\zeta_A) &= \bigg\{ \log_2 3 + i \frac{2 \pi}{\log 2}k : k \in \mbz \bigg\} \notag \\
&= \log_2 3 + i \frac{2 \pi}{\log 2} \mbz
\end{align}
and the set of all complex dimensions of $A$ is given by
\begin{align}\label{3.41}
\mcd_A := \mcd (\zeta_A) &= \{ 0 \} \cup \bigg(\log_2 3 + i \frac{2 \pi}{\log 2} \mbz \bigg) = \{ 0\} \cup \dim_{PC} A \notag \\
&= \{ 0 \} \cup \bigg\{ s_k := \log_2 3 + i \frac{2 \pi}{\log 2} k: k \in \mbz \bigg\}.
\end{align}
Each complex dimension 0 or $s_k := \log_2 3 +i (2 \pi/\log 2) k \ (k \in \mbz)$ is simple (i.e., is a simple pole of $\zeta_A$) and the corresponding residue is given respectively by\footnote{A priori, in light of \eqref{3.39.2}, $s=1$ should be a pole of $\zeta_A$. However, a direct computation shows that $\res (\zeta_A, 1) = -3 +3 =0,$ as indicated in \eqref{3.42}. Hence, $1$ is {\em not} a complex dimension of the Sierpinski gasket. In fact, the corresponding fractal tube formula will {\em not} contain a term corresponding to $s=1$; see the relevant parts of \S \ref{Sec:3.5.3}, along with \cite[\S 5.5.3]{LapRaZu1}, especially, the last equation before Example 5.5.13 in {\em loc. cit.}}
\begin{equation}\label{3.42}
\res (\zeta_A, 0) = 3 \sqrt{3} + 2\pi, \ \res (\zeta_A, 1) = 0
\end{equation} 
and  for each $k \in \mbz$,
\begin{equation}\label{3.43}
\res (\zeta_A, s_k) = \frac{6 (\sqrt{3})^{1-s_k}}{(\log 2) 4^{s_k} s_k (s_k -1)}.
\end{equation}
Finally, note that in \eqref{3.40.2} and \eqref{3.41}, $D := \log_2 3$ is the Minkowski dimension of $A$ and ${\bf p} := 2 \pi/\log 2$ is the oscillatory period of $A$. Also, the expression obtained for $\dim_{PC} A$ in \eqref{3.40.2} is compatible with the results of part ($c$) of \S \ref{Sec:3.3.2}; see \eqref{3.31}.\\
 
\subsubsection{The Sierpinski carpet}\label{Sec:3.4.2}

Let $A \subseteq \mbr^2$ be the classic Sierpinski carpet (with generator the unit square and 8 equal scaling ratios $r_1 = \cdots = r_8 = 1/3$). As is well known, $A$ is the unique (nonempty) compact subset of $\mbr^2$ such that $A = \cup_{j=1}^8 S_j (A)$, where $S_1, \cdots, S_8$ are suitable contractive similarities of $\mbr^2$. Clearly, $D:= \dim_B A$ exists and $D = \log_3 8$, the similarity dimension of the self-similar set $A$. Then, much as in the case of the Sierpinski gasket $A$ from \S \ref{Sec:3.4.1} just above, it can be shown (see \cite[Prop. 3.21]{LapRaZu1}) that $\zeta_A$ has a meromorphic extension to all of $\mbc$, and that for every $\delta > 1/6$ (so that the $\delta$-neighborhood $A_\delta$ of $A$ be connected) and every $s \in \mbc$,
\begin{equation}\label{3.44}
\zeta_A (s) = \frac{8}{2^s s(s-1) (3^s -8)} + 2\pi \frac{\delta^s}{s} + 4 \frac{\delta^{s-1}}{s-1}.
\end{equation}
It follows that
\begin{equation}\label{3.45}
\dim_{PC} A = \mcd_{PC} (\zeta_A) = \bigg\{ \log_3 8 + i \frac{2\pi}{\log 3} k: k \in \mbz \bigg\} = \log_3 8 + i \frac{2\pi}{\log 3} \mbz
\end{equation}
and the set of all complex dimensions of the Sierpinski carpet is given by 
\begin{align}\label{3.43.2}
\mcd_A = \mcd (\zeta_A) &= \{0,1 \} \cup \Big(\log_3 8 + i \frac{2\pi}{\log 3} \mbz \Big) \notag \\
&=\{ 0,1\} \cup \dim_{PC} A = \{0,1 \} \cup \Big\{ s_k := \log_3 8 + i \frac{2 \pi}{\log 3} k: k \in \mbz \Big\}. 
\end{align}
Furthermore, the complex dimensions of $A$ are all simple and the residues at $0,1$ and $s_k \, (k \in \mbz)$ are given, respectively, by
\begin{equation}\label{3.44.2}
\res (\zeta_A, 0) = 2 \pi + \frac{8}{7}, \ \res (\zeta_A, 1) = \frac{16}{5}
\end{equation}
and
\begin{equation}\label{3.45.2}
\res (\zeta_A, s_k) = \frac{2^{-s_k}}{(\log 3) s_k (s_k -1)}, \text{ for all } k \in \mbz.
\end{equation}
Again, in \eqref{3.45} and \eqref{3.43.2}, $D := \log_3 8$ and ${\bf p} := 2 \pi/\log 3$ are, respectively, the Minkowski dimension and the oscillatory period of $A$, in agreement with the results stated in part ($c$) of \S \ref{Sec:3.3.2}.\\ 

\subsubsection{The $3$-$d$ Sierpinski carpet}\label{Sec:3.4.3}

We refer to \cite[Exple. 5.5.13]{LapRaZu1} for the precise definition of this version of the three-dimensional Sierpinski carpet $A$ and for the corresponding results. It is shown there that for any $\delta > 1/6$ (so that $A_\delta$ be connected), $\zeta_A$ has a meromorphic extension to all of $\mbc$ given by 
\begin{equation}\label{3.46}
\zeta_A (s) = \frac{48 \, . \, 2^{-s}}{s(s-1)(s-2)(3^s - 26)} + 4\pi \frac{\delta^s}{s} + 6\pi \frac{\delta^{s-1}}{s-1} + 6 \frac{\delta^{s-2}}{s-2},
\end{equation}
for every $s \in \mbc$. Therefore, 
\begin{equation}\label{3.47}
\dim_{PC} (\zeta_A) = \mcd_{PC} (\zeta_A) = \log_3 26 + i \frac{2 \pi}{\log 3} \mbz
\end{equation}
and 
\begin{equation}\label{3.48}
\mcd (\zeta_A) = \{ 0,1,2 \} \cup \dim_{PC} A = \{ 0, 1, 2 \} \cup \{ s_k := D + i k {\bf p} : k \in \mbz \},
\end{equation}
where $D: = D (\zeta_A) = \log_3 26 $ and ${\bf p} := 2\pi/\log 3$ are, respectively, the Minkowski dimension (as well as the similarity dimension) and the oscillatory period of $A$. Each complex dimension in \eqref{3.47} and \eqref{3.48} is simple and 
\begin{equation}\label{3.49}
\res (\zeta_A, j) = 4\pi - \frac{24}{25}, \ 6\pi + \frac{24}{23}, \ \frac{96}{17} \text{ for } j = 0,1,2,
\end{equation}
respectively; also, for every $k \in \mbz$,
\begin{equation}\label{3.50}
\res (\zeta_A, s_k) = \frac{24}{13.2^{s_k} s_k (s_k -1) (s_k -2) \log 3}.
\end{equation}\\

\subsubsection{The $N$-dimensional relative Sierpinski gasket}\label{Sec:3.4.4}

Let $(A_N, \Omega_N)$ denote the $N$-{\em dimensional relative Sierpinski gasket,} also called the ({\em inhomogeneous}) $N$-{\em gasket} RFD, in short, and as introduced in \cite[Exple. 4.2.26]{LapRaZu1} (as well as in \cite{LapRaZu4}). We refer the interested reader to {\em loc. cit.} for a detailed description of its geometric construction and for the corresponding figures. (See also Remark \ref{Rem:3.7} below for a synopsis of the construction.) We simply mention here that for each fixed integer $N \geq 2, (A_N, \Omega_N)$ is an RFD in $\mbr^N$ which can also be viewed as a self-similar spray (or RFD) in $\mbr^N$ (in the refined sense of \cite[\S 4.2.1 and \S 5.5.6]{LapRaZu1} rather than in the original sense of \cite{Lap3}, \cite{LapPo3} and \cite{LapPe2} or \cite{LapPeWi1}), with $N+1$ equal scaling ratios $r_1 = \cdots = r_{N+1} = 1/2$ and with a single generator RFD $(\partial \Omega_{N,0}, \Omega_{N, 0})$, where the bounded open set $\Omega_{N,0}$ in $\mbr^N$ (called the $N$-plex) is described in Remark \ref{Rem:3.7}. Furthermore, unlike in our previous examples in \S \ref{Sec:3.4.1}--\S \ref{Sec:3.4.3}, $A_N$ is {\em not} a self-similar set (in the usual sense of the term) but is instead an {\em inhomogeneous} self-similar set, in the sense of \cite{BarnDemk} and (with a different terminology) of \cite{Hat}. More specifically, it is the unique (nonempty) compact subset $A$ of $\mbr^N$ satisfying the {\em inhomogeneous} fixed point equation
\begin{equation}\label{3.51}
A = \bigcup_{j=1}^{N+1} S_j (A) \cup B,
\end{equation}
where the maps $S_j$ (for $j=1, \cdots, N+1$) are $N+1$ contractive similarity transformations\footnote{These similitudes $(S_j)_{j=1}^{N+1}$ have for respective fixed points $(P_j)_{j=1}^{N+1}$, the points chosen at the beginning of Remark \ref{Rem:3.7}.}
of $\mbr^N$ (the same ones as those defining the $N$-dimensional analog of the usual self-similar gasket, which is an homogeneous or a classic self-similar set, satisfying the counterpart of \eqref{3.51} with $B:= \emptyset$, the empty set) and $B$ is a certain {\em nonempty} compact subset of $\mbr^N$; in fact, $B$ can be chosen to be equal to $\partial \Omega_{N,0}$, the boundary of the $N$-plex $\Omega_{N,0}$ described in Remark \ref{Rem:3.7}.

\begin{remark}({\em Construction of the generator $\Omega_{N,0}$ and of the inhomogeneous $N$-gasket RFD $(A_N, \Omega_N)$.})\label{Rem:3.7}
Here, the generator $\Omega_{N,0}$ and the compact set $A_N$ can be constructed as follows. Let $V_N = \{P_1, \cdots, P_{N+1} \}$ be a set of $N+1$ points in $\mbr^N$ such that $|P_j - P_k| =1$, for any $j \neq k$. (Such a set can be constructed inductively.) Let $\Omega_N$ be the (necessarily closed) convex hull of $V_N$. Clearly, $\Omega_N$ is an $N$-simplex. Then, $\Omega_{N,0}$, called the $N$-plex, is the bounded open subset of $\mbr^N$ obtained by taking the interior of the convex hull of the set of midpoints of all of the $\frac{(N+1)N}{2} = \binom{N}{2}$ edges of the $N$-simplex $\Omega_N$. (Note that for $N=2, \Omega_{N,0}$ is the first deleted open triangle in the construction of the Sierpinski gasket, while for $N=3$, it is an octahedron; see \cite[Fig. 4.7]{LapRaZu1} for an illustration.)

Now, the set $\overline{\Omega_N} \backslash \Omega_{N,0}$ is the union of $N+1$ congruent and compact $N$-simplices with disjoint interiors and having all their sides (edges) of length $1/2$. This is the first step in the construction of $A_N$. We proceed analogously with each of the aforementioned $N$-simplices. We then repeat the construction, ad infinitum. The compact subset $A_N$ of $\mbr^N$ obtained in this manner is called the {\em inhomogeneous} \linebreak {\em $N$-gasket.} For $N=2$, it coincides with the classic Sierpinski gasket (studied in \S \ref{Sec:3.4.1}), but when $N \geq 3$, it does not coincide with the usual $N$-dimensional Sierpinski gasket (studied, e.g., in \cite{KiLap1}). In fact, still for $N \geq 3$, it is no longer self-similar (in the classic sense) but is instead an inhomogeneous self-similar set satisfying \eqref{3.51}, with $B := \partial \Omega_{N,0}$, the boundary of the $N$-plex. (See \cite[Fig. 4.8]{LapRaZu1} for an illustration of the case when $N=3$.)

Finally, the {\em relative} (or {\em inhomogeneous}) $N$-{\em gasket RFD} is given by $(A_N, \Omega_N)$, where $A_N$ is the above inhomogeneous $N$-gasket and $\Omega_N$ is the above $N$-simplex. 
\end{remark}
Then (see \cite[Exple. 4.2.26]{LapRaZu1}), for the inhomogeneous $N$-gasket RFD $(A_N, \Omega_N)$, the distance zeta function $\zeta_{A_N, \Omega_N}$ has a meromorphic extension to all of $\mbc$, given for every $s \in \mbc$ by
\begin{equation}\label{3.52}
\zeta_{A_N, \Omega_N} (s) = \frac{g_N (s)}{s(s-1) \cdots (s-(N-1)) (1- (N+1) 2^{-s})},
\end{equation}
for some nowhere vanishing entire function $g_N$. (For example, when $N=3$, we have $g_3 (s) := 8(\sqrt{3})^{3-s} (2 \sqrt{2})^{-s}$ and if $N=2, \ g_2 (s) := 6(\sqrt{3})^{1-s} 2^{-s}$, still for all $s \in \mbc$; see, respectively, \cite[Eq. (4.2.89) and Prop. 4.2.25]{LapRaZu1}.) 

In order to explain the form of $\zeta_{A_N, \Omega_N}$ given in \eqref{3.52}, we recall that $(A_N, \Omega_N)$ is a self-similar RFD with generator the RFD $(\partial \Omega_{N, 0}, \Omega_{N,0})$ and with equal scaling ratios $r_j \equiv 1/2$, for $j=1, \cdots, N+1$. Thus, according to the results of \cite[\S 4.2.1 and \S 5.5.6]{LapRaZu1} about self-similar sprays (and recalled in \S \ref{Sec:3.4.10} below),
\begin{equation}\label{3.53}
\zeta_{A_N, \Omega_N} (s) = \zeta_\mfs (s) \cdot \zeta_{\partial \Omega_{N,0}, \Omega_{N, 0}} (s),
\end{equation}
\\
where the {\em scaling zeta function} $\zeta_{\mfs}$ (here, the geometric zeta function of the underlying unbounded self-similar string with equal scaling ratios $r_j \equiv 1/2$ for  $j=1, \cdots, N+1$) is given by
\begin{equation}\label{3.54}
\zeta_{\mfs} (s) = \frac{1}{1- (N+1) 2^{-s}}
\end{equation}
for all $s \in \mbc$, and where via a direct computation,\footnote{See {\em loc. cit.} for the case when $N=2$ or when $N=3$.} one can show that $\zeta_{\partial \Omega_{N,0}, \Omega_{N, 0}}$ is given for all $s \in \mbc$ by
\begin{equation}\label{3.55}
\zeta_{\partial \Omega_{N,0}, \Omega_{N, 0}} (s) = \frac{g_N (s)}{s(s-1) \cdots (s - (N-1))},
\end{equation}
with $g_N$ as above. Now combining \eqref{3.53}--\eqref{3.55}, we obtain \eqref{3.52}, as desired.

Next, since $g_N$ is nowhere vanishing and is entire, we deduce from \eqref{3.52} that 
\begin{equation}\label{3.56}
\mcd_{A, \Omega} := \mcd (\zao) = \{0,1, \cdots, N-1 \} \cup \mcd (\zeta_\mfs),
\end{equation}
where (in light of \eqref{3.54})
\begin{equation}\label{3.57}
\mcd (\zeta_\mfs) = \log_2 (N+1) + i \frac{2\pi}{\log 2} \mbz.
\end{equation}
Note that, by \eqref{3.55},
\begin{equation}\label{3.58}
\mcd (\zeta_{\partial \Omega_{N,0}, \Omega_{N,0}}) = \{0, 1, \cdots, N-1 \}.
\end{equation}
Therefore, for every $N \geq 2$, the set of complex dimensions of the inhomogeneous $N$-gasket is given by
\begin{equation}\label{3.59}
\mcd (A_N, \Omega_N) := \mcd (\zeta_{A_N, \Omega_N}) = \{ 0, 1, \cdots, N-1 \} \cup \Big(\log_2 (N+1) + i \frac{2\pi}{\log 2} \mbz \Big).
\end{equation}
Except when $\log_2 (N+1) = j$, for some $j \in \{ 0, 1, \cdots, N-1 \}$ (i.e., $N =2^j -1$, for some $j \in \{ 2, \cdots, N-1 \}$, since $N \geq 2$ here), all of the complex dimensions of the RFD $(A_N, \Omega_N)$ in \eqref{3.56} are simple.\footnote{In view of the aforementioned results of {\em loc. cit.}, the residues of $\zeta_{A_N, \Omega_N}$ at each complex dimension $\omega_j = j$ (for $j \in \{ 0, 1, \cdots, N-1 \}$ and $s_k := \log_2 (N+1) + i (2\pi/\log 2) k \ (k \in \mbz)$ can be explicitly computed when $N =2$ and when $N=3$; see \cite[Exple. 4.2.24 and Eq. (4.2.88)]{LapRaZu1}.}

It is instructive (although easy) to determine $\dim_B (A_N, \Omega_N)$ and $\dim_{PC} (A_N, \Omega_N)$. In light of \eqref{3.53}, we have that
\begin{equation}\label{3.60}
\overline{\dim_B} (A_N, \Omega_N) = \max (D (\zeta_{\partial \Omega_{N,0}, \Omega_{N,0}}), \ D (\zeta_\mfs)) = \max (N-1, \log_2 (N+1)).
\end{equation}
Observe that in \eqref{3.60}, $\dim_B (\partial \Omega_{N,0}, \Omega_{N,0}) = N-1$ is the Minkowski dimension (which exists) of the generating RFD $(\partial \Omega_{N,0}, \Omega_{N,0})$ and $\sigma_N := N-1$ is the similarity dimension of the self-similar spray or RFD $(A_N, \Omega_N)$. Also, since one can show that $\dim_B (A_N, \Omega_N)$ exists, we conclude that 
\begin{align}\label{3.61}
D := \dim_B (A_N, \Omega_N) (= D(\zeta_{A_N, \Omega_N})) &= \max (N-1, \log_2 (N+1))\\
&=\begin{cases}\notag
\log_2 3, \quad \text{for } N=2,\\
N-1, \quad \text{for } N \geq 3.
\end{cases}
\end{align}
Furthermore, we deduce from \eqref{3.59} and \eqref{3.61} that the set of principal complex dimensions of $(A_N, \Omega_N)$ is given by
\begin{equation}\label{3.62}
\dim_{PC} (A_N, \Omega_N) =\begin{cases}
\log_2 3 + i \frac{2 \pi}{\log 2} \mbz, \quad &\text{for } N=2,\\
2 + i \frac{2 \pi}{\log 2} \mbz, \quad &\text{for } N =3,\\
\{ N-1 \}, \quad &\text{for } N \geq 4. 
\end{cases}
\end{equation}
Observe that for $N=2$, we have that $\sigma_2 = \log_2 3 > \dim_B (\partial \Omega_{2,0}, \Omega_{2,0}) = 1$, and hence, $\dim (A_2, \Omega_2) = \sigma_2$, the similarity dimension of the Sierpinski gasket, in agreement with a well-known result about classic or homogeneous self-similar sets (satisfying the open set condition); see, e.g., \cite{Hut} or \cite[Ch. 9]{Fa1}. By contrast, when $N \geq 4$, we have the reverse inequality;\footnote{In fact, this inequality \eqref{3.92.1/2} is always strict; indeed, it is easy to check by induction on $N$ that we never have $N=2^{N-1} -1$, for some integer $N \geq 4$.}
namely,
\begin{equation}\label{3.92.1/2}
\sigma_N = \log_2 (N+1) \leq \dim_B (\partial \Omega_{N,0}, \Omega_{N, 0}) = N-1.
\end{equation} 
Therefore, $\dim_B (A_N, \Omega_N) = \dim_B (\partial \Omega_{N,0}, \Omega_{N,0}) = N-1,$ in this case.\footnote{There is no contradiction  because, as we recall from our earlier discussion, $(A_N, \Omega_N)$ is an inhomogeneous (but unless $N=2$) is {\em not} a standard (or homogeneous) self-similar set.}

Finally, if $N=3,$ we have $\sigma_3 = \dim _B (\partial \Omega_{3,0}, \Omega_{3,0}) =2$. This coincidence between the geometry of the generator $(\partial \Omega_{3,0}, \Omega_{3,0})$ and the scaling of the self-similar spray $(A_3, \Omega_3)$ explains why $D=2$ is a complex dimension of multiplicity two if $N=3$. In some sense, one can say that {\em there is a resonance between the underlying geometry and the underlying scaling of the relative 3-gasket} RFD $(A_3, \Omega_3).$

The above facts have interesting geometric consequences, as is explained in detail in \cite[\S 5.5.6]{LapRaZu1}, by using either the fractal tube formulas of \cite[\S 5.1--\S 5.3]{LapRaZu1} or the Minkowski measurability criteria of \cite[\S 5.4]{LapRaZu1} (both to be briefly discussed in \S \ref{Sec:3.5}). Firstly, if $N=2,$ the RFD $(A_2, \Omega_2)$ is not Minkowski measurable because in light of \eqref{3.62}, it has nonreal principal complex dimensions; however, $(A_2, \Omega_2)$ is Minkowski nondegenerate. Secondly, if $N \geq 4$ (and since then, $2^{N-1} \neq N-1$, so that the dimension $D = N-1$ of $(A_N, \Omega_N)$  is simple), the RFD is not Minkowski measurable but is still Minkowski nondegenerate. 

Lastly, if $N=3$, $(A_3, \Omega_3)$ is not Minkowski measurable (since its Minkowski dimension $D=2$ has multiplicity two); further, it is also Minkowski {\em degenerate}, which suggests that the usual power law is no longer appropriate to measure the ``fractality'' of $(A_3, \Omega_3)$.\footnote{All of these facts are established in \cite[\S 5.5.6]{LapRaZu1}; see, especially, part ($c$) of Remark 5.5.26 of {\em loc. cit.}.}
However, one can use a suitably generalized Minkowski content (as in \cite{HeLap} and \cite[\S 6.1.1.2]{LapRaZu1}), involving the choice of the gauge function $h(t) := \log (t^{-1})$ for all $t \in (0,1)$, so that the RFD $(A_3, \Omega_3)$ be not only Minkowski nondegenerate but also {\em Minkowski measurable, relative to} $h$ (by contrast  with the cases when $N=2$ and $N=3$); see \cite[Thm. 5.4.27]{LapRaZu1}. 

\begin{exercise}\label{Exc:3.8.1/2}
Verify that when $N=3$, we have
\begin{equation}\label{3.23.1/2}
g_3 (s) = 8 (\sqrt{3})^{3-s} (2 \sqrt{2})^{-s},
\end{equation}
for all $s \in \mbc$ in \eqref{3.52} and \eqref{3.55}.
\end{exercise}

\begin{exercise}\label{Exc:3.7}
Calculate the fractal zeta functions and the complex dimensions of the $N$-carpet RFD $(A, \Omega)$ (the $N$-dimensional relative Sierpinski carpet), which extends to $\mbr^N$ both the Sierpinski carpet ($N=2$; see \S \ref{Sec:3.4.2}) and $3$-carpet ($N=3$; see \S \ref{Sec:3.4.3}). 

Note that this example is significantly simpler than that of the $N$-gasket RFD studied in the present subsection (i.e., \S \ref{Sec:3.4.4}); indeed, unlike for the relative $N$-gasket, which is an inhomogeneous self-similar set, the compact set $A$ is an homogeneous (i.e., classical) self-similar set in $\mbr^N$. In fact, $A$ coincides with the standard $N$-Sierpinski carpet, while $\Omega = (0,1)^N$.

We refer the interested reader to \cite[Exple. 4.2.31]{LapRaZu1} for the complete answers and the corresponding computation.  We simply mention here that
\begin{equation}\label{3.63}
\dim_{PC} (A, \Omega) = \log_3 (3^N -1) + i \frac{2 \pi}{\log 3} \mbz,
\end{equation}
where, as before, $D := \log_3 (3^N -1)$ is the Minkowski dimension of $(\ao)$ and ${\bf p} := 2\pi/\log 3$ is the oscillatory period of $(A, \Omega)$, while
\begin{equation}\label{3.64.1}
\mcd_{A, \Omega} = \mcd (\zao) = \{ 0, 1, \cdots, N-1 \} \cup \dim_{PC} (\ao).
\end{equation}
Furthermore, in either \eqref{3.63} or \eqref{3.64.1}, each complex dimension is simple.
\end{exercise}
 
\subsubsection{The $\frac{1}{2}$-square and $\frac{1}{3}$-square fractals}\label{Sec:3.4.5} We discuss here in parallel two related relative fractal drums, namely, the $\frac{1}{2}$-square and $\frac{1}{3}$-square fractals, which exhibit somewhat different properties. \\

{\bf ({\em a})} ({\em The $\frac{1}{2}$-square fractal}). Starting with the closed unit square $[0,1]^2 \subseteq \mbr^2$, we remove the two open squares of side length $\frac{1}{2}$, denoted by $G_1$ and $G_2$, along the main diagonal. Next, we repeat this step with the two remaining closed squares of side length $1/2$; and so on, ad infinitum. The $\frac{1}{2}$-{\em square fractal} $A$ is the compact set that is left at the end of the process. (See also \cite[Fig. 4.10]{LapRaZu1} for an illustration.)

The $\frac{1}{2}$-square fractal is a {\em nonhomogeneous} self-similar fractal (as was the case of the set $A$ in the construction of the relative Sierpinski $N$-gasket in \S\ref{Sec:3.4.4});
more specifically, it is the unique nonempty compact subset $A$ of $\mbr^2$ satisfying the following inhomogeneous fixed point equation:
\begin{equation}\label{3.64}
A = \bigcup_{j=1}^2 S_j (A) \cup B,
\end{equation}
where the nonempty compact set $B \subseteq \mbr^2$ is the union of the left and upper sides of the closed square $\overline{G}_1$ and of the right and lower sides of the closed square $\overline{G}_2$. Here, the contractive similitudes of $\mbr^2$ involved, namely, the maps $S_1$ and $S_2$, have respective fixed points  at the lower left vertex and the upper right vertex of the unit square, and have scaling ratios $r_1 = r_2 = 1/2$.\footnote{It is noteworthy that the homogeneous self-similar set $E$ which is the unique nonempty compact subset of $\mbr^2$ satisfying the homogeneous fixed point equation associated with \eqref{3.64} (namely, $E = \cup_{j=1}^2 S_j (E)$), is the main diagonal of the unit square $[0,1]^2$.}
(See \cite[Fig. 4.11]{LapRaZu1} for an illustration.)

Let $\Omega := (0,1)^2$ and consider the RFD $(\ao)$; by construction, it is a self-similar spray (or RFD) with generator $G = G_1 \cup G_2$ and scaling ratios $r_1 = r_2 = 1/2$.

It is shown in \cite[Exple. 4.2.33]{LapRaZu1} that for every $\delta > 
1/4$, $\zao$ and hence also, $\zeta_A$, in light of \eqref{3.66} below, have a meromorphic continuation to all of $\mbc$ given for every $s \in \mbc$ respectively by 
\begin{equation}\label{3.65}
\zao (s) = \frac{\zeta_{\partial G, G} (s)}{1-2 \cdot 2^{-s}} = \frac{2^{-(s+1)}}{s(s-1) (2^{s-1} -1)}
\end{equation} 
and 
\begin{align}\label{3.66}
\zeta_A (s) &= \zao (s) +\zeta_{[0,1]^2} (s) \notag \\
&= \frac{2^{-(s+1)}}{s(s-1)(2^{s-1} -1)} + 4 \frac{\delta^{s-1}}{s-1} + 2 \pi \frac{\delta^s}{s}.
\end{align}
It follows that
\begin{equation}\label{3.67}
D := D(\zeta_A) = \dim_B A = D(\zao) = \dim_B (\ao) = 1
\end{equation}
and
\begin{equation}\label{3.68}
\dim_{PC} A = \dim_{PC} (\ao) = 1 + \frac{2 \pi}{\log 2} \mbz,
\end{equation}
as well as
\begin{equation}\label{3.69}
\mcd_A = \mcd (\zeta_A) = \mcd_{\ao} = \mcd (\zao) = \{ 0,1 \} \cup \Big(1 + i \frac{2 \pi}{\log 2} \mbz \Big),
\end{equation}
where these are equalities between multisets.

All of the complex dimensions in \eqref{3.68} and \eqref{3.69} (namely, $0$ and $s_k := 1 + i \, (2 \pi/\log 2) \, k$, for $k \in \mbz$) are simple, except for the dimension $D=1$ which is double.

Furthermore, for all $k \in \mbz \backslash \{ 0\}$, we have 
\begin{equation}\label{3.102.1/2}
\res (\zeta_A, 0) = 1 + 2 \pi \ \text{ and } \ \res (\zeta_A, s_k) = \frac{4^{-i {\bf p}k}}{4 s_k (s_k -1)},
\end{equation}
where ${\bf p} := 2 \pi/\log 2$ is the oscillatory period of the self-similar spray $(\ao).$\\

{\bf ({\em b})} ({\em The $\frac{1}{3}$-square fractal}). As in part ($a$), we begin with the unit square $[0,1]^2$; we then divide it into nine congruent smaller squares. We further delete seven of those smaller squares; that is, we only keep the lower and upper right squares. We then repeat the process, ad infinitum. What is left at the end of the process is denoted by $A$ and called the $\frac{1}{3}$-{\em square fractal}.

If $\Omega := (0,1)^2$, then we consider the RFD $(\ao)$ in $\mbr^2$. Note that (as in part ($a$)), $A$ is an {\em inhomogeneous} self-similar fractal; more specifically, it is the unique  (nonempty) compact subset of $\mbr^2$ satisfying the following inhomogeneous fixed point equation:
\begin{equation}\label{3.71}
A = \bigcup_{j=1}^2 S_j (A) \cup B,
\end{equation}  
where $B \subseteq \mbr^2$ is the nonempty compact set defined by $B := \partial G$ and $G$ (called the generator of the self-similar spray $(\ao)$) is a suitable open convex polygon. Furthermore, $S_1$ and $S_2$ are contractive similitudes of $\mbr^2$, with respective fixed points located at the lower left vertex and the upper right vertex of the unit square.

The RFD $(\ao)$ is a self-similar spray (or RFD) with generator $G$ and scaling ratios $r_1 = r_2 = r_3 = 1/3$.

Much as in part ($a$), it is shown in \cite[Exple. 4.2.34]{LapRaZu1} that $\zao$ (and thus also $\zeta_A$) admits a meromorphic continuation to all of $\mbc$ given for every $s \in \mbc$ (and for all sufficiently large positive $\delta$) by
\begin{equation}\label{3.72}
\zao (s) = \frac{\zeta_{\partial G, G} (s)}{1-2 \cdot 3^{-s}} = \frac{2}{s(3^s -2)} \Big(\frac{6}{s-1} + \Psi (s)\Big),
\end{equation}
where $\Psi$ is a suitable entire function (which is explicitly known), and 
\begin{align}\label{3.73}
\zeta_A (s) &= \zao (s) + \zeta_{[0,1]^2} (s)\\
&= \frac{2}{s(3^s -2)} \Big( \frac{6}{s-1} + \Psi (s) \Big) + 4 \frac{\delta^{s-1}}{s-1} + 2 \pi \frac{\delta^s}{s}.\notag
\end{align}
As a result,
\begin{equation}\label{3.74}
D = D_A = D(\zeta_A) = D_{\ao} = D(\zao) = 1
\end{equation}
and (see \eqref{3.110.1/2} below for a more precise statement)
\begin{equation}\label{3.75}
\dim_{PC} A = \dim_{PC} (\ao) \subseteq \{1 \} \cup \Big( \log_3 2 + i \frac{2 \pi}{\log 3} \mbz \Big),
\end{equation}
as well as 
\begin{equation}\label{3.76}
\{ 0,1 \} \cup F \subseteq \mcd_A = \mcd (\zeta_A) = \mcd_{\ao} = \mcd (\zao) \subseteq  \{0,1 \} \cup \Big( \log_3 2 + i \frac{2 \pi}{\log 3} \mbz \Big),
\end{equation}
each complex dimension in \eqref{3.75} and \eqref{3.76} being simple. Here, $F$ is a subset of $\log_3 2 + i (2 \pi/\log 3) \mbz$ containing $\log_3 2$ and at least finitely many (but more than two) nonreal principal complex dimensions.\footnote{At this stage, the inclusion appearing on the left of \eqref{3.76} is only verified numerically. The difficulty  here is due to the presence of the entire function $\Psi$.}
We conjecture that the set $F$ is in fact (countably) infinite and furthermore, that the inclusions in \eqref{3.76} should actually be equalities and $\dim_{PC} A = \dim_{PC} (\ao) = \{ 1\}$. 

It is noteworthy that  $\log_3 2$ is the dimension of the homogeneous self-similar set $E$ associated with \eqref{3.71} (i.e., $E = \cup_{j=1}^2 S_j (E)$, with $E \subseteq \mbr^2$ nonempty and compact); indeed, $E$ is just the ternary Cantor set located along the main diagonal of the unit square $[0,1]^2$.

Finally, a simple computation yields that 
\begin{equation}\label{3.77}
\res (\zeta_A, 0) = 12 + \pi, \ \res (\zeta_{A, 1}) = 16
\end{equation}
and (with ${\bf p} := 2 \pi/\log 3$ and $s_k := \log_3 2 + i (2 \pi/\log 3) k$, for each $k \in \mbz$)
\begin{equation}\label{3.78}
\res (\zeta_A, s_k) = \frac{3^{-i {\bf p} k}}{(\log 3) s_k} \Big( \frac{6}{s_k -1} + \Psi (s_k) \Big).
\end{equation}
Therefore, in light of \eqref{3.77}, we can now specify the statement made in \eqref{3.75} by affirming that
\begin{equation}\label{3.110.1/2}
\dim_{PC} A = \dim_{PC} (\ao) = \{ 1\}.
\end{equation}

\subsubsection{The $(N-1)$-sphere and its associated RFD}\label{Sec:3.4.6}
In this subsection, we study the complex dimensions of the $(N-1)$-sphere
\begin{equation}\label{3.79}
A := S^{N-1} = \{ x \in \mbr^N: |x| = 1 \},
\end{equation}
where $|\cdot|$ denotes the Euclidean norm in $\mbr^N$, and of the associated RFD $(\ao)$, relative to the open unit ball $\Omega$ of $\mbr^N$, called the $(N-1)$-{\em sphere RFD}. We shall see that the answer obtained in the latter case is very natural. The difference between the answers in the former case (the $(N-1)$-sphere) and the latter case (the $(N-1)$-sphere RFD) is simply due to the fact that in the former case, we consider two-sided $\varepsilon$-neighborhoods of $A = S^{N-1}$ whereas in the latter case, we deal with one-sided (or ``inner'') $\varepsilon$-neighborhoods of $A = S^{N-1}$.\\

{\bf ({\em a})} ({\em The $(N-1)$-sphere}). Let $A$ be the $(N-1)$-dimensional sphere, as given by \eqref{3.79}. Then, in \cite[Exple. 2.2.21]{LapRaZu1}, the tube zeta function $\zeta_A$ of $A$ is shown to have a meromorphic extension to all of $\mbc$ given for every $s \in \mbc$ and for any fixed $\delta \in (0, 1)$ by
\begin{equation}\label{3.80}
\widetilde{\zeta}_A (s) = \Theta_N \sum_{k=0}^N (1-(-1)^k) \binom{N}{k} \frac{\delta^{s-N +k}}{s - (N-k)},
\end{equation}
where $\Theta_N$ denotes the volume of the unit ball in $\mbr^N$ and the numbers $\binom{N}{k}$ are the usual binomial coefficients.\footnote{In light of \eqref{3.11}, we deduce at once the value of the distance zeta function $\zeta_A (s)$ for $s \in \mbc$.}
Therefore, independently of the value of $\delta > 0$,
\begin{equation}\label{3.81}
D := D(\widetilde{\zeta}_A) = D(\zeta_A) = \dim_B A = N-1,
\end{equation}

\begin{equation}\label{3.82}
\dim_{PC} A = \{ N-1 \}
\end{equation}
and
\begin{equation}\label{3.83}
\mcd_A = \mcd (\widetilde{\zeta}_A) = \mcd (\zeta_A) = \Bigg\{ N-1, N-3, \cdots, N- \Big( 2 \Big[ \frac{N-1}{2} \Big] + 1 \Big) \Bigg\},
\end{equation}
each complex dimension in \eqref{3.82} and \eqref{3.83} being simple. Note that for $N \geq 1$ odd (respectively, even), the last number in this set is equal to $0$ (respectively, $1$). 

Finally, for every $d \in \mcd_A$,
\begin{equation}\label{3.84}
\res (\widetilde{\zeta}_A, d) = 2 \Theta_N \binom{N}{d}.
\end{equation}
In particular, for $d := D = N-1,$ a direct computation (based on the definition of the Minkowski content given in \S \ref{Sec:3.2} above) yields
\begin{equation}\label{3.85.2}
\mcm = \mcm (A) = 2 N \Theta_N = \res (\widetilde{\zeta}_A, D), 
\end{equation}
in agreement with a result stated in part ($a$) of \S \ref{Sec:3.3.2}.\footnote{If the (absolute) $(N-1)$-sphere $A$ had radius $R$ instead of radius $1$, then for any $\delta \in (0, R],$ one should simply substitute $\Theta_N R^d$ and $\Theta_N R^{N-1}$ for $\Theta_N$ in \eqref{3.84} and \eqref{3.85.2}, respectively. (See also part ($iii$) of Exercise \ref{Exc:3.10}.)}
Note that, clearly, $A$ is Minkowski measurable with Minkowski content $\mcm$.
\begin{exercise}\label{Exc:3.9}
Show directly that $A = S^{N-1}$ is Minkowski measurable, with Minkowski content $\mcm$ given by the second equality of \eqref{3.85.2}.
\end{exercise}

{\bf ({\em b})} ({\em The $(N-1)$-sphere RFD}). Consider the $(N-1)$-sphere RFD $(\ao)$, where $A:= S^{N-1}$ is the unit sphere of $\mbr^N$ (as in part ($a$) just above) and $\Omega$ is the open unit ball in $\mbr^N$; so that $A = \partial \Omega$ and hence, $(\ao) = (\partial \Omega, \Omega).$ Clearly, for any $N \geq 1$, the $(N-1)$-sphere RFD (or {\em relative $(N-1)$-sphere}) is an RFD in $\mbr^N$.

It is shown in \cite[Exple. 4.1.19]{LapRaZu1} that the distance zeta function $\zao$ of $(\ao)$ admits a (necessarily unique) meromorphic extension to all of $\mbc$ given for every $s \in \mbc$ and for any fixed $\delta \in (0,1)$ by the following expression:\footnote{Unlike in part ($a$) of the present subsection, it is easier to compute directly $\zao$ rather than $\tzao$. Of course, in light of \eqref{3.8.1/4}, one can then deduce $\tzao$ from \eqref{3.85}.}
\begin{equation}\label{3.85}
\zao (s) = N \Theta_N \sum_{j=0}^{N-1} \binom{N-1}{j} \frac{(-1)^{N-j-1}}{s-j}.
\end{equation}
Therefore, one deduces at once that
\begin{equation}\label{3.86}
D= D(\zao) = D(\tzao) = \dim_B (\ao) = N-1,
\end{equation}
\begin{equation}\label{3.87}
\dim_{PC} = \{ N-1 \}
\end{equation}
and
\begin{equation}\label{3.88}
\mcd_{\ao} = \mcd(\zao) = \mcd(\tzao) = \{ 0,1, \cdots N-1\}.
\end{equation}
\vspace{1mm}
Observe the contrast between the result obtained for $\mcd_A$ and $\mcd_{\ao}$ in \eqref{3.83} and \eqref{3.88}, respectively, as was alluded to in the introduction to this subsection (i.e., \S \ref{Sec:3.4.6}). In particular, the set of complex dimensions $\mcd_{\ao} = \{ 0,1, \cdots, N-1 \}$ of the relative $(N-1)$-sphere obtained in \eqref{3.88} is exactly the one we would have expected, a priori.

Finally, for every $j \in \{ 0,1, \cdots, N-1 \}$,
\begin{equation}\label{3.89}
\res (\zao, j) = (-1)^{N-j-1} N \Theta_N \binom{N-1}{j}.
\end{equation}
In particular, for $j := D = N -1$, we see that the RFD $(\ao)$ is Minkowski measurable with (relative) Minkowski content given by
\begin{equation}\label{3.90}
\mcm = \mcm (\ao) = (N-D) \res (\zao, D) = N \Theta_N;
\end{equation}
note that here, $N-D=1$. (Compare with \eqref{3.22}.)

\begin{exercise}\label{Exc:3.10}
($i$) Show via a direct computation that $\zao$ is given by \eqref{3.85}.

($ii$) Address the same question as in ($i$) for $\widetilde{\zeta}_A$ (in part ($a$) of this subsection) in order to recover the expression stated in \eqref{3.80}.

($iii$) How are the expressions of $\widetilde{\zeta}_A$ in \eqref{3.80} and of $\zao$ in \eqref{3.85} modified if $A$ and $\Omega$ are, respectively, the $(N-1)$-sphere and the (open) $N$-ball of radius $R$ (instead of radius $1$)?
\end{exercise}

\subsubsection{The Cantor grill}\label{Sec:3.4.7}

Let $A:= C \times [0,1]$ be the Cartesian product of the ternary Cantor set by the unit interval. Henceforth, $A \subseteq \mbr^2$ is referred to as the {\em Cantor grill.} (See \cite[Fig. 2.15]{LapRaZu1} for an illustration.) Then, according to \cite[Exple. 2.2.34 in \S 2.2.3]{LapRaZu1}, 
\begin{equation}\label{3.90.1/2}
D = D_C = \dim_B A = 1 + \log_3 2,
\end{equation}
the set of principal complex dimensions of $A$ is given by
\begin{equation}\label{3.91}
\dim_{PC} A = (1 + \log_3 2) + i \frac{2 \pi}{\log 3} \mbz, 
\end{equation}
while the set of all complex dimensions (in $\mbc$) of $A$ is given by
\begin{align}\label{3.92}
\mcd_A &= \mcd (\zeta_A) = \{ 0,1 \} \cup \bigcup_{m=0}^1 \Big((m + \log_3 2) + i \frac{2 \pi}{\log 3} \mbz \Big) \\
&= \{ 0,1 \} \cup (D_C + i {\bf p} \mbz) \cup ((1+D_C) + i {\bf p} \mbz), \notag
\end{align}
where $D_C = \log_3 2 = \dim_B C$ is the Minkowski dimension of the Cantor set $C$ and ${\bf p} := 2 \pi/ \log 3$ is the oscillatory period of $C$. Each complex dimension in \eqref{3.91} and \eqref{3.92} is simple.

\begin{exercise}\label{Exc:3.11}
($i$) ({\em Higher-dimensional Cantor grills}). Generalize the results of the present subsection to the higher-dimensional Cantor grill $A:= C \times [0,1]^d$, where $d$ is an arbitrary positive integer. In particular, show that  (with ${\bf p}:= 2 \pi/\log 3$ and $D_C = \log_3 2$, as above)
\begin{equation}\label{3.93}
D_A = \dim_B A = d + \dim_B C = d + D_C
\end{equation}
\begin{equation}\label{3.94}
\dim_{PC} A = D_A + i {\bf p} \mbz,
\end{equation}
and
\begin{equation}\label{3.95}
\mcd_A = \mcd (\zeta_A) = \{ 0,1, \cdots, d \} \cup \bigcup_{m=0}^d ((m + D_C) + i {\bf p} \mbz).
\end{equation}
[{\em Hint}: In order to establish \eqref{3.93} and \eqref{3.94}, compare $\zeta_A (s)$ and $\zeta_A (s-d)$ and show that the difference of these two functions is holomorphic in a suitable half-plane, namely,
\[ \{ Re (s) > D_C + (d-1) \} \supseteq \{ Re (s) \geq D_A \}.\] 
The proof of \eqref{3.95} is significantly more complicated; if needed, see \cite[Exple. 2.2.34 and \S 4.7.1]{LapRaZu1}.]\\

($ii$) ({\em Fractal combs}). Let $K$ be any compact subset of $\mbr$ and let $A:= K \times [0,1]^d$, with $d \in \mbn$. Extend the results of part ($i$) to this more general situation.
\end{exercise}

\begin{exercise}({\em Two different Cantor grill RFDs}.)\label{Exc:3.12}
Let $A:= C$, the ternary Cantor set, and $\Omega_1 := (0,1)^2$ while $\Omega_2 := (-1,0) \times (0,1)$. Then, show that the complex dimensions of the RFDs $(A, \Omega_1)$ and $(A, \Omega_2)$ are very different. More specifically, as is observed in \cite[\S 1.1]{LapRaZu1}, it turns out that 
\begin{equation}\label{3.96}
\mcd_{A, \Omega_1} = \mcd_{C \times [0,1]} = \{ 0,1 \} \cup (D_C + i {\bf p} \mbz) \cup ((1+D_C) + i {\bf p} \mbz), 
\end{equation}
as in \eqref{3.92}, where $D_C = \log_3 2$ and ${\bf p} = 2 \pi/\log 3$ are, respectively, the Minkowski dimension and the oscillatory period of $C$. By contrast, 
\begin{equation}\label{3.97}
\mcd_{A, \Omega_2} = \{ 0,1 \} \cup \mcd_C = \{ 0,1 \} \cup (D_C + i {\bf p} \mbz).
\end{equation}
(Further, all the complex dimensions in either \eqref{3.96} or \eqref{3.97} are simple.) Thus, the RFD $(A, \Omega_2)$ no longer ``sees'' the principal complex dimensions of the Cantor grill $C \times [0,1]$ in \eqref{3.91} (or of the RFD $(A, \Omega_1)$, according to \eqref{3.96}).

In addition to establishing \eqref{3.96} and \eqref{3.97}, provide an intuitive explanation for the striking difference between these two results.
\end{exercise}

\subsubsection{The Cantor dust}\label{Sec:3.4.8}

Let $A := C \times C$, where as before, $C$ is the ternary Cantor set. Henceforth, $A$, the Cartesian product of $C$ by itself, is referred to as the {\em Cantor dust}. (See \cite[Fig. 1.2]{LapRaZu1} for a depiction of $A$.) The associated {\em Cantor dust RFD} $(\ao)$ is defined by $\Omega := (0,1)^2$ and $A:= C \times C$, as above.

Then, it is shown in \cite[Exple. 4.7.15]{LapRaZu1} that $\zao$ has (for all $\delta > 0$ large enough) a meromorphic continuation to all of $\mbc$ given for every $s \in \mbc$ by
\begin{equation}\label{3.98}
\zao (s) = \frac{8}{s(3^s -4)} \Bigg( \frac{J(s)}{6^s} + \frac{\Gamma (\frac{1-s}{2})}{\Gamma (\frac{2-s}{2})} \frac{\sqrt{\pi}}{6^s s(3^s -2)} + K(s) \Bigg),
\end{equation}
where $\Gamma = \Gamma (s)$ denotes the classic gamma function (which, as we recall, does not have any zeros anywhere in $\mbc$ but has simple poles at $0, -1, -2, -3, \cdots$. Here, $J(s) := \int_0^{\pi/4} (\cos \theta)^{-s} d \theta$ is an entire function and $K = K(s)$ is a meromorphic function in all of $\mbc$ with (simple) poles at $1,3, 5, \cdots$.

It follows from \eqref{3.98} and the aforementioned properties of $\Gamma, J$ and $K$ that 
\begin{equation}\label{3.99}
D_{\ao} = D(\zao) = \dim_B (\ao) = \log_3 4 =D_A = \dim_B A, 
\end{equation}
as expected since $\log_3 4 = \log_3 2 + \log_3 2 = 2 \dim_B C$, and that the set of complex dimensions (in $\mbc$) of the Cantor dust RFD consists of simple poles of $\zao$ and is given by
\begin{equation}\label{3.100}
\mcd_{\ao} = \mcd(\zao) = \{ 0 \} \cup \Big(\log_3 2 + i \frac{2\pi}{\log 3} \mbz \Big) \cup \Big(\log_3 4 + i \frac{2 \pi}{\log 3} \mbz \Big).
\end{equation}
More specifically, due to possible zero-pole cancellations, $\mcd_{\ao}$ is a subset of the set given on the right-hand side of \eqref{3.100} and contains $D_C = \log_3 2, D_A = D_{C \times C} = \log_3 4$, as well as at least two nonreal complex conjugate principal complex dimensions. It is conjectured in {\em loc. cit.} that, in fact, we have a true equality in \eqref{3.100} and hence, in particular, that the set of principal complex dimensions of the RFD $(\ao)$ (as well as of the compact set $A \subseteq \mbr^2$) is  given by 
\begin{equation}\label{3.101}
\dim_{PC} (\ao) =\dim_{PC} A = \log_3 4 + i \frac{2 \pi}{\log 3} \mbz
\end{equation}
or is, at least, an infinite subset of the `periodic set' $\log_3 4 + i (2 \pi/ \log 3) \mbz$. Furthermore, the author conjectures entirely similar statements about the set of subcritical complex dimensions; namely, the set of complex dimensions of $(\ao)$ (and of $A$) should be equal to (or, at least, coincide with an infinite subset of) the periodic set $\log_3 2 + i (2 \pi/\log 3) \mbz$.

We refer to Conjecture \ref{Con:3.14} below for a much more general and precise statement about the complex dimensions (interpreted there in an extended sense) of Cartesian products.

\begin{exercise}\label{Exc:3.13}
($i$) Deduce from the above results about the Cantor dust RFD $(\ao)$ analogous results concerning the Cantor dust itself, $A:= C \times C$.

($ii$) Generalize the above results about the Cantor dust RFD and the Cantor dust to $A= C^d := C \times \cdots \times C$, the $d$-fold Cartesian product of $C$ with itself (where $d \geq 1$ is arbitrary) and to the associated RFD $(\ao)$, with $A$  as just above and with $\Omega:= (0,1)^d$. We expect that $\dim_{PC} A = \dim_{PC} (\ao)$ and, similarly, $\mcd_A =\mcd_{\ao}$.

($iii$) Finally, replace the ternary Cantor set $C$ by other Cantor-type sets and by more general compact subsets of $\mbr$ (including lattice and nonlattice self-similar sets).
\end{exercise}

The following conjecture of the author was motivated, in part, by the results from \cite{LapRaZu1} (and [\hyperlinkcite{LapRaZu2}{LapRaZu2--6}]) stated about the complex dimensions of the Cantor grill (in \S \ref{Sec:3.4.7}) and of the Cantor dust (in \S \ref{Sec:3.4.8}), along with the results of {\em loc. cit.} briefly discussed in \S \ref{Sec:3.3.4} about the invariance of the complex dimensions under embeddings into higher-dimensional Euclidean spaces. More specifically, the author was first led to stating this conjecture (in December 2016), on the basis of his joint work on quantized number theory and fractal cohomology ([\hyperlinkcite{CobLap1}{CobLap1--2}], \cite{Lap10}), and especially, due to the requirement that `fractal cohomology' should satisfy an appropriate analog of the K\" unneth formula for the cohomology of Cartesian products. (See, especially, \cite[esp., Chs. 4--6]{Lap10} for more details and motivations; see also \S \ref{Sec:4.1} and \S \ref{Sec:4.4} in the epilogue for a brief discussion of the general context.)

For simplicity, we state the conjecture for compact sets rather than for general RFDs, but clearly, an analogous conjecture can be made about RFDs. We also implicitly assume that the corresponding fractal zeta functions are meromorphic in all of $\mbc$,\footnote{See, in particular, Remark \ref{Rem:3.14.1/2} for a more general situation.} 
but we can also state the conjecture relative to a common window (in the sense of \cite{Lap-vF4, LapRaZu1}) or more generally, a domain of $\mbc$ to which the fractal zeta functions can be meromorphically continued.\footnote{If $f$ and $g$ are two meromorphic functions defined on the same domain $U$ of $\mbc$, we simply let $\mfD (f) = \mfD (f; U)$ and $\mfD(g) = \mfD (g; U)$ in the statement of Conjecture \ref{Con:3.14}, and with the notation of Definition \ref{Def:3.15}.}
Finally, the mathematical notion of {\em divisor} (denoted by $\mfD (f)$) of a meromorphic function $f$ is well known and will be recalled in Definition \ref{Def:3.15} below (following \cite[\S 3.4]{Lap-vF4}). For now, we simply mention that $\mfD (f)$ is the multiset of zeros and poles of $f$; that is, the graded set of zeros and poles of $f$, counted according to their multiplicities (with the zeros counted positively and the poles counted negatively).

\begin{conjecture}[Complex dimensions of Cartesian products, \cite{Lap10}]\label{Con:3.14}
Let $A_1$ and $A_2$ be two bounded $($or, equivalently, compact$)$ subsets of $\mbr^{N_1}$ and $\mbr^{N_2}$, respectively. For $j=1,2$, let $\mfD (A_j)$ denote the divisor of $\zeta_{A_j}, \mfD (A_j) := \mfD (\zeta_{A_j})$, and $\mcd (A_j) = \mcd (\zeta_{A_j})$ denote the $($multi$)$set of complex dimensions of $A_j$.\footnote{For simplicity, we assume implicitly that $\overline{\dim}_B A_j < N_j,$ for $j=1,2$. We also use $\zeta_{A_j}$ in order to define both $\mfD_{A_j}$ and $\mcd_{A_j}$ for $j=1,2$; namely, for $j=1,2$, $\mfD_{A_j}:= \mfD(\zeta_{A_j})$ and $\mcd_{A_j}:= \mcd (\zeta_{A_j}) (= \mcd (\widetilde{\zeta}_{A_j})$, in this case). [Note that in spite of the functional equation \eqref{3.11} (or, more generally, \eqref{3.8.1/4}, in the case of RFDs), $\zeta_{A_j}$ and $\widetilde{\zeta}_{A_j}$ have the same poles but not necessarily the same zeros.] One could make a similar conjecture without assuming that $\overline{\dim}_B A_j < N_j$ for $j=1,2$ and by using $\widetilde{\zeta}_{A_j}$ instead of $\zeta_{A_j}$ in order to define both $\mfD_{A_j}$ and $\mcd_{A_j}$, for $j=1,2$.}

Then, we have the identity
\begin{equation}\label{3.102}
\mfD (A_1 \times A_2) = \mfD (A_1) + \mfD (A_2),\footnotemark
\end{equation}\footnotetext{Given two subsets (or submultisets) $E_1$ and $E_2$ of the same additive group, their Minkowski sum $E_1 + E_2$ is defined by 
\begin{equation}
E_1 + E_2 = \{ e_1 +e_2 :e_1 \in E_1, e_2 \in E_2 \},\label{3.103.2}
\end{equation}
viewed as a subset (or submultiset) of this same group.}
Also, we have the inclusion 
\begin{equation}\label{3.103}
\mcd (A_1 \times A_2) \subseteq \mcd (A_1) + \mcd (A_2),
\end{equation}
the Minkowski sum of $\mcd (A_1)$ and $\mcd (A_2)$. Furthermore, typically $($or ``generically'', in a vague sense$)$, we have an equality in \eqref{3.103}, because we do not have zero-pole cancellations in such cases{\em :}
\begin{equation}\label{3.103.1/2}
\mcd (A_1 \times A_2) = \mcd (A_1) + \mcd (A_2).
\end{equation}
\end{conjecture}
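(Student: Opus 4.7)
The plan is to represent $\zeta_{A_1 \times A_2}$ via a Mellin--Barnes convolution in which $\zeta_{A_1}$ and $\zeta_{A_2}$ appear separately, and then to read off the complex dimensions and divisor from the resulting formula by contour deformation.

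First, I would reduce to analyzing the \emph{product RFD} $(A_1 \times A_2, \Omega_1 \times \Omega_2)$, where $\Omega_j := (A_j)_{\delta_j}$. A direct comparison argument shows that the fractal zeta function of this RFD differs from $\zeta_{A_1 \times A_2}$ (defined via an $\varepsilon$-neighborhood of $A_1 \times A_2$) only by a correction holomorphic in the region of interest, so the complex dimensions and divisors coincide. On the product domain the identity $d((x_1, x_2), A_1 \times A_2)^2 = d(x_1, A_1)^2 + d(x_2, A_2)^2$ holds, so applying the Mellin--Barnes formula
\[ (a + b)^{-\sigma} = \frac{1}{2\pi i\, \Gamma(\sigma)} \int_{(c)} \Gamma(z)\, \Gamma(\sigma - z)\, a^{-z}\, b^{z - \sigma}\, dz, \qquad 0 < c < \mathrm{Re}(\sigma), \]
with $a = d(x_1, A_1)^2$, $b = d(x_2, A_2)^2$, $\sigma = (N - s)/2$ (where $N := N_1 + N_2$), followed by Fubini on $\Omega_1 \times \Omega_2$, yields the exact Mellin-convolution identity
\[ \zeta_{A_1 \times A_2,\, \Omega_1 \times \Omega_2}(s) = \frac{1}{4\pi i\, \Gamma\!\left(\tfrac{N - s}{2}\right)} \int_{(c')} \Gamma\!\left(\tfrac{N_1 - w}{2}\right) \Gamma\!\left(\tfrac{N_2 - s + w}{2}\right) \zeta_{A_1, \Omega_1}(w)\, \zeta_{A_2, \Omega_2}(s - w)\, dw, \]
initially valid for $\mathrm{Re}(s) > \overline{D}_1 + \overline{D}_2$, with $c'$ chosen in the strip $\overline{D}_1 < c' < \mathrm{Re}(s) - \overline{D}_2$, and extended elsewhere by contour deformation.

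Extracting the complex dimensions then amounts to tallying the poles produced when the $w$-contour is pushed past singularities of the integrand. Three families of residues arise: (i) poles of $\zeta_{A_1, \Omega_1}(w)$ at $w = \omega_1 \in \mcd(A_1)$, whose residues leave the factor $\zeta_{A_2, \Omega_2}(s - \omega_1)$ and thus produce poles of $\zeta_{A_1 \times A_2}(s)$ at $s = \omega_1 + \omega_2$ for every $\omega_2 \in \mcd(A_2)$; (ii) symmetrically, poles of $\zeta_{A_2, \Omega_2}(s - w)$; and (iii) the ``trivial'' poles of the Gamma factors at $w = N_1 + 2k$ or $w = s - N_2 - 2k$, with $k \ge 0$. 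The families (i)--(ii) together give exactly $\mcd(A_1) + \mcd(A_2)$, and an explicit residue computation shows that at a simple pole $\omega_1 + \omega_2$ arising from a single pair, the residue equals a nonvanishing Gamma-factor multiple of $\res(\zeta_{A_1}, \omega_1)\, \res(\zeta_{A_2}, \omega_2)$. This last point yields the generic equality claim of \eqref{3.103.1/2}, since the resonance condition $\omega_1 + \omega_2 = \omega_1' + \omega_2'$ for distinct pairs cuts out only a codimension-one locus in the parameter space of relative scalings of $A_1$ and $A_2$.

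The principal obstacles are (a) showing that the ``spurious'' Gamma-pole family (iii) does not produce complex dimensions outside $\mcd(A_1) + \mcd(A_2)$, and (b) upgrading the pole analysis above to the full divisor identity \eqref{3.102}, which tracks zeros as well as poles. For (a), the key observation is that the denominator $1/\Gamma((N-s)/2)$ contributes zeros at $s = N + 2k$ that kill a diagonal sub-family of (iii), while the remaining residues must be matched against the ``integer'' complex dimensions $\{0, 1, \ldots, N_j - 1\} \subseteq \mcd(A_j)$ already present in each factor (arising from the geometry of $\Omega_j$), together with the entire corrections produced when passing from $\Omega_1 \times \Omega_2$ to $(A_1 \times A_2)_\delta$; this bookkeeping is intricate but combinatorially finite. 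For (b), my plan is to pass to the logarithmic derivatives and show that the Mellin convolution translates into a Minkowski sum at the level of divisors, exploiting the symmetry of the kernel together with an identification of the zeros of $\zeta_{A_1 \times A_2}$ as arising from zeros of the factor zeta functions and of the Gamma kernel. This is the portion of the argument where the fractal cohomological perspective of \S \ref{Sec:4.4} (and \cite{Lap10}) appears most likely to be indispensable for a transparent organization of the signs and multiplicities.
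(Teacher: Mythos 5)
This statement is a \emph{conjecture} in the paper; there is no paper proof to compare against. Indeed, Remark~\ref{Rem:3.16}(a) explicitly warns that ``in the general case, this conjecture is likely to be quite difficult to prove,'' and Remark~\ref{Rem:3.14.1/2} flags that $\zeta_{A_1 \times A_2}$ may not even be meromorphic when $\zeta_{A_1}$ and $\zeta_{A_2}$ are, in which case the very statement must be modified with closures and generalized divisors. So the task was not to reproduce a proof but to judge whether your strategy closes an open problem, which you yourself stop short of claiming.

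That said, the Mellin--Barnes convolution is a genuine and technically sensible line of attack, and the setup is correct: the Pythagorean identity $d((x_1,x_2), A_1 \times A_2)^2 = d(x_1,A_1)^2 + d(x_2,A_2)^2$, the Barnes kernel, and the resulting formula $\zeta_{A_1 \times A_2,\,\Omega_1 \times \Omega_2}(s) = (4\pi i\,\Gamma(\tfrac{N-s}{2}))^{-1}\int_{(c')}\Gamma(\tfrac{N_1-w}{2})\Gamma(\tfrac{N_2-s+w}{2})\,\zeta_{A_1,\Omega_1}(w)\,\zeta_{A_2,\Omega_2}(s-w)\,dw$ all check out on the half-plane $\mathrm{Re}(s) > \overline{D}_1 + \overline{D}_2$ (where the factor zeta functions are bounded on vertical lines and the Gamma factors give exponential decay, so Fubini is legitimate). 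The comparison reduction to the box-product RFD $(A_1 \times A_2,\,\Omega_1 \times \Omega_2)$ is also fine, since for small $\varepsilon$ one has $(A_1 \times A_2)_\varepsilon \subseteq \Omega_1 \times \Omega_2$ and the discrepancy is an entire correction.

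The gaps, however, are real and are not merely ``intricate bookkeeping.'' First, the contour shift requires that $\zeta_{A_1}$ and $\zeta_{A_2}$ admit meromorphic continuations satisfying polynomial growth (languidity) along the shifted contours; the conjecture imposes no such hypothesis, and the paper constructs (maximally hyperfractal) bounded sets whose distance zeta functions have the critical line as a natural boundary. Your argument is therefore, at best, a conditional result for a restricted class of sets. Second, the spurious Gamma-pole family (your item (iii)) produces, after residue extraction, terms like $\zeta_{A_1}(s - N_2 - 2k)\,\zeta_{A_2}(N_2 + 2k)\,\Gamma(\tfrac{N_1+N_2+2k-s}{2})/\Gamma(\tfrac{N-s}{2})$; the zeros of $1/\Gamma(\tfrac{N-s}{2})$ at $s = N + 2m$ do \emph{not} sit where you need them and do not by themselves annihilate the unwanted poles in the region $\mathrm{Re}(s) < N$, so your assertion that they ``kill a diagonal sub-family of (iii)'' needs an actual computation, not an appeal to plausibility. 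Third, and most seriously, the divisor identity \eqref{3.102} concerns zeros as well as poles: contour deformation picks up residues at poles but is structurally blind to zeros, and nothing in the Barnes kernel forces $\zeta_{A_1 \times A_2}$ to vanish at $\omega_1 + \omega_2$ when, say, $\omega_1$ is a zero of $\zeta_{A_1}$ and $\omega_2$ a pole of $\zeta_{A_2}$. Passing to logarithmic derivatives, as you propose, does not obviously turn a Mellin convolution into an additive divisor relation, because $\log$ does not commute with the integral. This is precisely the part the paper defers to the fractal-cohomology framework of \cite{Lap10}, and it is where your sketch is thinnest. In short: good idea, defensible partial progress, not a proof.
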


\begin{remark}\label{Rem:3.14.1/2}
If $\zeta_{A_1 \times A_2}$ is not necessarily meromorphic in all of $\mbc$ (or in the given domain $U \subseteq \mbc$ under consideration), but $\zeta_{A_1}$ and $\zeta_{A_2}$ still are, then, under appropriate hypotheses, we expect (based in part on cohomological and spectral considerations; see \S \ref{Sec:4.3} and \S \ref{Sec:4.4}, along with \cite{Lap10}) that Conjecture \ref{Con:3.14} can be suitably modified and extended by substituting for the ordinary divisor $\zeta_{A_1 \times A_2}$ a ``generalized divisor'' (still denoted by $\bm{\mcd} (\zeta_{A_1 \times A_2})$ or, in short, $\bm{\mcd} (A_1 \times A_2)$) which takes into account the (nonremovable) singularities (and not just the poles) of $\zeta_{A_1 \times A_2}$; in that case, we must also replace the Minkowski sum by its closure in the right-hand side of the counterpart of \eqref{3.102}:
\begin{equation}\label{3.104.1/4}
\bm{\mcd} (A_1 \times A_2) = c \ell (\bm{\mcd} (A_1) + \bm{\mcd} (A_2)).
\end{equation}
In particular, the counterpart of \eqref{3.103} becomes
\begin{equation}\label{3.104.1/2}
\mcd (A_1 \times A_2) \subseteq c \ell (\mcd (A_1) + \mcd (A_2)).
\end{equation}
\end{remark}
\vspace{1mm}
Next, as promised, we recall the definition of the divisor of a meromorphic function (see, e.g., \cite[Def. 3.11]{Lap-vF4}). It goes back at least to Riemann in the related context of Riemann surfaces and has counterparts and various generalizations in many fields, including arithmetic and algebraic geometry, as well as in algebraic combinatorics. The notion of divisor is ideally suited to making precise sense of the possible cancellations between the zeros and poles of a meromorphic function. Hence, its key use in the statement of Conjecture \ref{Con:3.14} above.

\begin{definition}({\em Divisor of a meromorphic function}).\label{Def:3.15}
Let $f$ be a meromorphic function on a given domain $U \subseteq \mbc$. Then, the {\em divisor} of $f$, denoted by $\mfD = \mfD (f)$, is defined as the formal sum\footnote{This is an at most countable sum since clearly, $ord(f, \omega) = 0$ whenever $\omega \in U$ is neither a pole nor a zero of $f$. Also, $\mfD (f)$ can be viewing as lying in the free ablian group generated by the distinct zeros and poles of $f$.}
\begin{equation}\label{3.104}
\mfD (f) = \mfD(f;U) := \sum_{\omega \in U} ord (f; \omega) \cdot \omega,
\end{equation}
where the order of $f$ at $\omega \in U$ is defined as the integer $m \in \mbz$ such that the function $|f (s)(s -\omega)^{-m}|$ is bounded away from $0$ and $\infty$ in a neighborhood of $\omega$. Therefore, if $\omega$ is a zero of (positive) multiplicity $m$, then $order (f, \omega) = m$, a positive integer, whereas if $\omega$ is a pole of $f$ of (positive) multiplicity $n$, then $ord (f, \omega) = -n =: m$, a negative integer. Furthermore, $ord (f,\omega) = 0$ if $\omega$ is neither a zero nor a pole of $f$ (in $U$). 

Formally, the {\em Minkowski sum} of the divisors $\mfD (f)$ and $\mfD (g)$ of two meromorphic functions on the same domain $U$ of $\mbc$ is denoted  by $\mfD (f) + \mfD (g)$ and is given by 
\begin{equation}\label{3.105}
\mfD (f) + \mfD (g) = \sum_{\omega \in U} (ord (f,\omega) + ord (g, w)) \cdot \omega.
\end{equation}
\end{definition}

Note that on the right-hand side of \eqref{3.105}, some of the coefficients of $ord (f,\omega) + ord (g, \omega)$ may vanish, corresponding precisely to the aforementioned (exact) zero-pole cancellations.

\begin{remark}\label{Rem:3.16}
($a$) Conjecture \ref{Con:3.14} above is consistent with all the known examples  and results, including those described in \S \ref{Sec:3.3.4}, \S \ref{Sec:3.4.7} and \S \ref{Sec:3.4.8}. (See, in particular, parts ($b$) and ($c$) of the present remark for the examples of the Cantor grill and the Cantor dust, respectively.) We warn the interested reader, however, that in the general case, this conjecture is likely to be quite difficult to prove although the coherence and consistency of the fractal cohomology theory partly developed in \cite{Lap10} (and briefly discussed in \S \ref{Sec:4.1} and \S \ref{Sec:4.4}) clearly requires that Conjecture \ref{Con:3.14} must be true.

($b$) ({\em Cantor grill, revisited}) As a simple verification, for the example of the Cantor grill $C \times [0,1]$ discussed in the main text of \S \ref{Sec:3.4.7} (and corresponding to the case when $d=1$) or, more generally, of its higher-dimensional counterpart $A := C \times [0,1]^d$ considered in Exercise \ref{Exc:3.11}($i$), we have
\begin{equation}\label{3.106}
\mcd_{[0,1]^d} := \mcd (\widetilde{\zeta}_{[0,1]^d}) = \{ 0,1, \cdots, d \},
\end{equation}
as can be easily verified,\footnote{Note that we only use $\widetilde{\zeta}_{[0,1]^d}$ here because $D_{[0,1]^d} = \dim_B [0,1]^d =d.$} and
\begin{equation}\label{3.107}
\mcd_C = \{ 0\} \cup \mcd_{CS} = \{ 0 \} \cup (D_C + i{\bf p} \mbz),
\end{equation} 
where $D_C = \log_3 2$ and ${\bf p} = 2 \pi/\log 3$. (See \eqref{3.13} and \eqref{3.14.1/2} for a closely related fact.) Therefore, the Minkowski sum $\mcd_C + \mcd_{[0,1]^d}$ is given by
\begin{equation}
\{0,1, \cdots, d \} \cup \Bigg( \bigcup_{m=0}^d ((m + D_C) + i{\bf p} \mbz) \Bigg),
\end{equation}
which is precisely $\mcd_A = \mcd_{C \times [0,1]^d},$ as given by \eqref{3.95} (and, in particular, for the usual Cantor grill corresponding to the choice $d=1$, as given by \eqref{3.92}). Of course, this is also in agreement with the result predicted by Conjecture \ref{Con:3.14} in \eqref{3.103}, and even in \eqref{3.103.1/2} since we have
\begin{equation}\label{3.109}
\mcd_C + \mcd_{[0,1]^d} = \mcd_{C \times [0,1]^d}
\end{equation}
in the present situation, as expected in the ``generic'' case. 

Observe that it is absolutely crucial here that $\mcd_C$ be given by \eqref{3.107}, as obtained above via the higher-dimensional theory of fractal zeta functions, and not just by $\mcd (\zeta_{CS}) = D_C + i{\bf p} \mbz$, as given by the theory of fractal strings and their associated geometric zeta functions. Recall that $\mcd_C = \mcd (\zeta_C) = \mcd (\widetilde{\zeta}_C) = \{ 0 \} \cup \mcd (\zeta_{CS}),$ where $\mcd (\zeta_{CS})$ is given just above. (An entirely analogous comment can be made about the example of the Cantor dust dealt with in part ($c$).)

($c$) ({\em The Cantor dust, revisited}). Let $A = C \times C$ be the Cantor dust, as in \S \ref{Sec:3.4.8}. Then, in light of \eqref{3.107} in part ($b$) just above, it is easy to check that $\mcd_C + \mcd_C$ is given by (with the same notation as in part ($b$) of this remark)
\begin{equation}\label{3.111}
\mcd_C + \mcd_C = \{0 \} \cup (D_C +i{\bf p} \mbz) \cup (2 D_C + i{\bf p} \mbz),
\end{equation}
with $2 D_C = \log_3 4 = D_A$, in accord with the result stated in \eqref{3.100} and more precisely, immediately after \eqref{3.100}, and also in agreement with the containment \eqref{3.103} of Conjecture \ref{Con:3.14}, even though it has not yet been fully proved for this example. In fact, we expect once again that we have an equality in the present situation, as predicted for the ``generic'' case in \eqref{3.103.1/2} of Conjecture \ref{Con:3.14}; namely, we expect that
\begin{equation}\label{3.112}
\mcd_{C \times C} = \mcd_C + \mcd_C.
\end{equation} 

We leave it to the interested reader to calculate the $d$-fold Minkowski sum of $\mcd_C$ and to deduce from it (assuming that we are in the generic case of Conjecture \ref{Con:3.14}) the expression for $\mcd_{C^d}$, where $C^d$ is the $d$-fold Cartesian product of $C$ by itself, as sought for in part ($ii$) of Exercise \ref{Exc:3.13}.
\end{remark}

The following example (the Cantor graph RFD) will play an important role in \S \ref{Sec:3.6} in order to illustrate the definition of fractality in terms of the existence of nonreal complex dimensions.\\

\subsubsection{The devil's staircase $($or Cantor graph$)$ RFD}\label{Sec:3.4.9}

Let $A$ denote the graph of the classic Cantor function, also called from now on the devil's staircase or the Cantor graph. It is well known that $A$ is a self-affine (rather than a self-similar) set in $\mbr^2$; more specifically, it scales distances by the factors $1/2$ and $1/3$ along the horizontal and vertical directions, respectively. (See, for example, \cite[Remark 1.2.1]{LapRaZu1} for a more detailed description, along with \cite[Plate 83, p. 83]{Man} and \cite[Figs. 1.5--1.7]{LapRaZu1} for an illustration.)

Next, let $\Omega$ be the union of the triangles $\Delta_k$ and $\widetilde{\Delta}_k$ above and below the horizontal parts of the Cantor graph $A$, for $k=1,2, \cdots$. (For each $k \geq 1$, at the $k$-th stage of the construction of $A$ and of the RFD $(A, \Omega)$ below, there are $2^{k-1}$ pairs of congruent triangles $\Delta_k$ and $\widetilde{\Delta}_k$.) Then, $(A,\Omega)$ is called the {\em Cantor graph RFD}; clearly, it is an RFD in $\mbr^2$. Intuitively, it can be thought of as the Cantor graph viewed from the perspective of the (non-Euclidean) $\ell^\infty$ metric of $\mbr^2$, given by $||(x,y)||_\infty = \max (|x| , |y|),$ for ${x,y} \in \mbr^2$. In spite of that, $(A, \Omega)$ captures the essence of the Cantor graph.

It is shown in \cite[Exple. 5.5.14]{LapRaZu1} that $\zeta_{A, \Omega}$ admits a (necessarily unique) meromorphic continuation to all of $\mbc$, given for every $s \in \mbc$ by 
\begin{equation}\label{3.113}
\zeta_{A, \Omega} (s) = \frac{2}{s(3^s -2)(s-1)}.
\end{equation}
It follows that the set of principal complex dimensions of the Cantor graph RFD $(A, \Omega)$ is given by
\begin{equation}\label{3.114}
\dim_{PC} (A,\Omega) = \{1 \},
\end{equation}
where
\begin{equation}\label{3.115}
D_{A,\Omega} = \dim_B (A, \Omega) =1,
\end{equation}

\noindent and that the set of all complex dimensions of $(A, \Omega)$ is
\begin{equation}\label{3.116}
\mcd_{A,\Omega} = \mcd (\zeta_{\ao}) = \{0,1 \} \cup \Big(\log_3 2 + i \frac{2\pi}{\log 3} \mbz \Big),
\end{equation}
with each complex dimension being simple. Apart from the complex dimension at $s = 0$, this is in complete agreement with the values of the complex dimensions of the Cantor graph $A$ predicted in [\hyperlinkcite{Lap-vF2}{Lap-vF2--4}] (see, especially, \cite[\S 12.1.2]{Lap-vF4}), on the basis of an approximate tube formula for $A$ (but, at the time, without an appropriate definition of higher-dimensional fractal zeta functions).

Furthermore, with $s_k := \log_3 2 + i (2 \pi/ \log 3)$ for each $k \in \mbz$, we have that 
\begin{equation}\label{3.117}
\res (\zeta_{A, \Omega}, s_k) = \frac{1}{(\log 3) (s_k-1) s_k}.
\end{equation}
Moreover,
\begin{equation}\label{3.118}
\res (\zeta_{A, \Omega}, 0) = \res (\zao, 1) = 2.
\end{equation}

Let us  next briefly consider the {\em Euclidean Cantor graph RFD} $(A, A_{1/3})$, where $A_{1/3}$ is the $(1/3)$-neighborhood (with respect to the Euclidean metric of $\mbr^2$) of the Cantor graph $\mbr^2$. Then, it is also shown in \cite[\S 5.5.4]{LapRaZu1} that
\begin{equation}\label{3.119}
\mcd_{A, A_{1/3}} = \mcd (\zeta_{A, A_{1/3}}) \subseteq \mcd_{(\ao)} = \{ 0,1 \} \cup \Big( \log_3 2 + i \frac{2 \pi}{\log 3} \mbz \Big)
\end{equation}
and that\footnote{Observe that we can rewrite \eqref{3.119} as follows:
\begin{equation}
\mcd_{\ao} = \mcd_{CS} \cup \{ 1\},
\end{equation}
where the set of complex dimensions of the Cantor string $CS$ (and of the set $C$) is given by
\begin{equation}\label{3.119.1/2}
\mcd_{CS} = \mcd_C = \{ 0 \} \cup (D_{CS} + i{\bf p}\mbz).
\end{equation}
Here, $D_{CS} = D_C = \log_3 2$ is the Minkowski dimension of the Cantor string (and set) and ${\bf p} = (2\pi / \log 3)$ is its oscillatory period. \label{Fn:100}}

\begin{equation}\label{3.120}
\dim_{PC} (A, A_{1/3}) = \dim_{PC} (\ao) = \{ 1 \},
\end{equation}
where
\begin{equation}\label{3.121}
\dim_B (A, A_{1/3}) = \dim_B (\ao) = D(\zao) = D(\zeta_{A, A_{1/3}}) = 1.
\end{equation}

It can be checked numerically (see {\em loc. cit.}) that $\mcd_{A, A_{1/3}}$ contains several pairs of {\em nonreal} complex conjugate complex dimensions with real part $D_{CS} = D_C = \log_3 2$. In fact, we expect that a lot more is true, as is stated in the following conjecture. (Compare with the inclusion obtained in \eqref{3.119}.)

\begin{conjecture}{\em ([\hyperlinkcite{LapRaZu4}{LapRaZu4, 7}], \cite[\S 5.5.4]{LapRaZu1}).}\label{Con:3.17}
We not only have that $\mcd_{A, A_{1/3}} \subseteq \mcd_{\ao}$ $($as stated in \eqref{3.119}$)$ but also that
\begin{equation}\label{3.122}
\mcd_A = \mcd_{A, A_{1/3}} = \mcd_{\ao},
\end{equation}
where $($as in \eqref{3.116} and with the notation of footnote \ref{Fn:100}$)$
\begin{equation}\label{3.123}
\mcd_{\ao} = \{ 0,1 \} \cup (D_{CS} + i{\bf p} \mbz),
\end{equation}
with $D_{CS} = \log_3 2$ and ${\bf p} = (2\pi / \log 3)$.
\end{conjecture}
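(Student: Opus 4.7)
The plan is to establish the two equalities in \eqref{3.122}. The first, $\mcd_A = \mcd_{A, A_{1/3}}$, is essentially tautological: by the definition \eqref{3.9}, $\zeta_A := \zeta_{A, A_\delta}$ for any fixed $\delta > 0$, and the set of poles is independent of this choice, so taking $\delta = 1/3$ gives it directly. The second equality, $\mcd_{A, A_{1/3}} = \mcd_{A, \Omega}$, is the true content; the inclusion $\subseteq$ is already in \eqref{3.119}, so the remaining task is to show that every point of $\{0,1\} \cup (\log_3 2 + i \mathbf{p} \mbz)$ is in fact a pole of $\zeta_{A, A_{1/3}}$.

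First I would split $A_{1/3} = \Omega \sqcup W$ (up to a set of Lebesgue measure zero), where $\Omega$ is the union of triangular fillers above and below the horizontal plateaus of the Cantor graph as in \S\ref{Sec:3.4.9}, and $W$ is the portion of the $(1/3)$-neighborhood surrounding the \emph{fractal tower} $F := A \cap (C \times [0,1])$ above the ternary Cantor set $C$. Additivity of the integral in \eqref{3.5} yields
$$\zeta_{A, A_{1/3}}(s) = \zeta_{A, \Omega}(s) + \zeta_{A, W}(s),$$
where the first summand is explicit from \eqref{3.113}. The pole at $s=1$ is contributed only by $\zeta_{A, \Omega}$ (reflecting the $1$-dimensional contribution of the plateaus) and is not cancelled, since $(A, W)$ has relative Minkowski dimension $\log_3 2 < 1$ and hence $\zeta_{A, W}$ is holomorphic at $s=1$; so that pole persists in $\zeta_{A, A_{1/3}}$ for free.

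Second, I would derive a self-affine functional equation for $\zeta_{A, W}$ using the decomposition $A = S_L(A) \cup A_M \cup S_R(A)$ with $S_L(x,y) = (x/3, y/2)$ and $S_R(x,y) = (x/3 + 2/3, y/2 + 1/2)$. Substituting these affine changes of variable into the integral defining $\zeta_{A, W}$ and handling the boundary overlap explicitly would produce an identity of the shape
$$\zeta_{A, W}(s)\bigl(1 - 2 \cdot 3^{-s} \cdot E(s)\bigr) = H(s),$$
with $H$ holomorphic in a right half-plane containing $s_0 := \log_3 2$ and $E$ a holomorphic anisotropic correction factor accounting for the fact that the two scaled copies of $W$ under $S_L, S_R$ are \emph{not} Euclidean isometric to a rescaled $W$ (the vertical contraction is $1/2$ while the horizontal is $1/3$). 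Provided $E(s_0) \neq 0$ and $E$ is $i\mathbf{p}$-periodic modulo a factor compatible with $3^{-s}$, the denominator vanishes simply at each $s_k := \log_3 2 + i \mathbf{p} k$, giving the desired poles of $\zeta_{A, W}$. Non-cancellation between $\res(\zeta_{A, W}, s_k)$ and $\res(\zeta_{A, \Omega}, s_k)$ must then be checked: for $k = 0$ the two residues have opposite signs (positivity of the integrand gives $\res(\zeta_{A, W}, s_0) > 0$, while \eqref{3.117} gives $\res(\zeta_{A, \Omega}, s_0) < 0$ since $s_0 \in (0,1)$), so one needs the \emph{strict} inequality $\res(\zeta_{A, W}, s_0) > 1/[(\log 3) s_0 (1 - s_0)]$, which should follow by a direct geometric comparison of the $W$-integral with the $\Omega$-integral; for $k \neq 0$ the residues are complex and their arguments (encoded in $E(s_k)$) must be shown not to line up with \eqref{3.117}.

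The main obstacle is the precise determination of the anisotropic correction factor $E(s)$ and, through it, the exact residues of $\zeta_{A, W}$ at the $s_k$'s. Unlike the purely self-similar situation underlying \eqref{3.113}, where $(A, \Omega)$ is a bona fide self-similar spray with isotropic scaling ratio $1/3$ yielding the clean denominator $3^s - 2$, the self-affine nature of the Cantor graph mixes horizontal $(1/3)$ and vertical $(1/2)$ scalings in a way that couples the two directions at the level of distances to $A$. Handling this coupling rigorously (rather than numerically, as has so far been done in \cite{LapRaZu1}) and demonstrating that $E(s_k)$ never takes the precise value forcing cancellation with \eqref{3.117} is where the real technical difficulty lies and is what any complete proof of the conjecture must confront.
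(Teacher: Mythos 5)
The statement you are addressing is a \emph{conjecture}, explicitly left open by the paper: Open Problem \ref{OP:3.38.1/2} in \S\ref{Sec:3.6.2} asks precisely to ``prove the conjecture stated in \eqref{3.218} or even its stronger form stated just afterward,'' and the paper's only stated evidence is that ``it can be checked numerically \ldots that $\mcd_{A, A_{1/3}}$ contains several pairs of nonreal complex conjugate complex dimensions with real part $D_{CS}$.'' So there is no ``paper proof'' against which to compare; what can be assessed is whether your outline closes the gap or merely restates it.

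Your opening reduction is correct and not really at issue. The identity $\mcd_A = \mcd_{A,A_{1/3}}$ is immediate from the $\delta$-independence of the poles of $\zeta_{A,A_\delta}$ (see \S\ref{Sec:3.2.1}); the decomposition $A_{1/3} = \Omega \sqcup W$ (mod Lebesgue-null sets), the resulting additivity $\zeta_{A,A_{1/3}} = \zeta_{A,\Omega} + \zeta_{A,W}$, and the persistence of the simple pole at $s=1$ (since $\overline{\dim}_B(A,W) = \log_3 2 < 1$ forces $\zeta_{A,W}$ to be holomorphic at $s=1$) are all sound and would surely be the first moves in any serious attack. You also correctly identify that what remains is a non-cancellation problem at the points $s_k := \log_3 2 + i\mathbf{p}k$, and your sign analysis at $s_0$ via \eqref{3.117} is right, though what one actually needs there is $\res(\zeta_{A,W}, s_0) \neq 1/[(\log 3) s_0 (1-s_0)]$ rather than the specific strict inequality you write; the inequality is sufficient but not necessary, and phrasing it as a $>$ smuggles in the as-yet-unproved conclusion that the residue of $\zeta_A$ at $s_0$ is positive.

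The genuine gap is exactly where you locate it but is deeper than ``determining $E(s)$.'' The proposed functional relation $\zeta_{A,W}(s)\bigl(1 - 2\cdot 3^{-s} E(s)\bigr) = H(s)$ presupposes that the self-affine decomposition $A = S_L(A) \cup A_M \cup S_R(A)$ induces a \emph{closed, single-equation} relation on the distance zeta function, with a holomorphic, suitably periodic correction $E$. Nothing forces this. The scaling property $\zeta_{\lambda A, \lambda\Omega}(s) = \lambda^s \zeta_{A,\Omega}(s)$ of \eqref{3.38} holds for homotheties, and it is precisely what produces the clean denominator $3^s - 2$ in \eqref{3.113} once one works with the self-similar RFD $(\ao)$ in the $\ell^\infty$-metric. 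For the Euclidean neighborhood, the affine maps $S_L, S_R$ contract by $1/3$ horizontally and $1/2$ vertically, so the Euclidean distance $d(\cdot, A)$ does not transform multiplicatively under $S_L^{-1}$ or $S_R^{-1}$; consequently substitution into the integral \eqref{3.5} does not yield $r^s \zeta_{A,W}(s)$ plus an entire remainder, but rather an integral against a genuinely two-variable anisotropic weight that mixes the horizontal and vertical scales. There is no reason the error can be packaged into a single scalar factor $E(s)$ that is holomorphic, $i\mathbf{p}$-periodic, and nonvanishing on the relevant line; it is exactly because no such packaging has been found that Conjecture \ref{Con:3.17} remains open. What you have written is an honest statement of the shape a proof might take, and a correct diagnosis of why none exists, but it is not a proof, and the ``should follow by a direct geometric comparison'' step for the residue inequality is asserted without any mechanism.
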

\vspace{2mm}
\subsubsection{Self-similar sprays}\label{Sec:3.4.10}

The notion of a self-similar (or, more generally, fractal) spray was introduced in \cite{LapPo3}, formalizing a number of examples discussed in \cite{Lap1, Lap3}. It was then used extensively, in particular, in [\hyperlinkcite{LapPe1}{LapPe1--2}, \hyperlinkcite{LapPeWi1}{LapPeWi1--2}] where were established fractal tube formulas extending (and using) those obtained for fractal strings in [\hyperlinkcite{Lap-vF1}{Lap-vF1--4}]. (See also, e.g., \cite{DemDenKoU}, \cite{DemKoOU}, \cite{DenKoOU} and \cite{KoRati}.) The notion of a self-similar spray or RFD introduced in \cite{LapRaZu1} (and in [\hyperlinkcite{LapRaZu4}{LapRaZu4,6}]) is slightly different from the one used in those references; in particular, its greater flexibility is well suited to a variety of examples discussed in {\em loc. cit.} as well as in the present section (e.g., the relative $N$-gasket in \S \ref{Sec:3.4.4}). Since the precise definition is a bit technical, we simply mention that, roughly speaking, a {\em self-similar spray} (or RFD) is an RFD $(\ao)$ in $\mbr^N$ which, up to displacements, is obtained from a single {\em generator} (also called the {\em base RFD}) $(\partial G, G)$),\footnote{For the simplicity of exposition, we consider here the case of a single generator. The case of multiple generators (i.e., finitely many generators) is an immediate consequence of the present case of a single generator; see part ($a$) of Remark \ref{Rem:3.18}.} itself an RFD in $\mbr^N$, via a {\em scaling sequence} $\mcl = (\ell_j)$, which is a possibly unbounded self-similar fractal string (in the sense of \cite[Ch. 3]{Lap-vF4}) with (not necessarily distinct) scaling ratios $r_1, \cdots, r_J$. Here, $J \geq 2$ and $0 < r_1, \cdots, r_J < 1$; furthermore, $\mcl$ is comprised of all the finite products of elements of the {\em ratio list} $\{ r_1 \cdots, r_J \}$.

It follows that $(\ao)$ is the disjoint union of scaled copies of the generating RFD $(\partial \Omega, \Omega)$, scaled by the self-similar string $\mcl$. In order for $\Omega$ to have finite total volume, we assume that $G$ is open, $|G| < \infty, \dim_B (\partial G, G) < N$ and $\sum_{j=1}^J r_j^N < 1$. 

Let $(\ao)$ be an arbitrary self-similar spray (or RFD) in $\mbr^N$. It is shown in \cite[Thm. 4.2.17]{LapRaZu1} that $\zeta_{\ao}$ admits a meromorphic continuation to all of $\mbc$ given for every $s \in \mbc$ by 
\begin{equation}\label{3.124}
\zeta_{\ao} (s) = \frac{\zeta_{\partial G, G} (s)}{1 - \sum_{j=1}^J r_j^s}
\end{equation} 
or equivalently, by the following {\em factorization formula}:\footnote{A priori, the equivalent identities \eqref{3.124} and \eqref{3.125} are valid for $Re(s) > \overline{D}$, with $\overline{D} = D_{\ao} = D(\zao)$, as given by \eqref{3.129}. However, upon meromorphic continuation, it remains valid in any domain $U$ to which $\zeta_{\partial G, G}$ can be meromorphically extended. For example, we can take $U = \mbc$ if $G$ is sufficiently ``nice'' (e.g., monophase or even pluriphase, in the sense of \cite{LapPe2, LapPeWi1}, and in particular, a nontrivial polytope \cite{KoRati}.}
\begin{equation}\label{3.125}
\zeta_{A, \Omega} (s) = \zeta_{\mfs} (s) \cdot \zeta_{(\partial G, G)} (s),
\end{equation} 
where $\zeta_{{\bf s}} (s) := \zeta_\mcl (s)$, the geometric zeta function of the possibly unbounded self-similar string $\mcl$, is called the {\em scaling zeta function} of the fractal spray $(\ao)$ and is given (also for  every $s \in \mbc)$ by
\begin{equation}\label{3.126}
\zeta_{\mfs} (s) = \frac{1}{1-\sum_{j=1}^J r_j^s}.
\end{equation}
It then follows from \eqref{3.124}--\eqref{3.126} that (with $\mcd_{\ao} = \mcd(\zao), \mcd_{(\partial G, G)} = \mcd (\zeta_{(\partial G, G)})$ and $\mcd_{\mfs} = \mcd(\zeta_{\mfs}) = \mcd (\zeta_\mcl)$)
\begin{equation}\label{3.127}
\mcd(\zeta_{\ao}) \subseteq \mcd(\zeta_\mfs) \cup \mcd(\zeta_{\partial G, G}),
\end{equation}
where the containment comes from the fact that there could, in general, be zero-pole cancellations in the expression on the right-hand side of \eqref{3.125} (or, equivalently, of \eqref{3.124}). Note that we also have the following equality:
\begin{equation}\label{3.128}
\mfD (\zeta_{\ao}) = \mfD(\zeta_\mfs) \cup \mfD (\zeta_{\partial G, G}),
\end{equation}
where (as in \eqref{3.104}) $\mfD (f)$ denotes the divisor of the meromorphic function $f$.

We also deduce from \eqref{3.125} that (with $D_{\ao} = D(\zao)$ and $D_{\partial G, G} = D(\zeta_{\partial G, G})$)
\begin{equation}\label{3.129}
D_{\ao} = \max (\sigma_0, D_{\partial G,G})
\end{equation}
or, equivalently,
\begin{equation}\label{3.130}
\overline{\dim}_B (\ao) = \max (\sigma_0, \overline{\dim}_B (\partial G,G)),
\end{equation}
where $\sigma_0 = D(\zeta_\mfs)$ is the {\em similarity dimension} of the self-similar spray $(\ao)$ defined as the unique {\em real} solution of the Moran equation \cite{Mora}:\footnote{Clearly, in light of the definition \eqref{3.131} of $\sigma_0$, we have $\sigma_0 \in (0, N)$; indeed, by assumption, $J \geq 2 > 1$ (hence, $\sigma_0 > 0$) and $\overline{\dim}_B (\partial G, G) < N$ (hence, $\sigma_0 < N$). Observe that as a result, and in light of \eqref{3.129} and \eqref{3.130}, we have that $\overline{\dim}_B (\ao) < N$. Thus the complex dimensions of $(\ao)$ can be defined indifferently via $\zeta_{\ao}$ or $\tzao$.  }
\begin{equation}\label{3.131}
\sum_{j=1}^J r_j^{\sigma_0} =1.
\end{equation}

In addition, in light of \eqref{3.126}, $\mcd_\mfs = \mcd(\zeta_\mfs)$ is given as the {\em set of complex  solutions of the complexified Moran equation}, also called the {\em set of scaling complex dimensions of} $(\ao)$:
\begin{equation}\label{3.132}
\mcd_{\mfs} = \Bigg\{ s \in \mbc: \sum_{j=1}^J r_j^s =1 \Bigg\},
\end{equation}
so that this latter expression for $\mcd_\mfs$ can be substituted in \eqref{3.127}:
\begin{equation}\label{3.133}
\mcd_{\ao} \subseteq \mcd_{\partial G, G} \cup \Bigg\{ s \in \mbc: \sum_{j=1}^J r_j^s = 1 \Bigg\}.
\end{equation}
If there are no zero-pole cancellations in \eqref{3.124} (or, equivalently, in \eqref{3.125}), which (for ``nice'' generators) is the case ``generically'' (in a vague sense), then we have an actual equality in \eqref{3.133}. (See \cite[Thm. 4.2.19]{LapRaZu1}.)

If the generator $G$ (a bounded open subset of $\mbr^N$) is sufficiently nice (e.g., according to the main result of \cite{KoRati} proving a conjecture in [\hyperlinkcite{LapPe1}{LapPe1--2}], if $G$ is a nontrivial polytope, a very frequent situation for the classic self-similar fractals), then it is shown in \cite[\S 5.5.6]{LapRaZu1} that
\begin{equation}\label{3.152}
\mcd_{\partial G, G} \subseteq \{ 0, 1, \cdots, N-1 \}.
\end{equation}
It therefore follows from \eqref{3.128} and \eqref{3.133} that
\begin{equation}\label{3.152.1/2}
\mcd_{\ao} \subseteq \mcd_{\partial G, G} \cup \mcd_{\mfs} \subseteq \{ 0, 1, \cdots N-1 \} \cup \Bigg\{ s \in \mbc: \sum_{j=1}^J r_j^s =1 \Bigg\},
\end{equation}
with equalities in the ``generic'' case. In such a situation (much as in [\hyperlinkcite{LapPe1}{LapPe1--2}), \hyperlinkcite{LapPeWi1}{LapPeWi1--2}]), $\mcd_{\partial G,G} \subseteq \{ 0,1, \cdots, N-1 \}$ and $\mcd_\mfs \subseteq \{ s \in \mbc : \sum_{j=1}^J r_j^s =1 \}$ are respectively called the {\em integer dimensions} and the {\em scaling dimensions} of the self-similar spray $(\ao)$.\\

{\em Sketch of the proof of the factorization formulas} \eqref{3.124}--\eqref{3.125}. It is instructive to provide the proof of formula \eqref{3.124} (and hence, equivalently, of the factorization formula \eqref{3.125}), as given in the proof of \cite[Thms. 4.2.17 and 4.2.19, p. 289]{LapRaZu1}.

First, we note that since the (possibly unbounded) self-similar string $\mcl = (\ell_j)_{j \geq 1}$ has for `{\em scales}' $\{ \ell_j: j \geq 1 \}$, the free monoid with generators $r_1, \cdots, r_J$, $\mcl$ satisfies the following {\em self-similar identity} (much as in \cite[\S 4.4.1]{Lap-vF4}):\footnote{Here, in \eqref{3.153}, the symbol $\sqcup$ denotes the disjoint union of fractal strings.}
\begin{equation}\label{3.153}
\mcl = \mcl_0 \sqcup \bigsqcup_{j=1}^J (r_j \mcl), \text{ where } \mcl_0 := \{ 1 \}.
\end{equation}

Further, a moment's reflection shows that we can deduce from \eqref{3.153} the following {\em self-similar identity} satisfied by the self-similar spray $(\ao)$:
\begin{equation}\label{3.154}
(\ao) = (\partial G, G) \sqcup \bigsqcup_{j=1}^J (r_j (\ao)),
\end{equation}
where much as in \eqref{3.153}, the symbol $\sqcup$ denotes the disjoint union of RFDs in $\mbr^N$. (See \cite[Def. 4.1.43]{LapRaZu1} for the precise definition of such a disjoint union of RFDs in $\mbr^N$, which is itself an RFD in $\mbr^N$.)

Now, by combining the scaling (and the invariance) property of the distance zeta function (see \S \ref{Sec:3.3.3} and \S \ref{Sec:3.3.4}) according to which (for all $s \in \mbc$) 
\begin{equation}\label{3.155}
\zeta_{r_j (\ao)} (s) = r_j^s \, \zeta_{\ao} (s), \text{ for each } j=1, \cdots, J,
\end{equation}
where $r_j (\ao) := (r_j A,r_j \Omega)$, along with the (finite) additivity of the distance zeta function under disjoint unions (see a special case of \cite[Prop. 4.1.17]{LapRaZu1}),\footnote{This finite additivity property can be easily established. The significantly more delicate countable additivity property  (precisely stated and established in \cite[Prop. 4.1.17]{LapRaZu1}) is not needed here but is frequently used throughout \cite{LapRaZu1}.} we obtain the following functional equation,
\begin{equation}\label{3.156}
\zeta_{\ao} (s) = \zeta_{\partial G, G}(s) + \sum_{j=1}^J r_j^s \zeta_{\ao} (s),
\end{equation}
which is equivalent to \eqref{3.124} (and to \eqref{3.125}), after an elementary factorization.

\begin{remark}\label{Rem:3.18}
($a$) ({\em Multiple generators}). In the case of a self-similar spray $(\ao)$ with multiple generators, $G_1, \cdots, G_Q$, we simply add up the results obtained for each of the generators. More specifically,  in light of \eqref{3.124}--\eqref{3.126}  applied to each of the generators, we then have that
\begin{equation}\label{3.157}
\zeta_{\ao} (s) = \frac{\sum_{q=1}^Q \zeta_{\partial G_q, G_q} (s)}{1 - \sum_{j=1}^J r_j^s}.
\end{equation}

($b$) ({\em Fractal sprays}). For (not necessarily self-similar) fractal sprays, several (but not all) of the above results are still valid. More specifically, if $(\ao)$ is a fractal spray RFD in $\mbr^N$, with a single generator $(\partial G, G)$ scaled by the (not necessarily bounded or self-similar) fractal string $\mcl = (\ell_j)_{j \geq 1}$ and such that $|\Omega| < \infty$ and $\overline{\dim}_B (\partial G, G) < N$, then
\begin{equation}\label{3.158}
\overline{D} = \overline{\dim}_B (\ao) = \max (\overline{D}_\mfs, \overline{\dim}_B (\partial G,G)),
\end{equation}
where $D_{\mfs} := D (\zeta_\mcl) = D(\zeta_{\mfs}), \overline{D} := D (\zeta_{\ao})$ and $\overline{D}_{\partial G, G} := D (\zeta_{\partial G, G}) = \overline{\dim}_B (\partial G, G)$. 

Then, for all $s \in \mbc$ such that $Re(s) > \overline{D}$ (and, hence, upon meromorphic continuation and for $G$ nice enough,\footnote{This is the case, for example, if $G$ is a nontrivial polytope \cite{KoRati} or more generally, if $G$ is either monophase or pluriphase (in the sense of \cite{LapPe2, LapPeWi1}).} 
for all  $s \in \mbc$),
\begin{equation}\label{3.159}
\zeta_{\ao} (s) = \zeta_{\mfs} (s) \cdot \zeta_{\partial G, G} (s),
\end{equation}
where $\zeta_{\mfs} := \zeta_\mcl$ denotes the meromorphic continuation of $\zeta_\mcl$:
\begin{equation}\label{3.160}
\zeta_{\mfs} (s) := \zeta_{\mcl} (s) \Bigg(= \sum_{j \geq 1} \ell_j^s, \text{ for } Re(s) > \overline{D}_\mfs \Bigg).
\end{equation}
It then follows from \eqref{3.159} that (with $\mcd_{\mfs} := \mcd (\zeta_\mcl) = \mcd(\zeta_{\mfs}))$
\begin{equation}\label{3.161}
\mcd_{\ao} \subseteq \mcd_{\partial G, G} \cup \mcd_{\mfs},
\end{equation}
with equality instead of an inclusion, unless there are zero-poles cancellations. Furthermore, we always have the following identity between divisors:
\begin{equation}\label{3.162}
\mfD_{\ao} = \mfD_{\partial G, G} \cup \mfD_\mfs.
\end{equation}

Naturally, a comment entirely similar to the one made in part ($a$) of this remark applies here if the fractal spray has multiple (i.e., finitely many) generators instead of a single generator.
\end{remark}

In the next exercise, the interested reader is asked implicitly to use, in particular, the results of the present subsection in order to answer the various questions.

\begin{exercise}\label{Exc:3.19}
($a$) ({\em Sierpinski gasket}). Let $A$ be the classic Sierpinski gasket. Then, calculate $\zeta_A, \zeta_{A, T}$ (where $T$ is the equilateral triangle with side lengths $1$), and deduce from this computation the two sets of complex dimensions $\mcd_A$ and $\mcd_{A,T}$, respectively. Compare your results with those stated in \S \ref{Sec:3.4.1} (or with the $N=2$ case of \S \ref{Sec:3.4.4}).

($b$) ({\em Sierpinski carpet}). Let $A$ be the classic Sierpinski carpet. Then, answer the same questions as in part ($a$) just above, except with the unit triangle $T$ replaced by the open unit square $S := (0,1)^2$, and with \S \ref{Sec:3.4.1} and \S \ref{Sec:3.4.4} replaced by \S \ref{Sec:3.4.2} and \S \ref{Sec:3.4.3}, respectively.

($c$)  Answer analogous questions for the relative (or inhomogeneous) $N$-gasket, as discussed in \S \ref{Sec:3.4.4}, as well as for the $\frac{1}{2}$-square and $\frac{1}{3}$-square fractals, as discussed in   \S \ref{Sec:3.4.5}.
\end{exercise}

We refer the interested reader to [\hyperlinkcite{LapRaZu1}{LapRaZu1--10}] for many other examples of computations of fractal zeta functions and complex dimensions (as well as of fractal tube formulas, in regard to \S \ref{Sec:3.5} just below). These examples include non self-similar fractals or RFDs such as fractal nests (see \cite[\S 3.5 and Exples. 5.5.16 and 5.5.24 in \S 5.5.5]{LapRaZu1}) as well as bounded and unbounded geometric chirps (see  \cite[\S 3.6, \S 4.4.1 and Exple. 5.5.19 in \S 5.5.5]{LapRaZu1}). They also include examples of fractal strings and RFDs, as well as compact sets, with principal complex dimensions having arbitrarily prescribed (finite or infinite) multiplicities (see \cite[Thms. 3.3.6 and 4.2.19]{LapRaZu1}).

\subsection{Fractal tube formulas and Minkowski measurability criteria: Theory and examples}\label{Sec:3.5}

The goal of this subsection is to briefly state (in \S \ref{Sec:3.5.1}) the higher-dimensional analog (obtained in \cite{LapRaZu6} and \cite[\S\S 5.1--5.3]{LapRaZu1}) of the fractal tube formulas obtained for fractal strings in [\hyperlinkcite{Lap-vF2}{Lap-vF2--4}] (see, especially, \cite[Ch. 8]{Lap-vF4}). In fact, the latter tube formulas (briefly discussed in \S \ref{Sec:2.1}) are now but a very special case of their higher-dimensional counterparts.\footnote{However, as is often the case in such situations, the results and techniques developed in \cite[Chs. 5 and 8]{Lap-vF4} in order to prove fractal tube formulas and other explicit formulas for (generalized) fractal strings are key to establishing the higher-dimensional (pointwise and distributional, exact or with error term) fractal tube formulas. Nevertheless, a significant amount of additonal work is required  in order to prove those tube formulas for RFDs (or, in particular, bounded sets) in $\mbr^N$; see, especially, \cite[\S\S 5.1--5.3]{LapRaZu1}.}

We will also state (in \S \ref{Sec:3.5.2}) some of the Minkowski measurability criteria obtained in \cite[\S 5.4.3 and \S 5.4.4]{LapRaZu1}.

In fact, the main goal of this section is to illustrate (in \S \ref{Sec:3.5.3}) the aforementioned results concerning fractal tube formulas and Minkowski measurability criteria (and established, in particular, in \cite{LapRaZu7} and \cite[\S 5.4]{LapRaZu1}), in \S \ref{Sec:3.5.1} and \S \ref{Sec:3.5.2}, respectively) via a variety of examples, many of which have been discussed from other points of view earlier in the paper, especially in \S \ref{Sec:3.4}.\\  

\subsubsection{Fractal tube formulas for RFDs in $\mbr^N$, via distance and tube zeta functions}\label{Sec:3.5.1}

Recall from \S \ref{Sec:2.1}, adapted to the present much more general situation of RFDs in $\mbr^N$, that a fractal tube formula enables us to express the tube function 
\begin{equation}\label{3.163}
\varepsilon \mapsto V(\varepsilon) = V_{\ao} (\varepsilon) := |A_\varepsilon \cap \Omega|_N
\end{equation} 
of a RFD $(\ao)$ in $\mbr^N$ in terms of the complex dimensions of $(\ao)$ and the associated residues of the corresponding fractal zeta function (here, either the distance or the tube zeta function of $(\ao)$).

An important feature of the higher-dimensional theory of complex dimensions is that such fractal tube formulas can be established under great generality for RFDs in $\mbr^N$, via either distance zeta functions or tube zeta functions, as well as via other fractal zeta functions, such as the so-called shell zeta functions and the Mellin zeta functions, which are of interest in their own right but are also used in \cite[Ch. 5]{LapRaZu1} in key steps towards the proof of the fractal tube formulas and of related Minkowski measurability criteria. We will limit ourselves here to the case of distance and tube zeta functions.

Depending on the growth assumptions made about the fractal zeta functions under consideration,\footnote{These polynomial-type growth conditions are referred to (much as in \cite[Chs. 5 and 8]{Lap-vF4}) as {\em languidity conditions} in the case of tube formulas with error term and as {\em strong languidity conditions} in the case of exact tube formulas. These conditions are slightly different for distance and tube zeta functions; see \cite[Ch. 5]{LapRaZu1} for more details.}
we obtain fractal tube formulas with error term or without error term (i.e., {\em exact}), as well as interpreted  either pointwise or distributionally. Therefore, just as in the one-dimensional case of fractal strings, but now in the much more general case of RFDs (and, in particular, of bounded sets) in $\mbr^N$, there is a lot of flexibility for obtaining and applying such tube formulas.

Let us now be a little bit more specific, while avoiding technicalities and cumbersome (although useful) definitions. We state all of the results for RFDs $(\ao)$ in $\mbr^N$ but, as usual, the special case of bounded subsets $A$ of $\mbr^N$ is obtained by simply considering the associated RFD $(A, A_{\delta_1})$ for some fixed $\delta_1 >0$; as before, which $\delta_1 > 0$ is chosen turns out to be unimportant.

We begin by (loosely) stating the fractal tube formula via distance zeta functions. Let $(\ao)$ be a RFD in $\mbr^N$ such that $\overline{D} = \overline{\dim}_B (\ao) < N$. Then, if $(\ao)$ is $d$-languid (which roughly means that the distance zeta function satisfies some mild polynomial growth conditions), we have the following fractal formula, expressed via $\zao$:\footnote{For the simplicity of exposition, we assume at first that all of the (visible) complex dimensions (i.e., the visible poles of $\zao$) are simple; see \eqref{3.167} for the general case when some (or possibly all) of the complex dimensions are multiple.}
\begin{equation}\label{3.164}
V_{\ao} (\varepsilon) = \sum_{\omega \in \mcd_{\ao}} c_\omega \frac{\varepsilon^{N- \omega}}{N- \omega} + R(\varepsilon),
\end{equation}
where $\mcd_{\ao} = \mcd_{\ao} (U)$ is the set of visible complex dimensions of $(\ao)$ (defined as the poles of $\zao$ belonging to $U$) and $U$ is a suitable domain of $\mbc$ (to which $\zao$ can be meromorphically extended to a $d$-languid function). Furthermore, for each $\omega \in \mcd_{\ao}$,
\begin{equation}\label{3.165}
c_\omega := \res (\zeta_A, \omega);
\end{equation}
so that \eqref{3.164} can be rewritten equivalently in the following form (still in the case of simple complex dimensions):
\begin{equation}\label{3.165.2}
V_{\ao} (\varepsilon) = \sum_{\omega \in \mcd_{\ao}} \res (\zao, \omega) \frac{\varepsilon^{N- \omega}}{N- \omega} + R(\varepsilon).
\end{equation}

Moreover, in \eqref{3.164} and \eqref{3.165.2}, $R(\varepsilon)$ is an error tem (of lower order than the sum over the complex dimensions) which can be estimated explicitly.\footnote{Naturally, in the case of a pointwise (respectively, distributional) tube formula, $R(\varepsilon)$ is a pointwise (respectively, distributional) error term.}

If, in addition, $(\ao)$ (i.e., $\zao$) is strongly $d$-languid (a growth condition which requires that $U := \mbc$ and is stronger than $d$-languidity), then $R(\varepsilon) \equiv 0$ in \eqref{3.164} (and equivalently, in \eqref{3.165.2} or in \eqref{3.167}). Hence, we obtain an {\em exact} fractal tube formula in this case (i.e., a tube formula without error term):
\begin{equation}\label{3.166}
V_{\ao} (\varepsilon) = \sum_{\omega \in \mcd_{\ao}} \res (\zeta_{\ao}, \omega) \frac{\varepsilon^{N-\omega}}{N- \omega}.
\end{equation}

Finally, we note that if the visible complex dimensions are not necessarily all simple, then (for example) the tube formula \eqref{3.165.2} takes the following, slightly more complicated, form (except for this modification, all the other statements and hypotheses are identical, otherwise):\footnote{Here, for clarity, we write $\res (f(s), \omega)$ (instead of $\res (f, \omega)$) to denote the residue of a meromorphic function $f=f(s)$ at $s =\omega$.}
\begin{equation}\label{3.167}
V_{\ao} (\varepsilon) = \sum_{\omega \in \mcd_{\ao}} \res \Big( \frac{\varepsilon^{N-s}}{N-s} \zao (s), \omega \Big) + R(\varepsilon), 
\end{equation}
with $R(\varepsilon) \equiv 0$ in the case of an exact formula.

Entirely analogous fractal tube formulas are also obtained via the tube (instead of the distance) zeta function; that is, for $\tzao$ instead of $\zao$. The main difference is that in the counterparts of the fractal tube formulas \eqref{3.164} and \eqref{3.165.2}--\eqref{3.167}, we must replace $\varepsilon^{N-\omega}/(N-\omega)$ by $\varepsilon^{N-\omega}$, while in the counterpart of \eqref{3.167}, we must replace $\frac{\varepsilon^{\omega -s}}{N-s} \zao (s)$ by $\varepsilon^{N-s} \zao (s)$. The only other difference, truly minor this time, is that the $d$-languidity (respectively, strong $d$-languidity) condition must be replaced by the slightly different languidity (respectively, strong languidity) condition, in the case of a (pointwise or distributional) fractal tube formula with (respectively, without) error term.

For example, the (pointwise or distributional) fractal tube formula, expressed via the tube zeta function $\tzao$, takes the following form (in the case of simple complex dimensions):\footnote{Compare with its counterpart, \eqref{3.165.2}, expressed via the distance zeta function $\zao$.}
\begin{equation}\label{3.168}
V_{\ao} (\varepsilon) = \sum_{\omega \in \mcd_{\ao}} \res (\tzao, \omega) \varepsilon^{N-\omega} + R(\varepsilon).
\end{equation}
In addition, if $(\ao)$ (i.e., $\tzao$) is strongly languid (which requires that $U:=\mbc$), then we can let $R(\varepsilon) \equiv 0$ in \eqref{3.168}; that is, we obtain an exact fractal tube formula in this case.

\begin{remark}\label{Rem:3.20}
($a$) For the precise statements and the full proofs of the above $($as well as of other pointwise and distributional$)$ fractal tube formulas, we refer the interested reader to \cite{LapRaZu6} or \cite[\S\S 5.1--5.3]{LapRaZu1}. 

($b$) Much as was discussed in \S \ref{Sec:2.5} (see, especially, Remark \ref{Rem:2.7}), and as is apparent in \eqref{3.164}, \eqref{3.165.2}, \eqref{3.166} and \eqref{3.168}, each (visible) simple complex dimension $\omega \in \mcd_{\ao}$ gives rise to an {\em oscillatory term}, proportional to $\varepsilon^{N-\omega}$, with exponent the {\em fractal complex co-dimension} $N-\omega$ of $\omega$.\footnote{In the case when $\omega$ is a multiple complex dimension, this oscillatory term is modulated by the multiplication by a suitable polynomial in the variable $x:= \log(\varepsilon^{-1})$ and of degree equal to the multiplicity minus one.}
(Naturally, for each individual complex dimension, these oscillations are multiplicatively periodic. In order to obtain the associated additive or ordinary oscillations, it suffices to let $x := \log (\varepsilon^{-1})$.) Furthermore, as was mentioned before in \S\ref{Sec:2} in the case of fractal strings, as $\omega$ varies in $\mcd_{\ao}$, the amplitudes (respectively, frequencies) of these oscillations are governed by the real (respectively, imaginary) parts of the complex dimensions. Observe that in the case of simple poles (as in \eqref{3.164}, \eqref{3.165.2}--\eqref{3.166} and \eqref{3.168} in \S \ref{Sec:3.5.1}), the (typically) infinite sum over the complex dimensions of the RFD $(\ao)$ can be thought of as (pointwise or distributional) natural generalizations of Fourier series or of almost periodic functions (or distributions). (Compare with \cite[\S VII, I]{Schw} and \cite{Boh}, respectively.)
\end{remark}

We close this subsection by placing the above results in a broader context and providing related references in geometric measure theory and convex geometry. Prior to that, let us recall that the first fractal tube formulas (expressed in terms of complex dimensions and geometric zeta functions) were obtained in [\hyperlinkcite{Lap-vF1}{Lap-vF1--4}] (see, especially, \cite[Ch. 8]{Lap-vF4}), in the case of fractal strings. In the case of fractal sprays (roughly, higher-dimensional analogs of fractal strings), fractal tube formulas were obtained in [\hyperlinkcite{LapPe2}{LapPe2--3}] and, in greater generality, in \cite{LapPeWi1}, but without a natural notion of associated fractal zeta function. Finally, the fractal tube formulas described in the present subsection (i.e., \S \ref{Sec:3.5.1}) were obtained in \cite{LapRaZu6} and \cite[Ch. 5]{LapRaZu1}. They are expressed in terms of complex dimensions and natural fractal zeta functions and include the earlier fractal tube formulas for fractal strings from [\hyperlinkcite{Lap-vF1}{Lap-vF1--4}] and for fractal sprays [\hyperlinkcite{LapPe2}{LapPe2--3}, \hyperlinkcite{LapPeWi1}{LapPeWi1}]; see, in particular, Example \ref{Exe:3.21} and Example \ref{Exe:3.31}.

\begin{remark}({\em Tube formulas and curvatures}: {\em Brief history, references, and beyond.})\label{Rem:3.20.1/2}
It is well known that for a (nonempty) compact convex set $A$ of $\mbr^N$, the tube function $V_A (\varepsilon):= |A_\varepsilon|_N$ is a polynomial of degree (at most) $N$ in $\varepsilon$, the coefficients of which can be interpreted geometrically. This is known as Steiner's formula (\cite{Stein}, 1840). More specifically, 
\begin{equation}\label{3.167.1/2}
V_A (\varepsilon) = \sum_{\alpha =0}^N \gamma_\alpha \mu_\alpha (A) t^{N-\alpha},
\end{equation}
where for each $\alpha \in \{ 0,1, \cdots, N\}$, $\gamma_\alpha$ is the volume of the unit ball in $\mbr^{N-\alpha}$ and the (normalized) coefficients $\mu_\alpha (A)$ have either a geometric, combinatorial or algebraic interpretation. (See, e.g., \cite{Schm} and \cite{KlRot}.) For example, $\mu_N (A) = |A|_N$ is the volume of $A$, $\mu_{N-1} (A)$ is the surface area of $A, \cdots, \mu_1 (A)$ is its mean width, while $\mu_0 (A)$ is its Euler characteristic (equal to $1$ in the present case, but equal to any integer in more general situations).

Furthermore, H. Weyl (\cite{Wey3}, 1939) has obtained an analog of Steiner's formula when $A$ is a smooth compact submanifold of Euclidean space $\mbr^N$, thereby interpreting the coefficients $\mu_\alpha (A)$ as suitable curvatures, now called the  {\em Weyl curvatures} of $A$ (see, e.g., \cite{BergGos} and \cite{Gra}).

Moreover, H. Federer (\cite{Fed1}, 1969) has unified and extended both Steiner's formula and Weyl's tube formula by establishing their counterpart for sets of positive reach.\footnote{A (compact) subset $A$ of $\mbr^N$ is said to be of {\em positive reach} if there exists $\eta \in (0, +\infty]$ such that each point of $A_\eta$, the $\eta$-neighborhood of $A$, has a unique metric projection onto $A$. Clearly, a convex set has infinite reach.}
In the process, he introduced (real or signed) measures $\mu_\alpha$ (for $\alpha \in \{0,1, \cdots, N\}$), now called {\em Federer's curvature measures}, whose total mass $\mu_\alpha (A)$ coincide with the normalized coefficients of \eqref{3.167.1/2}. He also obtained a localized version of the tube formula \eqref{3.167.1/2}, expressed in terms of the values of the measures $\mu_\alpha$, for $\alpha \in \{0,1, \cdots, N\}$, at suitable Borel subsets of $\mbr^N$ (or of $\mbr^N \times S^{N-1}$).

Much later, `fractal curvatures' were introduced by S. Winter in \cite{Wi}, M. Z\" ahle [\hyperlinkcite{Za3}{Z\" a3--4}], and S. Winter and M. Z\" ahle \cite{WiZa}, for certain classes of self-similar deterministic or random fractals,  but still for integer values of the index. The author has long conjectured that there should exist suitable notions of complex measures (\cite{Coh}, \cite{Fo}, \cite{Ru1}) or distributions (\cite{Schw}, \cite{GelSh}), called `{\em fractal curvature measures}', associated with each (visible) complex dimension $\omega \in \mcd_A$ and enabling us to interpret geometrically the (normalized) coefficients of the fractal tube formulas (see, for example, \eqref{3.164} and \eqref{3.168}). Also, a {\em local} fractal tube formula, yet to be precisely formulated and to be rigorously established, should extend Federer's local tube formula, and be expressed (in the case of simple poles) in terms of the residues of a suitably defined {\em local fractal zeta function} evaluated at each of the visible complex dimensions. (See \cite[Pb. 6.2.3.8 and App. B]{LapRaZu1}.)\footnote{See also the earlier work in [\hyperlinkcite{LapPe2}{LapPe2--3}] and \cite{LapPeWi1} for the very special case of fractal sprays, as well as \cite{LapPe1} which dealt with the example of the Koch snowflake curve (but without the use of any zeta functions).}

We close this discussion of tube formulas by mentioning several relevant references (beside \cite{Stein}, \cite{Mink}, \cite{Wey3} and \cite{Fed1}), including the books \cite{Fed2}, \cite{KlRot}, \cite{BergGos}, \cite{Gra}, and \cite{Schn}, along with the papers \cite{HugLasWeil}, \cite{KeKom}, \cite{Kom}, [\hyperlinkcite{LapPe1}{LapPe1--3}], \cite{LapPeWi1}, [\hyperlinkcite{LapLu-vF1}{LapLu-vF1--2}], \cite{LapRaZu6}, [\hyperlinkcite{Ol1}{Ol1--2}], \cite{Sta}, \cite{Wi}, \cite{WiZa}, [\hyperlinkcite{Za1}{Za1--4}], as well as \cite[\S 13.1]{Lap-vF4} and \cite[Ch. 5]{LapRaZu1}, and the many relevant references therein.
\end{remark}
 
\subsubsection{Minkowski measurability criteria for RFDs in $\mbr^N$}\label{Sec:3.5.2}

In this subsection, we briefly discuss, in particular, a few of the Minkowski measurability criteria obtained in [\hyperlinkcite{LapRaZu5}{LapRaZu5,7}] or in \cite[\S 5.4]{LapRaZu1}, to which we refer for further information and for closely related necessary or sufficient conditions for Minkowski measurablity. The criteria will be stated for RFDs in $\mbr^N$, but as usual, can also be applied to bounded subsets (which, as we know by now, are special cases of RFDs).

The main criterion  (see \cite[Thm. 5.4.20]{LapRaZu1}) is expressed in terms of the distance zeta function:\\

{\em Let $(\ao)$ be a RFD in $\mbr^N$. Assume that $D:= \dim_B A$ exists and $D < N$. Then, under suitable hypotheses,}\footnote{Namely, $(\ao)$ (i.e., $\zao$) is assumed to be $d$-languid (as in \S \ref{Sec:3.5.1}) for a screen $S$  passing between the critical line $\{ Re(s) =D\}$ and all of the complex dimensions of $(\ao)$ with real part $<D$. (Roughly speaking, a screen $S$ is a suitable curve bounding the region $U$, with $\overline{U} \supseteq \{ Re(s) =D \}$, to which $\zao$ is meromorphically continued, and extending to infinity in the vertical direction. Also, $S$ is required not to contain any pole of $\zao$; see \cite[\S 5.3]{Lap-vF4} or \cite[\S 5.1.1]{LapRaZu1}.)}
{\em the following statements are equivalent}:\\

($i$) {\em The RFD $(\ao)$ is Minkowski measurable.}\\

($ii$) {\em The Minkowski dimension $D$ is the only principal complex dimension of $(\ao)$ $($i.e., the only pole of $\zao$ with real part equal to $D)$ and it is simple.}\footnote{Equivalently, the RFD $(\ao)$ does {\em not} have any {\em nonreal} complex dimension, and $D$ is simple.}\\

The above Minkowski measurability criterion is the higher-dimensional counterpart of Theorem \ref{Thm:2.2} in \S \ref{Sec:2.1}, the Minkowski measurability criterion for fractal strings obtained in \cite[\S 8.3]{Lap-vF4}. Its proof involves several key ingredients including, especially, the Wiener--Pitt Tauberian theorem \cite{PitWie} (stated, e.g., in \cite{Pit}, \cite{Kor} and \cite[Thm. 5.4.1]{LapRaZu1}), a version of the fractal tube formula (with error term) via $\zao$ discussed in \S \ref{Sec:3.5.1} just above, as well as a uniqueness theorem for almost periodic functions (or rather, distributions \cite[\S VI.9.6]{Schw}).

An analog for the tube (rather than the distance) zeta function $\tzao$ of the above Minkowski measurability criterion for RFDs is also obtained in \cite[Thm. 5.4.25]{LapRaZu1}.

Moreover, a necessary (respectively, sufficient) condition for the Minkowski measurability of RFDs is obtained in \cite[Thm. 5.4.15]{LapRaZu1} (respectively, \cite[Thm. 5.4.2]{LapRaZu1}). 

In addition, the case of RFDs for which the underlying scaling law is no longer a power law but is governed instead by a nontrivial gauge function $h = h(t)$ (with $t \in (0,1)$, say)\footnote{The standard case when the underlying scaling law is a power law corresponds to the trivial gauge function $h(t) \equiv 1$.} 
is examined in [\hyperlinkcite{LapRaZu6}{LapRaZu6--7}]; see also \cite{LapRaZu4}) and \cite[\S 5.4.4]{LapRaZu1} where both an $h$-Minkowski measurability criterion and an optimal $h$-fractal tube expansion are obtained, especially for gauge functions of the form $h(t) := (\log t^{-1})^{m-1}$ for some integer $m \geq 2$ (corresponding, e.g., under appropriate hypotheses, to $D = \dim_B (\ao)$ being a multiple pole of order $m$.

In the definition of the (upper, lower)  Minkowski content, relative to a given gauge function $h$, where for some $\varepsilon_0 > 0$, $h: (0, \varepsilon_0) \rightarrow (0, + \infty)$ is a function of slow growth satisfying suitable conditions near $0$ (including the fact that $h(t) \rightarrow + \infty$ as $t \rightarrow 0^+$), one simply replaces $\varepsilon^{N-D}$ by $\varepsilon^{N-D} h (\varepsilon)$. For example, assuming that $D = \dim_B (\ao)$ exists, the {\em upper $h$-Minkowski content} of $(\ao)$ is given by
\begin{equation}\label{3.191.1/2}
\mcm^* ((A, \Omega), h) := \varlimsup_{\varepsilon \rightarrow 0^+} \frac{V_{A, \Omega} (\varepsilon)}{\varepsilon^{N-D} h(\varepsilon)},
\end{equation}
and similarly for the {\em lower $h$-Minkowski content}, $\mcm_* ((A, \Omega), h)$, and the $h$-{\em Minkowski content}, $\mcm (A, \Omega, h)$. (See \cite{HeLap} and \cite[Eq. (4.5.10), p. 352, and \S 6.1.1.2, pp. 544--545]{LapRaZu1}.) Examples of allowable gauge functions considered  in \cite{HeLap}, \cite{LapRaZu1}, \cite{LapRaZu7} and \cite{LapRaZu10} include $\log^k (t^{-1})$, for all $t \in (0,1)$, where $k \in \mbn$ is arbitrary and $\log^k$ denotes the $k$-th iterated logarithm. The reciprocals $1/h (t)$ of allowable gauge functions are also considered in the above definitions and references.

Although we do not wish to go into the technical details here, we mention that in \cite{LapRaZu10}, connections between generalized Minkowski contents with logarithm-type gauge functions, Minkowski measurability criteria, and appropriate Riemann surfaces (see, e.g., \cite{Ebe, Schl}), are explored. These types of generalized Minkowski contents arise naturally in certain geometric situations and in the study of certain dynamical systems.

\begin{remark}({\em Extension to Ahlfors metric spaces.})\label{Rem:3.22.1/2}
It is noteworthy that essentially the entire (higher-dimensional) theory of complex dimensions, fractal zeta functions and fractal tube formulas (including the Minkowski measurability criteria) from \cite{LapRaZu1} and the accompanying series of papers can be extended without change to a large class of metric measure spaces (see, e.g.,  \cite{DaMcCS}) called Ahlfors spaces, doubling spaces or else, spaces of homogeneous type, and of frequent use in harmonic analysis, nonsmooth analysis, fractal geometry and the theory of dynamical systems.\footnote{A metric space $(X, {\bf d})$ of finite diameter and equipped with a positive Borel measure $\mu$ (i.e., a {\em metric measure space}) is said to be an {\em Ahlfors space of Ahlfors dimension} $\alpha$ if $\mu (B_r (x))$ is comparable to $r^\alpha$, where $B_r (x)$ is the closed ball (with respect to the metric ${\bf d}$) of center $x$ and radius $r$, and with $x \in X$ arbitrary. (The implicit constants must, of course, be independent of $x \in X$ and of $r > 0$ sufficiently small.) In this case, $\alpha$ coincides with the Hausdorff and Minkowski dimensions of $X$. \label{Fn:119}}
This extension is carried out in \cite{Wat} and \cite{LapWat} and the corresponding theory is illustrated by several examples of computation of complex dimensions and concrete fractal tube formulas, both in the setting of Ahlfors spaces and a little beyond.
\end{remark}

\subsubsection{Examples}\label{Sec:3.5.3}

In the present subsection, we illustrate by means of a variety of examples some of the results about fractal tube formulas and Minkowski measurability criteria discussed in \S \ref{Sec:3.5.1} and \S \ref{Sec:3.5.2}, respectively. In the process, in order to avoid unnecessary repetitions, we often refer to the corresponding examples in \S \ref{Sec:3.4}.

\begin{example}({\em Fractal strings}).\label{Exe:3.21}
We briefly explain here how to recover the fractal tube formulas for fractal strings from \cite{Lap-vF4}) discussed in \S \ref{Sec:2.1} above.

Let $\mcl = (\ell_j)_{j \geq 1}$ be a (bounded) fractal string and let $\Omega \subseteq \mbr$ be any geometric realization of $\mcl$ as an open set with finite length (i.e., $|\Omega|_1 < \infty$). Furthermore, let $(\partial \Omega, \Omega)$ be the associated RFD in $\mbr$. Then, as we have seen in \S \ref{Sec:3.2.2}, the {\em distance zeta function} $\zeta_{\partial \Omega, \Omega}$ of the RFD $(\partial \Omega, \Omega)$ and the {\em geometric zeta function} $\zeta_\mcl$ of the fractal string $\mcl$ are connected via the following functional equation:
\begin{equation}\label{3.169}
\zeta_{\partial \Omega, \Omega} (s) = \frac{2^{1-s}}{s} \zeta_\mcl (s),
\end{equation}
for all $s \in U$, where $U$ is any domain of $\mbc$ to which $\zeta_\mcl$ (or, equivalently, $\zeta_{\partial \Omega, \Omega}$) can be meromorphically continued. As a result, provided $0 \in U$ (and assuming for simplicity that $\zeta_\mcl (0) \neq 0$), then 
\begin{equation}\label{3.170}
\mcd_{\partial \Omega, \Omega} = \mcd_\mcl \cup \{ 0 \},
\end{equation}
where the union is taken between multisets, as usual. More specifically, if $0$ is a pole of $\zeta_\mcl$ of multiplicity $m \geq 0$ (the case when $m=0$ corresponding to $0$ not being a pole of $\zeta_\mcl$), then it is a pole of $\zeta_{\partial \Omega, \Omega}$ of multiplicity $m+1$.

The identity \eqref{3.170} explains why the expressions for the fractal tube formulas in the case of simple poles to be discussed a little further on looked somewhat awkward in \cite[\S 8.1]{Lap-vF4} but are now significantly simplified, both conceptually and concretely. In particular, as is now clear, even in the present case when $N=1$, the distance zeta function $\zeta_{\ptoo}$ is the proper theoretical tool to define and understand the complex dimensions of fractal strings as well as to formulate the associated fractal tube formulas.\footnote{This is so even though the geometric zeta function has been (and will continue to be) a very useful tool as well. We note that an entirely analogous comment can be made about (not necessarily self-similar) fractal sprays; recall from \S \ref{Sec:3.4.10} that in the latter case, $\mcl$ may be unbounded and $\zeta_\mcl$ is then called the {\em scaling zeta function} of the fractal spray (see \cite{LapPe2, LapPeWi1} and \cite[\S 13.1]{Lap-vF4}).}

In light of \eqref{3.169}--\eqref{3.170} and since $N=1$, the fractal tube formula \eqref{3.167} yields its counterpart for fractal strings (for every $\delta_1 \geq \ell_1/2$):\footnote{Note that in light of \eqref{2.8} and \eqref{3.163}, we have that $V_\mcl (\varepsilon) = V_{\ptoo} (\varepsilon)$. Also, we let $\mcd_\mcl = \mcd_{\mcl} (U)$ and $\mcd_{\ptoo} = \mcd_{\ptoo} (U)$. }
\begin{align}\label{3.171}
V_\mcl (\varepsilon) &= V_{\ptoo} (\varepsilon)\\
&= \sum_{\omega \in \mcd_{\ptoo}} \res \Big( \frac{\varepsilon^{1-s}}{1-s} \zeta_{\ptoo} (s), \omega \Big) + R(\varepsilon)\notag \\
&= \sum_{\omega \in \mcd_\mcl \cup \{ 0 \}} \res \Big( \frac{(2 \varepsilon)^{1-s}}{s(1-s)} \zeta_\mcl (s), \omega \Big) + R (\varepsilon) \notag \\
&= \sum_{\omega \in \mcd_\mcl} \res \Big( \frac{(2 \varepsilon)^{1-s}}{s (1-s)} \zeta_\mcl (s), \omega \Big) + \{ 2 \varepsilon \zeta_\mcl (0) \}_{0 \in U \backslash \mcd_\mcl} + R(\varepsilon), \notag
\end{align}
when $\mcl$ (i.e., $\zeta_\mcl$) is languid, and with $R (\varepsilon) \equiv 0$ when $\mcl$ (i.e., $\zeta_\mcl$) is strongly languid. Here, by definition, the term $\{ 2 \varepsilon \zeta_\mcl (0) \}_{0 \in U \backslash \mcd_\mcl}$ between braces in the last equality of \eqref{3.171} is included only if $0 \in U \backslash \mcd_\mcl$.

We note that the expression obtained in \eqref{3.171} is in complete agreement with the (pointwise or distributional) fractal tube formulas obtained for fractal strings in \cite[Thm. 8.1 or Thm. 8.7]{Lap-vF4}).\footnote{In \S \ref{Sec:2.1}, for the simplicity of exposition, we gave a less precise statement of the formula; compare with formula \eqref{2.10.1/2}.}

Since, in view of \eqref{3.169}, 
\begin{equation}\label{3.172}
\res (\zeta_{\ptoo}, \omega) = \frac{2^{1-\omega}}{\omega} \res (\zeta_\mcl, \omega)
\end{equation}
for every simple complex dimension $\omega \in U \backslash \{ 0 \}$, we deduce from \eqref{3.171} the following (pointwise or distributional) tube formula in the special case  when all of the visible complex dimensions are simple (and $\zeta_\mcl$ is languid):
\begin{align}\label{3.173}
V_\mcl (\varepsilon) &= V_{\ptoo} (\varepsilon) \notag \\
&= \sum_{\omega \in \mcd_\mcl (W) \backslash \{ 0\}} \res (\zeta_\mcl, \omega) \frac{(2\varepsilon)^{1-\omega}}{\omega(1-\omega)} \notag \\
&+ \{ 2 \varepsilon (1 - \log (2 \varepsilon)) \res (\zeta_\mcl, 0) + 2 \varepsilon \zeta_\mcl (0) \}_{0 \in U} + R(\varepsilon),
\end{align}
where, by definition, the term between braces in the last equality of \eqref{3.173} is included only if $0 \in U$. Also, as before, if $\mcl$ (i.e., $\zeta_\mcl$) is strongly languid (which implies that $U = \mbc$), then we obtain an exact tube formula; that is, we can let $R(\varepsilon) \equiv 0$ in \eqref{3.173}.

We point out that \eqref{3.173} is in agreement with the results stated in \cite[Cor. 8.3 or Cor. 8.10]{Lap-vF4} in the distributional or pointwise case, respectively.\footnote{Again, we note that in \S \ref{Sec:2.1}, for the simplicity of exposition, the corresponding formula stated in \eqref{2.10} was not as precise as in \eqref{3.173}.}
\end{example}

\begin{example}({\em The Sierpinski gasket}).\label{Exe:3.22}
We briefly revisit the first example from \S \ref{Sec:3.4}, studied in \S \ref{Sec:3.4.1}, in which $A \subseteq \mbr^2$ is the classic Sierpinski gasket and, by \eqref{3.41}, all of the complex dimensions $s =0$ and $s_k := \log_2 3 + i (2 \pi/ \log 2) k$ (for any $k \in \mbz$) are simple with associated residues given by \eqref{3.42} and \eqref{3.43}, respectively. Then, in light of \eqref{3.166} and since $N=2$ here, we obtain the following pointwise, exact fractal tube formula (the hypotheses of which are shown to be satisfied in \cite[Expl. 5.5.12]{LapRaZu1}), valid for all $\varepsilon \in (0, 1/2 \sqrt{3})$:
\begin{align}\label{3.174}
V_A (\varepsilon) &= \sum_{k \in \mbz} \res (\zeta_A, s_k) \frac{\varepsilon^{2- s_k}}{2-s_k} + \res (\zeta_A, 0) \frac{\varepsilon^{2-0}}{2} \\
&= \sum_{k \in \mbz} \frac{6(\sqrt{3})^{1-s_k}}{4^{s_k} (\log 2) s_k (s_k -1)} \frac{\varepsilon^{2-s_k}}{2-s_k} + (3 \sqrt{3} + 2 \pi) \frac{\varepsilon^2}{2}. \notag
\end{align}

Since $s_k = D + ik{\bf p}$, for all $k \in \mbz$, where $D = \log_2 3$ and ${\bf p} = (2 \pi / \log 2)$ are, respectively, the Minkowski dimension and the oscillatory period of $A$, we can clearly rewrite \eqref{3.174} in the following form:
\begin{equation}\label{3.175}
V_A (\varepsilon) = \varepsilon^{2-D} G (\log_2 \varepsilon^{-1}) + \Big( \frac{3 \sqrt{3} + 2 \pi}{2} \Big) \varepsilon^2,
\end{equation}
where $G$ is a continuous, nonconstant $1$-periodic function, which is bounded away from zero and infinity, and is given by the following absolutely convergent (and hence, pointwise convergent) Fourier series expansion, for all $x \in \mbr$:
\begin{equation}\label{3.176}
G(x) := \frac{6 \sqrt{3}}{\log 2} \sum_{k \in \mbz} \frac{(4 \sqrt{3})^{-s_k}}{(2 -s_k) (s_k -1) s_k} e^{i2\pi kx}.
\end{equation}
\end{example}

It is apparent from \eqref{3.176} that  since $G$ is nonconstant, the function $\varepsilon^{-(2-D)} V_A (\varepsilon)$ is oscillatory, and hence does not have a limit as $\varepsilon \rightarrow 0^+$. Therefore, the Sierpinski gasket is {\em not} Minkowski measurable, in agreement with the Minkowski measurability criterion  stated in \S \ref{Sec:3.5.2}, the hypotheses of which are easily verified since $D = \log_2 3$ is simple and $A$ has infinitely many nonreal principal complex dimensions (here, $s_k = D + ik{\bf p}$, with $k \in \mbz \backslash \{ 0\}$). 

\begin{example}({\em The Sierpinski carpet}).\label{Exe:3.23}
Let $A \subseteq \mbr^2$ be the classic Sierpinski carpet studied in \S \ref{Sec:3.4.2}. Then, in light of \eqref{3.44}--\eqref{3.43.2}, \eqref{3.166} yields the following pointwise, exact fractal tube formula:
\begin{equation}\label{3.177}
V_A (\varepsilon) = \sum_{k \in \mbz} \frac{2^{-s_k}}{(\log 3) (2 -s_k) (s_k -1) s_k} \varepsilon^{2-s_k} + \frac{16}{5} \varepsilon^2 + \frac{1}{2} \Big(2 \pi + \frac{8}{7} \Big) \varepsilon^2,
\end{equation}
where $s_k := D + i k {\bf p}$, for each $k \in \mbz$; here, $D := \log_3 8$ and ${\bf p} := (2 \pi/ \log 3)$ are, respectively, the Minkowski dimension and the oscillatory period of $A$.
\end{example}

\begin{exercise}\label{Exc:3.24}
($i$) Much as in \eqref{3.175} and \eqref{3.176} above, rewrite the leading term (i.e., the sum over all $k \in \mbz$) in \eqref{3.177} in the form $\varepsilon^{2-D} G (\log_3 \varepsilon^{-1}),$ where $G$ is a continuous, nonconstant $1$-periodic function, which is bounded away from zero and infinity.

($ii$) Deduce from part ($i$) via a direct computation that the Sierpinski carpet is {\em not} Minkowski measurable, also in agreement with the Minkowski measurability criterion stated in \S \ref{Sec:3.5.2} (and of which you should verify the hypotheses).

($iii$) Finally, by means of a direct computation (based, e.g., on part ($i$)), show that the Sierpinski carpet is Minkowski nondegenerate and calculate its average Minkowski content $\widetilde{\mcm}$. Furthermore, verify the latter results by using \eqref{3.34.1/2}, connecting $\widetilde{\mcm}$ and the residue of $\zeta_A (s)$ at $s = D$ in the non-Minkowski measurable case (and of which the hypotheses are satisfied, in light of part ($i$)).
\end{exercise}

\begin{example}({\em The 3-$d$ carpet}).\label{Exe:3.25}
Let $A \subseteq \mbr^2$ be the 3-$d$ carpet studied in \S \ref{Sec:3.4.3}. Then, in light of \eqref{3.46}--\eqref{3.50}, \eqref{3.166} yields the following pointwise, exact fractal tube formula (valid for all $\varepsilon \in (0, 1/2)$):
\begin{equation}\label{3.178}
V_A (\varepsilon) = \frac{24}{13 \log 3} \varepsilon^{3-D} G (\log_3 \varepsilon^{-1}) + \Big(6 - \frac{6}{17} \Big) \varepsilon + \Big(3 \pi + \frac{12}{23}\Big) \varepsilon^2 + \Big( \frac{4 \pi}{3} - \frac{8}{25} \Big) \varepsilon^3,
\end{equation}
where $D = \log_3 26$ is the Minkowski dimension of $A$ and $G$ is a continuous, nonconstant $1$-periodic function which is bounded away from zero and infinity, and is given by the following pointwise (absolutely convergent and hence) convergent Fourier series expansion:
\begin{equation}\label{3.179}
G(x) = \frac{24}{3 \log 3} \sum_{k \in \mbz} \frac{2^{-s_k}}{(3- s_k)(s_k -1)(s_k-2)s_k} e^{i2\pi kx}, \text{ for all } x \in \mbr,
\end{equation}
with $(s_k := D+ik{\bf p})_{k \in \mbz}$ denoting the sequence of principal complex dimensions of $A$ and ${\bf p} := 2 \pi/\log 3$ denoting the oscillatory period of $A$.
\end{example}

\begin{exercise}\label{Exc:3.26}
($a$) (3-$d$ {\em carpet, revisited}). For the Sierpinski 3-$d$ carpet $A$ in Example \ref{Exe:3.25}, answer the analog of questions ($ii$) and ($iii$) of Exercise \ref{Exc:3.24}.

($b$) ({\em 3-gasket RFD}). For the relative (or inhomogeneous) Sierpinski 3-gasket $(A_3, \Omega_3)$ studied in \S  \ref{Sec:3.4.4} (specialized to $N=3)$, use \eqref{3.166}, along with the $N=3$ case of \eqref{3.52} and \eqref{3.59} (with $g_3 = g_3 (s)$ given by formula \eqref{3.23.1/2} of Exercise \ref{Exc:3.8.1/2}), in order to obtain a pointwise, exact fractal tube formula for $(A_3, \Omega_3)$. (Recall from the discussion following \eqref{3.62} that $D = \log_2 4 =2$ is a complex dimension of $(A_3, \Omega_3)$ of multiplicity {\em two}, whereas the other complex dimensions are simple.)

Show via a direct computation that the RFD $(A_3, \Omega_3)$ is Minkowski degenerate and therefore not Minkowski measurable but that with respect to the gauge function $h(t) := \log (t^{-1})$, for all $t \in (0,1), (A_3, \Omega_3)$ is $h$-Minkowski measurable (and hence also $h$-Minkowski nondegenerate), as was stated towards the end of \S \ref{Sec:3.4.4}. In order to establish the latter facts, you may use the results from \cite{LapRaZu6} and \cite[\S 5.4.4]{LapRaZu1} briefly discussed towards the end of \S \ref{Sec:3.5.2}.
\end{exercise}

\begin{example}({\em The $\frac{1}{2}$-square and $\frac{1}{3}$-square fractals}).\label{Exe:3.27}
We revisit and complete here part ($a$) (the $\frac{1}{2}$-square fractal) and part ($b$) (the $\frac{1}{3}$-square fractal) of \S \ref{Sec:3.4.5}.\\

{\bf ({\em a})} ({\em The $\frac{1}{2}$-square fractal}). Recall from part ($a$) of \S \ref{Sec:3.4.5} that all of the fractal complex dimensions of the $\frac{1}{2}$-square fractal $A$ are simple, except for $s=1$, which is equal to $\dim_B A$. As a result, it follows from \eqref{3.166} in light of \eqref{3.65}, \eqref{3.69} and \eqref{3.102.1/2} that the following pointwise, exact fractal tube formula holds (for all $\varepsilon \in (0, 1/2)$):
\begin{equation}\label{3.180}
V_A (\varepsilon) = \frac{1}{4 \log 2} \varepsilon \log \varepsilon^{-1} + \varepsilon G(\log_2 (4 \varepsilon)^{-1}) + \frac{1 + 2 \pi}{2} \varepsilon^2,
\end{equation}
where $G$ is a nonconstant, continuous $1$-periodic function which is bounded away from zero and infinity and is given by the following convergent (because absolutely convergent) Fourier expansion:
\begin{equation}\label{3.181}
G(x) := \frac{29 \log 2-4}{8 \log 2} + \frac{1}{4} \sum_{k \in \mbz \backslash \{ 0\}} \frac{e^{2 \pi ikx}}{(2 -s_k)(s_k -1) s_k}, \text{ for all } x \in \mbr, 
\end{equation} 
with $s_k := ik{\bf p}$, for all $k \in \mbz \backslash \{ 0\}$, and ${\bf p} := (2 \pi/\log 2)$, the oscillatory period of $A$.

Let us briefly explain how to obtain \eqref{3.180} and \eqref{3.181}. In light of \eqref{3.167} (applied with $R(\varepsilon) \equiv 0$ because we are in the strongly $d$-languid case),
\begin{align}\label{3.182}
V_A (\varepsilon) &= \sum_{\omega \in \mcd (\zeta_A)} \res \Big( \frac{\varepsilon^{2-s}}{2-s} \zeta_A (s), \omega \Big) \\
&= \res \Big( \frac{\varepsilon^{2-s}}{2-s} \zeta_A (s), 1 \Big) + \sum_{\omega \in \mcd (\zeta_A) \backslash \{ 1\}} \res (\zeta_A (s), \omega) \frac{\varepsilon^{2 - \omega}}{2 -\omega}. \notag
\end{align}
In order to calculate the above residue at $s=1$ in the last equality of \eqref{3.182}, one computes the Laurent series expansion of $\zeta_A$ around $s=1$ (which is a double pole of $\zeta_A$), as follows:
\begin{equation}\label{3.183}
\zeta_A (s) = \frac{d_{-2}}{(s-1)^2} + \frac{d_{-1}}{s-1} + O(1) \quad \text{ as  } s \rightarrow 1,
\end{equation}
with
\begin{equation}\label{3.184}
d_{-2} := \frac{1}{4 \log 2} \quad \text{and} \quad d_{-1} := \frac{29 \log 2-2}{8 \log 2}. 
\end{equation}
One then deduces from combining \eqref{3.183} and \eqref{3.184} that
\begin{align}\label{3.185}
\res \Big( \frac{\varepsilon^{2-s}}{2-s} \zeta_A (s), 1 \Big) &= \varepsilon (d_{-1} -d_{-2} + d_{-2} \log \varepsilon^{-1})\\
&= \frac{1}{4 \log 2} \varepsilon \log \varepsilon^{-1} + \frac{29 \log 2 - 4}{8 \log 2}. \notag
\end{align}
Finally, in light of \eqref{3.185} and the expression of $\res (\zeta_A (s), s_k)$ for any $k \in \mbz \backslash \{ 0 \} $ given in \eqref{3.102.1/2}, as well as of the value of $\res (\zeta_A (s), 0)$ obtained in the just mentioned equation, we deduce the exact tube formula \eqref{3.180} from \eqref{3.182}.

Next, recalling from \eqref{3.67} that $\dim_B A =1$, we can easily deduce from the fractal tube formula \eqref{3.180} that $A$ is Minkowski degenerate with Minkowski content $\mcm = + \infty$. In particular, $A$ is not Minkowski measurable, in the usual sense. Moreover, it follows from the $h$-Minkowski measurability criterion discussed at the end of \S \ref{Sec:3.5.2}, that for the choice of the gauge function $h(t) := \log t^{-1}$ (for all $t \in (0,1)$), $A$ is $h$-Minkowski measurable with $h$-Minkowski content $\mcm (A, h)$ given by
\begin{equation}\label{3.186}
\mcm (A, h) = \frac{1}{4 \log 2}.
\end{equation}
This concludes the discussion of the $\frac{1}{2}$-square fractal, for now.\\

{\bf ({\em b})} ({\em The $\frac{1}{3}$-square fractal}). Let us next briefly consider the $\frac{1}{3}$-square fractal $A \subseteq \mbr^2$ studied in part ($b$) of \S \ref{Sec:3.4.5}. Recall from that discussion (see, especially, \eqref{3.74}, \eqref{3.76} and \eqref{3.110.1/2}) that $\dim_B A =1, \dim_{PC} A = \{1\}$, and all of the complex dimensions of $A$ are simple with 
\begin{equation}\label{3.187}
F \cup \{0,1\} \subseteq \mcd_A \subseteq \{ 0,1 \} \cup \{ s_k := \log_3 2 + i{\bf p}k :k \in \mbz \},
\end{equation}
where $F$ is a nonempty finite subset of $\log_3 2 + i{\bf p} \mbz$ containing $\log_3 2$ as well as several nonreal complex dimensions (and conjectured to be infinite). Here and henceforth, ${\bf p} := 2 \pi/ \log 3$.

Then, in light of the exact fractal tube formula (in the case of simple poles) stated in \eqref{3.166}, combined with \eqref{3.187} and \eqref{3.76}--\eqref{3.110.1/2}, we obtain the following pointwise, exact fractal tube formula for $A$ (valid for all $\varepsilon \in (0, 1/\sqrt{2})$):
\begin{equation}\label{3.188}
V_A (\varepsilon) = 16\varepsilon + \varepsilon^{2-\log_3 2} G (\log_3 (3 \varepsilon)^{-1}) + \frac{12 + \pi}{2} \varepsilon^2,
\end{equation}
where $G$ is a nonconstant, continuous $1$-periodic function which is bounded away from zero infinity and is given by the following absolutely convergent (and hence, pointwise convergent) Fourier series, for all $x \in \mbr$:
\begin{equation}\label{3.189}
G(x) := \frac{1}{\log 3} \sum_{k \in \mbz} \frac{e^{2 \pi ikx}}{(2-s_k) s_k} \Big( \frac{6}{s_k -1} + \Psi (s_k) \Big),
\end{equation}
where $\Psi = \Psi (s)$ is the entire function occurring in \eqref{3.72} and \eqref{3.78}.

Finally, it follows from \eqref{3.188} that $A$ is Minkowski measurable with Minkowski content (in the usual sense) given by
$\mcm =16$.
This concludes for now our discussion of the $\frac{1}{3}$-square fractal.
\end{example}

\begin{exercise}\label{Exc:3.28}
($a$) ({\em Cantor grill}). Use the results of \S \ref{Sec:3.4.7}  to calculate the residues of $\zeta_A$ at the complex dimensions, and then to obtain a (pointwise, exact) fractal tube formula for the Cantor grill $A = C \times [0,1]$, where $C$ is the ternary Cantor set.

($b$) ({\em Cantor dust}). Answer an analogous question for the Cantor dust $A = C \times C$ studied in \S \ref{Sec:3.4.8}. Then, extend your result to $A = C^d$, where $C^d$ is the Cartesian product of $d$ copies of the Cantor set $C$, with $d \geq 2$.

[{\em Caution}: Question ($b$) is more difficult than question ($a$).]
\end{exercise}

\begin{example}({\em The Cantor graph RFD}).\label{Exe:3.29}
Let $(\ao)$ denote the Cantor graph RFD (in $\mbr^2)$ described and studied in \S \ref{Sec:3.4.9}. Recall that the compact set $A \subseteq \mbr^2$ is the graph of the Cantor function (also called the devil's staircase). Then, in light of \eqref{3.115}--\eqref{3.116}, $D_{\ao} = D_A =1$ and 
\begin{equation}\label{3.191}
\mcd_{\ao} = \{ 0,1 \} \cup \{ s_k := \log_3 2 + ik{\bf p} : k \in \mbz \},
\end{equation}
with ${\bf p} := 2 \pi/ \log 3$ and each complex dimension being simple. We therefore deduce from \eqref{3.166} the following pointwise, exact fractal tube formula, valid for every $\varepsilon \in (0,1)$ (where we use the values of the residues of $\zeta_A$ given in \eqref{3.117}--\eqref{3.118}):
\begin{align}\label{3.192}
V_{\ao} (\varepsilon) &= 2 \varepsilon + \varepsilon^{2 - \log_3 2} G (\log_3 \varepsilon^{-1}) + \varepsilon^2 \notag \\
&= 2 \varepsilon^{2 - D_{\ao}} + \varepsilon^{2-D_{CS}} G (\log_3 \varepsilon^{-1}),
\end{align}
where (as above) $D_{\ao} = D_A = 1$, the dimension of the Cantor graph, and $D_{CS} = D_C = \log_3 2$, the dimension of the Cantor set (or of the Cantor string). Furthermore, in \eqref{3.192}, $G$ is a nonconstant, continuous $1$-periodic function which is bounded away from zero and infinity, and is given by the following absolutely convergent (and hence, pointwise convergent) Fourier series expansion, for all $x \in \mbr$:
\begin{equation}\label{3.193}
G (x) := \frac{1}{\log 3} \sum_{k \in \mbz} \frac{e^{2 \pi i kx}}{(2 -s_k) (s_k -1) s_k}.
\end{equation}

Finally, it easily follows from \eqref{3.192}--\eqref{3.193} that $(\ao)$ is Minkowski measurable, with Minkowski content given by
\begin{equation}\label{3.194}
\mcm_{\ao} = \frac{\res (\zao,1)}{2-1} = 2,
\end{equation}
where we have used \eqref{3.118} in the second equality. Exactly the same property and identity as in \eqref{3.194} holds for the Cantor graph $A$ instead of for $(\ao)$, as the interested reader can verify; in particular, $\mcm_A = 2$. This concludes our discussion of the Cantor graph RFD for now. We will return  to this example (and to the associated Cantor graph) in \S \ref{Sec:3.6}, when discussing the notion of fractality; see the text surrounding \eqref{3.212}--\eqref{3.218}.
\end{example}

\begin{exercise}\label{Exc:3.30}
Directly calculate the length of the Cantor graph (i.e., of the devil's staircase) $A$ and compare your result with the value of the Minkowski content of $A$ given above. Furthermore, give a heuristic, geometric argument that enables you to guess the length of $A$ without any computation.
\end{exercise}

\begin{example}({\em Self-similar sprays}).\label{Exe:3.31}
We briefly discuss the important class of self-similar sprays, studied in \S \ref{Sec:3.4.10} above. Since this discussion could be quite lengthy, otherwise, we refer to \cite[\S 5.5.6]{LapRaZu1} for the details.

Let $(\ao)$ be a self-similar spray with scaling ratios $r_1, \cdots, r_J$ (with $J \geq 2$) and (for simplicity, but without loss of generality) with a simple generator $G$ (or rather, generating or base RFD $(\partial G, G)$), as in \S \ref{Sec:3.4.10}. Also as in \S \ref{Sec:3.4.10}, we assume that $G$ is a (nonempty) bounded open subset of $\mbr^N$, with $D_{\partial G,G} = \dim_B (\partial G,G) < N$, and that $\sum_{j=1}^J r_j^N < \infty$, so that the fractal spray has finite total volume.

Recall that $\zao$ is then given by the key factorization formula \eqref{3.124} or \eqref{3.125}, expressing $\zao$ in terms of the distance zeta function $\zeta_{\partial G,G}$ of the generating RFD and of the scaling zeta function $\zeta_\mfs$ of the spray. Namely,
\begin{equation}\label{3.195}
\zao (s) = \zeta_\mfs (s) \cdot \zeta_{\partial G,G} (s) = \frac{\zeta_{\partial G,G} (s)}{1 - \sum_{j=1}^J r_j^s}. 
\end{equation}

Furthermore, in light of \eqref{3.129}, \eqref{3.132}--\eqref{3.133} and \eqref{3.152.1/2},
\begin{equation}\label{3.196}
D_{\ao} = \max (\sigma_0, D_{\partial G,G}),
\end{equation} 
where $\sigma_0 = D(\zeta_\mfs) \in (0,N)$ (the similarity dimension of the spray) is the unique real solution of the Moran equation \cite{Mora}; i.e., $\sigma_0 \in \mbr$ and $\sum_{j=1}^J r_j^{\sigma_0} =1$. Moreover, in light of \eqref{3.132}--\eqref{3.133} and \eqref{3.152.1/2}, if we assume that the generator $G$ is sufficiently ``nice'' [e.g., $G$ is monophase, in the sense of [\hyperlinkcite{LapPe1}{LapPe1--2}, \hyperlinkcite{LapPeWi1}{LapPeWi1--2}]\footnote{Roughly, this means that $V_{\partial G,G} (\varepsilon)$ is polynomial for all $\varepsilon$ sufficiently small.} 
and, in particular, if $\partial G$ is a nontrivial polytope (by a result in \cite{KoRati})], we have that
\begin{equation}\label{3.197}
\mcd_{\ao} = \mcd_{\partial G,G} \cup \mcd_\mfs \subseteq \{0,1, \cdots, N-1\} \cup \Bigg\{s \in \mbc : \sum_{j=1}^J r_j^s =1 \Bigg\},
\end{equation}
with frequent or ``typical'' equality in \eqref{3.197} and with $\mcd_{\ao}$ always containing $D_{\ao}$ and (by the results in \cite[Thm. 3]{Lap-vF4}) also containing infinitely many (scaling) complex dimensions with real part $\sigma_0$.

We then deduce from \eqref{3.167} the following exact, pointwise fractal tube formula:
\begin{equation}\label{3.198}
V_{\ao} (\varepsilon) = \sum_{\omega \in \mcd_\mfs \cup \{ 0,1, \cdots, N-1 \} } \res \Bigg( \frac{\varepsilon^{N-s} (\sum_{\alpha =0}^N \kappa_\alpha \frac{g^{s-\alpha}}{s-\alpha})}{(N-s) (1- \sum_{j=1}^J r_j^s)}, \omega \Bigg).
\end{equation}
Here, $g$ is the inner radius of the generator $G$ and the coefficients $\kappa_\alpha$ (some of which could vanish) are the coefficients of the polynomial expansion of $V_{\partial G,G} (\varepsilon)$.\footnote{More specifically,
\begin{equation}\label{3.199}
V_{\partial G,G} (\varepsilon) = \sum_{\alpha =0}^{N-1} \kappa_\alpha t^{N-\alpha}, \quad \text{for } 0 < \varepsilon <g.
\end{equation}}  

In the important special case when all of the scaling complex dimensions are simple and when $\sigma_0$ is not an integer, the fractal tube formula \eqref{3.198} takes the following simpler form:
\begin{equation}\label{3.199.5}
V_{\ao} (\varepsilon) = \sum_{\omega \in \mcd_\mfs \cup \{0,1, \cdots, N-1 \}} d_\omega \varepsilon^{N-\omega},
\end{equation} 
where
\begin{equation}\label{3.200}
d_\omega = \res (\zeta_\mfs, \omega) \Bigg( \sum_{\alpha =0}^N \frac{\kappa_\alpha g^{\omega - \alpha}}{\omega - \alpha} \Bigg), \quad \text{if } \omega \in \mcd_\mfs,
\end{equation}
and
\begin{equation}\label{3.201}
d_\omega = \zeta_\mfs (\omega) \kappa_\omega, \quad \text{if } \omega \in \{ 0,1, \cdots, N-1\}.
\end{equation}
Therefore,
\begin{equation}\label{3.202}
V_{\ao} (\varepsilon) = \sum_{\omega \in \mcd_\mfs} \res (\zeta_{\mfs}, \omega) \Bigg( \sum_{\alpha =0}^N \frac{\kappa_\alpha g^{\omega - \alpha}}{\omega - \alpha} \Bigg) + \sum_{\alpha = 0}^{N-1} \zeta_\mfs (\alpha) \kappa_\alpha \varepsilon^{N - \alpha.}
\end{equation}

We note that if the generator $G$ is pluriphase\footnote{That is, roughly speaking, if $V_{\partial G,G} (\varepsilon)$ is a piecewise polynomial function; see \cite{LapPe2, LapPeWi1}.}
instead of monophase, then we can easily extend the above results and at the same time recover (as well as significantly extend) the results of \cite{LapPe2} and, especially, of \cite{LapPeWi1}. Also, even if $G$ is not pluriphase (and thus certainly not monophase), it is clear from \eqref{3.195} that under suitable polynomial-type growth assumptions on $\zeta_{\partial G,G}$, one can use the tube formulas (from \cite{LapRaZu6} and \cite[\S\S 5.1--5.3]{LapRaZu1}) recalled in \S \ref{Sec:3.5.1} in order to obtain pointwise or distributional fractal tube formulas (with or without error term) for $(\partial G,G)$, expressed via $\zeta_{\partial G,G}$. This is so even if the fractal spray RFD $(\ao)$ is not necessarily self-similar. (Indeed, in that general case, the first equality in \eqref{3.195} still holds.) We let the interested reader elaborate on the latter comments.

As was already pointed out in \S \ref{Sec:3.4.10}, in the present case of self-similar sprays (and unlike for ordinary self-similar sets $A$ for which one always has $D_A = \sigma_0$), we must distinguish three cases here, all of which are realized (see, e.g., the example of the relative Sierpinski $N$-gasket discussed in \S \ref{Sec:3.4.4} and at the end of \S \ref{Sec:3.6.1}):\footnote{Henceforth, we use the above notation and write $D_G:= D_{\partial G,G} = \dim_B (\partial G,G)$.}\\

{\bf {\em Case} ({\em i}):} $D_G < \sigma_0$. Then, by \eqref{3.196}, $D_{\ao} = \sigma_0$ and is simple. Hence,
\begin{equation}\label{3.203}
\dim_{PC} (\ao) = \mcd_\mfs = \Bigg\{ s \in \mbc: \sum_{j=1}^J r_j^s = 1 \Bigg\}.
\end{equation}
Consequently, in light of the structure of the scaling complex dimensions of self-similar sprays given in \S \ref{Sec:3.4.10} (and based on the results of \cite[Ch. 3]{Lap-vF4}), we obtain the precise counterpart of the Minkowski measurablity criterion of self-similar strings from \cite[\S 8.4]{Lap-vF4} (see part ($d$) of Remark \ref{Rem:2.4}). Namely, {\em the self-similar spray RFD $(\ao)$ is Minkowski measurable if and only if it is nonlattice}; {\em that is, if and only if it does not have any nonreal principal complex dimensions. Hence, if $(\ao)$ is lattice, it is not Minkowski measurable, whereas if it is nonlattice, it is Minkowski measurable.} This result follows from the refined version of the Minkowski measurability criterion stated in \S \ref{Sec:3.5.2}, and more precisely, from the necessary (respectively, sufficient) condition for Minkowski measurability obtained in \cite[\S 5.4]{LapRaZu1} and in \cite{LapRaZu6} (and briefly alluded to in \S \ref{Sec:3.5.2}).\footnote{{\em Caution}: The hypotheses of the Minkowski measurability criterion stated in \S \ref{Sec:3.5.2} are not always satisfied in the nonlattice case (see \cite[Exple. 5.32]{Lap-vF4} for a counterexample). This is why we have to use a refined form (the aforementioned sufficient condition) in order to establish the Minkowski measurability of $(\ao)$ in the nonlattice case. On the other hand, in the lattice case, the hypotheses of the above  criterion are clearly satisfied and therefore we can conclude that $(\ao)$ is not Minkowski measurable; alternatively, one can use the aforementioned sufficient condition (see \cite[Thm. 5.4.15]{LapRaZu1}) in order to reach the same conclusion.}\\

{\bf {\em Case} ({\em ii}):} $D_G = \sigma_0$ (and hence, in light of \eqref{3.196}, $D_{\ao} =D_G = \sigma_0$ is an integer). Then, as was noted in \S \ref{Sec:3.4.10}, it follows from the factorization formula \eqref{3.195} that $D_{\ao}$ is a complex dimension of $(\ao)$ of multiplicity two. As a result, due to the Minkowski measurability criterion discussed in \S \ref{Sec:3.5.2}, $(\ao)$ cannot be Minkowski measurable (in the usual sense), irrespective of whether $(\ao)$ is lattice or nonlattice. However, if we use the gauge function $h(t) := \log t^{-1}$, for all $t \in (0,1)$, then provided its hypotheses are satisfied, the $h$-Minkowski measurability criterion briefly discussed towards the end of \S \ref{Sec:3.5.2} (\cite[Thm. 5.4.32]{LapRaZu1}), the RFD $(\ao)$ is $h$-Minkowski measurable. \\

{\bf {\em Case} ({\em iii}):} $D_G > \sigma_0$. Then, by \eqref{3.196}, $D_{A,G} = D_G$. Since all of the scaling complex dimensions of $(\ao)$ have real parts not exceeding $\sigma_0$ and hence, strictly less than $D_{\ao}$, we deduce from \eqref{3.197} that $D_{\ao} = D_G$ is the only principal complex dimension of $(\ao)$ and (since $D_G$ is a simple pole of $\zeta_{\partial G,G}$) that it is simple. Consequently, {\em in this case} ({\em i.e., in case} ($iii$)), $(\ao)$ {\em is always Minkowski measurable}, whether or not the self-similar spray $(\ao)$ is lattice or nonlattice.

However, this does not preclude $(\ao)$ from having lower-order oscillations in its geometry (e.g., in its fractal tube formula \eqref{3.202}). This is indeed what happens (generically) for the relative Sierpinski $N$-gasket when $N \geq 4$.
\end{example}

\begin{exercise}({\em Relative $N$-gasket}).\label{Exc:3.32}
By using the trichotomy outlined in cases ($i$)--($iii$) just above, obtain (as explicitly as possible) fractal tube formulas for the relative (or inhomogeneous) Sierpinski $N$-gasket studied  in \S \ref{Sec:3.4.4}. Also, determine the Minkowski measurability (or, more generally and when necessary, the $h$-Minkowski measurability) of the relative $N$-gasket, depending on the value of $N$. When appropriate, calculate the corresponding Minkowski content $\mcm$ or the average Minkowski content $\widetilde{\mcm}$.

[{\em Hint}: Distinguish the three cases when $N=2, N=3$ and $N \geq 4$, respectively.]
\end{exercise}

\begin{remark}({\em Ordinary self-similar sets.})\label{Rem:3.32.1/2}
($a$) It has long been conjectured by the author (see \cite[Conj. 3, p. 163]{Lap3}) that (classic or homogeneous) self-similar sets in $\mbr^N$ satisfying the open set condition (in the sense of \cite{Hut}; see also \cite{Fa1}) are Minkowski measurable  if and only if they are nonlattice (and, equivalently, are not Minkowski measurable if and only if they are lattice). When $N=1$ (i.e., for self-similar strings), the fact that nonlattice self-similar sets (i.e., strings) are Minkowski measurable was first proved by the author in \cite{Lap3} and then, independently, by K. Falconer in \cite{Fa2}, in both cases by using the renewal theorem (first used in a related context by S. Lalley in [\hyperlinkcite{Lall1}{Lall1--3}]). Then, this result was extended to higher dimensions (and to certain random fractals, as was also conjectured in \cite{Lap3}) by D. Gatzouras in \cite{Gat}. There remained to prove that the nonlattice condition was also necessary for obtaining the Minkowski measurability of a given self-similar set. This was first established when $N=1$ (i.e., for self-similar strings) by the author and M. van Frankenhuijsen in \cite{Lap-vF2} (see \cite[\S 8.4]{Lap-vF4}), where both the necessary and sufficient conditions were proved by using the theory of complex dimensions of fractal strings (combined with a suitable Tauberian theorem) and the associated fractal tube formulas; see \cite[\S 8.4, Thms. 8.23 and 8.36]{Lap-vF4}. Finally, in higher dimensions (i.e., when $N \geq 2$), the sufficient condition was recently established  (independently of the above results from \cite{LapRaZu1} and the accompanying papers about self-similar sprays or RFDs) by S. Kombrink, E. Pearse and S. Winter in \cite{KomPeWi}, also by using the renewal theorem but now combined with several nice new observations.

($b$)  We conjecture that under suitable hypotheses and still for (classic or homogeneous) self-similar sets satisfying the open set condition, the above characterization expressed in terms of complex dimensions is still valid; that is, the presence of a nonreal principal complex dimension is equivalent to the self-similar set being not Minkowski measurable (and hence, Minkowski measurability is equivalent to the Minkowski dimension $D$ being the only principal complex dimension; see \cite[Pb. 6.2.36]{LapRaZu1} for more details. This problem still remains open, for now, and its resolution will require, in particular, suitably extending the factorization formula \eqref{3.195} to this setting or finding an appropriate substitute for it.
\end{remark}

\subsection{Fractality, hyperfractality and unreality, revisited}\label{Sec:3.6}

We pursue and complete here the discussion of fractality and unreality (as well as of the closely related topic of the meaning of complex dimensions) started in \S \ref{Sec:2.4} and \S \ref{Sec:2.5}. In light of the higher-dimensional theory of complex dimensions developed in \cite{LapRaZu1} (and in the accompanying series of papers, [\hyperlinkcite{LapRaZu2}{LapRaZu2--10}]) and, in particular, of the fractal tube formulas obtained in \cite[Ch. 5]{LapRaZu1} and \cite{LapRaZu6} (as well as discussed in \S \ref{Sec:3.5.1} and \S \ref{Sec:3.5.3}), the  interpretation of (necessarily complex conjugate pairs of) {\em nonreal} complex dimensions as giving rise to (or detecting) the {\em intrinsic oscillations} of a given geometric object is exactly the same as in the one-dimensional case of fractal strings discussed in \S \ref{Sec:2.4}. Therefore, we refer the interested reader to \S \ref{Sec:2.4}  for the corresponding discussion, which can easily be adapted to higher dimensions and is illustrated by the many examples of complex dimensions and fractal tube formulas provided in \S \ref{Sec:3.4} and \S \ref{Sec:3.5.3}, respectively. 

We further mention that for clarity, we will focus  here on geometric objects which are relative fractal drums or RFDs (and, in particular, bounded sets) in $\mbr^N$. However, since the entire theory of fractal zeta functions and complex dimensions (in particular, of the associated fractal tube formulas) extends (essentially without change) to suitable metric measure spaces (namely, Ahlfors-type spaces and beyond), as is shown in \cite{LapWat, Wat}, we could instead work within the much greater generality of RFDs (and, in particular, of bounded sets) in such metric measure spaces. (See Remark \ref{Rem:3.22.1/2}.) The main difference is that the embedding dimension $N$ would have to be replaced by the Ahlfors dimension (or its appropriate analog) of the embedding metric measure space; see footnote \ref{Fn:119}.

Perhaps more importantly, we could also apply to various counting functions the very general (pointwise or distributional) explicit formulas from \cite[Ch. 5]{Lap-vF4} in order to detect the potential geometric, spectral, dynamical, or arithmetic oscillations that are intrinsic to  fractal-like (physical or mathematical) objects. These counting functions could be geometric (such as the box-counting function), spectral (such as the eigenvalue or frequency counting function or else, the partition function or trace of the heat semigroup) or dynamical (such as the prime orbit counting function, counting the number of homology classes of primitive periodic orbits of the corresponding dynamical system\footnote{See \cite[Ch. 7]{Lap-vF4} for a simple but illuminating example. The author has long thought that the just referred work could be greatly extended to a variety of hyperbolic and other dynamical systems, for example, within the setting of the theory developed by Parry and Pollicott in \cite{ParrPol1, ParrPol2}, where Ruelle or dynamical zeta functions ([\hyperlinkcite{Rue1}{Rue1--4}], \cite{Lag}) were used. This potentially significant extension still remains to be achieved. \label{Fn:129}}).

Moreover, in a similar spirit, we could use other types of fractal-like zeta functions (such as, for example, suitably weighted Ruelle's dynamical zeta functions; see, e.g., [\hyperlinkcite{Rue1}{Rue1--4}, \hyperlinkcite{ParrPol1}{ParrPol1--2}], \cite{Lag}, along with footnote \ref{Fn:129}) could be used to define ``fractality'' via the existence of nonreal complex dimensions. Finally, as will be further explained below, even the notion of ``complex dimensions'' itself can be relaxed, by considering (nonremovable) singularities  that are not just poles of the associated fractal zeta functions.

Accordingly, the proposed notion of fractality can be potentially applied to a great variety of settings in mathematics, physics, cosmology, chemistry, biology, medicine, geology, computer science, engineering, economics, finance, and the arts, as well as can be illustrated via many different kinds of fractal-type explicit formulas. In mathematics or physics alone, the corresponding fields involved would include, for instance, harmonic analysis, partial differential equations, geometric measure theory, spectral theory, spectral geometry, probability theory, dynamical systems, combinatorics and graph theory, number theory and arithmetic geometry, algebraic geometry, operator algebras and noncommutative geometry, quantum groups, mathematical physics, along with, on the more physical side, condensed matter physics, astronomy and cosmology, quantum theory and its myriad of applications, quantum gravity, classical and quantum chaos, quantum computing and string theory. For the simplicity of exposition, however, and with one single exception,\footnote{This exception will have to do with the spectra of fractal drums; see \S \ref{Sec:3.6.4}.}
we will limit our discussion to the geometric setting, that of RFDs (and in particular, of bounded sets) in an $N$-dimensional Euclidean space $\mbr^N$, as well as briefly illustrate it by means of the fractal tube formulas and the Minkowski measurability results described in \S\S \ref{Sec:3.5.1}--\ref{Sec:3.5.3}.

As in \S \ref{Sec:2.5}, we say that a geometric object (e.g., a fractal drum $(\ao)$ in $\mbr^N$ or, in particular, a bounded subset $A$ of $\mbr^N$) is {\em fractal} if it has at least one nonreal complex dimension (and hence, at least one pair of nonreal complex conjugate complex dimensions). For now, ``complex dimensions'' are interpreted as being the (visible) poles of the associated fractal zeta function $\zao$ or, equivalently, $\tzao$ of $(\ao)$.\footnote{For simplicity, we assume from now on that $D:= \overline{\dim_B} (\ao) < N$, so that the distance and tube zeta functions $\zao$ and $\tzao$ have the same visible poles in a given domain $U$ of $\mbc$ to which one (and hence, both) of these fractal zeta function has a (necessarily unique) meromorphic extension; in particular, $D (\zao) = D(\tzao) =D$. (See \S \ref{Sec:3.3.1} and \S \ref{Sec:3.3.2}.) If one wants to deal with the case when $D=N$, one should then work with $\tzao$ alone.}
However, we will further broaden this notion later on in this subsection; see, especially, \S \ref{Sec:3.6.2} and \S \ref{Sec:3.6.3}. This definition is formally identical to the one proposed in \cite[Ch. 12]{Lap-vF4} (and, prior to that, in [\hyperlinkcite{Lap-vF1}{Lap-vF1--3}]),\footnote{There is one small difference; namely, we no longer require the real part of the nonreal complex dimension to be positive (a condition that was included mainly for aesthetic reasons and is fulfilled, for example, by classic self-similar geometries). Indeed, for an RFD $(\ao)$, even $D = \overline{\dim_B} (\ao)$ can be negative (or equal to $-\infty$); see the two examples provided between \eqref{3.4} and \eqref{3.5} in \S \ref{Sec:3.2}. }
but it is worth pointing out that at the time, there was no suitable general definition of fractal zeta functions, and hence also of complex dimensions, that was compatible with the existence of fractal tube formulas for compact subsets of $\mbr^N$, with $N \geq 2$.

More specifically, given $d \in \mbr$, an RFD $(\ao)$ (and, in particular, a bounded set $A$) in $\mbr^N$ is said to be {\em fractal in dimension} $d$ if it has at least one (and hence, a pair of complex conjugate) complex dimension(s) of real part $d$.\footnote{Then, clearly, we must have $D > -\infty$ and $d \leq D \leq N$, where $D := \overline{\dim}_B (\ao)$ or ($D := \overline{\dim}_B A$). Also, note that $d$ itself need not be a (nonremovable) singularity, let alone a pole, of $\zao$ (or of $\tzao$).}
Therefore, by definition, fractality is equivalent to fractality in some dimension $d \in \mbr$.

Moreover, $(\ao)$ (or $A$) is said to be {\em critically fractal} if it is fractal in dimension $D$, the (upper) Minkowski dimension of $(\ao)$ (or of $A$); in other words, if and only if it has at least one nonreal complex dimension. Otherwise $(\ao)$ (or $A$) is said to be {\em subcritically fractal}; in that case, it is therefore fractal in some dimension $d < D$ but not in dimension $D$.

In addition, the RFD $(\ao)$ is said to be {\em hyperfractal} if the associated fractal zeta function $\zao$ (or equivalently, $\tzao$, since $D < N$ here) cannot be meromorphically extended to a connected open neighborhood of a suitable curve or contour $S$ in $\mbc$ extending in both vertical directions (i.e., $S$ is a screen, in the sense of \cite{Lap-vF4} or \cite{LapRaZu1}), and it is said to be {\em critically hyperfractal} (or {\em strongly hyperfractal}, as in \cite{LapRaZu1}) if $S = \{ Re(s) = D\}$, and {\em maximally hyperfractal} if the critical line $\{ Re(s) = D\}$ consists solely of (nonisolated and nonremovable) singularities of $\zao$ (or equivalently, of $\tzao$).\footnote{Precise definitions of the notion of a singularity (of a complex-valued function on a domain of $\mbc$) can be found in \cite[\S 1.3.2]{LapRaZu1} and the references therein; see also \cite[\S 4.6.3]{LapRaZu1}.} (Naturally, a similar terminology is used in the special case of a bounded subset $A$ of $\mbr^N.)$

Clearly, ``maximally hyperfractal'' implies ``critically hyperfractal'', which itself implies ``hyperfractal''. Furthermore, maximal hyperfractals are such that $\zao$ (or equivalently, $\tzao$) have nonreal (complex conjugate pairs of) singularities (which are not necessarily poles). Moreover, in some sense, they are among the most complicated fractals.

In \cite[\S 4.6]{LapRaZu1} (and \cite{LapRaZu3}) are constructed maximally hyperfractal compact sets as well as RFDs in $\mbr^N$, for $N \geq 1$ arbitrary (and, in particular, when $N=1$, maximally hyperfractal fractal strings), with any prescribed (upper) Minkowski dimension $D \in (0, N)$. In fact, the family of examples constructed in {\em loc. cit.} consists not only of maximally hyperfractal but also of transcendentally $\infty$-quasiperiodic sets or RFDs. Recall from \cite[\S 4.6]{LapRaZu1} that an RFD $(\ao)$ (or a bounded set $A$) in $\mbr^N$, of (upper) Minkowski dimension $D$ is said to be {\em transcendentally $\infty$-quasiperiodic} if its tube function $V = V(\varepsilon)$ (where $V:= V_{\ao}$ or $V:= V_A$) satisfies
\begin{equation}\label{3.204}
V (\varepsilon) = \varepsilon^{N-D} (G (\log \varepsilon^{-1}) + o(1)) \quad \text{as } \varepsilon \rightarrow 0^+,
\end{equation}
where the function $G: \mbr \rightarrow \mbr$ is {\em transcendentally $\infty$-quasiperiodic}; i.e., where $G = G (t)$ is the restriction to the diagonal of a nonconstant  function $H : \mbr^\infty \rightarrow \mbr$ which is separately periodic (of minimal period $T_j$) in each variable $t_j$, for $j = 1,2, \cdots$:
\begin{equation}\label{3.205}
G(t) = H (t,t,t, \cdots), \quad \text{for all } t \in \mbr,
\end{equation}
where (for all $(t_1, t_2, \cdots) \in \mbr^\infty$)
\begin{equation}\label{3.206}
H (t_1, t_2, \cdots, t_{j-1}, t_j + T_j, t_{j+1}, \cdots) = H (t_1, t_2, \cdots, t_{j-1}, t_j, t_{j+1}, \cdots).
\end{equation}
In addition, the use of the adverb ``transcendentally'' in the above definition means that the resulting sequence of {\em quasiperiods} $(T_j)_{j \geq 1}$ is linearly independent over the field of algebraic numbers.\footnote{Recall that the field of {\em algebraic numbers} can be viewed (up to isomorphism) as the algebraic closure $\overline{\mbq}$ of $\mbq$, the field of rational numbers. It is obtained by adjoining to $\mbq$ the complex roots of all of the monic polynomials with coefficients in $\mbq$ (or equivalently, in $\mbz$). By reasoning by absurdum, one can easily check that it is a countable set.}

The aforementioned construction of maximally hyperfractal and transcendentally $\infty$-quasiperiodic RFDs is rather complicated. Its first (and main) step consists in constructing fractal strings with the above properties. In essence, these highly complex fractal strings are obtained by taking the countable disjoint union of a suitable sequence extracted from a two-parameter family of topological Cantor sets (or strings), and then to appropriately apply a key result from transcendental number theory (namely, Baker's theorem \cite{Bak}; see also \cite[Thm. 3.114]{LapRaZu1}).\footnote{Recall that Baker's theorem states that given $n \in \mbn$ with $n \geq 2$, if $m_1, \cdots, m_n$ are positive algebraic numbers such that $\log m_1, \cdots, \log m_n$ are linearly independent over the rationals, then $1, \log m_1, \cdots, \log m_n$ are linearly independent over the field of algebraic numbers (i.e., algebraically independent). In particular, $\log m_1, \cdots, \log m_n$ are transcendental (i.e., not algebraic) numbers, and so are their pairwise quotients.  }
In some sense, this construction can be viewed as a fractal-geometric interpretation of Baker's theorem.\\

We close this discussion by providing several examples of critical and subcritical fractals.

First, we note that under suitable hypotheses (namely, the hypotheses of the Minkowski measurability criterion stated in \S \ref{Sec:3.5.2}), and supposing that the Minkowski dimension $D = \dim_B (\ao)$ exists and is simple,\footnote{If $D$ is multiple (as a pole of $\zao$), then it follows from  \cite[Thm. 5.4.27]{LapRaZu1} briefly discussed towards the end of \S \ref{Sec:3.5.2} that under the hypotheses of that theorem, the RFD $(\ao)$ is $h$-Minkowski measurable with respect to the gauge function $h(t) := (\log t^{-1})^{m-1}$, for all $t \in (0,1),$ where $m \geq 2$ is the multiplicity of $D$. }
the RFD $(\ao)$ in $\mbr^N$ is subcritically fractal if and only if it is Minkowski measurable. Stated another way, and still under the hypotheses of the aforementioned criterion, $(\ao)$ is critically fractal if and only if it is not Minkowski measurable. Naturally, the same statements hold for bounded subsets $A$ of $\mbr^N$.

Moreover, self-similar strings are subcritically fractal if and only if  they are nonlattice, and also if and only if they are Minkowski measurable. (See \cite[\S 8.4]{Lap-vF4} and part ($d$) of Remark \ref{Rem:2.4}.)\footnote{We caution the reader that this statement is not a direct consequence of the general Minkowski measurability criterion (in terms of nonreal principal complex dimensions) used above and stated in \S \ref{Sec:3.5.2}. Indeed, the hypotheses of the corresponding theorem are not satisfied by all nonlattice self-similar strings. However, as was briefly mentioned in Example \ref{Exe:3.31} of \cite[\S 5.4]{LapRaZu1}, the extension of this statement to self-similar sprays with nice generators is proved by using separately the necessary condition and the sufficient condition for Minkowski measurability obtained in {\em loc. cit.}}
Equivalently, self-similar strings are critically fractal if and only if they are lattice, and also if and only if they are not Minkowski measurable (even though they are always Minkowski nondegenerate).

As was mentioned in Example \ref{Exe:3.31}, exactly the same statements (as just above for self-similar strings) hold for self-similar fractal spray RFDs (in $\mbr^N, N \geq 1$) with ``nice'' generators (e.g., monophase or even pluriphase generators, and in particular, with generators that are nontrivial polytopes), in case ($i$) of Example \ref{Exe:3.31}; i.e., when $D_G := \dim_B (\partial G,G) < \sigma_0$, the similarity dimension of $(\ao)$, where $(\partial G,G)$ is the generator (or base) of the self-similar spray RFD $(\ao)$.

In addition, irrespective of whether we are in case ($i$), ($ii$) or ($iii$) of Example \ref{Exe:3.31}, it follows from the results of \cite[Ch. 3]{Lap-vF4} combined with \eqref{3.197} and the text following it, that self-similar sprays (with nice generators) are always fractal, and more specifically, that lattice (respectively, nonlattice) sprays are fractal in dimension $d$ for finitely (respectively, countably infinitely) many values of $d \in \mbr$. Also, such lattice sprays are critically fractal whereas nonlattice sprays are subcritically fractal. Furthermore, still in light of those results, but now also combined with the main theorem in \cite{MorSepVi1} (proving and extending a conjecture in \cite{Lap-vF5}, see also \cite{Lap-vF3}), it is known that the set of $d$'s for which a given nonlattice spray is fractal in dimension $d$ is dense in a single compact (nonempty) interval $[D_{\ell}, D]$ in the generic case, with $D_{\ell} < D$, while we conjecture that an analogous statement  is true in the nongeneric case, but now with finitely many (but more than one) compact nonempty intervals instead of a single one.\footnote{We conjecture that under suitable hypotheses, analogous results hold for the scaling complex dimension of (classic or homogeneous) self-similar sets satisfying the open set condition (as in \cite{Hut} and, e.g., \cite{Fa1}); in particular, all such self-similar sets are fractal. A key open problem in this context is to first obtain a factorization formula of the type \eqref{3.195} (see also \eqref{3.124}), possibly up to the addition of a holomorphic function in an appropriate right half-plane $\{ Re(s) > \beta \}$, for some $\beta < D$. (See also \cite[Pb. 6.2.36 and Rem. 6.2.37]{LapRaZu1}.)}\\

\subsubsection{The $\frac{1}{2}$-square fractal, the $\frac{1}{3}$-square fractal and the relative Sierpinski \\ $N$-gasket}\label{Sec:3.6.1}

The above results can be illustrated by the $\frac{1}{2}$-square fractal RFD, the  $\frac{1}{3}$-square fractal RFD and (for suitable values of $N$) the relative $N$-gasket. \\

($a$) More specifically, for the $\frac{1}{2}$-square fractal RFD $(\ao), D := \dim_B (\ao) =1$ is a complex dimension of multiplicity two; hence, $(\ao)$ is not Minkowski measurable but is also Minkowski degenerate, with Minkowski content $\mcm = +\infty$. However, $(\ao)$ is $h$-Minkowski measurable with respect to the gauge function $h(t) = \log t^{-1}$ (for all $t \in (0,1)$) and with $h$-Minkowski content $\mcm ((\ao), h) = 1/4 \log 2$. (See part ($a$) of \S \ref{Sec:3.4.5} and part ($a$) of Example \ref{Exe:3.27}.) Further, by \eqref{3.67}--\eqref{3.69},
\begin{equation}\label{3.207}
D:=D_{\ao} = 1,
\end{equation}

\begin{equation}\label{3.208}
\mcd_{\ao} = \{ 0,1 \} \cup \Big(1 + i \frac{2 \pi}{\log 2} \mbz \Big)
\end{equation}
and
\begin{equation}\label{3.209}
\dim_{PC} (\ao) = 1 + i \frac{2 \pi}{\log 2} \mbz.
\end{equation}
Therefore, the $\frac{1}{2}$-square fractal $(\ao)$ is critically fractal, and is only fractal in dimension $D=1$, the Minkowski dimension of $(\ao)$. According to {\em loc. cit.}, the same statements are true for the $\frac{1}{2}$-square fractal $A$ (instead of for $(\ao)$).\\

($b$) Next, let $(\ao)$ be the  $\frac{1}{3}$-square fractal, as in part ($b$) of \S \ref{Sec:3.4.5} and in part ($b$) of Example \ref{Exe:3.27}. Then, in light of \eqref{3.76} and the discussion surrounding it, $(\ao)$ is subcritically fractal in dimension $d:= \log_3 2$ and is only fractal in that dimension, the Minkowski dimension of the ternary Cantor set. Furthermore, $D := D_{\ao} =1$ and $1$ is simple (as well as the only principal complex dimension of $(\ao)$). Moreover, according to part ($b$) of Example \ref{Exe:3.27}, the  $\frac{1}{3}$-square fractal RFD $(\ao)$ is Minkowski measurable. In light of {\em loc. cit.}, the exact same results hold for the $\frac{1}{3}$-square fractal $A$ itself.\\

($c$) We now consider the relative $N$-gasket $(A_N, \Omega_N)$, studied for any $N \geq 2$ in \S \ref{Sec:3.4.4}. Recall that with the notation used above for self-similar sprays, $D_G = N-1$ and $\sigma_0 = \log_2 (N+1)$, and in light of \eqref{3.59} and \eqref{3.61}, that
\begin{equation}\label{3.210}
\mcd_{A_N, \Omega_N} = \{ 0,1, \cdots, N-1\} \cup \Big( \log_2 (N+1) + i \frac{2 \pi}{\log 2} \mbz \Big)
\end{equation}
and
\begin{equation}\label{3.211}
D := \dim_B (A_N, \Omega_N) = \max (N-1, \log_2 (N+1)) = 
\begin{cases}
\log_2 3,  &\text{if } N=2, \\
N-1, &\text{if } N \geq 3. 
\end{cases}
\end{equation}
Therefore, as we next explain, we recover the three cases ($i$), ($ii$) and ($iii$) discussed above for general self-similar spray RFDs:\\

Case ($i$) when $D_G < \sigma_0$ corresponds to the $N=2$ case; then, $(A_2, \Omega_2)$ is not Minkowski measurable but is Minkowski  nondegenerate. Also, it is critically fractal and fractal only in dimension $d = \log_2 3$, the dimension of the relative Sierpinski gasket $(A_2, \Omega_2)$. In light of the results of \S \ref{Sec:3.2.1}, exactly the same statements hold for the (classic) Sierpinski gasket $A_2$ itself. \\

Case ($ii$) when $D_G = \sigma_0$ (i.e., $N-1 = \log_2 (N+1)$) corresponds to $N=3$. Then, $(A_3, \Omega_3)$ is not Minkowski measurable (because $D=2$ is of multiplicity two) and is also Minkowski degenerate. However, it is $h$-Minkowski measurable with respect to the gauge function $h(t):= \log t^{-1}$ (for all $t \in (0,1)$). Furthermore, in light of \eqref{3.210} and \eqref{3.211}, $(A_3, \Omega_3)$ is critically fractal (necessarily in dimension $d:= D = \log_2  4 = 2$), and is fractal only in that dimension.\\

Finally, case ($iii$) when $D_G > \sigma_0$ corresponds to every value of $N \geq 4$. In this case, $D = D_G = N-1, (A_N, \Omega_N)$ is subcritically fractal in dimension $d:= \sigma_0 = \log_2 (N+1)$, and is fractal only in that dimension. Also, it is Minkowski measurable but has geometric oscillations of lower order (corresponding to the nonreal complex dimensions of real part $\log_2 (N+1)$). This concludes our discussion of the relative Sierpinski $N$-gasket.\\

\subsubsection{The devil's staircase and fractality}\label{Sec:3.6.2}

We close this part of the discussion by considering the emblematic example of the Cantor graph RFD $(\ao)$ studied in \S \ref{Sec:3.4.9} and in Example \ref{Exe:3.29}. This example is closely related to the Cantor graph $A$, also called the ``devil's staircase'' in \cite{Man}. It is important for a variety of reasons:\\

($a$) First, $(\ao)$ is not a self-similar spray RFD. Indeed, the devil's staircase $A$ is not a self-similar set; instead, it is a self-affine set, which makes the corresponding computation significantly more complicated, if not impossible to carry out.\\

($b$) Secondly, the devil's staircase is {\em not} fractal according to Mandelbrot's definition of fractality (to be recalled just below), even though everyone with an exercised eye (including Benoit Mandelbrot himself)\footnote{See the very explicit and enlightening comments in \cite[p. 82]{Man} and \cite[Plate 83, p. 83]{Man}; see also the very convincing and beautiful figure in {\em loc. cit.} Let us quote from \cite[p. 82]{Man}: ``{\em One would love to call the present curve a fractal, but to achieve this goal, we would have to define fractal less stringently on the basis of notions other than $D$} [the Hausdorff dimension] {\em alone}.''  } 
would  agree that it must be ``fractal''.\footnote{The exact same comment can be made verbatim about the Cantor graph RFD provided, in Mandelbrot's original definition, one substitutes the (relative) Minkowski dimension for the Hausdorff dimension.}
At this point, it may be helpful  to the reader to recall that in \cite{Man} (and in later work), Mandelbrot called ``fractal'' any (bounded) subset $A$ of Euclidean space $\mbr^N$ such that its topological dimension $\dim_T A$ and Hausdorff dimension $\dim_H A$ do not coincide:\footnote{It is noteworthy that if $A$ is nonempty, $\dim_T A$ is always a nonnegative integer. For example, for the ternary Cantor set, it is equal to $0$ while for the classic Sierpinski gasket and for the Koch snowflake curve, it is equal to $1$.}
\begin{equation}\label{3.212}
\dim_H A \neq \dim_T A,
\end{equation} 
or, equivalently, since it is always true that $\dim_T A \leq \dim_H A$,
\begin{equation}\label{3.213}
\dim_H A > \dim_T A.
\end{equation}
In the present case when $A$ is the devil's staircase, this definition clearly fails (as Mandelbrot was well aware of). Indeed, since the Cantor graph is a rectifiable curve (i.e., a curve of finite length), we have that
\begin{equation}\label{3.214}
\dim_H A = \dim_B A = \dim_T A =1,
\end{equation}
according to a well-known result in elementary geometric measure theory (see \cite{Fed1}) and as can also be directly checked here.\\

($c$) Lastly, we mention that the unambiguous and frustrating contradiction between the visual impression one gets when contemplating the devil's staircase and Mandelbrot's definition of fractality (which he only adopted reluctantly, in response to the many queries and criticisms he received after the publication of his earlier French book on fractals) led the present author to wonder how to resolve this paradox and more importantly, how to much better capture the intuition underlying the informal notion of ``fractality''. Combined with the author's  early work on fractal drums [\hyperlinkcite{Lap1}{Lap1--4}] and his joint work on fractal strings and the Riemann zeta function as well as the Riemann hypothesis [\hyperlinkcite{LapPo1}{LapPo1--3}, \hyperlinkcite{LapMa1}{LapMa1--3}], this quest eventually led (first, for fractal strings, in  [\hyperlinkcite{Lap-vF1}{Lap-vF1--4}] and then, in any dimension $N \geq 1$, in \cite{LapRaZu1}) to the notion of complex fractal dimensions and to the present notion of fractality expressed in terms of the irreality (or the ``unreality'') of some of the underlying complex dimensions.\\

Now, let us return to the Cantor graph RFD $(\ao)$ and its complex dimensions. In light of the discussion of that example provided in \S \ref{Sec:3.4.9} and in Example \ref{Exe:3.29},
\begin{equation}\label{3.215}
D := D_{\ao} =D_A =1,
\end{equation}

\begin{equation}\label{3.216}
\{ 1\} \subseteq \mcd_A \subseteq \mcd_{\ao} = \{ 0,1\} \cup \Big( \log_3 2 + i \frac{2 \pi}{\log 3} \mbz \Big)
\end{equation}
and hence,
\begin{equation}\label{3.217}
\dim_{PC} A = \dim_{PC} (\ao) = \{ 1 \},
\end{equation}
where (in \eqref{3.216}) $D_{CS} = \log_3 2$ is the Minkowski dimension of the ternary Cantor set. In addition, it is conjectured (first in [\hyperlinkcite{Lap-vF2}{Lap-vF2--4}], via an approximate computation of the corresponding tube function, and then, with significantly more precise supporting arguments, in [\hyperlinkcite{LapRaZu1}{LapRaZu1,4}]) that
\begin{equation}\label{3.218}
\log_3 2 + i \frac{2 \pi}{\log 3} \mbz \subseteq \mcd_A
\end{equation}
or even that $\mcd_A = \mcd_{\ao}$ in \eqref{3.216}.

The Cantor graph RFD is Minkowski measurable; furthermore, it is fractal in dimension $d:= D_{CS} = \log_3 2$ and is fractal only in that dimension. Consequently, it is critically subfractal. The presence of the nonreal complex (and subcritical) dimensions $s_k = \log_3 2 + i (2 \pi/ \log 3)k$ (with $k \in \mbz \backslash \{0\}$) gives rise to logarithmically (or multiplicatively) periodic oscillations of order $D_{CS} = \log_3 2 < 1$.\footnote{More precisely, they correspond to the term $\varepsilon^{2-D_{CS}} G (\log_3 \varepsilon^{-1}$), which is exactly of order $\varepsilon^{2-D_{CS}}$ since the nonconstant periodic function $G$ in \eqref{3.193} is bounded away from $0$ and $\infty$.}
(See the fractal tube formula \eqref{3.192} along with \eqref{3.193}.) Therefore, it has a real effect on the geometry. According to the aforementioned conjecture (see \eqref{3.218} and the text following it), the devil's staircase (or Cantor graph) $A$ itself should have exactly the same qualitative and quantitative properties as the Cantor graph RFD $(\ao)$. In particular, it is {\em not} critically fractal but is critically fractal in dimension $d:= D_{CS} = \log_3 2$.\\

The following open problem is technically very challenging but also conceptually important, in light of the above discussion.

\begin{op}({\em Complex dimensions of the devil's staircase.})\label{OP:3.38.1/2}
Prove the conjecture stated in \eqref{3.218} or even its stronger form stated just afterward.
\end{op} 

As a first but important step towards that conjecture, try to show that there  are at most finitely many exceptions to the inclusion appearing in \eqref{3.218} or at least, that infinitely many elements of the left-hand side of the inclusion in \eqref{3.218} belong to $\mcd_A.$ \\

\subsubsection{Extended notion of complex dimensions, scaling laws and Riemann surfaces}\label{Sec:3.6.3}

We continue this discussion by broadening the notion of complex dimensions and correspondingly, of fractality. As was alluded to earlier, it is natural to call ``complex dimensions'' of an RFD $(\ao)$ not only the poles but also other types of (nonremovable) singularities of the associated fractal zeta function $\zao$ (or, equivalently, $\tzao$, provided $D:=\dim_B (\ao) < N$). Typically, the (necessarily closed) set of singularities is obtained as the closure of a countable (or finite) set of ``geometric singularities'', which we propose to call the {\em kernel} or simply, the {\em set of ``geometric singularities''} of $(\ao)$.\footnote{Accordingly, the complement of the kernel in the set of all (nonremovable) singularities could be referred to as the {\em set of ``analytic singularities''}. In practice, it can be ignored because it is not expected to contribute (in a significant way) to the fractal tube formula for $(\ao)$, although this remains to be proved in general.}
For example, for each member  $(\ao)$, say, of the family of (transcendentally $\infty$-quasiperiodic) maximally hyperfractal RFDs or compact sets constructed in [\hyperlinkcite{LapRaZu1}{LapRaZu1,3}] and discussed earlier, the kernel consists of a countably infinite set of geometric singularities (which are not poles of $\zao$ or of $\tzao$) and is dense in the whole critical line $\{ Re(s) =D \}$, where $D:= \overline{\dim_B} (\ao)$ can be prescribed a priori in the interval $(0,N)$. However, only the geometric singularities will contribute to the associated fractal tube formulas.\footnote{Simpler examples of this type can be obtained by considering the Cartesian product of two different self-similar (and lattice) Cantor sets with incommensurable oscillatory periods.}

Several other examples are provided in \cite{LapRaZu1}. In one of those examples, the kernel consists of a countable set of essential singularities and is shown to contribute to the fractal tube formulas on the same footing as mere poles of $\zao$. The remaining singularities do not contribute to the main term in the fractal tube formula; they may, however, contribute to the error term (the `noise').

It is then immediate to extend the above notion of fractality by saying that an object (say, an RFD $(\ao)$ in $\mbr^N$) is {\em fractal} if it has at least one nonreal (geometric) singularity. The notions of fractality in dimension $d \in \mbr$, as well as of critical and subcritical fractality, are similarly extended.

In the author's opinion, a deep understanding of the scaling laws in mathematics and physics (among many other fields, including cosmology, computer science, economics, chemistry and biology), as well as of many aspects of dynamics and of fractal, spectral and arithmetic geometry, could be gained by pursuing this venue and extending the theory of complex dimensions by merging it with aspects of the theory of Riemann surfaces\footnote{See, e.g., \cite{Ebe}, \cite{Schl} and beyond, in the spirit of Riemann's original broader intuition of Riemann surface.}
and using the notion of $h$-Minkowski content and related notions for a variety of (admissible) gauge functions. The beginning of such a theory is provided in a joint work under completion, \cite{LapRaZu10}.

As is suggested in \cite{Lap10}, in the long term, a potentially far-reaching further extension of the theory of complex dimensions to multivariable fractal zeta functions and their analytic varieties of singularities, merged with aspects of the theory of complex manifolds and sheaf theory (see, e.g., \cite{Ebe} and \cite{GunRos}), should be even more fruitful in this context; see also the end of \S \ref{Sec:4.4.2} (and of \S \ref{Sec:4}) below.\\

We close this subsection by discussing originally unexpected connections between hyperfractality and the spectra of fractal drums.\\

\subsubsection{Maximal hyperfractals and meromorphic extensions of spectral zeta functions of fractal drums}\label{Sec:3.6.4}

The error estimates obtained in \cite{Lap1} for the spectral asymptotics of fractal drums (i.e., drums with fractal boundary) are, in general, best possible. (See the paragraph following \eqref{2.50}, along with part ($a$) of Remark \ref{Rem:2.10}.) They also imply that the (normalized) spectral zeta function $\zeta_\nu$ associated with a given fractal drum\footnote{For the Dirichlet Laplacian with eigenvalue spectrum $(\lambda_j)_{j=1}^\infty$, where the eigenvalues are repeated according to their multiplicities, we let $\zeta_\nu (s) := \sum_{j=1}^\infty f_j^{-s}$, for all $s \in \mbc$ with $Re(s)$ sufficiently large, where for each $j \geq 1, f_j:= \sqrt{\lambda_j}$ is the $j$-th frequency of the fractal drum (written in nonincreasing order and repeated according to multiplicity). } 
can be meromorphically extended to the open right half-plane $\{ Re(s) > D \}$, where $D$ is the (upper) Minkowski dimension of the boundary of the fractal drum.\footnote{That is, $D:= \dim_B (\ptoo)$ in the notation of relative fractal drums.}

In \cite[\S 4.3.2]{LapRaZu1} and \cite{LapRaZu8}, it is shown  that the construction of (transcendentally $\infty$-quasiperiodic) maximal hyperfractal RFDs $(\ptoo)$, where $\Omega$ is a bounded open set in $\mbr^N$ (and hence, has compact boundary $\partial \Omega$), carried out in \cite[\S 4.6.1]{LapRaZu1} and [\hyperlinkcite{LapRaZu2}{LapRaZu2--4}] (as was briefly discussed earlier in this subsection), implies that the above right half-plane $\{ Re(s) > D \}$ is in general optimal (i.e., as large as possible among all open right half-planes to which $\zeta_\nu$ can be meromorphically continued). This is so for any possible value of the (upper) Minkowski dimension of the RFD $(\partial \Omega, \Omega)$; namely, for any $N \geq 1$ and for every $D \in (0, N).$

This last result establishes an interesting new connection between (maximal) hyperfractality and the vibrations of fractal drums. In particular, for each of the maximally hyperfractal drums or RFDs $(\ptoo)$ constructed in [\hyperlinkcite{LapRaZu1}{LapRaZu1,8}], the half-plane of meromorphic convergence of their spectral zeta function $\zeta_\nu$ coincides with the half-plane of (absolute) convergence of their distance zeta function $\zeta_{\ptoo}$ (or equivalently, since $D < N$ here, of their tube zeta function $\widetilde{\zeta}_{\ptoo}$).
 
\section{Epilogue: From Complex Fractal Dimensions to Quantized Number Theory and Fractal Cohomology}\label{Sec:4}

In [\hyperlinkcite{Lap-vF2}{Lap-vF2--4}], \cite{Lap-vF7} and \cite{Lap7} was proposed a search for a `fractal cohomology theory' that would be naturally associated with the theory of complex fractal dimensions (at the time, as developed for fractal strings and described in part in \S \ref{Sec:2}, but now also extended as in \cite{LapRaZu1} and [\hyperlinkcite{LapRaZu2}{LapRaZu2--10}] to the much broader higher-dimensional setting, as described in \S \ref{Sec:3}), as well as help unify at a deeper level several important aspects of fractal geometry, number theory and arithmetic geometry. (See \cite[\S 12.3 and \S 12.4]{Lap-vF4}, along with \cite[Chs. 4 and 5]{Lap7}.)

In particular, from this perspective (see \cite[\S 12.4]{Lap-vF4}), an analogy was developed between lattice self-similar geometries and finite-dimensional (algebraic) varieties over finite fields, while nonlattices self-similar geometries could be seen as a limiting case corresponding to infinite dimensional varieties (seemingly over a ``field of characteristic one''). Indeed, the zeros and the poles of the scaling zeta functions\footnote{By definition, the scaling zeta function of a self-similar string $\mcl$ coincides with the geometric zeta function of $\mcl$.}
of lattice self-similar strings are periodically distributed along finitely many vertical lines. The same is true of self-similar sprays (as well as, conjecturally, of self-similar sets) with ``nice generators''. (For the case of self-similar strings, see Example \ref{Ex:2.9}, and for the more general case of self-similar sprays, see \S \ref{Sec:3.4.10}.)\footnote{We leave aside here the `integer dimensions' since we use the scaling (rather than the fractal) zeta functions.}

\subsection{Analogy between self-similar geometries and varieties over finite fields}\label{Sec:4.1}

Recall that for a (smooth, projective, finite-dimensional, algebraic) variety $V$ over a finite field $\mbf_q$ (where $q = p^m$, with $m \in \mbn$ and the prime number $p$ being the underlying prime characteristic), the corresponding zeta function $\zeta_V$ is periodic with complex period $i{\bf p}$ (i.e., $\zeta_V (s) = \zeta_V (s + i{\bf p}$), for all $s \in \mbc$), where ${\bf p} := 2 \pi/ \log q = 2 \pi/ m\log p$. Therefore, as $q = \# \mbf_q \rightarrow \infty$ (e.g., if we successively consider $V$ over the finite field extensions $\mbf_{qn}$, with $n \in \mbn$ increasing to $\infty$), then the `oscillatory period' ${\bf p}$ tends to $0$.

Similarly, for a lattice string (or spray), the corresponding scaling zeta function $\zeta_\mfs$ is periodic with complex period $i{\bf p}$ (where ${\bf p} := 2 \pi/ \log r^{-1}$ is the oscillatory period and $r$ in $(0,1)$ is the multiplicative generator of the underlying scaling ratios), in the same sense as above.  Furthermore, in the approximation of a nonlattice string (or spray) by a sequence of lattice strings (or sprays) with increasing oscillatory periods ${\bf p}_n$ (as described in detail in \cite[Ch. 3, esp., \S\S 3.4--3.5]{Lap-vF4} and briefly discussed in Example \ref{Ex:2.9}), so that $\zeta_\mfs (s + iu{\bf p}_n)$ and $\zeta_\mfs (s)$ are close for $u \in \mbc$ with $|u|$ not too large), we have ${\bf p}_n \rightarrow \infty$. In this sense, nonlattice strings (or sprays) satisfy ${\bf p} = \infty$ (or ${\bf p} \rightarrow \infty$) and behave as though $r \rightarrow 1^+$ and hence, as though the ``underlying prime characteristic were equal (or tending) to 1'', as was mentioned above.\footnote{In the case of a self-similar spray with multiple generators, the scaling zeta function has both zeros and poles; in fact, it is of the form $\zeta_\mfs (s) = \sum_{k=1}^K g_k^s/ (1 -\sum_{j=1}^J r_j^s)$, where the positive numbers $g_k$ and scaling ratios $r_j$ are not necessarily assumed to be distinct. By definition, the lattice case then corresponds to the situation when the group generated by the {\em distinct values} of the $g_k$'s and $r_j$'s is of rank 1 (with generator denoted by $r$ and assumed to lie in $(0,1)$); the associated oscillatory period is then ${\bf p} := 2 \pi/\log r^{-1}$. }

Moreover, for a (finite-dimensional) variety $V$ over $\mbf_q$ (really, over the algebraic closure $\overline{\mbf}_q$ of $\mbf_q$), the total number of vertical lines, along which the zeros and the poles of $\zeta_V$ (counted according to their multiplicities) are distributed is equal to $2d +1$, where $d:=\dim V$ is the dimension of the variety. In addition, the zeros (respectively, poles) correspond to the even (respectively, odd) cohomology spaces.\footnote{In other words, the total cohomology space is naturally $\mbz_2$-graded, with $\dot{0}$ corresponding to the zeros and $\dot{1}$ corresponding to the poles, for this choice of grading.}

The aforementioned (Weil-type or \' etale) cohomology spaces played a key role in the proof of the Weil conjectures [\hyperlinkcite{Wei1}{Wei1--3}] (and, in particular, of the counterpart of the Riemann hypothesis) for curves over finite fields by A. Weil in {\em loc. cit.} and then, for higher-dimensional (but still finite-dimensional) varieties over finite fields, by P. Deligne in [\hyperlinkcite{Del1}{Del1--2}].\footnote{For a brief introduction to curves (or, more generally, varieties) over finite fields, as well as to the associated zeta functions and Weil conjectures (including RH in this context), we point out, for example, \cite{Dieu1, Katz, Oort, ParsSh1} and \cite[App. B]{Lap7}, along with the many references therein (or since), including \cite{Art}, \cite{Has}, \cite{Schm}, [\hyperlinkcite{Wei1}{Wei1--3}], [\hyperlinkcite{Gro1}{Gro1--4}], [\hyperlinkcite{Del1}{Del1--2}], [\hyperlinkcite{Den1}{Den1--6}], [\hyperlinkcite{Har1}{Har1--3}], [\hyperlinkcite{CobLap1}{CobLap1--2}] and \cite{Lap10}. Also, for relevant notions from algebraic geometry, we refer, e.g., to \cite{Hart}.}

In the process, a certain map (on the underlying variety $V$ and induced by the self-map $x \mapsto x^q$ of $\mbf_q$), called the {\em Frobenius morphism}, also plays a central role, via the linear endomorphism $F$ it induces on the total cohomology space and called the {\em Frobenius operator}.\footnote{By using  the periodicity of $\zeta_V$ (i.e., by making the changing of variable $z:= q^{-s}$), one formally obtains a reduced (total) cohomology space, which is a {\em finite-dimensional} vector space. In fact, in [\hyperlinkcite{Wei1}{Wei1--3}],  [\hyperlinkcite{Del1}{Del1--2}] or [\hyperlinkcite{Gro1}{Gro1--4}], only finite-dimensional vector spaces (over the underlying field) are considered. }
The (graded or alternating) `{\em characteristic polynomial}' of $F$ is then shown to coincide with the zeta function of $V$ (in the variable $z = q^{-s}$). Symbolically, and with  $Z_V (z) := \zeta_V (q^{-s})$, the {\em Weil zeta function} of the variety $V$, we have (ignoring multiplication by an unessential nowhere vanishing entire function)
\begin{equation}\label{4.1}
Z_V (z) = s\text{-det} (I -zF),
\end{equation}
where ``$s$-det'' stands for the graded  (or super, \cite{Del3, Wein}) determinant of $F$ over the (reduced) total cohomology space. It was later conjectured by C. Deninger in [\hyperlinkcite{Den1}{Den1--6}] (see, especially, [\hyperlinkcite{Den1}{Den1--3}]) that a similar procedure (but now involving, in general, possibly infinite dimensional cohomology spaces) could be used to deal with the Riemann zeta function $\zeta = \zeta (s)$ and other $L$-functions (in characteristic zero, for example, in the case of $\zeta$, over the field $\mbq$).

Since the early to mid-1990s, the author's intuition has been that there should exist a suitable notion of ``fractal cohomology''  such that the nonlattice case (for self-similar geometries) would be analogous to the situation expected to hold for $\zeta$ (the latter being a very special case, however, one typical of arithmetic geometries  and for which RH would hold). In \cite{Lap7}, building on [\hyperlinkcite{Lap-vF1}{Lap-vF1--3}] (see also \cite[\S 12.4]{Lap-vF4}), it was conjectured that a (generalized) Polya--Hilbert operator [having for spectrum (the reciprocals of) the zeros and poles (i.e., the reciprocal of the divisor) of the underlying fractal or arithmetic zeta function] should exist in this context so that the analog of \eqref{4.1} would hold and could be rigorously established. Partially realizing this dream has required significantly building on another semi-heuristic proposal made in  [\hyperlinkcite{Lap-vF3}{Lap-vF3--4}] (regarding a so-called ``spectral operator''), then the development in [\hyperlinkcite{HerLap1}{HerLap1--5}] of quantized number theory (in the ``real case''), followed by the development in  [\hyperlinkcite{CobLap1}{CobLap1--2}] of quantized number theory (in the ``complex case'') and the construction of a corresponding (generalized) Polya--Hilbert operator, along with the foundations of an associated fractal cohomology, in \cite{Lap10}. 

\subsection{Quantized number theory: The real case}\label{Sec:4.2}

We begin by presenting (in a very concise form) aspects of the first version of quantized number theory (in the ``real case''), as developed by Hafedh Herichi and the author in [\hyperlinkcite{HerLap1}{HerLap1--5}] and in \cite{Lap8}, based on a rigorous notion of the infinitesimal shift $\partial$ of the real line and of the corresponding spectral operator (as proposed heuristically in [\hyperlinkcite{Lap-vF3}{Lap-vF3--4}]; see, especially, \cite[\S 6.3]{Lap-vF4} and \cite[Ch. 4]{HerLap1}). In particular, the theory developed in [\hyperlinkcite{HerLap1}{HerLap1--5}] explains how to ``quantize'' the Riemann zeta function $\zeta = \zeta (s)$ in this context in order to view the spectral operator $\mfa$ (which sends the geometry of fractal strings onto their spectrum) as follows (see \cite[\S 7.2]{HerLap1}):
\begin{equation}\label{4.2}
\mfa = \zeta (\partial),
\end{equation}
where $\zeta (\partial)$ is interpreted in the sense of the functional calculus for the unbounded normal operator $\partial = \partial_c$ (with $\partial = d/dt$, the differentiation operator) acting on a suitable complex Hilbert space $\mbh_c := L^2 (\mbr, e^{-2ct} dt)$, the weighted $L^2$-space with respect to the (positive) weight function $w_c (t) := e^{-2ct}$, defined for all $t \in \mbr$. (See \cite[Ch. 5]{HerLap1} for the precise definition of $\partial$, including of its domain of definition, as well as for the proof of the normality of $\partial$: $\partial^* \partial = \partial \partial^*$ since it is shown in {\em loc. cit.} that the adjoint $\partial^*$ of $\partial$ is given by $\partial^* = 2c - \partial$.)

Here, the ``dimensional parameter'' $c \in \mbr$ is {\em fixed}, but enables us, in particular, to sweep out the entire critical strip $0 < Re(s) < 1$ (corresponding to the ``critical interval'' $0 < c < 1$) as well as the half-plane of absolute convergence for $\zeta = \zeta (s)$, namely, the open half-plane $Re(s) > 1$ corresponding to the half-line $c > 1$). 

The (strongly continuous)  semigroup of bounded linear operators of $\mbh_c$ generated by $\partial = \partial_c$ is given by $\{ e^{-h\partial} \}_{h \geq 0}$ and acts as the semigroup of translations (or shifts) of the real line (see \cite[\S 6.3]{HerLap1}):\footnote{For the general theory of strongly continuous semigroups of bounded linear operators and its applications, from different points of view, we refer, e.g., to \cite{Go, HilPh, JoLap, JoLapNie, Kat}, [\hyperlinkcite{ReSi1}{ReSi1--3}] and \cite{Ru2, Sc, Yo}.}
\begin{equation}\label{4.3.1/8}
(e^{-h\partial}) (f) (u) = f (u-h), \quad \text{for all } f \in \mbh_c,
\end{equation}
as well as for almost every $u \in \mbr$ and for every $h \geq 0$. In fact, formula \eqref{4.3.1/8} is also valid for $h \leq 0$ and hence, the {\em real infinitesimal shift} $\partial = \partial_c$ is the infinitesimal generator of the one-parameter group of operators $\{ e^{-h \partial} \}_{h \in \mbr}$.

We note that the semigroup $\{ e^{-h\partial} \}_{h \geq 0}$ is a contractive (respectively, expansive) semigroup for $c \geq 0$ (respectively, $c \leq 0$), while the group $\{ e^{-h\partial} \}_{h \in \mbr}$ is a group of isometries if and only if $c = 0$. Finally, since the adjoint of $\partial$ is given by $\partial^* = 2c - \partial$, as is shown in \cite[\S 5.3]{HerLap1}, the adjoint group  $\{ e^{-h\partial^{*}} \}_{h \in \mbr}$ coincides with $\{ e^{-2ch} e^{h \partial} \}_{h \in \mbr}$.

It is shown in \cite[\S 6.2]{HerLap1} that the spectrum, $\sigma (\partial)$, of the real infinitesimal shift $\partial = \partial_c$ is given by $L_c$, the vertical line going through $c \in \mbr$:\footnote{For the spectral theory of unbounded operators in various settings, we refer, e.g., to \cite{DunSch, ReSi1, JoLap, Kat, Ru2, Sc, Yo}. We note, however, that the references \cite{ReSi1} and \cite{Sc} are mostly limited to the spectral theory of unbounded self-adjoint (rather than normal) operators, as are many of the references focusing on the applications to quantum physics. }
\begin{equation}\label{4.3.1/4}
\sigma (\partial) = L_c := \{ Re(s) = c \}.
\end{equation}
Hence, $\partial$ is an unbounded (normal) operator, for any value of $c \in \mbr$. Further, in light of \eqref{4.2} and \eqref{4.3.1/4}, one deduces from a suitable version of the spectral mapping theorem for unbounded normal operators given in \cite[App. E]{HerLap1} that the spectrum, $\sigma (\mfa)$, of the spectral operator $\mfa = \zeta (\partial)$ coincides with the closure of the range of $\zeta = \zeta (s)$ along the vertical line $L_c$ (see \cite[\S 9.1]{HerLap1}):\footnote{If $c =1$, the unique (and simple) pole of $\zeta$, one must exclude $s =1$ from the line $L_c$ in interpreting \eqref{4.3.1/2}. Alternatively, one can consider the extended spectrum of $\mfa$, denoted by  $\widetilde{\sigma} (\mfa)$ and defined by $\widetilde{\sigma} (\mfa) := \sigma (\mfa)$ if $\mfa$ is bounded and $\widetilde{\sigma} (\mfa) := \sigma (\mfa) \cup \{ \infty \}$ otherwise, and view the meromorphic function $\zeta$ as a continuous  function from $\mbc$ to the Riemann sphere $\widetilde{\mbc} := \mbc \cup \{ \infty\}$. Then, the closure in \eqref{4.3.1/2} is interpreted in the compact space $\widetilde{\mbc}$ instead of in the unbounded complex plane $\mbc$. }
\begin{equation}\label{4.3.1/2}
\sigma (\mfa) = c \ell (L_c) = c \ell (\{ \zeta (s) : Re(s) = c \} ).
\end{equation}

It follows from \eqref{4.3.1/2} that, in light of the Bohr--Courant theorem \cite{BohCou}, itself now viewed as a mere consequence of the universality of $\zeta$ among all (suitable) analytic functions on the right critical strip $\{ 1/2 < Re(s) < 1 \} $ (as established in \cite{Vor}), that
\begin{equation}\label{4.3.3/4}
\sigma (\mfa) = \mbc, \quad \text{for every } c \in (1/2, 1).
\end{equation}
In particular, for any $c \in (1/2, 1)$, the spectral operator is {\em not} invertible  (in the usual sense of unbounded operators); see \cite[\S 9.2]{HerLap1}. 

Next, we examine what happens when $c > 1$ (which, as we recall and in light of the identity \eqref{4.3.1/2}, corresponds to the half-plane of absolute convergence for $\zeta = \zeta (s) = \sum_{n=1}^\infty n^{-s}$, namely, the open half-plane $\{ Re(s) > 1\}$); see \cite[Ch. 7]{HerLap1} for the full details.

For $c > 1$, the spectral operator $\mfa = \zeta (\partial)$ can be represented by a norm convergent {\em quantized} (or operator-valued) {\em Dirichlet series} and a {\em quantized} (also norm convergent) {\em Euler product}: 
\begin{align}\label{4.4}
\mfa = \zeta (\partial) &= \sum_{n=1}^\infty n^{-\partial} \notag \\
&= \prod_{p \in \mcp} (1 - p^{-\partial})^{-1},
\end{align}
where $\mcp$ denotes the set of prime numbers. It follows from \eqref{4.4} that for every $c > 1$, $\mfa$ is invertible (in the strong sense of $\mcb (\mbh_c)$, the space of bounded linear operators on $\mbh_c$), and that its (bounded) inverse $\mfa^{-1}$ is given by
\begin{equation}\label{4.4.1/2}
\mfa^{-1} = \sum_{n=1}^\infty \mu (n) n^{-\partial} = \prod_{p \in \mcp} (1-p^{-\partial}),
\end{equation}
where $\mu = \mu (n)$ is the M\" obius function, defined by $\mu (1) =1, \mu (n) =0$ if the integer $n \geq 2$ is not square-free, and $\mu (n) = k$ if $n \geq 2$ is the product of $k$ distinct primes. We stress that for $c >1$, all of the infinite series and infinite products in \eqref{4.4} and \eqref{4.4.1/2} are convergent in the operator norm (i.e., in the Banach algebra $\mcb (\mbh_c)$).

Moreover, for $0 < c < 1$ (i.e., ``within'' the critical strip $0 < Re(s) < 1$), the spectral operator $\mfa = \mfa_c$ is represented (when applied to a state function $f$ belonging to a suitable dense subspace of $\mbh_c$, which is also an operator core for $\partial = \partial_c$) by the same (but now weakly convergent) operator-valued Dirichlet series and Euler product as in \eqref{4.4}. A similar comment applies to $\mfa^{-1}$, which is then also a possibly unbounded operator when $0 < c < 1$.

Therefore, as was conjectured in \S 6.3.2 of \cite{Lap-vF3} and of \cite{Lap-vF4}, but now in a very precise sense, the spectral operator $\mfa = \zeta (\partial)$, which can be viewed as the {\em quantized Riemann zeta function}, has an operator-valued Euler product representation (as well as a convergent Dirichlet series, which was not conjectured to exist in \cite{Lap-vF3, Lap-vF4}) that is convergent (in a suitable sense) even in the critical strip $0 < Re(s) < 1$ (i.e., even when the dimensional parameter $c$ lies in the critical interval $(0,1)$); see \cite[\S 7.5]{HerLap1}.

One of the key results of \cite{HerLap1} is to provide various characterizations of the ``quasi-invertibility'' of the spectral operator $\mfa = \mfa_c$.\footnote{The possibly unbounded, normal operator $\mfa = \zeta (\partial)$ is said to be {\em quasi-invertible} if for every $T > 0$, the truncated spectral operator $\mfa^{(T)} := \zeta (\partial^{(T)})$ is invertible, where $\partial^{(T)} := \varphi^{(T)} (\partial)$ (defined via the functional calculus for unbounded normal operators) is the {\em truncated infinitesimal shift} and the function $\varphi^{(T)}$ is chosen so that $\sigma (\mfa^{(T)}) = [c - iT, c + iT]$; see \cite[\S 6.4]{HerLap1}. }
More specifically, the Riemann hypothesis (RH) is shown to be equivalent to the fact that for $c \in (0,1)$, the spectral operator $\mfa = \mfa_c$ is quasi-invertible if and only if $c \neq 1/2$, thereby providing an exact operator-theoretic counterpart of the reformulation of RH obtained in \cite{LapMa2} (and briefly discussed in \S \ref{Sec:2.6.2}) in terms of inverse spectral problems for fractal strings. (See \cite[Ch. 8, esp., \S 8.3]{HerLap1}.)

Another, seemingly very different, reformulation of the Riemann hypothesis, obtained by the author in \cite{Lap8} (and also discussed in \cite[\S 9.4]{HerLap1}) is the following statement: The Riemann hypothesis holds true if and only if the spectral operator $\mfa = \mfa_c$ is invertible (in the usual sense of unbounded operators) for every $c \in (0, 1/2)$.\footnote{By using a result in \cite{GarSteu}, this is shown to be equivalent to the non-universality of $\zeta = \zeta(s)$ in the left critical strip $\{ 0 < Re(s) < 1/2 \}$. On the other hand, in light of \eqref{4.3.1/2}, the universality of the Riemann zeta function $\zeta = \zeta (s)$ in the right critical strip $\{ 1/2 < Re(s) < 1 \}$ (due to S.~M. Voronin in \cite{Vor}  and extended by B. Bagchi and A. Reich in \cite{Bag} and [\hyperlinkcite{Rei1}{Rei1--2}]) implies that $\mfa = \mfa_c$ is {\em never} invertible for any $c \in (1/2, 1)$.} This result is referred to in \cite{Lap8} as {\em an asymmetric criterion for RH}.

Many other results are obtained in [\hyperlinkcite{HerLap1}{HerLap1--5}] and \cite{Lap8}, concerning, in particular, a quantized analog of the functional equation for $\zeta = \zeta(s)$ (or for its completion $\xi = \xi (s)$, see \cite[\S 7.6]{HerLap1} where the global spectral operator $\mca = \mca_c := \xi (\partial_c)$ is studied) and of the universality of the Riemann zeta function \cite{KarVo, Lau, Steu} (see \cite[Ch. 10, esp., \S 10.2]{HerLap1}),
as well as about the form of the inverse of the spectral operator, when it exists (see \cite{Lap8} and \cite[\S 9.4]{HerLap1}),
and the nature of the mathematical phase transitions at $c = 1/2$ and at $c =1$ concerning the shape of the spectrum, the quasi-invertibility, the invertibility and the boundedness of $\mfa = \mfa_c$ (see \cite[\S 9.3]{HerLap1}).

We note that the mathematical phase transitions occurring in \cite{HerLap1} at $c=1$ are analogous to (but different from) the one studied from an operator-algebraic point of view by J.-B. Bost  and  A. Connes in [\hyperlinkcite{BosCon1}{BosCon1--2}] and \cite[\S V.II]{Con1} (because the latter also corresponds to the pole at $s =1$ of $\zeta = \zeta(s)$), whereas the phase transitions occurring in \cite{HerLap1} and in \cite{Lap8} in the midfractal  case when $c = 1/2$ are of a very different nature. They were expected to occur in the author's early conjecture (and open problem) formulated in \cite[Quest. 2.6, p. 14]{Lap3} about the existence of a notion of complex fractal dimensions that would enable us to interpret the Riemann hypothesis as a phase transition in the midfractal case, in the sense of Wilson's work \cite{Wils} on critical phenomena in condensed matter physics and quantum field theory. 

For a full exposition of these results, we refer the interested reader to the book \cite{HerLap1}. We note that the above results could likely be extended to a large class of meromorphic functions (instead of just the Riemann zeta function), and especially of the arithmetic $L$-functions (\cite{Sarn}, [\hyperlinkcite{Murt1}{Murt1--3}], \cite[App. C]{Lap7}) for which the counterpart of the Riemann hypothesis or of the universality theorem is expected to hold. In particular, conjecturally, the results concerning RH should have an appropriate counterpart for all the elements of the Selberg class (\cite{Sel1, Sarn}, [\hyperlinkcite{Murt1}{Murt1--3}] and \cite[App. E]{Lap7}).

\subsection{Quantized number theory: The complex case}\label{Sec:4.3}

We continue this epilogue by providing an extremely brief overview of the results of Tim Cobler and the author on ``quantized number theory'' (in the ``complex case'') obtained in [\hyperlinkcite{CobLap1}{CobLap1--2}] and pursued in the book in preparation \cite{Lap10}, in which the real infinitesimal shift $\partial = d/dt$ acting on $\mbh_c = L^2 (\mbr, e^{-2ct} dt)$ (with $c \in \mbr$ and discussed in \S \ref{Sec:4.2}) is replaced by the complex infinitesimal shift $\partial = d/ dz$, now acting on a suitable weighted Bergman space $\mch$ of entire functions (as introduced originally and for completely different  purposes by A. Atzmon and B. Brive in \cite{AtzBri}). Namely, $\mch$ consists of the entire functions belonging to the complex weighted Hilbert space $L^2 (\mbc, e^{-|z|^\alpha} dz$), for some fixed, but otherwise unimportant, parameter $\alpha \in (0,1)$.\footnote{For the theory of (not necessarily weighted) Bergman spaces, see, e.g., \cite{DureSchu} and \cite{HedKonZhu}. }

For these values of $\alpha$, the operator $\partial$ is bounded (but not normal) and its spectrum, $\sigma (\partial)$, is given by a single point, namely, the origin:\footnote{If we were to allow the value $\alpha =1$, then in \eqref{4.4.4/5}, $\sigma (\partial)$, the spectrum of $\partial$, would be the closed unit disk (with center the origin) in $\mbc$ instead of being reduced to $\{0\}$. }
\begin{equation}\label{4.4.4/5}
\sigma (\partial) = \{ 0\}.
\end{equation}
From this, it follows that in the notation introduced below (namely, with $\partial_\tau := \partial + \tau$), we have that $\sigma (\partial_\tau) = \{ \tau \}$, for every $\tau \in \mbc$. We note that $0$ is an eigenvalue of $\partial$ (in fact, its only eigenvalue) and that the associated eigenfunction is the (possibly suitably normalized) constant function $1$, the `vacuum state'. With $0$ replaced by $\tau \in \mbc$, the same statement holds for the shifted differentiation operator $\partial_\tau$. 

The {\em complex infinitesimal shift} $\partial$ is the infinitesimal generator of the group of translations (or shifts) on the complex plane, $\{ e^{-z \partial} \}_{z \in \mbc}$, acting on every $\psi \in \mch$ as follows (for almost every $u \in \mbc$ and every $z \in \mbc$):
\begin{equation}\label{4.9.1/2}
(e^{-z \partial}) (\psi)(u) = \psi (u-z).
\end{equation}

In \cite{CobLap1} are rigorously constructed ``{\em generalized Polya--Hilbert operators}'' (GPOs, in short), in a sense close to that of \cite{Lap7}, and infinite dimensional regularized determinants of the restrictions to their eigenspaces which enable one, in particular, to recover (under appropriate hypotheses) the corresponding meromorphic functions (or zeta functions) as suitable (graded or alternating) ``characteristic polynomials'' of the restrictions of the GPOs to their total eigenspaces, in the spirit of \S \ref{Sec:4.1} (but well beyond).

More specifically, by working with the family of translates $\{ \partial_\tau\}_{\tau \in \mbc}$ of the complex infinitesimal shift $\partial = d/dz$ (i.e., $\partial_\tau := \partial + \tau$, for every $\tau \in \mbc$) and the  associated Bergman spaces, one can then construct (via orthogonal direct sums) a kind of `universal Polya--Hilbert operator' (GPO), $\mbd$, acting on a typically countably infinite direct sum $\mch$ of Bergman spaces, whose spectrum (when it is discrete) consists only of (isolated) eigenvalues with finite multiplicities and coincides with any prescribed discrete subset of $\mbc$, for example, a (multi)set of complex dimensions or (the reciprocal of) the divisor [viewed as a $\mbz_2$-graded (multi)set of zeros and poles] of a suitable meromorphic function, such as (an appropriate version of) the global Riemann zeta function.

We can then write the given meromorphic function  $g = g(s)$ (assumed to be a quotient of two entire functions of finite orders) as a (typically infinite dimensional) regularized $\mbz_2$-graded (or supersymmetric) determinant:
\begin{equation}\label{4.5}
g(s) = s\text{-det} (I - s \mcf),
\end{equation}  
where $\mcf = \mcf_g$ is the restriction of the GPO $\mbd$ to its total eigenspace; so that  
\begin{equation}\label{4.10.1/2}
\sigma (\mbd) = \sigma (\mcf) = \mfD (g)^{-1},
\end{equation}
the  reciprocal of the divisor of $g$, in the sense of Definition \ref{Def:3.15}, and the (graded) spectra of $\mbd$ and $\mcf$ are discrete and consist only of eigenvalues, which are the reciprocals of the zeros and the poles of $g =g(s)$ repeated according to their multiplicities.

The operator $\mcf = \mcf_g$ occurring in \eqref{4.5} and \eqref{4.10.1/2} and defined as the restriction of the GPO $\mbd = \mbd_g$ to its total eigenspace is called (in \cite{Lap10}) the {\em generalized Frobenius operator} (GFO, in short) associated with $g$. Indeed, in the special case of varieties over finite fields $\mbf_q$ discussed in \S \ref{Sec:4.1}, it is clearly a version (in the $s$-variable) of the usual Frobenius operator $F$ (defined by means of the $z$-variable, with $z := q^{-s}$) induced by the Frobenius morphism on the variety and acting on the underlying total cohomology space (which also coincides with the total eigenspace of $F$).

We stress that here and thereafter, $\mbd = \mbd_g$ is the GPO associated with the discrete (and $\mbz_2$-graded) multiset $\mfD (g)^{-1}$ consisting of $0$ (if $0$ is either a zero or a pole of $g$) and of the reciprocals of the (nonzero) zeros and poles of $g$. Under the aforementioned conditions on $g$ (which are relaxed in \cite{CobLap2}), this choice of the GPO $\mbd_g$ guarantees the compactness (and the normality) of the GFO $\mcf_g$ and even, that $\mcf_g$ belongs to some Schatten class (i.e., the sequence of characteristic eigenvalues of $\mcf_g$ belongs to $\ell^n = \ell^n (\mbc)$, for some integer $n \geq 1$); so that the regularized determinant in \eqref{4.5} is well defined (and is of order $n$). 

The regularized superdeterminant (or Berezinian, in the terminology of supersymmetric quantum field theory; see, e.g., \cite{Del3, Wein}) arising in \eqref{4.5} above can be written as a quotient of regularized determinants which are natural generalizations of the well-known (infinite dimensional) Fredholm determinants of trace class compact operators \cite{Fred}. For a detailed exposition and for the genesis of the theory of these generalized Fredholm determinants, we refer to  [\hyperlinkcite{Sim1}{Sim1--3}]; see also [\hyperlinkcite{CobLap1}{CobLap1--2}] and \cite{Lap10}. We simply mention that it relies in part on the work in \cite{Fred}, \cite{Plem}, \cite{Poin5}, \cite{Smit}, \cite{Lids}, \cite{GohKre} and  [\hyperlinkcite{Sim1}{Sim1--3}].

In the special case of the zeta function of a variety $V$ over a finite field $\mbf_q$ and after having made the change of variable $z := q^{-s}$, we obtain a rational function $Z_V = Z_V (z)$ which can be written as an alternating product of (finite-dimensional) determinants, as in \eqref{4.1} in the setting of the Weil conjectures.

Furthermore, in the other very important special case when $g$ is an appropriate version  of the completed (or global) Riemann zeta function, for example, $g(s) = \xi (s) := \pi^{-s/2} \Gamma (s/2) \zeta (s)$ or $g (s) = \Xi (s) := (s-1) \xi (s)$, the corresponding regularized determinant is then truly infinite dimensional  since $\zeta = \zeta (s)$ (and hence also $\xi = \xi (s)$ as well as $\Xi = \Xi (s)$) has infinitely many (critical) zeros . Moreover, the regularized determinant corresponding to the zeros is not a mere Fredholm operator \cite{Fred} but involves a renormalization of second order (that is, in the hierarchy of such regularized determinants indexed by integers $n \geq 1$, as in \cite{Sim2}, it is of level $n=2$).

These results provide, in particular, a partial but completely rigorous and quite general mathematical realization of what was sought for by a number of authors, including A. Weil [\hyperlinkcite{Wei1}{Wei1--6}], A Grothendieck [\hyperlinkcite{Gro1}{Gro1--4}], P. Deligne [\hyperlinkcite{Del1}{Del1--2}], and especially, in the present context, C. Deninger [\hyperlinkcite{Den1}{Den1--6}]. Naturally, we should point out that they do not provide a full realization of those authors and their successors' rich set of ideas and conjectures, especially concerning the existence of suitable cohomology theories ([\hyperlinkcite{Wei1}{Wei1--3}], [\hyperlinkcite{Gro1}{Gro1--4}], [\hyperlinkcite{Den1}{Den1--6}] and, e.g., [\hyperlinkcite{Tha1}{Tha1--2}], along with the relevant references therein) and of an appropriate positivity condition ([\hyperlinkcite{Wei4}{Wei4--6}], [\hyperlinkcite{Har1}{Har1--3}], \cite{Con2}), and therefore cannot (for now) be used to try to prove the Riemann hypothesis, although they may constitute a significant first step, particularly once (and if) one can obtain an appropriate geometric and topological (i.e., cohomological) interpretation (which may, however, be very difficult and take a very long time to achieve; see \S \ref{Sec:4.4}). 

We also mention that the search for suitable Polya--Hilbert operators has been the object of a number of works, both in physics (see, e.g., [\hyperlinkcite{Berr3}{Berr3--4}] and \cite{BerrKea}) and in mathematics (see, e.g., [\hyperlinkcite{Den1}{Den1--5}], \cite{Con2} and \cite[Chs. 4 \& 5]{Lap7}). The abstract (functional analytic) framework for GPOs provided in  [\hyperlinkcite{CobLap1}{CobLap1--2}] and extended in \cite{Lap10} is, in some ways, too general for certain purposes (for example, for proving RH, at least in our current state of knowledge) but presents the advantage of being completely categorical and widely applicable, well beyond the settings of fractal geometry and number theory  which originally motivated it.

The proof of the identity \eqref{4.5} given in \cite{CobLap1} makes use of Hadamard's factorization theorem for entire functions of finite order  ([\hyperlinkcite{Had1}{Had1--2}], \cite{Conw}). 

As is shown in \cite{Lap10}, the above formalism (and, in particular, the identity \eqref{4.5}) can also be applied to all the elements of the Selberg class (\cite{Sel2, Sarn}, [\hyperlinkcite{Murt1}{Murt1--3}] and \cite[App. E]{Lap7}) and hence, at last conjecturally, to all of the arithmetic zeta function or $L$-functions (more specifically, the automorphic $L$-functions) for which the extended Riemann hypothesis (ERH) is expected to hold (see, e.g., \cite{Sarn}, [\hyperlinkcite{Murt1}{Murt1--3}, \hyperlinkcite{ParsSh1}{ParsSh1--2}] and \cite[App. C]{Lap7}). In this case, one also needs to use regularized determinants of order two, as for the various versions of the completed Riemann zeta function discussed above.   

\subsection{Towards a fractal cohomology}\label{Sec:4.4}

Here, we briefly discuss the emerging theory of `fractal cohomology', as expounded upon in \cite{Lap10}. This theory builds on the work from \cite{CobLap1} (and from \cite{CobLap2}) described in \S \ref{Sec:4.3}, itself inspired in part by the theory from \cite{HerLap1} described in \S \ref{Sec:4.2}, as well as on the many references provided in \S \ref{Sec:4.1}, including [\hyperlinkcite{Wei1}{Wei1--3}], \cite{Gro1}, \cite{Den3} and [\hyperlinkcite{Lap-vF2}{Lap-vF2--4}].

The idea underlying the notion of a fractal cohomology is that to every complex dimension (now understood in the extended sense of a zero or a pole of the given fractal zeta function or arithmetic zeta function or else, more generally, of a suitable meromorphic function $g = g(s)$),\footnote{Note that the zeros and poles of the meromorphic function $g=g(s)$, can be viewed as the (necessarily simple) poles of (minus) the logarithmic derivative of $g$, the function $-g'/g$, with the associated residues being equal to the orders of the zeros or of the poles (and with corresponding signs identifying whether the poles of $-g'/g$ comes from a zero or a pole of $g$).} one can associate a finite-dimensional complex Hilbert space $H^\pm (\omega)$, with the plus (respectively, minus) sign corresponding to $\omega$ being a zero (respectively, a pole) of dimension (over $\mbc$) equal to the multiplicity of $\omega$.

More generally, to (the reciprocal $\mfD^{-1} (g)$ of) the divisor $\mfD (g)$ of the meromorphic function $g$, we associate a $\mbz_2$-graded (or supersymmetric) complex (and separable) Hilbert space\footnote{Here and thereafter, for notational simplicity, we use the same symbol $\oplus_s$ to indicate the $\mbz_2$-graded (or supersymmetric) direct sum of Hilbert spaces or of operators acting on them, as well as of (multi)sets, depending on the context.}
\begin{equation}\label{4.6}
H = H^+ \oplus_s H^-,
\end{equation}
where
\begin{equation}\label{4.7}
H^+ := \underset{\omega \in \mcz^+}{\oplus} H^+ (\omega) \quad \text{and} \quad H^- := \underset{\omega \in \mcz^-}{\oplus} H^- (\omega),  
\end{equation}
with $\mcz^+$ (respectively, $\mcz^-$) denoting the set of the reciprocals of the distinct zeros (respectively) poles of $g$ (with the convention according to which one does not have to take the reciprocal of 0, if this value occurs in either $\mcz^+$ or $\mcz^{-}$). The finite-dimensional (complex) Hilbert space $H^\pm (\omega)$ has dimension equal to the multiplicity of the corresponding pole or zero $\omega$ of $g = g(s)$.

As was explained in \S \ref{Sec:4.3}, $H$ is the eigenspace of the {\em generalized Polya--Hilbert space} (GPO) $\mbd = \mbd_g =\mbd_\mcz$, with $\mcz := \mcz^+ \oplus_s \mcz^-$ and $\mbd = \mbd^+ \oplus_s \mbd^-$. Furthermore, $\mcf = \mcf^+ \oplus_s \mcf^-$, defined as the restriction of the GPO $\mbd = \mbd^+ \oplus_s \mbd^-$ to its (total) graded eigenspace $H = H^+ \oplus_s H^-$ (as given by \eqref{4.6} and \eqref{4.7}). The (possibly unbounded) operator $\mcf$ is called the {\em generalized Frobenius operator} (GFO) associated with $g$ (or with $\mcz$). Under suitable assumptions on $g$ (specified in [\hyperlinkcite{CobLap1}{CobLap1--2}] and in \cite{Lap10}),\footnote{More specifically, in light of \cite{CobLap1}, one assumes that the meromorphic function $g = f/h$ is the ratio of two entire functions $f$ and $h$ of finite orders (which is sufficient for all of the potential applications to number theory, as well as for many of the potential applications to fractal geometry), while according to \cite{CobLap2}, one should be able to remove the hypothesis that the entire functions $f$ and $h$ are of finite orders.} the GFO $\mcf$ enables us to recover the meromorphic function $g$ via the determinant formula \eqref{4.5} showing that $g = g(s)$ is the ``characteristic polynomial'' of $\mcf$ (viewed as a supersymmetric regularized determinant).

\begin{remark}\label{Rem:4.1}
By construction, the multiplicity of $\omega \in \mcz^+$ (respectively, $\omega \in \mcz^-$), with $\omega \neq 0$, as an eigenvalue of the GPO $\mbd^+$ (respectively, $\mbd^-$), or equivalently, of the GFO $\mcf^+$ (respectively, $\mcf^-$), coincides with the multiplicity of the corresponding zero (respectively, pole) $\omega^{-1}$ of the meromorphic function $g = g(s)$. Hence, the finite-dimensional complex Hilbert space $H(\omega)$ has dimension over $\mbc$ equal to the multiplicity of the corresponding zero (respectively, pole) $\omega^{-1}$ of $g=g(s)$.\footnote{Of course, the same is true if $\omega = 0$ provided we replace $\omega^{-1}$ by $\omega$, in the latter statement.} Consequently, since finite-dimensional Hilbert spaces of the same dimension are isomorphic, the (total) cohomology space $H$ in \eqref{4.6} could be equivalently defined by letting $\omega$ run through the divisor $\mfD (g)$ of $g$ (rather than through its reciprocal $\mfD^{-1} (g)$, as indicated in \eqref{4.7}): symbolically,
\begin{equation}\label{4.12.1/4}
H = \underset{\omega \in \mfD (g)}{\oplus} H(\omega).
\end{equation}
Accordingly, in \eqref{4.7}, the ``even'' (respectively, ``odd'') cohomology space $H^+$ (respectively, $H^-$) can be defined by replacing $\mcz^+$ (respectively, $\mcz^-$) by the set of {\em distinct poles} (respectively, of {\em distinct zeros}) of $g=g(s)$: symbolically,
\begin{equation}\label{4.12.1/2}
H^+ = \underset{\omega \in \{ \text{zeros of } g \}}{\oplus \ H^+ (\omega)} \quad \text{and} \quad H^- = \underset{\omega \in \{ \text{poles of } g \}}{\oplus \ H^- (\omega)}.
\end{equation}
\end{remark}

When $\mcz$ is infinite (that is, when $g$ has at least infinitely many zeros or poles), then the {\em total cohomology space} $H = H_g =H_\mcz$ is infinite dimensional, and vice versa. Furthermore, since $\mcz$ is at most countable, the cohomology space $H$ is a separable (complex) Hilbert space. This is the case, for example, when $g = g(s)$ is one of the completed Riemann zeta functions, $\xi = \xi (s)$ or $\Xi = \Xi (s)$. Therefore, the (total) cohomology space $H_\xi$ or $H_\Xi$ is {\em intrinsically infinite dimensional} and cannot be reduced to a finite-dimensional one.

On the other hand, in the case of the zeta function attached to a variety $V$ over a finite field $\mbf_q$ (as in \S \ref{Sec:4.1}), the periodicity of the zeta function combined with the change of variable $z := q^{-s}$ yields a rational function $Z_V = Z_V (z)$ to which is associated a finite-dimensional cohomology space (corresponding to the zeros and the poles of the rational function, counted according to their multiplicities). {\em Hence, in this case, the total cohomology space is a priori infinite  dimensional but can also naturally be reduced to a finite-dimensional one.}

We thus obtain a cohomology theory which seems to satisfy several (if not all) of the main properties expected by Grothendieck (in [\hyperlinkcite{Gro1}{Gro1--4}]) and by Deninger (in [\hyperlinkcite{Den1}{Den1--6}]), for example. (See also, e.g., [\hyperlinkcite{Tha1}{Tha1--2}] for an exposition of some of those ideas.) Let us further explain this statement, but without going into the details. The key is that this ``fractal cohomology theory'' satisfies the counterpart (in the present context) of the {\em K\" unneth formula} (for the product of two algebraic varieties or of two differentiable manifolds; see, e.g.,  [\hyperlinkcite{Dieu1}{Dieu1,2}] and \cite{MacL}). This is, in fact, a key motivation for the author's conjecture (Conjecture \ref{Con:3.14}) made in \S \ref{Sec:3.4} about the set of complex dimensions (really, the divisor of the associated fractal zeta function) of the Cartesian product of two bounded sets and, more generally, of two RFDs (in $\mbr^N$). If correct, this conjecture would help provide a geometric interpretation of the K\" unneth formula of fractal cohomology theory in terms of Cartesian products (at least in the geometric setting of bounded sets and RFDs in $\mbr^N$).

A suitable version of Poincar\' e duality (or of an appropriate extension thereof) in the context of fractal cohomology still remains to be found; see, e.g., [\hyperlinkcite{Poin1}{Poin1--4}], [\hyperlinkcite{Dieu1}{Dieu1,2}] and \cite{MacL} for the classic notions of Poincar\' e duality. A natural extension of that notion could clearly be formulated, however, in terms of the fractal cohomology spaces associated with $g(s)$ and with $g(1-s)$.

It is noteworthy that Grothendieck's beautiful dream and elusive notion of a ``motive'' is in fact intimately connected with the search for a ``universal cohomology theory'' satisfying several axioms, including especially, the analog of K\" unneth's formula and of Poincar\' e duality. (See, e.g., \cite{Kah}, \cite{Tha1} and \cite{Tha2}.)\footnote{See also \cite{Cart} for an interesting discussion of the evolution of the notion of a `geometric space'.}
The emerging fractal cohomology theory proposed in \cite{Lap10} (and building, in particular, on \cite{Lap-vF4}, \cite{Lap7} and [\hyperlinkcite{CobLap1}{CobLap1,2}]) is evolving in that spirit. It associates (in a functorial way) to a suitable meromorphic function $g$ (or to its divisor $\mfD (g)$) a cohomology space $H_g$, which is the eigenspace of the generalized Frobenius operator (GFO) $\mcf = \mcf_g$. In addition, the identity \eqref{4.5} which enables one (under appropriate hypotheses) to recover the meromorphic  function from the action of $\mcf$ on $H$ can be viewed as naturally providing an associated inverse functor in this context. \\

\subsubsection{Grading of the fractal cohomology by the real parts}\label{Sec:4.4.1}

It is natural to wonder why, unlike in standard algebraic or differential toplogy (see, e.g., \cite{Dieu2} and \cite{MacL}), the fractal cohomology spaces are no longer (in our context) indexed by integers. In fact, a natural grading of the total fractal cohomology space $H= H^+ \oplus_s H^-$ is provided by the real parts (of the reciprocals) of the zeros and poles of the meromorphic function $g$.\footnote{If $0$ is either a zero or a pole of $g$, then $H^+$ or $H^-$ is also graded by zero. \label{Fn:165}}
Hence, in general, $H$ is graded by real numbers and not just by (nonnegative) integers.\footnote{In light of Remark \ref{Rem:4.1}, we could replace throughout this discussion the real parts of the reciprocals of the zeros and of the poles by the zeros and the poles themselves; see also Remark \ref{Rem:4.2}.}

More specifically, let $\mcz = \mcz^+ \oplus_s \mcz^-$, where $\mcz^+$ and $\mcz^-$ denote, respectively, the reciprocals of the distinct zeros and poles of $g$ (with the same convention concerning $0$ as usual). Furthermore, let $R^\pm$ denote the (at most countable) set of real parts of the elements of $\mcz^\pm$. Moreover, for $\alpha \in R^\pm$, let 
\begin{equation}\label{4.6.5}
\mcz_\alpha^\pm := \{ \omega \in \mcz^\pm : Re(\omega) = \alpha \}
\end{equation}
and
\begin{equation}\label{4.7.5}
H_\alpha^\pm := \underset{\omega \in \mcz_\alpha^\pm}{\oplus} H(\omega),
\end{equation}
where, as earlier, $H(\omega)$ denotes the eigenspace corresponding to the eigenvalue $\omega$ of $\mcf_g$ (and also of $\mbd_g$); so that the dimension (over $\mbc$) of the finite-dimensional Hilbert space $H^\pm (\omega)$ is equal to the multiplicity of $\omega$ as an eigenvalue of $\mcf_g$ (or, equivalently, if $\omega \neq 0$, as the reciprocal of  a zero or a pole of $g$; if $\omega = 0$, then we can omit the word ``reciprocal'' here).

Then, we can now provide the following precise formulation of the statement according to which the (total) fractal cohomology space is graded by the real parts, of the elements of $\mfD (g)^{-1}$:
\begin{equation}\label{4.8}
H^\pm = \underset{\alpha \in R^\pm}{\oplus} H_\alpha^\pm,
\end{equation} 
with $R^\pm$ defined just before \eqref{4.6.5} and $H_\alpha^\pm$ defined by \eqref{4.7.5} for all $\alpha \in R^\pm$.

\begin{remark}\label{Rem:4.2}
For notational simplicity, we have worked above with the grading provided by the real parts of the elements of $\mcz^+$ and $\mcz^-$. However, in light of Remark \ref{Rem:4.1}, we could just as well replace $\mcz^+$ and $\mcz^-$ by the sets of distinct zeros and poles of $g = g(s)$, respectively. And similarly, for the definition of $R^\pm$, $\mcz_\alpha^\pm$, $H_\alpha^\pm$, $H$ and $H^\pm$ in (or around) \eqref{4.6}, \eqref{4.7} and \eqref{4.6.5}--\eqref{4.8}. Consequently, in that setting, the fractal cohomology spaces $H$, $H^+$ and $H^-$ would actually be indexed by the real parts of the elements of the divisor $\mfD (g)$, the zero set and the pole set of the meromorphic function $g = g(s)$, respectively. Accordingly, the corresponding grading index sets $R^+$ and $R^-$ would instead consist of the real parts of the distinct zeros and of the distinct poles of $g = g(s)$, respectively.
\end{remark}

We note that $H^\pm$ is graded by the (possibly infinite, \cite{BruDKPW}) matroid (\cite{Oxl, Wel}) $R^\pm$, of rank (possibly infinite and defined as in \cite{BruDKPW}) equal to the number of vertical lines in $\mcz^\pm$ (or, equivalently, to the cardinality of $R^\pm$, viewed either as the set of distinct zeros/poles of $g = g(s)$).

For example, for a $d$-dimensional variety $V$ over a finite field $\mbf_q$, $R^-$ (respectively, $R^+$) consists of the integers $j$ (respectively, the half-integers $j/2$), where $j \in \{0,1, \cdots, d\}$. This is the case relative to the original $s$-variable. On the other hand, if $z:= q^{-s}$, then the corresponding sets $R^-$ and $R^+$ are $\{ q^{-j} : j=0, 1, \cdots, d \}$ and $\{q^{-j/2} : j=0, 1, \cdots, d\}$, respectively. Thus, in light of the analog of the Riemann hypothesis proved by Deligne in [\hyperlinkcite{Del1}{Del1--2}] in the general case when $d \geq 1$ (and by Weil in [\hyperlinkcite{Wei1}{Wei1--3}] when $d=1$), the determinant formula \eqref{4.5} yields (still in the $z$-variable)
\begin{equation}\label{4.9}
Z_V (z) = \frac{Q_1 (z) \cdots Q_{2d-1} (z)}{Q_0 (z) \cdots Q_{2d} (z)},
\end{equation}
where for $k \in \{0, \cdots, d\}$, the polynomial $Q_{2k}$ is the characteristic polynomial of the linear operator ${\bf F}_{2k} := \mcf_{q-k}^-$ (the restriction of the GFO $\mcf$ to $\bfh_{2k} := H_{q-k}^-$) and for $k \in \{1, \cdots, d\}$, the polynomial $Q_{2k-1}$ is the characteristic polynomial of the linear operator ${\bf F}_{2k-1} := \mcf_{q^{-k/2}}^+$ (the restriction of the GFO $\mcf$ to $\bfh_{2k-1} := H_{q^{-k/2}}^+$). With this notation, \eqref{4.8} becomes
\begin{equation}\label{4.10}
H^+ = \vc{\oplus}{d}{k=1} \bfh_{2k-1} \quad \text{and} \quad H^- = \vc{\oplus}{d}{k=0} \bfh_{2k},
\end{equation}
while the total cohomology space $H$ is given by 
\begin{equation}\label{4.11}
H = H^+ \oplus_s H^- = \vc{\oplus}{2d}{j=0} \bfh_j. 
\end{equation}

In the case of the completed Riemann zeta function $\xi (s)$, {\em and assuming the truth of the classic Riemann hypothesis}, we have that $R^+ := \{1/2\}$ and $R^-:= \{0,1\},$ while $\mcz^- = \{ 0,1\}$ and
\begin{equation}\label{4.12}
\mcz^+ := \{ \text{reciprocals of the critical zeros of } \zeta (s) \}.
\end{equation}
Hence, with the notation\footnote{We are now using the grading of the total cohomology space by the real parts of the zeros and the poles of $\xi = \xi (s)$, as in Remark \ref{Rem:4.2}.}
\begin{equation}\label{4.13}
\bfh_0 := H_0^-, \quad \bfh_2 :=H_1^-, \quad \text{and} \quad \bfh_1 := H_{1/2}^+,
\end{equation}
while $\bm{\mcf}_j$ is defined as the restriction of the generalized Frobenius operator $\mcf$ to $\bfh_j$, for $j \in \{0,1,2\}$. Then, \eqref{4.5} becomes (neglecting an unimportant factor):\footnote{This factor is a nonvanishing entire function involving the number $\pi$ and Euler's constant $\gamma$; see \cite{CobLap1} and \cite{Lap10} for its specific value.}
\begin{equation}\label{4.14}
\xi (s) =\frac{\text{det} (I-s \bm{\mcf}_1)}{\text{det}(I - s \bm{\mcf}_0) \text{det} (I -s \bm{\mcf}_2)},
\end{equation}
with det$(I - s \bm{\mcf}_0) = s$ and det$(I - s \bm{\mcf}_2) = s-1$ corresponding to the pole of $\xi = \xi (s)$ at $s =0$ and at $s=1$, respectively.

Note that with the above notation (which is inspired by the one used for varieties over finite fields as well as by what is expected for Weil-type cohomologies in the present context as well as for more general arithmetic $L$-functions),\footnote{See, e.g., [\hyperlinkcite{Gro1}{Gro1--3}], \cite{Dieu1}, \cite{Den3}, \cite{Con2}, \cite[esp., Ch. 4 \& App. B]{Lap7}, \cite{Kah}, [\hyperlinkcite{Tha1}{Tha1--2}] and the relevant references therein. }
we have (in light of \eqref{4.13})
\begin{equation}\label{4.14.5}
H^+ = \bfh_1 \quad \text{and} \quad H^- = \bfh_0 \oplus \bfh_2.
\end{equation}

The classic Riemann hypothesis (for $\zeta$ or, equivalently, for $\xi$) is equivalent to the fact that the total even cohomology space of $\xi = \xi (s)$ (namely, $H^+ = H_\xi^+ = \bfh_1$) is `{\em monofractal}'; i.e., with the notation of Remark \ref{Rem:4.2}, RH is equivalent to the fact that $R^+$ consists of a single point (of necessity, then, $R^+ = \{ 1/2\}$).\footnote{Observe that by the functional equation for $\xi$ (i.e., $\xi (s) = \xi (1-s)$, for all $s \in \mbc$) and since $\xi$ has zeros on the critical line $\{ Re(s) = 1/2\}$, we must have $R^+ = \{ 1/2\}$ since otherwise, $R^+$ would contain at least three points, $1/2, \rho$ and $1/2 - \rho$, for some $\rho \in (0, 1/2)$. Also, conversely, if $R^+$ had more than one point, then RH would obviously be violated.}

Entirely analogous statements can be made about the zeta functions of number fields [\hyperlinkcite{ParsSh1}{ParsSh1--2}] and of more general arithmetic $L$-functions, such as automorphic $L$-functions or conjecturally equivalently (see, e.g., \cite[App. C and App. E]{Lap7}), the members of the Selberg class.

In contrast, the total odd cohomology  space $H^- = H_{\zeta_\mcl}^-$ (still relative to the underlying $\mbz_2$-grading) of a generic nonlattice self-similar string $\mcl$ is `{\em multifractal}' in a very strong sense; namely, $R^-$, the set of real parts of the complex dimensions of $\mcl$ (in the usual sense of \S \ref{Sec:2} and \S \ref{Sec:3}) is a countably infinite set which is dense in a compact interval $[D_\ell, D]$, where $-\infty < D_\ell <D$ and $D = D_\mcl$ is the Minkowski dimension of $\mcl$.\footnote{We assume here implicitly that there are no (drastic) cancellations between the zeros and the poles of $\zeta_\mcl$. If we work instead with $H^- =H^-_{\ptoo}$, where the RFD $(\ptoo)$ is a geometric realization of $\mcl$, then $R \backslash \{0\}$ is of the above form.} \\

\subsubsection{Open problems and perspectives$:$ Geometric interpretation and beyond}\label{Sec:4.4.2}

We close this section (and paper) by mentioning a few open problems and long-term research directions related to the nascent fractal cohomology theory and its possible geometric and topological interpretations:\\

($i$) {\em Geometric interpretation.} It would be very useful and interesting to obtain a good geometric interpretation of the theory. It is natural to wonder what is the `curve' (or `variety') underlying fractal cohomology. By analogy with the case of curves (or, more generally, varieties) over finite fields, one may guess that the underlying supersymmetric `fractal curve' is the divisor $\mfD (g)$ of the given meromorphic function $g$ (or perhaps, its inverse, $\mfD (g)^{-1}$, as defined earlier).\footnote{For a nonlattice self-similar string $\mcl$ (in the case of a single gap), according to the results obtained in \cite[Ch. 3, esp., \S 3.4]{Lap-vF4}, the (multi)set of complex dimensions $\mfD (g) = \mcd (\zeta_\mcl)$ obeys a quasiperiodic pattern and is conjectured to form a `{\em generalized quasicrystal}' (in a sense to be yet fully specified); see \cite[Problem 3.22]{Lap-vF4}. More broadly, in \cite[esp., Ch. 5 and App. F]{Lap7}, this type of generalized quasicrystal plays a key conceptual role. } Ignoring the distinction between $\mfD (g)$ and its inverse, let us denote symbolically this `curve' by $C = C_g$. Then, abusing geometric language, one can think of the original Hilbert space $\mch_g$, the eigenspace of the GPO $\mbd_g$ (namely, the fractal cohomology space $H_g$) and even the GPO $\mbd_g$ and the GFO $\mcf_g$, as `bundles' or `sheaves' over $C$.

If we let the function $g$ vary in a suitable `moduli space' of meromorphic functions (or zeta functions),\footnote{Compare with the moduli spaces of fractal membranes (and of the associated zeta functions or spectral `partition functions') that play a central role in \cite[Ch. 5]{Lap7}.}
we then obtain geometric, analytic and algebraic structures (such as $C_g, \mch_g, \partial_g, \mbd_g$, $\mcf_g$ and $H_g$) which are naturally defined on that moduli space.

Pursuing the analogy with curves (or, more generally, varieties) over finite fields, one may wonder if the curve $C$ should not be augmented and replaced by a new (and larger) `curve' $\widetilde{C}$ whose (generic) points would coincide with the fixed points of the generalized Frobenius operator (GFO) $\mcf$ and of its iterates $\mcf^n$ (with $n \in \mbn$ arbitrary). In the present discussion, we allow the solutions (i.e., the fixed points of $\mcf^n$, for some $n \in \mbn$) to be distributional solutions (in a suitably defined weighted space of tempered distributions) and not just the Hilbert space solutions.\footnote{If we only allowed Hilbert space solutions, then $\widetilde{C}$ would simply coincide with $C$.} Then, a simple  differential equation computation (see \cite{Lap10}) shows that $\widetilde{C} = \widetilde{C}_g$ consists of the points of $C =C_g$ augmented by a copy of the group $\mu (\infty)$ of all (complex) roots of unity attached to (and acting on) each point of $C$.\footnote{Hence, $\mu (\infty) = \cup_{n \geq 1} \mu (n)$, where for each $n \in \mbn$, $\mu (n)$ is the group of (complex) $n$-th roots of unity; clearly, here, $\mu (n)$ does not denote the value of the M\" obius function at $n$.}
In other words, $\widetilde{C}$ can be viewed as a `principal bundle' over $C$, with structure group $\mu (\infty)$. Alternatively, $\widetilde{C}$ can be viewed as a `sheaf' over $C$.

Amazingly and likely not coincidentally, the group $\mu (\infty)$ (and variants thereof) also plays a key role in the theory of motives over the elusive field of one element, $\mbf_1$ (which originated in Y. Manin's seminal article \cite{Mani}); see, especially, \cite{Kah}, \cite{Tha1}, \cite{Tha2} and the relevant references therein.\footnote{See also the recent work \cite{Har4} in which $\mu (\infty)$ does not seem to play a role but whose proposed geometric and algebraic language and formalism for number theory and `varieties' over $\mbf_1$ is quite appealing and should, in the long term, be connected with aspects of the present theory.}

In the aforementioned references, working over $\mbf_1$ typically leads to adding new points to the underlying curve or variety (or motif) that were formally {\em invisible}. The same is true in the present situation of `fractal motives' and the associated fractal cohomology. Indeed, going from $C$ to $\widetilde{C}$ amounts to adding previously invisible points. Analytically, as was alluded to above, this amounts to going outside the original Hilbert space $\mch_g$ and finding all of the fixed points of the iterates $\mcf_g^n$ (for any $n \geq 1$) that lie in a suitable weighted space of tempered distributions containing $\mch_g$; see \cite{Lap10}.

In this context, for every given $n \in \mbn$, the fixed points of $\mcf^n$, correspond  to the `points' of the curve $\widetilde{C}$ which lie in the $n$-th field extension of $\mbf_1$. Heuristically, they can be thought of as a `principal bundle' over $C$ with structure group $\mu(n)$, the group of $n$-th  complex roots of unity. (Note the  close  analogy with the case of curves (or varieties) over finite fields discussed in \S \ref{4.1}.) 

Therefore, at the most fundamental level, we should aim at interpreting the present theory of fractal cohomology (as developed in [\hyperlinkcite{CobLap1}{CobLap1--2}] and \cite{Lap10}) as a universal cohomology theory of `{\em fractal motives}' over $\mbf_1$, the elusive field of one element. Finding a coherent and completely rigorous geometric and arithmetic interpretation of this type is clearly a long-term open problem.

\begin{remark}\label{Rem:4.3}
The `curve' $C = C_g$ itself, or its augmentation (by invisible points, the distributional fixed points of $\mcf^n$, for any $n \in \mbn$) $\widetilde{C} = \widetilde{C}_g$, can be viewed as the `{\em sheaf of complex dimensions}' (here, visible zeros and poles of $g$) associated with `the' analytic continuation of $g$ (i.e., with `the' meromorphic continuation of $g$ on any given connected open set $U$ of $\mbc$ on which the latter exists). Interestingly, the notion of a sheaf was introduced in the 1940s by J. Leray to solve certain problems in cohomology theory and then used to precisely deal with the a priori ill-defined notion of analytic continuation of a function of several complex variables, in connection with the Cousin problem. (See, e.g., \cite{GunRos}, \cite{Ebe} and \cite{Dieu2}.) The previously mentioned extensions of the notion of complex dimensions (as nonremovable singularities of suitable complex analytic functions on Riemann surfaces, or in higher dimensions, on complex analytic varieties or even, on analytic spaces) and of the associated theory (see \S \ref{Sec:2.5} and \S \ref{Sec:3.6}, as well as part ($iii$) of the present subsection) would, in the long term, provide a natural and significantly broader framework within which to consider the (yet to be precisely defined) sheaves $C$ and $\widetilde{C}$ in this extended context. 
\end{remark}

We next very briefly mention two other key research  directions in this area:\\

($ii$) {\em Fractal homology.} As is well known, standard cohomology theories in topology and in differential geometry, for example) are often the dual theories of suitable homology theories. (See, e.g., \cite{Dieu2} and \cite{MacL}.)  We wish to ask here what could be an appropriate `{\em fractal homology theory}' in the present context. More specifically, in the geometric context of bounded subsets of $\mbr^N$ (and, more generally, of RFDs in $\mbr^N$), can one construct a suitable fractal homology theory whose dual theory would be the corresponding fractal cohomology theory (associated with the divisors of fractal zeta functions, as described above).\footnote{At least for certain self-similar geometries, there seems to be glimpses of such a theory, in which the underlying complex dimensions play a natural role. }
Is there an appropriate generalization of Poincar\' e duality and of K\" unneth's formula in this context? What about the more general (but at present, less geometric) context of general (and suitable) meromorphic functions?\\

($iii$) {\em Singularities of functions of several complex variables.} Finally, we mention that as was alluded to in Remark \ref{Rem:4.3}, it would also be interesting to extend the theory of complex fractal dimensions and the associated fractal cohomology theory to (suitable) meromorphic functions of several complex variables naturally defined on complex manifolds (or, more generally on complex analytic spaces), as well as to develop an associated theory of (generalized) sheaves and sheaf cohomology (see, e.g., \cite{GunRos} and \cite{Ebe} for the classic theory).

Just as in the case of a single complex variable, we need not restrict ourselves to meromorphic functions defined on domains of $\mbc^r,$ with $r \geq 1$. One is therefore naturally led to consider (nonremovable) `{\em singularities}' of functions that go beyond the mere poles, as is already apparent in the present geometric theory of complex fractal dimensions; see, e.g., \S \ref{Sec:2.5} and \S \ref{Sec:3.6} above, along with \cite{LapRaZu10} for a first model in the case when $r=1$ and thus when suitable Riemann surfaces would be associated with the given singularities.\\

For more information on fractal cohomology theory and the many associated  open problems, we refer the interested reader to (the latter part of) the author's forthcoming book, \cite{Lap10}.

\section*{Acknowledgements}
This may be a fitting place to express my deep gratitude to Erin P.~J. Pearse, John A. Rock and Tony Samuel, the organizers of the International Conference and Summer School (supported by the National Science Foundation) on {\em Fractal Geometry and Complex Dimensions} held on the occasion of my sixtieth birthday in the beautiful setting of the campus of the California Polytechnic State University at San Luis Obispo, as well as to the editors of this volume of {\em Contemporary Mathematics} (the proceedings volume of the above SLO meeting), the aforementioned organizers along with Robert G. Niemeyer, for their great patience in waiting (with some anxiety, I presume) for my contribution and for their helpful suggestions for shortening the original text (which has now become the foundation for the author's forthcoming book, \cite{Lap10}).

I wish to thank all of my wonderful collaborators over the past twenty five to thirty years on many aspects of the theory of complex fractal dimensions and its various extensions or related topics, including (in rough chronological order) Jacqueline Fleckinger (\cite{FlLap} and other papers), Carl Pomerance [\hyperlinkcite{LapPo1}{LapPo1--3}], Helmut Maier [\hyperlinkcite{LapMa1}{LapMa1--2}], Jun Kigami [\hyperlinkcite{KiLap1}{KiLap1--2}], Michael M.~H. Pang \cite{LapPa}, Cheryl A. Griffith, John W. Neuberger and Robert J. Renka \cite{LapNRG}, Christina Q. He \cite{HeLap}, Machiel van Frankenhuijsen [\hyperlinkcite{Lap-vF1}{Lap-vF1--7}], Ben M. Hambly \cite{HamLap}, Erin P.~J. Pearse [\hyperlinkcite{LapPe1}{LapPe1--3}], Hung (Tim) Lu ([\hyperlinkcite{LapLu1}{LapLu1--3}], [\hyperlinkcite{LapLu-vF1}{LapLu-vF1--2}]), John A. Rock (\cite{LapRo}, \cite{LapRoZu}, \cite{LapRaRo}), Erik Christensen and Cristina Ivan \cite{ChrIvLap}, Jacques L\' evy Vehel \cite{LapLevyRo}, Steffen Winter [\hyperlinkcite{LapPeWi1}{LapPeWi1--2}], Nishu Lal [\hyperlinkcite{LalLap1}{LalLap1--2}], Hafedh Herichi [\hyperlinkcite{HerLap1}{HerLap1--5}], Rolando de Santiago and Scott A. Roby \cite{deSLapRRo}, Robyn L. Miller and/or Robert Niemeyer (in papers on fractal billiards), Jonathan J. Sarhad \cite{LapSar}, Goran Radunovi\'c and Darko \v{Z}ubrini\'c [\hyperlinkcite{LapRaZu1}{LapRaZu1--10}], Kate E. Ellis and Michael Mackenzie \cite{ElLapMcRo}, Tim Cobler [\hyperlinkcite{CobLap1}{CobLap1--2}] and Ryszard Nest \cite{LapNes}.

I also wish to thank my many diverse and talented Ph.D. students (soon twenty of them, this year, in 2018) and postdocs, a number of whom have been mentioned among my collaborators just above.

Furthermore, I am most grateful to the two very conscientious and thorough referees of this paper for their very helpful suggestions, questions and comments.

Moreover, I am very indebted to my personal Administrative Assistant, Ms. Yu-Tzu Tsai, without whose invaluable help this paper would never have been completed (almost) on time.

Finally, I would like to gratefully acknowledge the long-term support of the U.S. National Science Foundation (NSF) over the past thirty years, since the beginning of the theory of complex dimensions and throughout much of its prehistory, under the research grants DMS-8703138, DMS-8904389, DMS-9207098, DMS-9623002, DMS-0070497, DMS-0707524 and DMS-1107750.

\section*{Glossary}

\begin{itemize}
\setlength\itemsep{.3em}
\item[] $a_k \sim b_k$ as $k \rightarrow \infty$, asymptotically equivalent sequences\dotfill 13
\item[] $\mfa = \mfa_c$, spectral operator\dotfill 94
\item[] $\mfa^{(T)} = \mfa_c^{(T)}$, truncated spectral operator\dotfill 96
\item[] $A_\varepsilon$, $\varepsilon$-neighborhood of $A \subseteq \mbr^N$\dotfill 31
\item[] $\mca = \mca_c$, global spectral operator\dotfill 97
\item[] $\mcb (\mbh_c)$, Banach algebra of bounded linear operators on the Hilbert space $\mbh_c$\dotfill 96
\item[] CS, Cantor string\dotfill 9
\item[] $\mbc$, the field of complex numbers\dotfill 10
\item[] $\widetilde{\mbc} := \mbc \cup \{\infty \}$, the Riemann sphere\dotfill 95 
\item[] $c \ell (A)$ or $\overline{A}$, closure of a subset $A$ of $\mbc$ or of $\mbr^N$\dotfill 35, 95
\item[] $\underline{D}, \overline{D}$, lower, upper Minkowski dimension (of a RFD $(\ao)$ in $\mbr^N$)\dotfill 31
\item[] $D$, Minkowski dimension (of a fractal string $\mcl$ or of $A \subseteq \mbr^N$, or else of a RFD $(\ao)$ in $\mbr^N$)\dotfill 10, 31, 34
\item[] $\mcd= \mcd_\mcl$, the set of complex dimensions of a fractal string $\mcl$ (or of its boundary)\dotfill 10, 11
\item[] $\dim_\mbc \mcl = \mcd_\mcl (\mbc)$, the set of all complex dimensions (in $\mbc$) of the fractal string $\mcl$\dotfill 11
\item[] $\dim_{PC} \mcl = \mcd_{PC} (\mcl)$, the set of principal complex dimensions of the fractal string $\mcl$\dotfill 11
\item[] $\dim_\mbc (\ao) = \mcd_{\ao} (\mbc)$, the set of all complex dimensions (in $\mbc$) of the RFD $(\ao)$ in $\mbr^N$\dotfill 38
\item[] $\dim_{PC} (\ao) = \mcd_{PC} (\ao)$, the set of principal complex dimensions of the RFD $(\ao)$ in $\mbr^N$\dotfill 38
\item[] $\mcd (\zeta_\mcl) = \mcd (\zeta_\mcl; U)$, the set of (visible) poles (in $U$) of the geometric zeta function $\zeta_\mcl$ of a fractal string $\mcl$\dotfill 10
\item[] $\mcd_\mcl = \mcd_\mcl (U)$, the set of (visible) complex dimensions of a fractal string $\mcl \ (\mcd_\mcl = \mcd (\zeta_\mcl))$\dotfill 10
\item[] $\mcd (\zeta_{\ao}) = \mcd (\zao; U)$ or $\mcd (\tzao; U)$, the set of (visible) poles  (in $U$) of the distance or tube zeta function $\zao$ or $\tzao$ of an RFD $(\ao)$ in $\mbr^N$ $\ldots$\dotfill 34
\item[] $\mcd (\zeta_A) = \mcd (\zeta_A ; U)$ or $\mcd (\widetilde{\zeta}_A) = \mcd (\widetilde{\zeta}_A; U)$, the set of (visible) poles (in $U$) of the distance or tube zeta function $\zeta_A$ or $\widetilde{\zeta}_A$ of a bounded set $A$ in $\mbr^N$ $\ldots$\dotfill 34
\item[] $\mcd = \mcd_A$ or $\mcd = \mcd_{\ao}$, the set of (visible) complex dimensions of a bounded set $A$ or of an RFD in $\mbr^N (\mcd_A = \mcd (\zeta_A) = \mcd (\widetilde{\zeta}_A)$ and $\mcd_{\ao} = \mcd (\zeta_{\ao}) = \mcd (\tzao)$)\dotfill 34
\item[] $\mcd_\mfs$, the set of scaling complex dimensions  (of a self-similar spray or set) $\ldots$\dotfill 63, 64 
\item[] $\mbd_\mcz = \mbd_h$, generalized Polya--Hilbert operator (GPO or GPH) associated with the divisor $\mcz$ of a meromorphic function $h$\dotfill 101
\item[] $\mfD (f)$, divisor of a meromorphic function $f$\dotfill 59, 60
\item[] $d(x, A)= $ dist$(x, A) = \inf \{|x-a|: a \in A \}$, Euclidean distance from $x$ to $A \subseteq \mbr^N$\dotfill 4
\item[] $\partial = \partial_c$, infinitesimal shift of the real line ($\partial = d/dt$, acting on $\mbh_c$)\dotfill 94 
\item[] $\partial = d/dz$, infinitesimal shift of the complex plane (acting on a suitable weighted Bergman space)\dotfill 97, 98
\item[] $\partial^{(T)} = \partial_c^{(T)}$, truncated infinitesimal shift (of the real line)\dotfill 96 
\item[] $\partial \Omega$, boundary of a subset $\Omega$ of $\mbr^N$ (or of a fractal string)\dotfill 33
\item[] $|E| = |E|_N, N$-dimensional Lebesgue measure of $E \subseteq \mbr^N, 10(N=1),$\dotfill 31
\item[] $F$, Frobenius operator\dotfill 93 
\item[] $\mcf_\mcz = \mcf_h$, generalized Frobenius operator, associated with the divisor $\mcz$ of a meromorphic function $h$\dotfill 99
\item[] $\mbf_1$, field with one element\dotfill 106
\item[] $\mbf_q$, finite field (with $q$ elements)\dotfill 92
\item[] GFO, generalized Frobenius operator\dotfill 99
\item[] GPO (or GPH), generalized Polya--Hilbert operator\dotfill 99, 101
\item[] $\Gamma (t) := \int_0^{+ \infty} x^{t-1} e^{-x} dx$, the gamma function\dotfill 57, 99
\item[] $H_\mcz = H_h$, total eigenspace of the GPO $\mbd_\mcz$ associated with the divisor $\mcz$ of a meromorphic function $h$\dotfill 101, 105
\item[] $\mch_\mcz = \mch_h$, weighted Bergman space associated with the divisor $\mcz$ of a meromorphic function $h$\dotfill 98, 105
\item[] $\mbh_c := L^2 (\mbr, e^{-2ct}dt)$, weighted Hilbert space\dotfill 94
\item[] $i = \sqrt{-1}$, imaginary unit\dotfill 9
\item[] (ISP)$_D$, inverse spectral problem for the fractal strings of Minkowski dimension $D$\dotfill 26
\item[] $\xi = \xi (s)$, global (or completed) Riemann zeta function\dotfill 99
\item[] $\log_a x$, the logarithm of $x > 0$ with base $a > 0$; $y = \log_a x \Leftrightarrow x = a^y$\dotfill 9
\item[] $\log x := \log_e x$, the natural logarithm of $x; \, y = \log x \Leftrightarrow x = e^y$\dotfill 9
\item[] $\mcl = \{\ell_j \}_{j=1}^\infty$, a fractal string with lengths $\ell_j$\dotfill 8
\item[] $Li = Li(x)$, integral logarithm\dotfill 17
\item[] $\mcm, \mcm_*$ and $\mcm^*$, Minkowski content, lower and upper Minkowski content (of $\mcl$ or of $A \subseteq \mbr^N$ or else of a RFD $(\ao)$ in $\mbr^N$)\dotfill 12, 13, 32 
\item[] Moran's equation (complexified)\dotfill 64
\item[] $\mu (n)$, M\" obius function\dotfill 96
\item[] $N_\mcl$, geometric counting function of a fractal string $\mcl$\dotfill 14, 15
\item[] $N_\nu$, spectral (or frequency) counting function of a fractal string or drum $\ldots$\dotfill 14, 15, 24
\item[] $\Omega_\varepsilon$, $\varepsilon$-neighborhood of a fractal string (or of a fractal drum)\dotfill 10, 31
\item[] $\mathcal{P}$, the set of prime numbers\dotfill 16 
\item[] $\Pi_\mcp$, prime number counting function\dotfill 16
\item[] RFD, relative fractal drum\dotfill 31
\item[] RH, Riemann hypothesis\dotfill 26
\item[] (RH)$_D$ or RH$_D$, partial Riemann hypothesis (in dimension $D$)\dotfill 27
\item[] $\res (f, \omega)$, residue of the meromorphic function $f$ at $\omega \in \mbc$ \dotfill 5, 12
\item[] $\sigma (T)$, spectrum of the operator $T$\dotfill 95
\item[] $\widetilde{\sigma} (T)$, extended spectrum of the operator $T$\dotfill 95
\item[] $V(\varepsilon) = V_\mcl (\varepsilon)$ (or $V_{\ao} (\varepsilon)$ or $V_A (\varepsilon)$), volume of the $\varepsilon$-neighborhood of a fractal string $\mcl$ (or of a RFD $(\ao)$ or of $A \subseteq \mbr^N$)\dotfill 11, 31
\item[] $W (x)$, Weyl term\dotfill 24
\item[] $\Xi (s)$, completed Riemann zeta function (second version)\dotfill 102
\item[] $\zeta (s) = \sum_{j=1}^\infty j^{-s}$, Riemann zeta function\dotfill 15
\item[] $\zeta_\mcl (s) = \sum_{j=1}^\infty (\ell_j)^s$, geometric zeta function of a fractal string $\mcl = \{\ell_j \}_{j=1}^\infty$ $\ldots$\dotfill 9
\item[] $\zeta_\nu = \zeta_{\nu, \mcl}$, spectral zeta function of a fractal string $\mcl$\dotfill 15
\item[] $\zao$ or $\zeta_A$, distance zeta function (of a RFD $(\ao)$ or of a bounded set $A$ in $\mbr^N$)\dotfill 33, 34
\item[] $\tzao$ or $\widetilde{\zeta}_A$, tube zeta function (of a RFD $(\ao)$ or of a bounded set $A$ in $\mbr^N$)\dotfill 33, 34 
\end{itemize}

\end{document}